\DeclareMathOperator*{\argmax}{argmax}
\begin{document}
\title{\LARGE \bf
Supplementary Material for the Paper ``The Interplay of Competition and Cooperation among Service Providers (Part \uppercase\expandafter{\romannumeral1})''
 }

\author{Xingran Chen,\IEEEmembership{}
        Mohammad Hassan Lotfi,\IEEEmembership{}
        Saswati Sarkar\IEEEmembership{}
\IEEEcompsocitemizethanks {\IEEEcompsocthanksitem Xingran Chen and Saswati Sarkar are with the Electrical and System Engineering Department of the University of Pennsylvania, PA, 19104. \protect\\
E-mail: xingranc@seas.upenn.edu,\quad swati@seas.upenn.edu
\IEEEcompsocthanksitem Mohammad Hassan Lotfi is with the Institute for System Research of the University of Maryland, College Park, MD, 20740.\protect\\
E-mail: mhl@umd.edu
}
\thanks{Parts of this work was presented in Annual Conference on Information Sciences and Systems (CISS), 2017. 
}}

\IEEEoverridecommandlockouts
\newtheorem{lemma}{Lemma}
\newtheorem{note}{Note}
\newtheorem{property}{Property}
\newtheorem{theorem}{Theorem}
\newtheorem{definition}{Definition}
\newtheorem{corollary}{Corollary}
\newtheorem{proposition}{Proposition}
\newtheorem{remark}{Remark}
\newtheorem{assumption}{Assumption}

\maketitle

\begin{abstract}
This paper investigates the incentives of mobile network operators (MNOs) for acquiring
additional  spectrum to offer mobile virtual network operators (MVNOs) and thereby inviting competition
for a common pool of end users (EUs). We consider a base case and two generalizations: (\romannumeral1) one MNO and one MVNO,  (\romannumeral2) one MNO, one MVNO and an outside option, and  (\romannumeral3) two MNOs and one MVNO. In each of these cases, we model the interactions of the service providers (SPs) using a sequential game, identify when the Subgame Perfect Nash Equilibrium (SPNE) exists, when it is unique and characterize the SPNE when it exists. The characterizations are easy to compute, and are in closed form or involve optimizations in only one decision variable. We identify metrics to quantify the interplay between cooperation and competition, and evaluate those as also the SPNEs to show that  cooperation between MNO and MVNO  can enhance the payoffs of both, while increased competition due to the presence of additional MNOs  is beneficial to EUs but reduces the payoffs of the SPs.

\end{abstract}

\begin{IEEEkeywords}
Heterogeneous networks, Wireless Internet Market, Service Providers, Spectrum provisioning, Subscriber pricing, Game Theory, Hierarchical games, Nash Equilibrium
\end{IEEEkeywords}

\section{Introduction}\label{sec: introduction}

\subsection{Motivation and Overview}
\IEEEPARstart{N}{owadays} wireless  service providers (SPs)  are divided into (i) mobile network operators (MNOs) that lease spectrum from a regulator like FCC, and (ii)
mobile virtual network operators (MVNOs)  that  obtain spectrum from one or more MNOs. MVNOs can distinguish their plans from MNOs by  bundling their service with other products,  offering different pricing plans  for End-Users (EUs), or building  a good reputation through a better customer service. Although traditionally wireless service has been offered only by MNOs, in recent years, the number of MVNOs has been rapidly growing. The number of MVNOs increased by 70 percent worldwide, during June 2010-June 2015  reaching 1,017 as of June 2015 \cite{M2015}.  Even some MNOs developed their own MVNOs. An example of which is Cricket wireless which is owned by AT\&T and offers a prepaid wireless service to EUs. Another example of MVNOs is the Google's Project Fi in which the customer's service is handled using Wi-Fi hotspots wherever/whenever they exist; elsewhere the service is handled using the spectrum of a number of MNOs, eg, Sprint, T-Mobile or U.S. Cellular networks.

In this work, we consider the economics of the interaction among   MNOs  and  MVNOs. We seek to understand why and under what conditions the MNOs cooperate with the MVNOs by offering some of their spectrum to the MVNOs, and thereby inviting competition for a common pool of EUs. We consider scenarios where the MNOs decide on acquiring new spectrum, and in exchange for a fee offer those to   MVNOs, which decide to acquire some of the spectrum  offered.  The SPs decide on their pricing strategies for the EUs, and the EUs decide to opt for one of them, or neither,  if the access fees and the qualities of service are not satisfactory.  The spectrum acquisition and pricing decisions of the SPs determine their respective profits. We characterize their equilibrium choices. We obtain metrics that quantify the cooperation and competition of the SPs in terms of their spectrum investments and subscriptions of EUs, which help quantify the interplay between competition and cooperation under the equilibrium choices.

 We consider a hotelling model in which a  continuum of undecided EUs  decide which of the SPs they want to buy their wireless plan from, if at all. The EUs have different preferences for each SP. These preferences can be because of different services and qualities that SPs offer. For example, the MVNOs may be able to offer a free or cheap international call plan through VoIP, or an  SP may have  an infamous customer service. The preference for a SP also increases with  the spectrum she acquires.  If, for example,  EUs have high preferences for MVNOs, then the MNOs may prefer to lease some of their spectrum to the MVNOs and receive their share of profit through the MVNOs, instead of competing for EUs by lowering their access fees. On the other hand, if EUs have high preferences for the MNOs, the MNOs may not offer spectrum to the MVNOs and seek to attract the EUs directly.  Thus, cooperation is mutually beneficial only in some scenarios, which we seek to identify.

\subsection{Contribution}
First, we  consider a  base case  in which one MNO and one MVNO compete for EUs in a common pool, and the EUs must choose one  of the SPs. We present the system model, important definitions and terminologies, and quantify metrics such as \emph{degree of cooperation} and \emph{EU-resource-cost} that we use to assess the system from the perspective of various stake-holders throughout (Section \ref{sec: BM-independent framework}). We consider  a sequential game  in which the SPs decide their spectrum investments and access fees for the EUs  (Section~\ref{games}). We subsequently seek the Subgame Perfect Nash Equilibrium (SPNE) outcome of the game using backward induction, and identify conditions under which  the SPNE  exists and is unique, and characterize the  SPNE whenever it exists (Sections~\ref{sec: nase case-outcome}, \ref{sec: BM-SPNE Analysis}). The SPNE is simple to compute, as 1) the amount of spectrum the MNO invests turns out to be the value that maximizes a function involving only one decision variable 2) the amount of spectrum the MVNO leases from the MNO is a simple closed form expression involving the amount that the MNO offers it and the leasing fee 3) the access fees for the EUs  constitute simple closed form expressions of the spectrum the SPs acquire. The characterizations provide several insights. The spectrum acquired by the MNO never falls below  a threshold which depends only on the leasing fee to the MVNO and preferences for the SPs. When the spectrum equals this threshold, the MVNO reserves the entire spectrum that the MNO offers it. Thus cooperation is high in this case.  As the MNO acquires higher amounts of spectrum,  the MVNO  reserves progressively lower amounts, leading to lower degrees of cooperation. Numerical computations  reveal that the MNO acquires minimal amount of spectrum only when the leasing fee to the MVNO   is smaller than a threshold (Section~\ref{sec: BM-numerical}).  The SPNE characterizations show that higher degrees of cooperation invariably reduces (enhances, respectively) the efficacy of the MNO (MVNO, respectively) in competing for the EUs; yet, higher degrees of cooperation enhance the payoffs of both the SPs as  our numerical computations reveal. The MNO's loss in revenue from subscription  is more than compensated by  the leasing fees obtained from the MVNO.

Second, we  generalize the hotelling model for EU subscription in the base case by incorporating an additional demand function (Section \ref{sec: out}). The effects of the demand function are two-fold. First, the demand function models the attrition in the number of EUs of SPs if the spectrum investment or price of both SPs is not desirable for EUs. Thus, in effect, an EU may opt for neither SP if neither offers a price-quality combo that is to his satisfaction, which is equivalent to opting for outside options. Second, the demand function models an exclusive additional customer base for each of the SPs to draw from depending on her investment and the price she offers.
We characterize the unique interior SPNE outcome of the game (Section \ref{sec: out-outcome}). Numerical results reveal that the general behavior of the SPNE outcome are as in the base case and that the EU-resource-cost increases compared to the base case (Section \ref{sec: out-4-numerical}).

Finally, we  generalize the base case to include competition between MNOs.  We consider a wireless market with  two MNOs and one MVNO, in which EUs choose one of the three SPs (Section \ref{sec: 3-player model}).  We generalize the hotelling model to consider three players instead of two in the classical ones (Section \ref{sec: 3p-1-model}), and
characterize the unique SPNE outcome   (Section \ref{sec: 3p-2-outcome}).   The characterizations show  that this enhanced competition 1) increases the  degree of cooperation, as the MVNO acquires all the spectrum that the MNOs offer,  and 2) is beneficial to EUs, as the amounts of spectrum of SPs acquires are higher, and the SPs charge the EUs less. Numerical results reveal that the  additional competition enhances the EU-resource-cost  compared to the base case.

\subsection{Relation with the Sequel}
 While in this work we consider that the SPs arrive at their decisions individually, in the accompanying sequel we consider that the SPs arrive at certain   decisions as a group, and then arrive at other  decisions individually  (Part \uppercase\expandafter{\romannumeral2}). Also, here we assume that the per unit leasing fee the MVNO pays to MNO(s) is a fixed parameter, which is beyond the control of individual MNOs and MVNOs. This happens for example in two important cases: 1) when this fee is determined by an external regulator to influence the interaction between different providers (possibly to the betterment of the EUs) 2) when this fee is a market-driven parameter, for example, in a  large spectrum market with many MNOs and MVNOs.  To understand the impact of  the externals (eg, regulator, market),  we investigate the implications of different values of this fee  on the SPNE and the payoffs and the EU-resource-cost metric. This would also guide the regulatory choice of this fee for the first case.     Note that the overall market may consider several MNOs and MVNOs, whose presence we consider in the generalizations (Sections~\ref{sec: out}, \ref{sec: 3-player model}).   In the sequel  we consider that the SPs cooperatively  characterize this fee as a decision variable in a bargaining framework (Part \uppercase\expandafter{\romannumeral2}).

\subsection{Positioning vis-a-vis the State-of-the-Art}
Duan {\it et. al} made early contributions in the field of MVNOs \cite{LD2011}, \cite{LD2010}. They formulated the interactions between one cognitive mobile virtual network operator (CMVNO) and multiple end-users as a multi-stage Stackelberg game, and showed that spectrum sensing could improve the profit of the CMVNO and payoffs of the users. Since they considered only one SP, the issue of competition or cooperation between multiple SPs did not arise. We investigate the interplay of cooperation and competition between different SPs, namely MNO and MVNO.

The economics of the interactions  among multiple service providers have been extensively investigated. 
We focus on non-cooperative interactions in this paper as here we consider that the SPs arrive at their decisions individually.
Non-cooperative games were considered for example in \cite{HK2008}, \cite{LD2010},  \cite{SBO2012}, and \cite{R2017}. A general framework of strongly Pareto-inefficient Nash equilibria with noncooperative flow control was considered in \cite{HK2008}. Applying the framework to communication networks, it was shown that the Nash equilibria were not efficient.
Intervention schemes, i.e., systems where users and an intervention device interact, were formulated in \cite{PS2012}, and a solution concept of intervention equilibrium was proposed. The paper showed that intervention schemes could improve the suboptimal performance of non-cooperative equilibrium.
\cite{R2017} proposed wireless virtualization to investigate spectrum sharing in wireless networks.

However, these works did not consider both MNO and MVNO,  whose roles are fundamentally different from each other. The MNO acquires spectrum from a central regulator, which it offers  to MVNO  in exchange of money, and the MVNO  uses part  of this spectrum. Both   MNO and MVNO earn by selling  wireless plans to the EUs; the MNO earns additionally  by leasing  spectrum to the MVNO. Thus, they make different decisions, which affect their subscriptions, and  their payoffs have different expressions. Their decisions also follow different constraints: spectrum acquired by the MVNO is upper bounded by that acquired by the MNO, which constitutes the MNO's decision variable, while the spectrum acquired by the MNO depend on the availability with the regulators, the availability does not constitute the decisions of any provider. 
The interaction between the MNO and MVNO lead to an interplay of competition and cooperation  between them, which calls for innovations in the realm of modeling and analysis.

 To  our knowledge, the only papers  in the genre of non-cooperative interactions that also consider interactions of the MNOs and MVNOs are
 \cite{Banerjee2009}, \cite{le2009pricing} and \cite{CB2012}.
 In \cite{Banerjee2009}   MNOs seek to maximize the joint profit of MNO and MVNO.  The MNO's selection of access fees is formulated as a maximization in which the sales of the MNO is expressed as a function of only the fee he selects. In contrast we consider that each SP seeks to maximize his individual profit and   obtain the access fees they select and the spectrum they acquire, which also determine how  the EUs choose between the SPs. Thus we need to dwell in the realm of a hierarchical game rather than a single stage optimization. A scenario very different from ours is considered in \cite{le2009pricing}:  the SPs  \emph{do not} compete
for consumer market shares but for the proportion of resource they are going to use. The interaction between the SPs is a hierarchical game in which the MNO and MVNO choose their access fees,  the MVNO also decide investment in content/advertising. The access fees become roots of a fourth order polynomial equation which is computed numerically. The closest to our work is  \cite{CB2012}, which considers a dynamic three-level sequential game of spectrum sharing between one MNO and one MVNO. The focus is however complementary to ours. Unlike our work,   \cite{CB2012} does  not consider decisions of the 1) MNO pertaining to how much spectrum to acquire from a regulatory body 2) MVNO pertaining to how much of the MNO's spectrum offer he ought to accept  (he assumes that the MVNO uses the entire spectrum the MNO offers). We also generalize our model to consider multiple MNOs and an MVNO, which  \cite{CB2012} does not. \cite{CB2012} however considers a decision of the MVNO that we do not, i.e.,  how much the MNO would invest in content generation. The EU subscription models are also entirely different. We consider a one-shot game involving a continuum of EUs in which the SP choice of each  EU is based on his intrinsic preferences for the SPs and the spectrum investments of the SPs.  \cite{CB2012} considers  a  multi-time slot game in which a discrete number of  EUs choose between the SPs based on their experiences in the previous slots and their estimates of  the quality of service the  SPs they had not chosen apriori offer. The games we consider fundamentally differ in that the SPNE need not exist in ours (we identify necessary and sufficient conditions for its existence), while it always exists in that in \cite{CB2012}. By exploiting the structure of the game, we  obtain closed form  expressions for the various decisions we consider, in the SPNE,  whenever it exists.  \cite{CB2012} computes the SPNE only numerically through the solution of a multi-slot stochastic dynamic program (DP). Our SPNE characterization is easy to compute, while DPs usually suffer from the curse of dimensionality.

  \section{Base case}\label{sec: basic case}

 We present the system model in which we formulate the payoffs and strategies of SPs, and the utilities and decisions of EUs (Section~\ref{sec: BM-independent framework}). Next, we formulate  the interaction between different entities as a sequential game (Section~\ref{games}). Subsequently, we characterize the conditions under for  the existence and the uniqueness of the SPNE, obtain closed form expressions for the SPNE when it exists (Section~\ref{sec: nase case-outcome}).  We present numerical results in Section~\ref{sec: BM-numerical}. We prove the analytical results in Section~\ref{sec: BM-SPNE Analysis}, Appendix~\ref{Appendix-BM} (Theorems~\ref{thm: Un-conclusion-3sp}, \ref{thm: BM-prices}, \ref{thm: BM-I_F}, \ref{thm: stage 1}), and Appendix~\ref{Appendix: the interior SPNE} (Theorems~\ref{thm: Un-conclusion-sectionA}, \ref{thm: Un-corner-sectionA}).

\subsection{Model}\label{sec: BM-independent framework}

We consider one MNO ($\text{SP}_{L}$,  $L$ represents leader) and one MVNO ($\text{SP}_{F}$, $F$ represents follower) which compete for a common pool of undecided EUs.  $\text{SP}_{L}$   offers  $I_L$ amount of  spectrum  (which it acquires from a regulator) to $\text{SP}_{F}$  in exchange of money, and $\text{SP}_{F}$ uses $I_F$ amount of this spectrum. Clearly, $0\leq I_{F}\leq I_{L}$. For simplicity of analysis and formulation, we assume that $0<\delta\leq I_{L}$, where $\delta$ is a lower bound of $I_L$, which is a parameter of choice.  This assumption is not significantly restrictive as $\delta$ may be chosen as low a positive quantity as one desires\footnote{All results extend, with some modifications,  when we consider that $I_L$ is upper bounded by $M$. Such bounds may apply when   the central regulator has limited spectrum to offer.  Refer to Section~\ref{sec: Limited Spectrum from Central Regulator} for the deductions.}. 
 Both  $\text{SP}_{L}$ and $\text{SP}_{F}$ earn by selling  wireless plans to  EUs; $\text{SP}_{L}$ earns additionally  by leasing her spectrum to $\text{SP}_{F}$.   We assume that both SP$_L$ and SP$_F$ have access to separate spectrum,  which they can use to serve the EUs who join them, above and beyond the  $I_L, I_F$ amounts  they strategically acquire.  For example, a SP$_F$ like Google's Project Fi serves  customers   using Wi-Fi hotspots and the spectrum of  3 MNOs (Sprint, T-Mobile or U.S. Cellular networks). Also,  SP$_L$ may acquire additional spectrum from the regulator which it does not offer SP$_F.$

We denote the marginal leasing fee (per spectrum unit)  that $\text{SP}_{L}$ pays the regulator as $\gamma$, marginal reservation fee   $\text{SP}_{F}$ pays to $\text{SP}_{L}$ by $s$,  the fraction of EUs that $\text{SP}_{F}$ and $\text{SP}_{L}$ attract as $n_{F}$ and $n_{L}$, respectively,  and  the access fee that $\text{SP}_{F}$ and $\text{SP}_{L}$ charge the EUs as $p_{F}$ and $p_{L}$, respectively. Since $\text{SP}_{L}$ wants to lease out some of her spectrum to $\text{SP}_{F}$ with profit motive, it is reasonable to assume that  $s>\gamma$.  We assume that  $s, \gamma$ are pre-determined. The strategies of SPs are to choose the investment levels ($I_{L}$, $I_{F}$) and the access fees for EUs ($p_{L}$, $p_{F}$) so as to maximize their overall payoffs, which we formulate next.

SP$_F$ and SP$_L$ respectively earn  revenues of $n_{F}(p_{F}-c), n_{L}(p_{L}-c)$ from EU subscription, where $c$ is the transaction cost SPs incur in subscription.  The transaction cost arises due to traffic management, billing and accounting services, customer service, etc. associated with each subscription. We have assumed such costs to be equal for all SPs, as they do not significantly vary across them.
 We expect the cost of reserving spectrum to be strictly convex, i.e. the cost of investment per spectrum unit increases with the amount of spectrum. Strictly convex costs do not satisfy the economy of scale; the regulator may mandate such structures  to stop excessive acquisition by big SPs seeking to control the market, which has limited spectrum supply,  and drive out  smaller SPs or new entrants. Incidentally,
several seminal works have considered strictly convex investment costs, e.g. \cite{Hartment} and \cite{Abel}. For simplicity in analysis, we consider a specific kind of strictly convex cost function, namely  quadratic,  and  discuss generalizations in Remark~\ref{re: convex function}. That is,  SP$_{L}$ incurs a spectrum acquisition cost of  $\gamma I_{L}^{2}$, and SP$_F$ pays to SP$_L$ a leasing fee of  $sI_{F}^{2}$.
Thus, the payoffs of SPs are:
\begin{align}
 &\pi_{F}=n_{F}(p_{F}-c)-sI_{F}^{2}\label{equ: BM-L-payoff}\\
 &\pi_{L}=n_{L}(p_{L}-c)+sI_{F}^{2}-\gamma I_{L}^{2}.\label{equ: BM-F-payoff}
\end{align}

\begin{figure}
\centering
\includegraphics[width=3in]{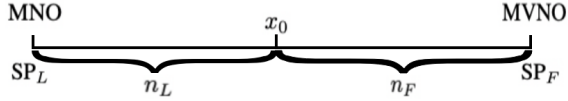}
\caption{The hotelling model for the base case. The EUs in $[0, x_0]$ ($[x_0, 1]$, respectively) prefer SP$_L$ ( SP$_F$, respectively).  The former fraction of EUs is $n_L$, the latter is $n_F.$ $x_0$ is farther off from SP$_L$ as $t_L$ becomes lower and $v_L - v_F$ become higher.  }
\label{fig:mno-mvno}
\end{figure}

\noindent{\textbf{EUs:}}
We use a hotelling model\cite{Osborne1987} to describe how EUs choose between the SPs. We assume that SP$_{L}$ is located at $0$, SP$_{F}$ is located at $1$, and EUs are distributed uniformly along the unit interval $[0,1]$ (Figure~\ref{fig:mno-mvno}). The closer an EU to a SP, the more this EU prefers this SP to the other. Note that the notion of closeness and distance is used to model the preference of EUs, and may not be the same as physical distance. Let $t_{L}$ ($t_{F}$) be the unit transport cost of EUs for SP$_{L}$ (SP$_{F}$), the EU located at $x\in[0,1]$ incurs a cost of $t_{L}x$ \big(respectively, $t_{F}(1-x)$\big) when joining SP$_{L}$ (respectively, SP$_{F}$).
\begin{equation}\label{equ: BM-utility EUs}
\begin{aligned}
u_{L}(x)=&v^{L}-\left(p_{L}+t_{L}x\right) \\
u_{F}(x)=&v^{F}-\left(p_{F}+t_{F}(1-x)\right).
\end{aligned}	
\end{equation}
The EU at $x$ receives utilities $u_{L}(x), u_{F}(x)$ respectively  from SP$_{L}$ and  SP$_{F}$, and joins the SP that gives it the higher utility.

The first component of the utility functions comprises of the ``static factors'', namely $v^{L}$ and $v^{F}$ of $\text{SP}_{L}$ and $\text{SP}_{F}$, respectively. The static factor of a SP is the same for all EUs, which depends on the local presence, its existing spectrum beyond $I_L$ or $I_F$   and its reputation in the region, quality of the customer-service, ease of usage for the online portals, etc. However, the static factors do not depend on strategies of SPs, such as the access fees,  the investment levels, etc.

The second component, i.e., $p_{L}+t_{L}x$ or $p_{F}+t_{F}(1-x)$,  is denoted as the
``strategy factor''.  The strategy factors depend on the strategies of the SPs, namely their access fees and the spectrum $I_L, I_F$  they acquire. Clearly, the utilities would decrease with the access fees, we consider the dependence to be linear. As SP$_{F}$ acquires greater fraction of the additional spectrum SP$_{L}$ offers him, SP$_F$ becomes more desirable and SP$_L$ less desirable to the EUs. Denote $t_L=I_{F}/I_{L}$ and $t_F=(I_{L}-I_{F})/I_{L}$. Then the impact of quality of service in the decision of EUs is captured through $t_{L}$ and $t_{F}$. For example, when $I_F=I_L$, i.e.,  SP$_F$ leases the entire $I_L$ spectrum from SP$_L$ and SP$_L$ can use none of it,  then $t_F=0$ and $t_L=1$. This gives SP$_F$ an advantage over SP$_L$ in attracting EUs.
Similarly, even when $I_F=0$, i.e., SP$_F$ leases no spectrum from SP$_L$, $t_F=1$ and $t_L=0$, SP$_L$ has an advantage over SP$_F$. But subscription may still be divided in both  the above extreme cases.   This happens since both SP$_F$ and SP$_L$  have access to separate spectrum as reflected in the static factors $v^F, v^L$. Note that the pair of transport cost ($t_{L}=I_{F}/I_{L}, t_{F}=1-t_{L}$) is one of the many functions that can be considered. We choose this model specifically since it captures the essence of the model, and is analytically tractable.

Finally, the strategy factors incorporate intrinsic preference of the EUs towards the SPs through the coordinate $x$, which presents the local distance in the utility model. If an EU is for example close to SP$_{F}$, $x$ is high and $1-x$ is low, and it is deemed to have a higher intrinsic preference for SP$_{F}$, as compared to SP$_{L}$. The intrinsic preference may be developed through pre-existing and ongoing relations the EU has with the SPs, e.g., if an EU is already availing of other services from a SP, the EU will have a stronger intrinsic preference for the SP, due to convenience of billing etc. Higher intrinsic preferences enhance utilities of the SP for the EUs.  The impact of the strategies of the SPs on the EUs will depend on their intrinsic preferences for the EUs, which is captured in the term $t_{L}x$ or $t_{F}(1-x)$ in the utility. Note that the intrinsic preference is different for different EUs unlike the static factor.

 We consider that $v^{L}$ and $v^{F}$ are sufficiently large so that the utility of EUs for buying a wireless plan  is positive regardless of the choice of SP\footnote{ Note that all analytical results will depend on the difference of $v^L$ and $v^F$, so absolute values of these (large or otherwise) do not have any impact on the SPNE choices of various entities.}. Thus, each EU chooses exactly one SP to subscribe to, i.e., the market is ``fully covered''. This is a common assumption for hotelling models. We would in effect relax  this assumption in Section~\ref{sec: out}.

  SP$_{F}$'s leasing of spectrum from SP$_{L}$ constitute an act of cooperation. Thus, we call $I_{F}/I_{L}$ {\it the degree of cooperation}. Since SP$_{F}$ and SP$_{L}$ compete to attract EUs,  the split of subscription $(n_{L}, n_{F})$  represent the level of competition. Since the amount of spectrum SP$_{F}$ leases from SP$_{L}$ determines the split of subscription, there is a natural interplay between cooperation and competition, that these metrics will enable us to quantify.

We develop the notion of \emph{EU-resource-cost} to capture the spectral resource per unit access fee averaged over all EUs, which  represents the ``bang-for-the-buck'' or ``value for money'' an average EU gets out of the system. For the EUs who choose the MVNO, the resource per head is $I_F/n_F$. Thus, for these EUs the resource per head per unit  fee is $I_F/(n_F p_F)$. Similarly, for the EUs who choose the MNO, the resource per head per unit fee is $(I_L-I_F)/(n_L p_L)$. Averaging over all the EUs, the resource per unit fee for an ``average'' EU then is, $\frac{n_F I_F/(n_F p_F) + n_L (I_L-I_F)/(n_L p_L)}{n_F + n_L}$, which equals $I_F/p_F+(I_L-I_F)/p_L$, since $n_L + n_F = 1.$ We therefore consider this as the expression for the EU-resource-cost. Clearly, higher values of the EU-resource-cost is beneficial for the EUs.

\subsection{The sequential game framework}

\label{games}

The interaction among SPs and EUs can be formulated as a sequential game. As a leader of the game, $SP_{L}$ makes the first move. The timing and the stages of the game are as following:
\begin{itemize}
  \item {\bf Stage 1:}  $\text{SP}_{L}$ decides on the amount of spectrum,  $I_{L}$, to acquire.
  \item {\bf Stage 2: }  $\text{SP}_{F}$ decides on the amount of spectrum to lease from $\text{SP}_{L}$, $I_{F}$.
  \item {\bf Stage 3:} $\text{SP}_{L}$ and $\text{SP}_{F}$ determine the access fees for the EUs, $p_{L}$ and $p_{F}$, respectively.
  \item {\bf Stage 4:}  Each EU  subscribes to the SP that gives it the higher utility.
\end{itemize}
\begin{remark}
 We assume that the decision of investments ($I_{L}$ and $I_{F}$) happens before the decisions of access fees ($p_{L}$ and $p_{F}$), guided by the fact that spectrum investment decisions are long-term ones, and are therefore expected to be constants over longer time horizons in comparison to subscription pricing decisions.
\end{remark}

\begin{definition}\cite[Chapter 6.2]{Osborne1994}\label{def: BM-SPNE}
A strategy is a {\it Subgame Perfect Nash Equilibrium} (SPNE) if and only if it constitutes a Nash Equilibrium (NE) of every subgame of the game.
\end{definition}

We refer to a SPNE choice of spectrum investments and access fees by the SPs  as $(I_L^*, I_F^*, p_L^*, p_F^*)$, and the EU subscriptions for the SPs under the same as $n_L^*, n_F^*$, should a SPNE exist.

\subsection{The SPNE outcome}\label{sec: nase case-outcome}

We next identify the conditions under which SPNE exists,  characterize the SPNE when it exists, and examine its uniqueness.

We denote $v^L - v^F$ as $\Delta$.   Since $0 \leq t_L, t_F \leq 1$, $0 \leq x \leq 1,$
in the expressions for utilities in \eqref{equ: BM-utility EUs}, $|\Delta| \geq 1$ provides a near insurmountable disadvantage to one of the SPs through the static factors; this SP might have to choose a significantly lower price to recoup. Thus, we first focus on the range $|\Delta| < 1.$ As stated before, we assume $\delta$ is small, and let $\delta<\sqrt{\frac{2-\Delta}{9s}}$, which reduces to $\delta<\sqrt{\frac{2}{9s}}$ in the special case that $v^L = v^F$.

\begin{theorem}\label{thm: Un-conclusion-sectionA}
Let $|\Delta| < 1$. The SPNE  is:

\noindent{\bf (1)} any solution of the following maximization is $I_{L}^{*}$,
 \begin{equation*}
\begin{aligned}
\max_{I_{L}}\,\,&\pi_{L}(I_{L})=(\frac{2+\Delta}{3}-\frac{1-\Delta}{27sI_{L}^{2}-3})^{2} \\
&+s(\frac{(1-\Delta)I_{L}}{9sI_{L}^{2}-1})^{2}-\gamma I_{L}^{2}\\
s.t\,\,&\sqrt{\frac{2-\Delta}{9s}}\leq I_L\leq M,
\end{aligned}
\end{equation*}
\noindent{\bf (2)}  $I_{F}^{*}$ is characterized in
 \begin{equation*}
 \begin{aligned}
  I_{F}^{*}=\left\{\begin{aligned}
  &\frac{(1-\Delta)I_{L}}{9I_{L}^{2}s-1}\,\,& \text{if}\,\, I_{L}>\sqrt{\frac{2-\Delta}{9s}}\\
  &I_{L}\,\, & \text{if}\,\, I_{L}=\sqrt{\frac{2-\Delta}{9s}}\end{aligned}\right.,
\end{aligned}
\end{equation*}
\noindent{\bf (3)}
$p_{L}^{*}=c+\frac{2}{3}-\frac{I_{F}^{*}}{3I_{L}^{*}}+\frac{\Delta}{3},\quad
p_{F}^{*}=c+\frac{1}{3}+\frac{I_{F}^{*}}{3I_{L}^{*}}-\frac{\Delta}{3}$,

\noindent{\bf (4)} $n_{L}^{*}=\frac{\Delta}{3}+\frac{2}{3}-\frac{I_{F}^{*}}{3I_{L}^{*}},\, n_{F}^{*}=\frac{I_{F}^{*}}{3I_{L}^{*}}+\frac{1}{3}-\frac{\Delta}{3}$.
\end{theorem}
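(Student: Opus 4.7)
The plan is to apply backward induction through the four stages of the sequential game, beginning at Stage 4 and working up to Stage 1.

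\textbf{Stage 4 (EU subscription).} I would identify the indifferent EU $x_0 \in [0,1]$ by solving $u_L(x_0) = u_F(x_0)$. Using $t_L = I_F/I_L$ and $t_F = (I_L-I_F)/I_L$, so $t_L + t_F = 1$, this gives $x_0 = \Delta - p_L + p_F + (I_L - I_F)/I_L$. Since the market is fully covered (by the assumption on $v^L,v^F$), $n_L = x_0$ and $n_F = 1 - x_0$.

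\textbf{Stage 3 (Access fees).} Substituting the expressions for $n_L, n_F$ into \eqref{equ: BM-L-payoff}--\eqref{equ: BM-F-payoff}, both $\pi_L$ and $\pi_F$ are strictly concave quadratics in their own price (the second-order derivative is $-2$), so the first-order conditions are both necessary and sufficient for the Stage-3 Nash equilibrium. The two FOCs give $n_L = p_L - c$ and $n_F = p_F - c$, yielding a linear system whose solution is the pair $(p_L^*, p_F^*)$ in part (3). From $n_L = p_L - c$ and $n_F = p_F - c$, part (4) follows immediately. I would also verify $n_L^*, n_F^* \in (0,1)$ for $|\Delta|<1$ and $0 \leq I_F \leq I_L$, so the interior formula is valid.

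\textbf{Stage 2 ($I_F$).} Substituting the Stage-3 expressions, $\pi_F(I_F) = (p_F^* - c)^2 - sI_F^2 = \bigl(\tfrac{1-\Delta}{3} + \tfrac{I_F}{3I_L}\bigr)^2 - sI_F^2$, which is a quadratic in $I_F$ with leading coefficient $\tfrac{1}{9I_L^2} - s$. The FOC gives the candidate $\widetilde I_F(I_L) = \tfrac{(1-\Delta)I_L}{9sI_L^2 - 1}$, which is positive and corresponds to a maximum only when $9sI_L^2 - 1 > 0$. Enforcing the constraint $I_F \leq I_L$ shows that $\widetilde I_F(I_L) \leq I_L$ iff $I_L \geq \sqrt{(2-\Delta)/(9s)}$, in which case this is the interior maximizer. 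Otherwise $I_F^* = I_L$ (the upper boundary), giving the two-branch formula in part (2). I would also verify the second-order condition holds on the interior branch, since at $I_L = \sqrt{(2-\Delta)/(9s)}$ one has $9sI_L^2 - 1 = 1 - \Delta > 0$, hence $sI_L^2 > 1/(9I_L^2 )$ does not immediately hold but can be checked directly that $\pi_F$ is strictly concave in $I_F$ on the relevant range.

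\textbf{Stage 1 ($I_L$).} On the boundary branch $I_F = I_L$ (i.e., $\delta \leq I_L \leq \sqrt{(2-\Delta)/(9s)}$), substitution gives $\pi_L = \bigl(\tfrac{1+\Delta}{3}\bigr)^2 + (s-\gamma)I_L^2$, which is strictly increasing in $I_L$ since $s > \gamma$. Thus on this sub-interval the maximum is attained at the right endpoint $\sqrt{(2-\Delta)/(9s)}$, and this value is continuous with the interior-branch expression at the same point. Hence the overall maximization reduces to the one-variable optimization over $[\sqrt{(2-\Delta)/(9s)}, M]$ stated in part (1), with the objective obtained by plugging $\widetilde I_F(I_L)$ and the Stage-3 prices back into $\pi_L = n_L^2 + sI_F^2 - \gamma I_L^2$. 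The main obstacle is the bookkeeping of the two branches from Stage 2 and verifying that the maximum of $\pi_L$ indeed lies in the interior branch (rather than strictly below $\sqrt{(2-\Delta)/(9s)}$), which is precisely where the monotonicity argument on the boundary branch is needed. Uniqueness of $I_F^*$, $p_L^*$, $p_F^*$ (given $I_L^*$) follows from strict concavity at each stage; the set of $I_L^*$ is exactly the (possibly non-singleton) argmax set of the one-variable problem, consistent with the ``any solution'' phrasing in part (1).
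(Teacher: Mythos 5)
Your backward-induction plan follows the same route as the paper (indifferent location in Stage 4, first-order conditions for prices in Stage 3, quadratic maximization of $\pi_F$ over $[0,I_L]$ in Stage 2, and the monotonicity-plus-continuity argument that collapses the boundary branch $I_F^*=I_L$ into the constrained one-variable problem in Stage 1), and Stages 2 and 1 are essentially correct as written.

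There is, however, a genuine gap on the ``corner'' side of the analysis, which the paper treats as an essential and non-trivial ingredient. First, the theorem asserts a characterization of \emph{the} SPNE, so one must prove that no equilibrium with $x_0\leq 0$ or $x_0\geq 1$ (i.e., $n_L\in\{0,1\}$) exists when $|\Delta|<1$; the paper does this in Theorem~\ref{thm: Un-conclusion-3sp} via a multi-step contradiction built on explicit profitable unilateral price deviations (e.g., the SP with zero subscription undercutting to capture a positive market share). Your proposal only verifies that the candidate profile is interior, which shows it is \emph{a} consistent interior solution but does not exclude corner equilibria. Second, your claim that the Stage-3 FOCs are ``necessary and sufficient'' because $\pi_L,\pi_F$ are strictly concave quadratics in own price is only valid on the interior piece: by \eqref{equ: BM-demand} the subscription is clamped to $[0,1]$, so each SP's payoff as a function of its own price is piecewise (linear/quadratic/constant), and one must separately check that a deviation pushing $x_0$ to $0$ or $1$ is unprofitable. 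The paper handles this in Part~B of the proof of Theorem~\ref{thm: BM-prices} using continuity of the payoff as $n_L\downarrow 0$ and $n_L\uparrow 1$ together with concavity of the interior piece. Both verifications are short but they are not automatic, and without them the equilibrium claim (and its uniqueness up to the Stage-1 argmax, per Remark~\ref{remark2}) is not established.
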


\begin{remark}
\label{remark2}
From (2), $I_F^*$ is unique once $I_L^*$ is given; from (3) and (4), $(p_L^*,p_F^*,n_L^*,n_F^*)$ is unique once $I_L^*$ and $I_F^*$ are given. Thus, every solution of the maximization in Theorem \ref{thm: Un-conclusion-sectionA}~(1) leads to a distinct SPNE. Thus, the SPNE is unique if and only if this maximization has a unique solution. Our extensive numerical computations suggest that this is the case.
\end{remark}

The SPNE is easy to compute, despite the expressions being cumbersome.  Otherwise, $I_L^*$ can be obtained as a maximizer of an expression that involves only one decision variable, $I_L$, and fixed parameters $ s, \gamma,  \Delta$.   $I_F^*$ has been expressed as a closed form function involving $I_L^*$ and the fixed parameters $ s, \Delta.$ $p_L^*, p_F^*, n_L^*, n_F^*$ have been expressed  as closed form functions of $I_F^*/I_L^*$ and the fixed parameters  $c,  \Delta.$

From Theorem \ref{thm: Un-conclusion-sectionA}~(3), the price the EUs receive from SP$_L$ (respectively, SP$_F$) decrease (respectively, increase) with increase in the degree of cooperation ($I_{F}/I_{L}$). Thus, since at least one of the SPs reduce the price, the EUs benefit from  higher degree of cooperation.

From Theorem \ref{thm: Un-conclusion-sectionA} (3) and (4), $n_{L}^{*}=p_{L}^{*}-c, n_{F}^{*}=p_{F}^{*}-c.$ Thus, SPNE subscriptions of the SPs increase with increase in the access fees they announce. This counter-intuitive feature arises because the subscriptions also depend on the spectrum acquisitions of the SPs, through the transport costs $t_L=I_{F}/I_{L}$ and $t_F=1-t_F$ in the utilities specified in (\ref{equ: BM-utility EUs}).

From Theorem \ref{thm: Un-conclusion-sectionA}~(1), in the SPNE,  SP$_L$ acquires at least   $\sqrt{\frac{2-\Delta}{9s}}$ amount of spectrum.
From Theorem \ref{thm: Un-conclusion-sectionA}~(2), when $I_L^{*}$ equals this minimum, then SP$_F$ reserves all the available spectrum, i.e., $I_L^{*}=I_F^{*}$ (note that $I^*_F$ is continuous at $I_L=\sqrt{\frac{2-\Delta}{9s}}$). Thus, SP$_L$ can not use any of $I_L^{*}.$ However,  from Theorem \ref{thm: Un-conclusion-sectionA}~(4),   SP$_L$ is still able to attract a positive fraction of EUs: $n_{L}^{*} = \frac{\Delta+1}{3} > 0$  since $|\Delta|<1$. This is because EUs have spectrum other than $I_L^{*}, I_F^{*}$ as captured in the values of $v^{L}, v^{F}$.

 From Theorem \ref{thm: Un-conclusion-sectionA}~(1) and (2), when $I_L^{*}$ exceeds its minimum value,  then SP$_F$ reserves only a fraction of available spectrum ($I^*_F<I_L^*$). Note that in this case, $\frac{d I^*_F}{d I_L}<0$. Thus, the higher the amount of available spectrum, the lower would be the amount of spectrum reserved by SP$_F$. Also, $I^*_F$ is decreasing with $s$.

 The SPNE depends on the static factors $v^L, v^F$ only through their difference $\Delta$. As expected, with increase (respectively, decrease) in  $\Delta$, SP$_L$ (respectively, SP$_F$) can increase his (respectively, her) access fee $p_L^*$ (respectively, $p_F^*$).
The minimum value of his spectrum acquisition $I_L^*$ increases with decrease in $\Delta$,  to offset the competitive advantage  the static factors provide.  Through our numerical computations, we elucidate how $I_L^*, I_F^*$ and the payoffs otherwise vary with $\Delta$.

The results illustrate the interplay between cooperation and competition.  From Theorem \ref{thm: Un-conclusion-sectionA}~(4), the subscription $n_{L}^{*}$ (respectively, $n_{F}^{*}$) of SP$_L$ (respectively, SP$_F$) decreases (respectively, increases) with the degree of cooperation ($I_{F}^*/I_{L}^*$).
Thus, the higher the degree of cooperation, lesser (respectively, greater)  is the competition efficacy of SP$_L$ (respectively, SP$_F$).   A natural question arises: why would the SP$_L$ then cooperate with the SP$_F$? From (\ref{equ: BM-L-payoff}) and (\ref{equ: BM-F-payoff}), Theorem \ref{thm: Un-conclusion-sectionA}~(3),  (4),   $\pi_{L}=n_{L}^{*2}+sI_{F}^{*2}-\gamma I_{L}^{*2}$, and $\pi_{F}=n_{F}^{*2}-sI_{F}^{*2}$. On the one hand, if the degree of cooperation increases, then the amount of subscribers of SP$_{L}$ decreases, thus the revenue  SP$_{L}$  earn from the subscribers decreases. On the other hand, the payoff of SP$_{L}$  increases through $sI_{F}^{*2}$. Thus the second factor may offset the first, and the payoff of SP$_L$ may increase due to cooperation.   Note that it is not a zero sum game, thus, the payoffs  of both  players may simultaneously increase due to cooperation. We illustrate these phenomena definitively through our numerical computations in the next section.

Then, in the extreme case that $| \Delta | \geq 1$:
 \begin{theorem}\label{thm: Un-corner-sectionA}
{\bf (1) } $\Delta\geq 1$: The  SPNE  is
\[ I_{L}^{*}=\delta, I_{F}^{*}=0, p_{F}^{*}=p_{L}^{*}-\Delta,
n_{L}^{*}=1,\, n_{F}^{*}=0,\] and
$p_L^*$ can be chosen any value in $[c+1, c+\Delta].$ \newline
{\bf (2) } $\Delta = 1:$ The following interior strategy constitute an additional  SPNE:
\[I_L^*=I_F^*=\frac{1}{3\sqrt{s}}, p_L^* - c = n_L^* = 2/3, p_F^* - c = n_F^*= 1/3.\]

{\bf (3) } $\Delta\leq -1:$ The SPNE strategy is:
 \[I_L^*=I_F^*=\frac{1}{\sqrt{2s}}, p_{L}^{*}=p_{F}^{*}+\Delta-1, n_{L}^{*}=0,\, n_{F}^{*}=1,\] and
$p_L^*$ can be chosen any value in $[c+1, c-\Delta].$ \newline
\end{theorem}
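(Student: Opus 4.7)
The plan is to perform backward induction on the four-stage game and verify, case by case, that the claimed strategy is a subgame perfect NE. Throughout I rely on the indifferent-EU location $x_0 = \Delta + t_F + p_F - p_L$, obtained by equating the utilities in \eqref{equ: BM-utility EUs}, which gives $n_L = \max(0, \min(1, x_0))$ and $n_F = 1 - n_L$.

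For case (1), with $\Delta \geq 1$, the intuition is that $\text{SP}_L$'s large static advantage lets it monopolize the market. At stage 4, substituting $I_F^* = 0$ (so $t_F = 1$) and $p_F^* = p_L^* - \Delta$ gives $x_0 = 1$, hence $n_L^* = 1$. At stage 3, with $I_L = \delta$ and $I_F = 0$ fixed, I analyze $\text{SP}_F$'s payoff $\pi_F(p_F) = n_F(p_F - c)$ piecewise over the regimes $n_F = 0$, $0 < n_F < 1$, and $n_F = 1$; the upper bound $p_L^* \leq c + \Delta$ rules out any profitable interior deviation, so $p_F^* = p_L^* - \Delta$ is optimal, and a symmetric analysis of $\text{SP}_L$'s best response uses $p_L^* \geq c + 1$. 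At stage 2, given $I_L = \delta$, I split $I_F \in [0, \delta]$ into $I_F / I_L < \Delta - 1$ (boundary stage-3 NE, $\pi_F = -sI_F^2 \leq 0$, maximized at $I_F = 0$) and $I_F / I_L \in [\Delta - 1, 1]$ (interior stage-3 NE, relevant only for $\Delta \in [1, 2)$); the second-derivative test on $\pi_F$ and explicit endpoint evaluation show $I_F = 0$ remains optimal under the standing smallness assumption on $\delta$. At stage 1, $\pi_L = (p_L^* - c) - \gamma I_L^2$ is strictly decreasing in $I_L$, so $I_L^* = \delta$.

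For case (2), with $\Delta = 1$, the argument above specializes to verify case (1)'s strategy as an SPNE. For the additional interior SPNE, I substitute $I_L = I_F = 1/(3\sqrt{s})$ into the interior stage-3 NE formulas from Theorem~\ref{thm: Un-conclusion-sectionA} to obtain $(p_L^*, p_F^*, n_L^*, n_F^*) = (c + 2/3, c + 1/3, 2/3, 1/3)$, then verify the stage-2 boundary optimum $I_F = I_L$ and the stage-1 FOC in $I_L$. For case (3), with $\Delta \leq -1$, the structure is symmetric: $\text{SP}_F$ captures all EUs. At stage 3, $\pi_L = sI_F^2 - \gamma I_L^2$ is independent of $p_L$ whenever $n_L = 0$, so $\text{SP}_L$ is indifferent; the NE then pins $(p_L^*, p_F^*)$ down to a range, and I verify that no interior deviation by $\text{SP}_L$ improves on the boundary by using the algebraic identity $((1+y)/2)^2 - y = ((1-y)/2)^2 \geq 0$. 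At stage 2, the FOC in $I_F$ gives $I_F = 1/(2sI_L)$ feasible iff $I_L \geq 1/\sqrt{2s}$; otherwise the boundary $I_F = I_L$ applies. At stage 1, $\pi_L(I_L)$ in the two sub-regimes meets at $I_L = 1/\sqrt{2s}$, which is the global maximizer, yielding $I_L^* = I_F^* = 1/\sqrt{2s}$.

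The main obstacle is the piecewise structure at stages 2 and 3: the relevant payoff functions are continuous but not smooth across regime boundaries, so interior critical points must be carefully compared with boundary values in each sub-regime. A secondary obstacle is the non-uniqueness of stage-3 NE --- the continuum in $p_L^*$ in case (1) and $\text{SP}_L$'s indifference in case (3) --- which requires care to ensure the anticipated stage-3 NE used in backward induction is mutually consistent with the stage-2 and stage-1 choices.
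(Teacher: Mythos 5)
Your overall method (backward induction plus unilateral-deviation checks) is the same one the paper uses, but your plan only \emph{verifies} that the listed profiles are equilibria, whereas the statement is a characterization: ``The SPNE is $\dots$'' in parts (1) and (3), and the word ``additional'' in part (2) presupposes that nothing else is an equilibrium. The bulk of the paper's argument is devoted to precisely this completeness step: Lemmas~\ref{lem: no positive-corner SPNE} and~\ref{lem: no negative-corner SPNE} rule out the wrong-sided corner for each sign of $\Delta$; Theorems~\ref{thm: positive-corner} and~\ref{thm: negative-corner} derive the stated $(I_L^*,I_F^*)$ and the price intervals as \emph{necessary} conditions before verifying them; and Theorems~\ref{thm: Un-interior-payoff-L-sectionB}, \ref{thm: Un-interior-payoff-L-sectionC} and~\ref{Pro: SectionD} show that no interior SPNE exists for $\Delta>1$, for $-2<\Delta\leq-1$ and for $|\Delta|\geq2$, and that at $\Delta=1$ the only interior SPNE is the one in part (2). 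None of this appears in your outline, so as written it cannot deliver the theorem as stated.

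There is also a concrete step that fails: your stage-2 argument in case (1) for $1\leq\Delta<2$. If SP$_F$ deviates to an $I_F$ with $I_F/I_L>\Delta-1$, the resulting stage-3 subgame admits \emph{no} corner NE with $n_L=1$ (such a NE would need $c+1\leq p_L\leq c+\Delta+t_F-1$, an empty interval once $I_F/I_L>\Delta-1$), so the continuation must be the interior NE of Theorem~\ref{thm: Un-A-prices}, under which the deviation $I_F=I_L=\delta$ yields $\pi_F=\big(\tfrac{2-\Delta}{3}\big)^2-s\delta^2$. This tends to $\big(\tfrac{2-\Delta}{3}\big)^2>0$ as $\delta\downarrow0$ and beats the on-path payoff $\pi_F^*=0$, so the ``standing smallness assumption on $\delta$'' makes the deviation \emph{more} attractive, not less: your endpoint comparison goes the wrong way. (The paper avoids committing to this computation by obtaining $I_F^*=0$ only as a necessary condition from $\pi_F^*=-sI_F^{*2}\geq0$ within the corner class.) A similar regime-consistency issue affects your case (3): for $I_F/I_L<2+\Delta$ the anticipated corner stage-3 NE with $n_F=1$ does not exist, so the stage-2 objective you maximize is not the correct continuation payoff on all of $[0,I_L]$. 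You need to specify, for every off-path $(I_L,I_F)$, which stage-3 NE is played, check that it exists there, and evaluate deviations against that continuation.
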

 We prove this theorem in Appendix~\ref{sec: corner SPNE}. As is intuitive, for large $\Delta$, all EUs subscribe to SP$_L$, despite lower access fees selected by  SP$_F$; the reverse happens in the other extreme, despite lower access fees selected by  SP$_F$. The extremes therefore lead to ``corner equilibria'', which
  correspond to $0, 1$ as the degrees of cooperation. The SPNE is non-unique in both these extremes.

\subsection{Numerical results}\label{sec: BM-numerical}

\begin{figure}
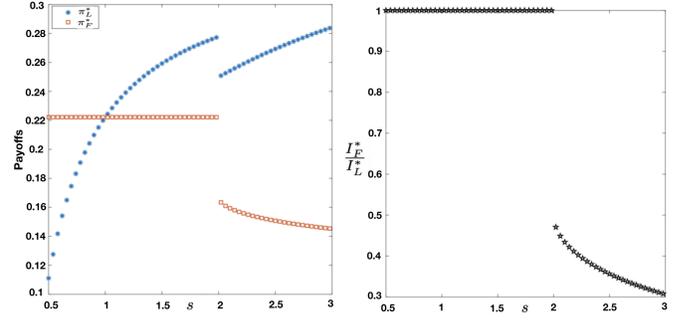

\centering
\includegraphics[width=1.7in]{basecase1-1}
\includegraphics[width=1.7in]{basecase1-2}
\caption{Payoffs (left) and the degree of cooperation (right) vs. $s$. Here, $\gamma=0.5$, $c=1$, $\Delta = 0$.}
\label{figBM1}
\end{figure}

Figure \ref{figBM1} shows the payoffs (left) and the degree of cooperation (right) under different  $s$ when $\Delta = 0$. The degree of cooperation reaches the maximum ($=1$), i.e., $I_{F}^{*}=I_{L}^{*}$ when $s$ is less than a threshold ($\approx2$). In this case,  SP$_{L}$ generates most of its revenue from  the reservation fee paid by  SP$_{F}$. As expected, $\pi_{L}^{*}$ increases with $s$. From Theorem \ref{thm: Un-conclusion-sectionA} (1), (2), (4), when $I_{F}^{*}=I_{L}^{*}$,  $I_{L}^{*}$ equals its minimum value $\sqrt{\frac{2}{9s}}$, and  $n_{F}^{*}=1/3+I_{F}^{*}/3I_{L}^{*}=2/3$, thus $\pi_{F}^{*}=n_{F}^{*2}-sI_{F}^{*2}$ is a constant which is independent of $s$.
When $s$ is larger than this threshold, $I_{F}^{*}/I_{L}^{*}<1$, and decreases with $s$. In this case, $I_{L}^{*}$ exceeds its minimum value, and SP$_F$ leases only a portion of the new spectrum invested by SP$_L$, i.e., $I_{F}^{*} < I_{L}^{*}$. Thus, SP$_{L}$ generates more of its revenue from EUs. The payoff of  SP$_{L}$ (SP$_{F}$) first jumps to a lower value at this threshold, and then increases (decreases) with $s$. At this threshold, the degree of cooperation also jumps to a lower value ($<1$). Thus, higher degrees of cooperation can enhance the payoff of both SPs, and the reservation fee $s$ enhances (reduces) the payoff of SP$_{L}$ (SP$_{F}$). Also,  SP$_F$ earns more than SP$_L$  for lower values of $s$; hence SP$_F$ gets more from the spectrum sharing between the 2 SPs in this case.   For higher values of $s$,  the reverse happens. 

\begin{figure}
\centering
\includegraphics[width=2in, height=1.55in]{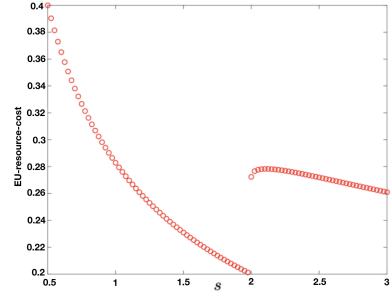}
\caption{EU-resource-cost vs. $s$. Here, $\gamma=0.5$, $c=1$.}
\label{figBM_a1}
\end{figure}

  $s$ has significant impact on the EU-resource-cost, as  depicted in Figure~\ref{figBM_a1}. We first explain the jump at the  threshold value of $s$.  When $s$ is less than the threshold, $I_L^* = I_F^*$, as seen in Figure~\ref{figBM1}~(right). Thus the EU-resource-cost  is $I_F^*/p_F^*$. At the threshold,  $I_F^* < I_L^*$, so the second term in EU-resource-cost \big($(I_L^*-I_F^*)/p_L^*$\big) jumps to a positive value from $0$,  leading to the  jump in the EU-resource-cost. The EU-resource-cost otherwise decreases in $s$, thus if a regulator chooses $s$, it ought to opt for a low value of $s$, though if $s$ is really low, then  SP$_{L}$ may not have enough incentive to cooperate due to low  $\pi_{L}^{*}$ (Figure~\ref{figBM1}~(left)). Note that the degree of cooperation is $1$ at low values of $s$, thus high degree of cooperation coincides with high EU-resource-cost.

\begin{figure}
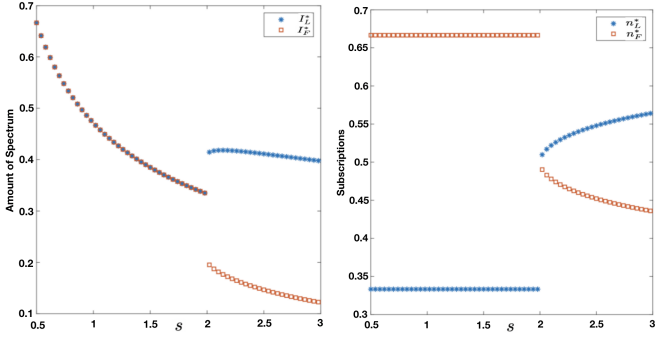

\centering
\includegraphics[width=1.7in]
{basecase2-1}
\includegraphics[width=1.7in]
{basecase2-2}
\caption{Investment decisions (left), the split of subscription (right) vs. $s$. Here, $\gamma=0.5$, $c=1$, $\Delta = 0$.}
\label{figBM2}
\end{figure}

\begin{figure}
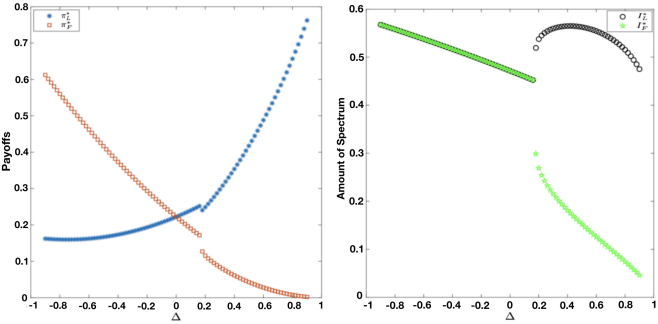

\centering
\includegraphics[width=1.7in]{basecase3-1}
\includegraphics[width=1.7in]{basecase3-2}
\caption{Payoffs (left), investment decisions (right) vs. $\Delta$. Here, $\gamma=0.5$, $c=1$, $s=1$.}
\label{figBM3}
\end{figure}

Figure~\ref{figBM2} shows the SPNE level of investment  (left) and subscriptions of SPs (right) when $\Delta = 0$. It reconfirms that  when $s$ is smaller than a threshold, SP$_F$ leases the entire spectrum SP$_L$ offers, and after that threshold, SP$_F$ leases only a portion of the new spectrum offered by  SP$_L$. Also,  $I^*_L$  strictly decreases with $s$ throughout.
 When $s$ is small, $I_{F}^{*}=I_{L}^{*}$, $n^*_F$ and $n^*_L$ are constant ($n_{L}^{*}=1/3$, $n_{F}^{*}=2/3$) independent of $\gamma$ and $s$, and $n^*_F>n^*_L$. After the threshold, $n^*_F$ decreases and $n^*_L$ increases with $s$ (because $I_{F}^{*}/I_{L}^{*}$ decreases with $s$ in Figure~\ref{figBM1} (right)). Comparing Figure~\ref{figBM1} (right) and Figure~\ref{figBM2} (right) we note that higher degrees of cooperations increase (decrease, respectively) the competition efficacy of SP$_F$ (SP$_L$, respectively).

Figure \ref{figBM3} plots the payoffs (left) and $I_L$, $I_F$ (right) as a function of $\Delta$ when $ |\Delta| < 1$, the region in which the SPNE  exists uniquely. We set $s=1$. As expected, the payoff of SP$_L$ (SP$_F$, respectively) increase (decrease, respectively) with increase in $\Delta$. Also,  SP$_F$ earns more than SP$_L$  for lower values of $\Delta$; hence SP$_F$ gets more from the spectrum sharing between the 2 SPs in this case.  For higher values of $s$,  the reverse happens. With increase in $\Delta$, $I_L$, $I_F$ may either increase or decrease, depending on whether additional spectrum provides ``bang for the buck'' by enticing commensurate number of EUs  which depends on the EUs' prior biases (static factors)  for or against the SPs. The figure shows which is the case.


\begin{figure}
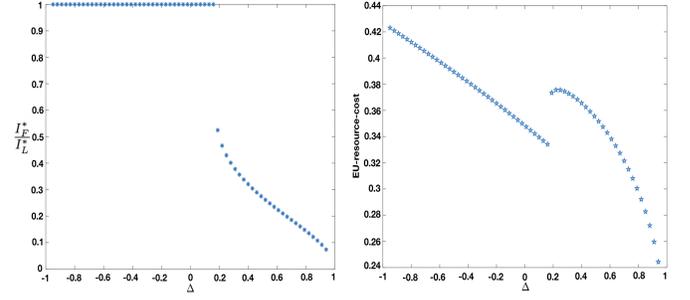

\centering
\includegraphics[width=1.7in, height=1.55in]{unequal1}
\includegraphics[width=1.7in, height=1.55in]{unequal2}
\caption{Degree of cooperation (left), EU-resource-cost (right) vs. $\Delta$. Here, $\gamma=0.5$, $c=1$, $\Delta = 0$.}
\label{figBM_unequal}
\end{figure}

Figure~\ref{figBM_unequal} plots the degree of cooperation (left) and the EU-resource-cost  (right) as a function of $\Delta$ when $ |\Delta| < 1$. Figure~\ref{figBM_unequal} (left) shows that the degree of cooperation is a constant $1$ when $\Delta$ is less than a threshold, and decreases when $\Delta$ is larger than this threshold. The amount of spectrum SP$_F$ leases from SP$_L$ decreases when SP$_L$ has larger common preference.
 The jump in the EU-resource-cost at the threshold value of $\Delta$ may be explained similar to that for Figure~\ref{figBM_a1}, considering Figure~\ref{figBM_unequal}~(left) instead of Figure~\ref{figBM1}~(right). Other than this jump, the EU-resource-cost  decreases in $\Delta.$ Again, note that  high degree of cooperation coincides with high EU-resource-cost.

\subsection{SPNE Analysis}\label{sec: BM-SPNE Analysis}
We use backward induction to characterize SPNE strategies,  starting from the last stage of the game and proceeding backward. For simplicity and brevity,  we present this analysis only for the important special case of $\Delta = 0$, and defer the general case to Appendix~\ref{Appendix-BM}. Thus, we prove Theorem~\ref{thm: Un-conclusion-sectionA} while applying $\Delta = 0$ in the corresponding expressions. Specific Theorems~\ref{thm: Un-conclusion-3sp}, \ref{thm: BM-I_F}, \ref{thm: stage 1} are proven in Appendix~\ref{Appendix-BM}.

\noindent{{\bf Stage 4:}}
We first characterize the equilibrium division of EUs between SPs, i.e., $n_L^*$ and $n_F^*$, using the knowledge of the strategies chosen by the  SPs in Stages 1$\sim$3.

\begin{definition}\label{def: BM-indifferent location}
$x_{0}$ is the {\it indifferent location} between the two service providers if $u_{L}(x_{0})=u_{F}(x_{0})$ (Figure~\ref{fig:mno-mvno}).
\end{definition}

By the full market coverage assumption, if  $0<x_{0}<1$, then EUs in the interval $[0, x_{0}]$ join $\text{SP}_{L}$ and those in the interval $[x_{0}, 1]$ join $\text{SP}_{F}$. If $x_{0}\leq0$,  all EUs choose $\text{SP}_{F}$; and if $x_{0}\geq1$,  all  EUs choose $\text{SP}_{L}$ (Figure~\ref{fig:mno-mvno}).

From Definition \ref{def: BM-indifferent location},
$u_{F}(x_{0})=v-t_{F}(1-x_{0})-p_{F}=v-t_{L}x_{0}-p_{L}=u_{L}(x_{0})$.
Since $t_{L}+t_{F}=1$, then
$x_{0}=\frac{t_{F}+p_{F}-p_{L}}{t_{L}+t_{F}}=t_{F}+p_{F}-p_{L}$. Thus,
\begin{align}\label{equ: BM-indifferent location}
x_{0}=t_{F}+p_{F}-p_{L}
\end{align}

Thus, since EUs are distributed uniformly along $[0,1]$,
the fraction of EUs with each SP  is:
\begin{equation}\label{equ: BM-demand}
\begin{aligned}
&n_{L}=\left\{\begin{aligned}
&0,&\,\text{if}\quad&x_{0}\leq0\\
&x_{0},&\,\text{if}\quad&0<x_{0}<1\\
&1,&\,\text{if}\quad&x_{0}\geq1\\
\end{aligned}\right.,\, n_{F}=1-n_{L},
\end{aligned}
\end{equation}
where $x_{0}$ is defined in (\ref{equ: BM-indifferent location}) and $n_F = 1-n_L$ (Figure~\ref{fig:mno-mvno}).

Only ``interior'' strategies may be SPNE, as:

\begin{theorem}\label{thm: Un-conclusion-3sp}
In the SPNE it must be that $0<x_{0}<1.$
\end{theorem}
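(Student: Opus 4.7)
\emph{Proof plan.} I would argue by contradiction that neither $x_0^*\le 0$ nor $x_0^*\ge 1$ can occur at SPNE, using the fact that the SPNE prices $(p_L^*,p_F^*)$ must constitute a Nash equilibrium of the Stage~3 pricing subgame given the SPNE investments $(I_L^*,I_F^*)$. The overall strategy is to exhibit, in each corner case, a profitable unilateral deviation in Stage~3 for the SP whose market share would be zero.

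\emph{Case $x_0^*\ge 1$.} Here $n_L^*=1$ and $n_F^*=0$, so $\pi_F^*=-sI_F^{*2}\le 0$. I would consider SP$_F$'s deviation to $p_F'=p_L^*+t_L-\epsilon$ for small $\epsilon>0$, which makes $x_0'=1-\epsilon$ and hence $n_F'=\epsilon>0$. The deviation payoff is $\epsilon(p_L^*+t_L-\epsilon-c)-sI_F^{*2}$, which strictly exceeds $-sI_F^{*2}$ as soon as $p_L^*+t_L>c$. To establish this inequality, I would invoke SP$_L$'s best-response condition: analogously to the standard corner analysis for SP$_F$, if SP$_L$ strictly prefers the corner $n_L=1$ to moving into the interior of the demand curve, then $p_L^*\ge c+t_L+1$ (otherwise a small upward perturbation of $p_L$ from $p_L^*=p_F^*+t_F$ yields a first-order gain, matching the calculation used for SP$_F$'s corner in Stage~3). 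This forces $p_L^*+t_L>c$, giving SP$_F$ a strictly profitable deviation and contradicting the Stage~3 NE condition.

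\emph{Case $x_0^*\le 0$.} By swapping the roles of the two SPs, SP$_L$'s subscription revenue is zero and SP$_L$ would deviate to $p_L'=p_F^*+t_F-\epsilon$, which yields $n_L=\epsilon>0$. The same type of argument on SP$_F$'s best-response condition (forcing $p_F^*\ge c+t_F+1$ when SP$_F$ strictly prefers the corner $n_F=1$) ensures $p_F^*+t_F>c$, so that SP$_L$'s deviation strictly improves its payoff, again contradicting the Stage~3 NE. Combining the two cases yields $0<x_0^*<1$.

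\emph{Main obstacle.} The delicate point is justifying the inequalities $p_L^*+t_L>c$ and $p_F^*+t_F>c$ cleanly, since a naive deviation only gives non-strict improvement when a boundary configuration holds (e.g.\ $p_L^*=c$ with $t_L=0$, i.e.\ $I_F^*=I_L^*$). I would handle this by comparing SP$_F$'s corner payoff $p_L^*-t_F-c-sI_F^{*2}$ against its interior best-response payoff $((t_L+p_L^*-c)/2)^2-sI_F^{*2}$: the interior payoff strictly dominates unless $p_L^*=c+t_F+1$, and then SP$_L$'s own best-response condition (which yields the interior maximizer $p_L=(t_F+p_F^*+c)/2$ with strictly positive $n_L$) rules out that specific price as a Stage~3 equilibrium choice. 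This pinning down of the degenerate boundaries via the simultaneity of best responses is the one genuinely subtle step; everything else is routine Hotelling-style calculus.
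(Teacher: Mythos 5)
Your proposal is correct in substance but follows a genuinely different route from the paper's. The paper first invokes a cross-stage lemma ($\pi_F^*\ge 0$, obtained by the Stage-2 deviation $I_F=0$) to force $I_F^*=0$ and $t_F^*=1$ at the corner $x_0^*\ge 1$, then shows $p_L^*\ge c$, then uses SP$_F$'s undercutting deviation to pin $p_L^*$ down to exactly $c$, and finally derives the contradiction from SP$_L$'s deviation $p_L'=p_F^*+\epsilon$ (which keeps $n_L'=1-\epsilon$ at a now-positive margin). You instead work entirely inside the Stage-3 pricing subgame: the corner-versus-interior comparison $(z-2)^2/4\ge 0$ for the incumbent forces its price up to at least $c+1$, after which the excluded SP profitably enters with $n=\epsilon$. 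Your version buys a cleaner statement (no corner price pair is a Nash equilibrium of \emph{any} pricing subgame, for any fixed $I_L\ge\delta$, $0\le I_F\le I_L$, so Lemma~\ref{lem1} and the treatment of $I_F^*$ are unnecessary); the paper's version buys slightly lighter algebra by exploiting $t_F^*=1$. Your handling of the degenerate boundary ($z=2$, where the incumbent is indifferent) via the other SP's strictly interior best response is exactly the right fix and is the step the paper sidesteps by pinning $p_L^*=c$ instead.

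Two small corrections so the write-up is airtight. First, the thresholds are attached to the wrong prices: at the corner $x_0^*\ge 1$ the best-response condition gives $p_L^*=t_F^*+p_F^*-1\ge c+1$ (it is $p_F^*$ that satisfies $p_F^*\ge c+1+t_L^*$), and symmetrically at $x_0^*\le 0$ it gives $p_F^*\ge c+1$ while $p_L^*\ge c+1+t_F^*$; the conclusions $p_L^*+t_L^*>c$ and $p_F^*+t_F^*>c$ still follow, so nothing breaks. Second, in the case $x_0^*\ge 1$ the boundary price you should perturb from is $p_L^*=p_F^*+t_F^*-1$ (the $x_0=1$ boundary), not $p_L^*=p_F^*+t_F^*$, which is the $x_0=0$ boundary relevant to the other corner; also note that $x_0^*>1$ strictly is immediately excluded since SP$_L$ could then raise $p_L$ without losing any subscribers.
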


\noindent{{\bf Stage 3:}}
SP$_L$ and SP$_F$ determine their access fees for EUs, $p_L$ and $p_F$, respectively, to maximize their payoffs.

\begin{lemma}\label{lem: BM-stage-3-payoffs}
The payoffs of SPs are:

\begin{equation}\label{equ: BM-interior-payoffs}
\begin{aligned}
\pi_{L}=&(t_{F}+p_{F}-p_{L})(p_{L}-c)+sI_{F}^{2}-\gamma I_{L}^{2}\\
\pi_{F}=&(t_{L}+p_{L}-p_{F})(p_{F}-c)-sI_{F}^{2}
\end{aligned}
\end{equation}
\end{lemma}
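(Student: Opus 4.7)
The plan is to derive the expressions in \eqref{equ: BM-interior-payoffs} by substituting the Stage 4 subscription functions into the general payoff definitions \eqref{equ: BM-L-payoff} and \eqref{equ: BM-F-payoff}, while restricting to the interior regime guaranteed by Theorem~\ref{thm: Un-conclusion-3sp}.

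First I would invoke Theorem~\ref{thm: Un-conclusion-3sp}, which ensures that in any SPNE we must have $0 < x_{0} < 1$. Since the purpose of Stage 3 analysis is to characterize candidate SPNE strategies via backward induction, the SPs need only compare their payoffs over the interior branch of \eqref{equ: BM-demand}: any $(p_L, p_F)$ pair driving $x_{0}$ outside $(0,1)$ cannot be part of a SPNE and so may be excluded from the Stage 3 optimization. Therefore, for the purposes of this lemma, it suffices to take $n_{L} = x_{0}$ and $n_{F} = 1 - x_{0}$.

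Next I would substitute the closed-form expression for $x_{0}$ from \eqref{equ: BM-indifferent location}, namely $x_{0} = t_{F} + p_{F} - p_{L}$, to obtain $n_{L} = t_{F} + p_{F} - p_{L}$. For $n_{F}$, I would use the identity $t_{L} + t_{F} = 1$ (which is immediate from the definitions $t_{L} = I_{F}/I_{L}$ and $t_{F} = (I_{L}-I_{F})/I_{L}$) to write
\[
n_{F} \;=\; 1 - (t_{F} + p_{F} - p_{L}) \;=\; (1 - t_{F}) + p_{L} - p_{F} \;=\; t_{L} + p_{L} - p_{F}.
\]

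Finally, I would plug these two expressions into \eqref{equ: BM-L-payoff} and \eqref{equ: BM-F-payoff} to recover the displayed identities in \eqref{equ: BM-interior-payoffs}. There is no genuine obstacle here; the only nontrivial point is the appeal to Theorem~\ref{thm: Un-conclusion-3sp} to justify restricting attention to the interior case before performing the substitution, which is what allows the piecewise definition of $n_{L}$ in \eqref{equ: BM-demand} to collapse to a single linear expression in $p_{L}, p_{F}$.
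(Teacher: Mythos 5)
Your proposal is correct and follows essentially the same route as the paper: the paper's proof is exactly the substitution of $(n_{L}, n_{F})=(t_{F}+p_{F}-p_{L},\, 1-n_{L})$ from \eqref{equ: BM-demand} into \eqref{equ: BM-L-payoff} and \eqref{equ: BM-F-payoff}, with the restriction to the interior branch justified by Theorem~\ref{thm: Un-conclusion-3sp} exactly as you argue. Your additional step spelling out $n_{F}=t_{L}+p_{L}-p_{F}$ via $t_{L}+t_{F}=1$ is the same identity the paper uses implicitly.
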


\begin{proof}
From (\ref{equ: BM-demand}), substitute $(n_{L}, n_{F})=(t_{F}+p_{F}-p_{L}, 1-n_{L})$ into (\ref{equ: BM-L-payoff}) and (\ref{equ: BM-F-payoff}), and get (\ref{equ: BM-interior-payoffs}).
\end{proof}

We next obtain the SPNE $p_{F}^{*}$ and $p_{L}^{*}$ which maximize the payoffs $\pi_{L}$ and $\pi_{F}$ of the SPs respectively.

\begin{theorem}\label{thm: BM-prices}
The SPNE pricing strategies  are:
\begin{equation}\label{equ: BM-prices}
p_{L}^{*}=c+\frac{2}{3}-\frac{I_{F}}{3I_{L}},\quad p_{F}^{*}=c+\frac{1}{3}+\frac{I_{F}}{3I_{L}}
\end{equation}
\end{theorem}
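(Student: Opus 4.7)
The plan is to apply standard Nash-equilibrium analysis to the pricing subgame defined by the payoffs in Lemma~\ref{lem: BM-stage-3-payoffs}. Fix the Stage~1--2 choices $I_L, I_F$, so that $t_L = I_F/I_L$ and $t_F = 1 - I_F/I_L$ are constants. Each SP chooses her own access fee to maximize her payoff taking the other's as given, and I will solve the resulting system of best responses.

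First I would observe that $\pi_L$ and $\pi_F$ in \eqref{equ: BM-interior-payoffs} are both strictly concave quadratics in their own pricing variables: $\partial^2\pi_L/\partial p_L^2 = \partial^2\pi_F/\partial p_F^2 = -2 < 0$. Hence the best responses are uniquely characterized by the first-order conditions
\begin{equation*}
\frac{\partial \pi_L}{\partial p_L} = t_F + p_F - 2p_L + c = 0,\qquad
\frac{\partial \pi_F}{\partial p_F} = t_L + p_L - 2p_F + c = 0,
\end{equation*}
which give $p_L = (t_F + p_F + c)/2$ and $p_F = (t_L + p_L + c)/2$. This is a linear $2\times 2$ system with a unique solution, so the Nash equilibrium of the pricing subgame exists and is unique.

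Next I would solve the system by substitution. Plugging the expression for $p_F$ into the equation for $p_L$ yields $3p_L = 2t_F + t_L + 3c$, i.e. $p_L^* = c + (2t_F + t_L)/3$. Substituting $t_F = 1 - I_F/I_L$ and $t_L = I_F/I_L$ gives $p_L^* = c + 2/3 - I_F/(3I_L)$, as claimed. A symmetric computation yields $p_F^* = c + 1/3 + I_F/(3I_L)$.

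The only step needing care is to confirm that the unconstrained maximizers above actually constitute the SPNE of Stage~3, i.e.\ that the interior regime $0 < x_0 < 1$ from Theorem~\ref{thm: Un-conclusion-3sp} is respected. Substituting $p_L^*, p_F^*$ into \eqref{equ: BM-indifferent location} gives $x_0 = t_F + p_F^* - p_L^* = 2/3 - I_F/(3I_L)$, which lies in $[1/3, 2/3] \subset (0,1)$ for every admissible $(I_L, I_F)$ with $0 \leq I_F \leq I_L$. Thus no boundary case can arise, the interior FOC solution is the genuine stage-3 SPNE, and this is the main (minor) subtlety in an otherwise routine quadratic-optimization argument.
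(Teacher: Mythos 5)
Your derivation of the first-order conditions, the solution of the resulting linear system, and the verification that the candidate prices yield $x_0=\frac{2}{3}-\frac{I_F}{3I_L}\in(0,1)$ are all correct and follow essentially the same route as the paper. However, there is one genuine gap in your final step. Checking that the candidate $(p_L^*,p_F^*)$ produces an interior $x_0$ is not the same as checking that neither SP has a profitable unilateral deviation: the quadratic expressions in \eqref{equ: BM-interior-payoffs} are valid only for price pairs with $0<x_0<1$, and a unilateral deviation in $p_L$ (or $p_F$) large or small enough to push $x_0$ to $0$ or to $1$ changes the functional form of the payoff via \eqref{equ: BM-demand} --- the demand saturates at $n_L=0$ or $n_L=1$, so $\pi_L$ becomes constant, respectively linear and increasing, in $p_L$ in those regimes. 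Your strict-concavity argument therefore only rules out deviations that stay inside the interior regime; it says nothing about deviations into the corner regimes.

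The paper closes this with a short continuity argument: the payoffs \eqref{equ: BM-L-payoff} and \eqref{equ: BM-F-payoff} are continuous as $n_L\downarrow 0$ and $n_L\uparrow 1$, and the interior quadratic attains its maximum at the FOC solution, so the payoff from any corner deviation is already dominated by the payoff at the boundary of the interior regime and hence by the payoff at $(p_L^*,p_F^*)$. You should add this observation, or equivalently a piecewise analysis of $\pi_L$ as a function of $p_L$ with $p_F^*$ fixed (linear increasing where $x_0\geq1$, concave quadratic where $0<x_0<1$, constant where $x_0\leq0$, with the pieces agreeing at the junctions), to complete the proof. Everything else in your proposal is sound.
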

\begin{proof}
$p^*_F$ and $p^*_L$ must satisfy the first order condition, i.e., $\frac{d \pi_F}{dp_F}=0$ and   $\frac{d \pi_L}{dp_L}=0$.  Thus,
$p^*_F=c+\frac{I_L+I_F}{3I_L} \quad \& \quad p^*_L=c+\frac{2I_L-I_F}{3I_L}$.
 $p^*_F$ and $p^*_L$ are the unique SPNE strategies if they yield $0< x_0< 1$ and no unilateral deviation is profitable for SPs. We establish these respectively in  Parts A and B.

\noindent{{\bf Part A}.}
From (\ref{equ: BM-prices}),
$x_0=\frac{I_L^*-I_F^*}{I_L^*}+p^*_F-p^*_L=\frac{2I_L^*-I_F^*}{3I_L^*}$. Since $I_L^*\geq I_F^*$ and $I_L^*>0$, then $0<x_{0}<1$.

\noindent{{\bf Part B}.}
Since $\frac{d^2 \pi_F}{dp^2_F}<0, \frac{d^2 \pi_L}{dp^2_L}<0$,  a local maxima is also a global maximum, and any solution to the first order conditions  maximize the payoffs   when $0<x_0<1$,  and no unilateral deviation by which $0<x_0<1$ would be profitable for the SPs.
Now, we show that unilateral deviations of the SPs leading to  $n_L=0, n_F=1$ and $n_L=1, n_F=0$  is not profitable. Note that the payoffs of the SPs, \eqref{equ: BM-L-payoff} and \eqref{equ: BM-F-payoff}, are continuous  as $n_L\downarrow 0$, and $n_L\uparrow 1$ (which subsequently yields  $n_F\uparrow 1$ and $n_F\downarrow 0$, respectively). Thus, the payoffs of both SPs when selecting $p_L$ and $p_F$ as the  solutions of the first order conditions are greater than or equal to the payoffs when $n_L=0$ and $n_L=1$. Thus, the unilateral deviations under consideration are not profitable for the SPs.
\end{proof}

\begin{remark}\label{re: convex function}
The proof shows that $x_0, p_{L}^{*}, p_{F}^{*}$ do not depend on the specific nature of the costs of leasing spectrum $I_F, I_L$, neither does $n_L^*, n_F^*$ from \eqref{equ: BM-demand}. Thus the SPNE expressions for these would remain the same for any other cost function. But, the SPNE of investment levels ($I_{L}^{*}$, $I_{F}^{*}$) as obtained in the next results depend on the specific nature of these functions.  \end{remark}

\noindent{{\bf Stage 2:}}
SP$_F$ decides on  the amount of spectrum to be leased from SP$_L$, $I_F$, with the condition that $0\leq I_F\leq I_L$,  to maximize $\pi_F$.

\begin{theorem}\label{thm: BM-I_F}
The SPNE spectrum acquired by $\text{SP}_{F}$ is:
\begin{align}\label{equ: BM-I_F}
  I_{F}^{*}=\left\{\begin{aligned}&\frac{I_{L}}{9I_{L}^{2}s-1}\quad&\text{when}&\quad I_{L}>\sqrt{\frac{2}{9s}}\\&I_{L}\quad&\text{when}&\quad \delta\leq I_{L}\leq\sqrt{\frac{2}{9s}}\end{aligned}\right.
\end{align}
\end{theorem}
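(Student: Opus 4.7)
Proof plan for Theorem \ref{thm: BM-I_F}.

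The plan is to substitute the Stage~3 equilibrium prices from Theorem~\ref{thm: BM-prices} into $\pi_F$ from \eqref{equ: BM-interior-payoffs}, reducing the Stage~2 problem to a one-dimensional optimization in $I_F\in[0,I_L]$ for a given $I_L\ge\delta$. With $t_L=I_F/I_L$ and the Stage~3 values of $p_L^*,p_F^*$, both the ``margin'' $t_L+p_L^*-p_F^*$ and the ``markup'' $p_F^*-c$ equal $\tfrac{1}{3}+\tfrac{I_F}{3I_L}$, so the reduced payoff becomes
\begin{equation*}
\pi_F(I_F\mid I_L)=\left(\frac{1}{3}+\frac{I_F}{3I_L}\right)^{2}-sI_F^{2}.
\end{equation*}
This is a quadratic in $I_F$ whose shape is controlled by $I_L$, which is exactly the structure I need to exploit.

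Next I would compute the first- and second-order derivatives with respect to $I_F$. The second derivative is $\tfrac{2}{9I_L^{2}}-2s=\tfrac{2(1-9sI_L^{2})}{9I_L^{2}}$, so the concavity of $\pi_F(\cdot\mid I_L)$ flips at $I_L=\tfrac{1}{3\sqrt{s}}$. Solving the first-order condition $\tfrac{1}{3I_L}\bigl(\tfrac{1}{3}+\tfrac{I_F}{3I_L}\bigr)=sI_F$ gives the unique stationary point
\begin{equation*}
\widehat I_F(I_L)=\frac{I_L}{9sI_L^{2}-1},
\end{equation*}
which is positive and finite precisely when $9sI_L^{2}>1$.

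I would then split the analysis by $I_L$. In the concave regime $I_L>\tfrac{1}{3\sqrt{s}}$, the stationary point is the unconstrained maximizer; the constraint $\widehat I_F(I_L)\le I_L$ rearranges to $9sI_L^{2}-1\ge 1$, i.e.\ $I_L\ge\sqrt{(2)/(9s)}$, so for $I_L>\sqrt{(2)/(9s)}$ the interior formula $I_F^{*}=I_L/(9sI_L^{2}-1)$ holds, while for $\tfrac{1}{3\sqrt{s}}<I_L\le\sqrt{(2)/(9s)}$ the stationary point exceeds $I_L$ and the concave maximum on $[0,I_L]$ is at $I_F^{*}=I_L$. In the convex regime $\delta\le I_L\le\tfrac{1}{3\sqrt{s}}$, the maximum is at an endpoint; a direct comparison of $\pi_F(0)=\tfrac{1}{9}$ with $\pi_F(I_L)=\tfrac{4}{9}-sI_L^{2}$, together with $sI_L^{2}\le\tfrac{1}{9}$, gives $\pi_F(I_L)>\pi_F(0)$, so again $I_F^{*}=I_L$. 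Combining these cases yields the piecewise expression in \eqref{equ: BM-I_F}.

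The only subtle point I foresee is the convex-regime boundary comparison: I must use the assumption $\delta\le I_L\le\sqrt{(2)/(9s)}$ (which in particular implies $sI_L^{2}\le 2/9<1/9$ in the narrower subrange $I_L\le 1/(3\sqrt{s})$) carefully to guarantee $\pi_F(I_L)\ge\pi_F(0)$ so that the corner $I_F=I_L$ is indeed selected; once that is in place, Remark~\ref{re: convex function} and Theorem~\ref{thm: BM-prices} legitimize reusing the Stage~3 prices on either side of the threshold, and the continuity of $\widehat I_F$ at $I_L=\sqrt{(2)/(9s)}$ ensures the two branches match.
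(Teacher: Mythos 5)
Your proposal is correct and follows essentially the same route as the paper's proof: substitute the Stage~3 prices to reduce $\pi_F$ to the quadratic $(\frac{1}{9I_L^2}-s)I_F^2+\frac{2}{9I_L}I_F+\frac{1}{9}$ on $[0,I_L]$, split on the sign of the leading coefficient at $I_L=\frac{1}{3\sqrt{s}}$, and locate the maximizer via the stationary point $\frac{I_L}{9sI_L^2-1}$ versus the endpoints. The only cosmetic difference is that in the convex regime you compare the two endpoint values directly rather than invoking the paper's midpoint-versus-vertex lemma (and note the typo $sI_L^2\le 2/9<1/9$, which should read $sI_L^2\le 1/9$); the substance is identical.
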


\noindent{{\bf Stage 1:}}
 SP$_L$ chooses the amount of spectrum $I_L$ to lease from the regulator, to maximize $\pi_L$.
\begin{theorem}\label{thm: stage 1}
The SPNE spectrum acquired by SP$_L$, $I^*_L$ is the solution of the following maximization
 \begin{equation}\label{equ: BM-I_L}
\footnotesize
\begin{aligned}
\max_{I_L}\,\,&\pi_{L}=\frac{1}{9}(2-\frac{1}{9sI^2_L-1})^2+s(\frac{I_L}{9sI^{2}_{L}-1})^2-\gamma I^2_L\\
s.t\,\,&\sqrt{\frac{2}{9s}}\leq I_L.
\end{aligned}
\end{equation}
\end{theorem}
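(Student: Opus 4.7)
The plan is to substitute the Stage 3 and Stage 2 characterizations into the payoff of SP$_L$ to reduce it to a function of the single variable $I_L$, and then argue that the feasible region $[\delta,\infty)$ can effectively be replaced by $[\sqrt{2/(9s)},\infty)$.

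First, I would combine Lemma~\ref{lem: BM-stage-3-payoffs} with the SPNE prices of Theorem~\ref{thm: BM-prices} to eliminate $p_L$ and $p_F$. Since $p_L^*-c=n_L^*=2/3-I_F/(3I_L)$ at $\Delta=0$, the payoff simplifies to
\[
\pi_L(I_L,I_F)=\Bigl(\tfrac{2}{3}-\tfrac{I_F}{3I_L}\Bigr)^{\!2}+sI_F^{2}-\gamma I_L^{2}.
\]
Now substitute the best-response $I_F^*(I_L)$ from Theorem~\ref{thm: BM-I_F}. On the upper branch $I_L>\sqrt{2/(9s)}$, we have $I_F^*/I_L=1/(9sI_L^2-1)$ and $I_F^*=I_L/(9sI_L^2-1)$, which produces exactly the objective displayed in \eqref{equ: BM-I_L}. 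On the lower branch $\delta\le I_L\le\sqrt{2/(9s)}$, $I_F^*=I_L$, so the payoff collapses to $\pi_L=\tfrac{1}{9}+(s-\gamma)I_L^{2}$.

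Next I would handle the lower branch. Since $s>\gamma$ by assumption, $\pi_L=\tfrac{1}{9}+(s-\gamma)I_L^{2}$ is strictly increasing in $I_L$ on $[\delta,\sqrt{2/(9s)}]$, so its supremum over that interval is attained at the right endpoint $I_L=\sqrt{2/(9s)}$. A routine check at this endpoint shows that the two branch expressions coincide (both equal $\tfrac{1}{9}+(s-\gamma)\cdot\tfrac{2}{9s}$), because $I_F^*$ is continuous there. Consequently, any maximizer of $\pi_L$ over $[\delta,\infty)$ must lie in $[\sqrt{2/(9s)},\infty)$, and restricting the optimization to this subregion (as stated in Theorem~\ref{thm: stage 1}) is without loss. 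Existence of the maximizer follows from observing that as $I_L\to\infty$ the first squared bracket tends to $4/9$, the middle term $sI_L^{2}/(9sI_L^{2}-1)^{2}$ tends to $0$, while $-\gamma I_L^{2}\to-\infty$, so $\pi_L\to-\infty$ and any maximizing sequence must stay in a compact subinterval.

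The main obstacle, and the place where care is needed, is the continuity/matching at $I_L=\sqrt{2/(9s)}$: the Stage~2 best response switches regimes there, so one must verify that $\pi_L$ does not jump and that the supremum on the left regime is not strictly larger than values achieved on the right regime. Once monotonicity on the lower branch and continuity at the boundary are established, the reduction to \eqref{equ: BM-I_L} is immediate, and no further optimality conditions are required because the theorem only asserts that $I_L^*$ solves that maximization (it does not claim a closed form, a uniqueness argument, or interior first-order conditions for the outer stage).
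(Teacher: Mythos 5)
Your proposal is correct and follows essentially the same route as the paper's proof: substitute the Stage-3 prices to get $\pi_L=(\tfrac{2}{3}-\tfrac{I_F^*}{3I_L})^2+sI_F^{*2}-\gamma I_L^2$, observe that on the branch $I_F^*=I_L$ the objective $\tfrac{1}{9}+(s-\gamma)I_L^2$ is increasing (since $s>\gamma$) so the optimum there sits at $I_L=\sqrt{2/(9s)}$, and use continuity of $I_F^*$ at that point to fold the lower branch into the constraint $I_L\ge\sqrt{2/(9s)}$. Your added remark that $\pi_L\to-\infty$ as $I_L\to\infty$ (so a maximizer exists) is a small extra the paper omits, but it does not change the argument.
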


 Let $\Delta = 0$. Theorem~\ref{thm: Un-conclusion-sectionA} follows from  Theorems~\ref{thm: Un-conclusion-3sp}, \ref{thm: BM-prices}, \ref{thm: BM-I_F}, \ref{thm: stage 1}. Theorem~\ref{thm: Un-conclusion-3sp} allows us to consider only  interior SPNE. Parts (1) and (2)
of Theorem~\ref{thm: Un-conclusion-sectionA}  follow respectively from Theorems \ref{thm: stage 1} and \ref{thm: BM-I_F}. Part (3) follows from Theorem~\ref{thm: BM-prices}, part (4) from Theorem~\ref{thm: BM-prices} and (\ref{equ: BM-demand}).

\section{EUs with Outside Options}\label{sec: out}

We now generalize our framework to consider a scenario in which the EUs from the common pool the SPs are competing over,  may not choose either of the two SPs if the service quality-price tradeoff they offer is not satisfactory. In effect, there is an outside option for the EUs. Also, each SP has an exclusive additional customer base which can provide customers beyond the common pool   depending on the service quality and access fees they offer. We introduce these modifications through demand functions we describe next.

\begin{definition}\label{definition: new_demand}
The fraction\footnote{The fraction may be replaced with actual number (of EUs) in this case, by altering scale factors in this expression and in those of the payoffs. Our results hold for both interpretations as we do not use $0 \leq \tilde{n}_{L}, \tilde{n}_{F} \leq 1$ in any derivation.
  We use $0 \leq n_{L}, n_{F} \leq 1$ though.} of EUs  with each SP is
\begin{equation*}
\begin{aligned}
\tilde{n}_{L}=\alpha n_{L}+\tilde{\varphi}_{L}(p_{L}, I_{L}),\ \
\tilde{n}_{F}=\alpha n_{F}+\tilde{\varphi}_{F}(p_{F}, I_{F}),
\end{aligned}
\end{equation*}
where
\begin{equation*}
\begin{aligned}
&\tilde{\varphi}_{L}(p_{L}, I_{L})=k'-\theta' p_{L}+b'(I_{L}-I_{F}),\\
&\tilde{\varphi}_{F}(p_{F}, I_{F})=k'-\theta' p_{F}+b'I_{F}
\end{aligned}
\end{equation*}
and $\alpha>0$, $k'$, $\theta'$ and $b'$ are constants.
\end{definition}

Here, $n_L, n_F$ represent fractional subscriptions from the common pool as before, and are determined in Stage 4 of the sequential game described in Section~\ref{games},  based on the utilities specified in (\ref{equ: BM-utility EUs}), with $v^L = v^F$ for simplicity.  The demand functions $\tilde{\varphi}_L(.,.)$ and $\tilde{\varphi}_F(.,.)$ can be positive or negative. A positive value denotes attracting EUs presumably from an exclusive additional customer base beyond the common pool, and a negative value denotes losing some of the EUs in the common pool  to an outside option. The size of the common pool may be different from the exclusive additional customer bases of the SPs; to account for this disparity, we multiply the fractional subscriptions from the common pool, $n_L, n_F$ with a constant $\alpha.$

Considering $\theta' = \alpha$, for analytical tractability:
\begin{equation}\label{equ: Out-demand}
\begin{aligned}
&\tilde{n}_{L}=\alpha\big(n_{L}+\varphi_{L}(p_{L}, I_{L})\big),\\
&\tilde{n}_{F}=\alpha\big(n_{F}+\varphi_{F}(p_{F}, I_{F})\big),
\end{aligned}
\end{equation}
with $k = k'/\alpha$, $b = b'/\alpha$, and
\begin{equation}
\begin{aligned}
&\varphi_{L}(p_{L}, I_{L})=k-p_{L}+b(I_{L}-I_{F}),\\
&\varphi_{F}(p_{F}, I_{F})=k-p_{F}+bI_{F}
\end{aligned}
\end{equation}

The formulation is the same as in Sections~\ref{sec: BM-independent framework}, \ref{games}, with $\tilde{n}_L, \tilde{n}_F$  replacing $n_L, n_F$ in \eqref{equ: BM-L-payoff} and \eqref{equ: BM-F-payoff}.  Using the  argument that led us to the expression for the  As in Section~\ref{sec: BM-independent framework}, the  EU-resource-cost is $I_F^*/p_F^*+(I_L^*-I_F^*)/p_L^*$, following the argument in the last paragraph of Section~\ref{sec: BM-independent framework}.  We characterize the SPNE strategies in Section~\ref{sec: out-outcome}, and provide numerical results in Section \ref{sec: out-4-numerical}.

\subsection{The SPNE outcome}\label{sec: out-outcome}

 For simplicity, we consider only interior SPNE strategies, that is, $0 < n_L^*, n_F^* < 1$.
 We define functions $f(I_L)$, $g(I_L)$, $\pi_{L}(I_{F})$ and sets $\mathbb{L}_{1}$, $\mathbb{L}_{2}$ as follows:
\begin{equation*}
\footnotesize
\begin{aligned}
&g(I_{L})=\frac{b}{15}I_{L}+\frac{1}{15}-\frac{c}{3}+\frac{k}{3}, \,\, f(I_{L})=\frac{1}{5I_{L}}+\frac{b}{5},\\
&\theta(y)=2\alpha \big(\frac{b}{5}I_{L}+\frac{1}{5}+g(I_{L})-f(I_{L})y\big)^{2}+sy^{2}-\gamma I_{L}^{2},	
\end{aligned}
\end{equation*}
\begin{equation*}
\footnotesize
\begin{aligned}
\mathbb{L}_{1}=&\{s>2\alpha f^{2}(I_{L})+2\alpha f(I_{L})g(I_{L})/I_{L},\, g(I_{L})\geq0,\\
&\delta\leq I_{L}, I_L<4/b\},	
\end{aligned}
\end{equation*}
\begin{equation*}
\footnotesize
\begin{aligned}
\mathbb{L}_{2}=&\{0\leq I_{L}, I_L<4/b\}\cap\Big(\{g(I_{L})\geq0,\\
&\,2\alpha f^{2}(I_{L})\leq s\leq2\alpha f^{2}(I_{L})+2\alpha f(I_{L})g(I_{L})/I_{L}\}\\
\cup&\{2\alpha f^{2}(I_{L})+4\alpha f(I_{L})g(I_{L})/I_{L}\geq s,\,2\alpha f^{2}(I_{L})>s\}\Big).	
\end{aligned}
\end{equation*}
With $\delta<4/b$, we prove in  Appendix~\ref{Appendix: outside option}:
\begin{theorem}
\label{outsideoptiontheorem}
The interior SPNE strategies are:
\begin{itemize}
  \item [(1)]  $I_{L}^{*}$ is characterized in
 \begin{equation*}
\small
\begin{aligned}
I_{L}^{*}=\argmax_{I_{L}}\Big(\max_{I_{L}\in\mathbb{L}_{1}}\theta(\frac{-2\alpha f(I_{L})g(I_{L})}{2\alpha f^{2}(I_{L})-s}),\max_{I_{L}\in\mathbb{L}_{2}}\theta(I_{L})\Big)
\end{aligned}
\end{equation*}

  \item [(2)]  $I_{F}^{*}$ is characterized in
\begin{equation*}
 \begin{aligned}
  I_{F}^{*}=\left\{\begin{aligned}
  &\frac{-2\alpha f(I_{L})g(I_{L})}{2\alpha f^{2}(I_{L})-s}\,\,& \text{if}\,\, I_{L}\in\mathbb{L}_{1}\\
  &I_{L}\,\, & \text{if}\,\, I_{L}\in\mathbb{L}_{2}\end{aligned}\right.
\end{aligned}
\end{equation*}
 \item [(3)]
$p_{L}^{*}=\frac{1}{15}+\frac{2c}{3}+\frac{k}{3}+\frac{I_{L}^{*}-I_{F}^{*}}{5I_{L}^{*}}-\frac{b}{5}I_{F}^{*}+\frac{4b}{15}I_{L}^{*}$,
$p_{F}^{*}=\frac{1}{15}+\frac{2c}{3}+\frac{k}{3}+\frac{I_{F}^{*}}{5I_{L}^{*}}+\frac{b}{15}I_{L}^{*}+\frac{b}{5}I_{F}^{*}$.

\item [(4)] $\tilde{n}_{L}^{*}=\frac{I_{L}^{*}-I_{F}^{*}}{I_{L}^{*}}+p_{F}^{*}-2p_{L}^{*}+k+bI_{L}^{*}-bI_{F}^{*}$,
  $\tilde{n}_{F}^{*}=\frac{I_{F}^{*}}{I_{L}^{*}}+p_{L}^{*}-2p_{F}^{*}+k+bI_{F}^{*}$
  \end{itemize}
\end{theorem}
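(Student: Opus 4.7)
The plan is to run backward induction through the same four stages as in the base case, modified only by the replacement of $n_L, n_F$ in the payoffs with the demand-augmented quantities $\tilde n_L = \alpha(n_L+\varphi_L)$ and $\tilde n_F = \alpha(n_F+\varphi_F)$. Stage~4 is unchanged: with $v^L = v^F$, the indifferent-location argument that led to Theorem~\ref{thm: Un-conclusion-3sp} and \eqref{equ: BM-indifferent location} still gives $n_L = (I_L - I_F)/I_L + p_F - p_L$ on the interior range $0<n_L<1$.

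For Stage~3, I would substitute $n_L, n_F$ into $\tilde n_L, \tilde n_F$ and then into $\pi_L, \pi_F$. Each $\pi_i$ is a strictly concave quadratic in its own $p_i$ because $\partial \tilde n_i / \partial p_i = -2\alpha$, so the first-order conditions, which simplify to the convenient identities $\tilde n_L^* = 2\alpha(p_L^*-c)$ and $\tilde n_F^* = 2\alpha(p_F^*-c)$, uniquely determine $(p_L^*, p_F^*)$. Solving the resulting $2\times 2$ linear system and collecting terms produces the formulas of part~(3), and a small algebraic rearrangement reveals the compact form $p_L^* - c = \tfrac{b}{5} I_L + \tfrac{1}{5} + g(I_L) - f(I_L)\,I_F$ and $p_F^* - c = g(I_L) + f(I_L)\,I_F$, which is what motivates the definitions of $g$ and $f$. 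To certify these as SPNE strategies I would, as in Part~B of the proof of Theorem~\ref{thm: BM-prices}, check that no unilateral deviation driving $n_L$ to a corner is profitable; the requirement that $0 < n_L < 1$ be preserved for every admissible $I_F\in[0, I_L]$ is what forces the standing constraint $I_L < 4/b$ that appears throughout the definitions of $\mathbb{L}_1$ and $\mathbb{L}_2$.

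For Stage~2, substituting $p_F^* - c = g(I_L) + f(I_L) I_F$ into $\pi_F = 2\alpha(p_F^*-c)^2 - sI_F^2$ yields the quadratic $\pi_F(I_F) = 2\alpha\bigl(g(I_L) + f(I_L)\,I_F\bigr)^2 - s I_F^2$ on $[0, I_L]$, whose second derivative is $4\alpha f^2 - 2s$. The whole case split in the theorem comes from maximizing this quadratic: when $s > 2\alpha f^2$ the payoff is strictly concave, the FOC gives the candidate $I_F^\circ = -2\alpha f g/(2\alpha f^2 - s)$, and it lies in $(0, I_L]$ exactly when $g \geq 0$ and $s \geq 2\alpha f^2 + 2\alpha f g/I_L$ (carving out $\mathbb{L}_1$); if $s$ is between $2\alpha f^2$ and $2\alpha f^2 + 2\alpha f g/I_L$ the concave candidate overshoots $I_L$ so the maximizer on $[0, I_L]$ is $I_F^* = I_L$ (first branch of $\mathbb{L}_2$); and when $s < 2\alpha f^2$ the payoff is convex, so the maximizer is a boundary and the comparison $\pi_F(I_L) \geq \pi_F(0)$ simplifies to $s \leq 2\alpha f^2 + 4\alpha f g/I_L$ (second branch of $\mathbb{L}_2$). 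Assembling these sub-cases yields part~(2).

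Finally, the closed-form $I_F^*(I_L)$ from Stage~2 turns $\pi_L = 2\alpha(p_L^*-c)^2 + s(I_F^*)^2 - \gamma I_L^2$ into $\theta(I_F^*(I_L))$, a piecewise function of $I_L$ whose pieces correspond to $I_L \in \mathbb{L}_1$ (with the FOC branch) and $I_L \in \mathbb{L}_2$ (with $I_F^* = I_L$); the SPNE $I_L^*$ is the $\argmax$ of $\pi_L$ across the union, which is exactly the expression in part~(1), and substituting $(I_L^*, I_F^*)$ back into the Stage~3 formulas recovers $p_L^*, p_F^*, \tilde n_L^*, \tilde n_F^*$ in parts~(3) and~(4). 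The main obstacle I anticipate is the bookkeeping for the Stage~2 case analysis: verifying that the convex-$\pi_F$ regime, the two concave sub-cases, and the boundary $g(I_L) = 0$ knit together into mutually exclusive and jointly exhaustive regions, that $\mathbb{L}_1 \cup \mathbb{L}_2$ covers exactly the $I_L$-values for which $0 < n_L^*, n_F^* < 1$ holds throughout (which is where $I_L < 4/b$ enters), and that in each region no deviation of SP$_F$ to $I_F \in \{0, I_L\}$ or of either SP to a corner $n_L\in\{0,1\}$ yields a strict improvement.
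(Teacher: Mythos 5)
Your proposal is correct and follows essentially the same route as the paper: backward induction with the first-order conditions $\tilde n_i^* = 2\alpha(p_i^*-c)$ in Stage~3 (with $I_L<4/b$ coming from requiring $0<x_0<1$ for all $I_F\in[0,I_L]$), the same quadratic case analysis of $\pi_F(I_F)=2\alpha\bigl(g+fI_F\bigr)^2-sI_F^2$ producing $\mathbb{L}_1$ and $\mathbb{L}_2$, and the $\argmax$ of $\theta$ over the two branches in Stage~1. The only step you defer to ``bookkeeping'' that the paper spells out is excluding the residual convex/concave sub-cases where the maximizer would be $I_F^*=0$: there $p_F^*-c=g(I_L)<0$, which contradicts $p_F^*\geq c$ for any interior equilibrium, so those $I_L$ genuinely fall outside $\mathbb{L}_1\cup\mathbb{L}_2$.
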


Remark~\ref{remark2} holds here with Theorem~\ref{outsideoptiontheorem} substituting  Theorem~\ref{thm: Un-conclusion-sectionA}.

Despite the expressions being cumbersome, the characterization is easy to compute, as in Theorem~\ref{thm: Un-conclusion-sectionA},  and lead to important insights, as enumerated below.

\begin{equation*}
\begin{aligned}
&\tilde{n}_{L}^{*}=\frac{3}{5}(1-\frac{I_{F}^{*}}{I_{L}^{*}})+\varphi_{L}(p_{L}, I_{L})+\frac{2b}{5}I_{F}^{*}-\frac{b}{5}I_{L}^{*}\\	
&\tilde{n}_{F}^{*}=1-\frac{3}{5}(1-\frac{I_{F}^{*}}{I_{L}^{*}})+\varphi_{F}(p_{F}, I_{F})-\frac{2b}{5}I_{F}^{*}+\frac{b}{5}I_{L}^{*}
\end{aligned}
\end{equation*}
In both equations, intuitively, the first term, $\frac{3}{5}(1-\frac{I_{F}^{*}}{I_{L}^{*}}),  1-\frac{3}{5}(1-\frac{I_{F}^{*}}{I_{L}^{*}})$, represents the subscription from the common pool, if there had been no attrition to an outside option.
 The second and third terms represent the impacts of the attritions as also the additions from the exclusive customer bases. The first term depends on the degree of cooperation similar to the the base case specified in part (4) of Theorem~\ref{thm: Un-conclusion-sectionA}.   In the special case that $b=0$, i.e., when the demand functions depend only on the access fees,  the third term is $0$ and the demand functions capture the impact of attrition and additions in the SPNE expression for the subscriptions. For $b > 0$, the second and the third term together become $k - p_L^* + \frac{b}{5} I_L^* (4-3 I_F^*/I_L^*)$ in the expression for $\tilde{n}_{L}^{*}$, and
$k - p_F^* + \frac{b}{5} I_L^* (1+ 3 I_F^*/I_L^*)$ in that for $\tilde{n}_{F}^{*}$. Thus, higher degree of cooperation decreases (increases, respectively)  the subscription for SP$_L$ (SP$_F$, respectively) even in these terms, and therefore, overall, like in the base case. Note that the subscriptions represent the efficacy in competition. However, as in the base case, the decrease in subscription  does not directly lead to reduction in overall payoffs of SP$_L$, as the deficit may be compensated through income generated by leasing spectrum to SP$_F.$


\subsection{Numerical results}\label{sec: out-4-numerical}


\begin{figure}
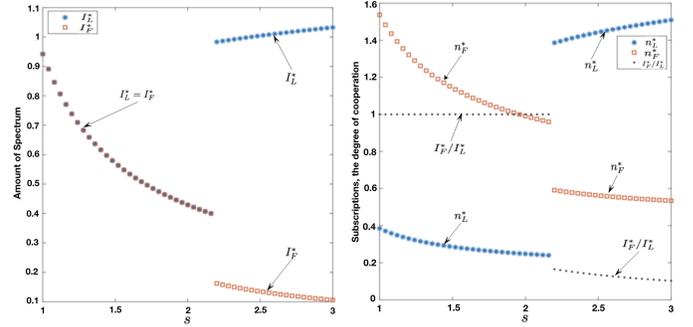

\centering
\includegraphics[width=1.74in]{outside2-1}
\includegraphics[width=1.7in]{outside2-2}
\caption{Spectrum (left), degree of cooperation and subscriptions (right) vs. $s$ Here, $\gamma=0.8$, $c=k=1$, $b=2$.  }
\label{figOut2}
\end{figure}

 Figure~\ref{figOut2} show that now, both $n_L^*, n_F^*$ can decrease (eg, with changes in $s$)  because of   attrition to the outside option  possibly due to decrease of $I_L^*, I_F^*.$ We note this when $s$ is below a threshold. Otherwise, the trends resemble  Figures~\ref{figBM1} and \ref{figBM2} (the base case).

\begin{figure}
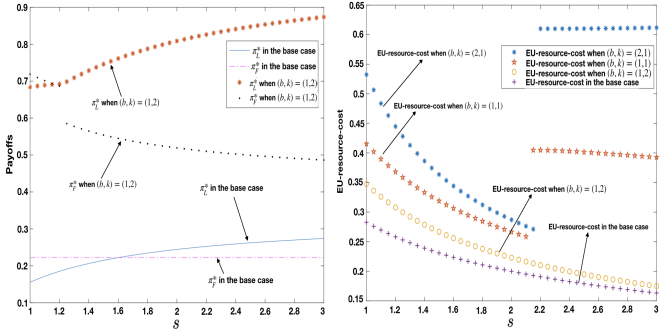

\centering
\includegraphics[width=1.7in,height=1.75in]{payoff_out_basecase}
\includegraphics[width=1.7in,height=1.75in]{m_basecase_outside}
\caption{Payoffs (left), EU-resource-cost (right) vs. $s$ Here, $\gamma=0.8$, $c=1$.}
\label{figOut3}
\end{figure}
 Figure \ref{figOut3} (left) shows the payoffs under different  $s$. The trends of payoffs are similar with Figure~\ref{figBM1}~(left). 
 The SPs earn higher payoffs than in the base case, as they have additional exclusive customers bases to draw additional EUs from.

 Figure~\ref{figOut3} (right) shows that for different values of the parameters $b, k$,  the EU-resource-cost exceeds that for the base case shown in Figure~\ref{figBM_a1}. This is because the SPs provide better resource-cost tradeoff to the EUs so as not to loose them to the outside option, and also to draw more EUs from their exclusive additional bases.

\section{The 3-player model}\label{sec: 3-player model}
We now generalize our framework to consider competition between MNOs, rather than that only between
 an MNO and an MVNO.
 In a 3-player model, we consider two MNOs and one MVNO competing for a common pool of EUs in a covered market (i.e., each EU needs to opt for exactly one  SP).
  We present the model in Section \ref{sec: 3p-1-model}, and characterize the SPNE  in Section \ref{sec: 3p-2-outcome}.
   We show that the competition among multiple SPs reduces their payoffs, but benefits the EUs:   the SPs acquire higher amounts of spectrum (hence provide higher service quality), and charge the EUs less. The competition also reduces the payoffs of SPs. We prove the results  in Appendix~\ref{Appendix 3p} (Theorems~8, 9) and in Section~\ref{Appendix: Corollary} (Corollary~1).

\subsection{Model}\label{sec: 3p-1-model}
We consider a symmetric model and seek a symmetric equilibrium i.e.,  the strategies of the MNOs  are the same, and the MVNO leases the same amount of spectrum from each MNO. Thus, in the SPNE,   $I_{L}=I_{L_{1}}=I_{L_{2}}$,  $I_{F}=I_{F_{1}}=I_{F_{2}}$, $p_{L}=p_{L_{1}}=p_{L_{2}}$, and $n_{L}=n_{L_{1}}=n_{L_{2}}$.
The total amount spectrum of SPs is $2I_{L}$. Thus, each MNO retains $I_{L}-I_{F}$ spectrum. We define the payoffs of MVNO and MNOs as
\begin{align}
 &\pi_{F}=n_{F}(p_{F}-c)-2sI_{F}^{2}\label{equ: 3p-F-payoff}\\
 &\pi_{L}=n_{L}(p_{L}-c)+sI_{F}^{2}-\gamma I_{L}^{2}\label{equ: 3p-L-payoff}
 \end{align}
\begin{figure}[hpt]
  \centering
  \includegraphics[width=0.275\textwidth]{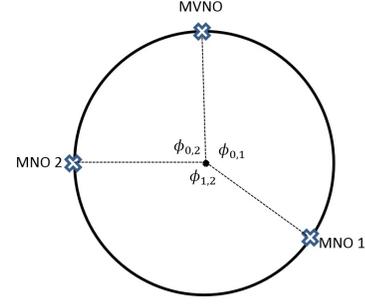}
  \caption{The hoteling model for the three players case}
  \begin{center}
  \label{pic: Modification of the hotelling model for the three players case}
\end{center}
\end{figure}
To accommodate the three SPs, we modify the hotelling model. The EUs are uniformly distributed along a circle of radius 1 on which the SPs are virtually located  (Figure~\ref{pic: Modification of the hotelling model for the three players case}). Since the radius is $1$,  each arc length equals the corresponding angle.
 Thus, the number of EUs located 1) between the MVNO and MNO$_{i}$ is $\phi_{0,i}$ and 2) between the MNOs is $\phi_{1,2}$.

 We consider that $\phi_{0,1}$, $\phi_{0,2}$ and $\phi_{1,2}$ reflect the natural preferences of EUs for SPs (intuitively, for example, those in the arc $\phi_{0,1}$ would have stronger preference for the MVNO and MNO$_{1}$, and so on). We allow the preferences to depend on spectrum investments by defining these arcs as:   $\phi_{0,1}=\phi_{0,2}=h_{1}(I_{L},I_{F})$ and $\phi_{1,2}=h_{2}(I_{L}, I_{F})$ for some functions $h_{1}$ and $h_{2}$ (considering that the model is symmetric). We can now consider the transport cost as a parameter $t>0$ rather than a function of $I_L, I_F$, unlike  in Section \ref{sec: basic case}. We focus on the special case that $v^{L}=v^{F}=v$.

  Similar to (\ref{equ: BM-utility EUs}),
if an EU is located in the arc of $\phi_{0,1}$, at a distance of $x$ from  the MVNO,
\begin{equation}\label{equ: 3p-UEs utility-1}
\begin{aligned}
u_{MVNO}=&v-tx-p_{F}\\
u_{MNO_{1}}=&v-t(\phi_{0,1}-x)-p_{L}\\
u_{MNO_{2}}=&v-t\cdot\min(x+\phi_{0,2}, \phi_{0,1}-x+\phi_{1,2})-p_{L}
\end{aligned}
\end{equation}
By calculation, if $x\leq\phi_{0,1}/2$, then $u_{MNO_{1}}\leq u_{MVNO}$, and $u_{MNO_{2}}=v-t(x+\phi_{0,2})-p_{L}<u_{MVNO}$. Then, EUs choose MVNO. If $x>\phi_{0,1}/2$, then $u_{MVNO}<u_{MNO_{1}}$, and $u_{MNO_{2}}=v-t(\phi_{0,1}-x+\phi_{1,2})-p_{L}<u_{MNO_{1}}$. Then, EUs choose MNO$_{1}$ instead of MNO$_{2}$.

Similarly, due to symmetry, if an EU is located in the arc of $\phi_{0,2}$, he does not choose MNO$_{1}$, and suppose the distance from the EU to the MVNO is $x$, thus
\begin{equation}\label{equ: 3p-UEs Ulitiy-2}
\begin{aligned}
u_{MVNO}=&v-tx-p_{F}\\
u_{MNO_{2}}=&v-t(\phi_{0,2}-x)-p_{L}
\end{aligned}
\end{equation}

If an EU is located in the arc of $\phi_{1,2}$, at a distance of $x$ to the MNO$_{1}$, then his utility is;
\begin{equation}\label{equ: 3p-EUs utility-3}
 \begin{aligned}
u_{MNO_{1}}=&v-tx-p_{L},\\
u_{MNO_{2}}=&v-t(\phi_{1,2}-x)-p_{L}\\
u_{MVNO}=&v-t\cdot\min(x+\phi_{0,1}, \phi_{1,2}-x+\phi_{0,2})-p_{F}
\end{aligned}
\end{equation}
Now we have the following lemma,
\begin{lemma}\label{lem: 3p-utility of EU/MNO/MVNO}
If $p_{L}-p_{F}\geq t\phi_{0,1}$, then all  EUs  choose the MVNO; if $p_{L}-p_{F}<t\phi_{0,1}$, then EUs located in the arc of $\phi_{1,2}$ do not choose the MVNO.
\end{lemma}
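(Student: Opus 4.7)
The plan is to compute $u_{MVNO}-u_{MNO_i}$ arc by arc, using the utility expressions in (\ref{equ: 3p-UEs utility-1})--(\ref{equ: 3p-EUs utility-3}) together with the symmetry $\phi_{0,1}=\phi_{0,2}$, and then read off the sign of each difference under the two hypotheses. The heart of the argument is that in each case the relevant utility gap collapses to a single affine expression in $p_L-p_F$ and $t\phi_{0,1}$, so the threshold $p_L-p_F=t\phi_{0,1}$ is exactly the one that flips the EU's choice.

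For \textbf{Part 1}, I would first handle an EU in arc $\phi_{0,1}$ at distance $x\in[0,\phi_{0,1}]$ from the MVNO. From (\ref{equ: 3p-UEs utility-1}),
$u_{MVNO}-u_{MNO_1}=t(\phi_{0,1}-2x)+(p_L-p_F)$,
which attains its minimum $-t\phi_{0,1}+(p_L-p_F)$ at $x=\phi_{0,1}$; this is $\geq 0$ under the hypothesis. A short case split on the two branches of the $\min$ in (\ref{equ: 3p-UEs utility-1}) (using $\phi_{0,1}=\phi_{0,2}$) yields $u_{MNO_1}\geq u_{MNO_2}$ everywhere in this arc, so the MVNO also dominates MNO$_2$. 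The arc $\phi_{0,2}$ is handled by symmetry. For an EU in arc $\phi_{1,2}$ at distance $x\in[0,\phi_{1,2}]$ from MNO$_1$, (\ref{equ: 3p-EUs utility-3}) together with $\phi_{0,1}=\phi_{0,2}$ gives
$u_{MVNO}-u_{MNO_1}=(p_L-p_F)-t\min\bigl(\phi_{0,1},\,\phi_{1,2}-2x+\phi_{0,1}\bigr)$,
and since the $\min$ is bounded above by $\phi_{0,1}$, this difference is $\geq -t\phi_{0,1}+(p_L-p_F)\geq 0$. A symmetric argument compares with MNO$_2$. Hence the MVNO is weakly preferred by every EU on the circle.

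For \textbf{Part 2}, I restrict attention to arc $\phi_{1,2}$. Using $\phi_{0,1}=\phi_{0,2}$, the $\min$ in (\ref{equ: 3p-EUs utility-3}) equals $x+\phi_{0,1}$ precisely when $x\leq\phi_{1,2}/2$. In that sub-arc,
$u_{MVNO}-u_{MNO_1}=-t\phi_{0,1}+(p_L-p_F)<0$
under the hypothesis, so the EU strictly prefers MNO$_1$ to the MVNO. The complementary sub-arc $x>\phi_{1,2}/2$ is the mirror image and yields strict preference for MNO$_2$ over the MVNO. Consequently no EU in $\phi_{1,2}$ chooses the MVNO.

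The only delicate point is keeping the two branches of the $\min$ in (\ref{equ: 3p-EUs utility-3}) (and also in (\ref{equ: 3p-UEs utility-1})) straight, since they encode the fact that an EU in $\phi_{1,2}$ can reach the MVNO along either side of the circle; once the symmetry $\phi_{0,1}=\phi_{0,2}$ is used to decide which branch is active at a given $x$, every subcase reduces to the same comparison between $p_L-p_F$ and $t\phi_{0,1}$, so both parts of the lemma drop out at once.
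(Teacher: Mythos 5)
Your proof is correct and follows essentially the same route as the paper's: an arc-by-arc comparison of utilities in which every case reduces to the sign of $(p_L-p_F)-t\phi_{0,1}$, using $\phi_{0,1}=\phi_{0,2}$ to resolve the $\min$ branches. The only difference is cosmetic — you spell out Part 2 (which the paper dismisses with ``similarly'') and you order the arcs differently — and both arguments share the same harmless tie-breaking looseness at the boundary case $p_L-p_F=t\phi_{0,1}$.
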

Henceforth, we only consider $p_{L}-p_{F}<t\phi_{0,1}$, as:
\begin{theorem}\label{thm: 3p-no NE strategies}
No SPNE strategy exists if $p_{L}-p_{F}\geq t\phi_{0,1}$.
\end{theorem}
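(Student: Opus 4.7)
\emph{Proof proposal.} The plan is to argue by contradiction: suppose a symmetric SPNE $(I_L^*, I_F^*, p_L^*, p_F^*)$ exists with $p_L^* - p_F^* \geq t\phi_{0,1}$. By the preceding lemma the MVNO captures every EU, so $n_L^* = 0$ and the Stage-3 payoff of each MNO collapses to $sI_F^{*2} - \gamma I_L^{*2}$, while the MVNO earns $n_F^*(p_F^* - c) - 2sI_F^{*2}$ with $n_F^* > 0$. I would then exhibit a single Stage-3 unilateral deviation that strictly improves the deviator's payoff, which violates the Nash property of Stage 3 and yields the contradiction.

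The primary deviation I would analyze is asymmetric undercutting by MNO$_1$: keep $p_{L_2} = p_L^*$ and $p_F = p_F^*$, and let MNO$_1$ switch to $p_{L_1} = p_F^* + \epsilon$ for small $\epsilon \in (0, t\phi_{0,1})$. Substituting the asymmetric prices into \eqref{equ: 3p-UEs utility-1}, the condition $u_{MNO_1} > u_{MVNO}$ for an EU at distance $x$ from the MVNO in arc $\phi_{0,1}$ simplifies to $x > \phi_{0,1}/2 + \epsilon/(2t)$, and by the symmetry $\phi_{0,1} = \phi_{0,2}$ together with $p_{L_1} < p_{L_2}$, the comparison $u_{MNO_1} > u_{MNO_2}$ holds automatically for all such $x$. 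Checking \eqref{equ: 3p-UEs Ulitiy-2} and \eqref{equ: 3p-EUs utility-3} confirms that MNO$_1$ does not shed EUs on the other side. Hence MNO$_1$ acquires a strictly positive subscription mass $\phi_{0,1}/2 - \epsilon/(2t)$ with per-EU margin $p_F^* + \epsilon - c$, adding the quantity $(\phi_{0,1}/2 - \epsilon/(2t))(p_F^* + \epsilon - c)$ to its Stage-3 payoff.

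Completing the argument calls for a brief case split on the sign of $p_F^* - c$. When $p_F^* > c$, the added revenue is strictly positive for small $\epsilon$ and gives the contradiction directly; when $p_F^* = c$, replacing the deviation by $p_{L_1} = c + \delta$ with $\delta \in (0, t\phi_{0,1})$ still yields extra revenue $\delta(\phi_{0,1}/2 - \delta/(2t)) > 0$; when $p_F^* < c$ the MNO deviation may fail, but then the MVNO itself has a profitable deviation: set $p_F > p_L^* + t\phi_{0,1}$, at which point by symmetry every EU strictly prefers its nearer MNO (one checks this on each arc via \eqref{equ: 3p-UEs utility-1}-\eqref{equ: 3p-EUs utility-3}), the MVNO's subscription drops to zero, and its payoff improves from $n_F^*(p_F^* - c) - 2sI_F^{*2} < -2sI_F^{*2}$ to $-2sI_F^{*2}$. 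In each case some SP has a profitable Stage-3 deviation, so no SPNE can satisfy $p_L^* - p_F^* \geq t\phi_{0,1}$. I expect the main technical obstacle to be the asymmetric-price recomputation of EU choices in the second step, because MNO$_1$'s undercutting breaks the symmetry underlying the derivation of the preceding lemma, and one must verify across all three arcs (handling the $\min$ in the MVNO utility \eqref{equ: 3p-EUs utility-3} for EUs in arc $\phi_{1,2}$) that the newly captured mass is exactly $\phi_{0,1}/2 - \epsilon/(2t)$ with no compensating losses elsewhere.
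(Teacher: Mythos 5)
Your proposal is correct and follows essentially the same route as the paper: invoke Lemma~\ref{lem: 3p-utility of EU/MNO/MVNO} to get $n_L^*=0$, $n_F^*=2\pi$, then exhibit a profitable unilateral Stage-3 price deviation. The only differences are cosmetic — the paper has MNO$_1$ undercut to $p_F^*+t\phi_{0,1}-\epsilon$ (capturing mass $\epsilon/2t$) after first pinning down $p_F^*\geq c$ via an MVNO deviation to $p_F=c,\ I_F=0$, whereas you undercut to $p_F^*+\epsilon$ and dispose of the $p_F^*<c$ case by having the MVNO shed all subscribers; both variants are valid, and your worry about "compensating losses" on the other arcs is moot since MNO$_1$ starts with zero subscribers.
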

Now, from Lemma~\ref{lem: 3p-utility of EU/MNO/MVNO} and the discussion above,  the MVNO and MNO$_{i}$ (MNO$_{1}$ and MNO$_{2}$, respectively) compete to attract the EUs located only on the arc of $\phi_{0,i}$ ($\phi_{1,2}$, respectively). Thus, we define the number of EUs of any two SPs depends only on their total investment levels, i.e., for a constant $\zeta$,
\begin{equation*}
\begin{aligned}
&\phi_{01}=\phi_{02}=\zeta\frac{2I_{F}+I_{L}-I_{F}}{2I_{L}}=\zeta\frac{I_{F}+I_{L}}{2I_{L}},\\
&\phi_{12}=\zeta\frac{2(I_{L}-I_{F})}{I_{L}}=\zeta\frac{I_{L}-I_{F}}{I_{L}}.	
\end{aligned}
\end{equation*}

 \subsection{The SPNE outcome}\label{sec: 3p-2-outcome}

 With $\delta<\frac{\pi}{2}\sqrt{\frac{t}{3s}}$, we prove in  Appendix \ref{Appendix 3p}:

\begin{theorem}\label{cor: 3-p model}
The unique symmetric SPNE strategy, with $I_L^*, p_L^*$ representing the choices of, and $n_L^*$ subscription to,  each MNO, and $I_F^*, p_F^*, n_F^*$ the corresponding quantities for the MVNO, is:
\[I_L^*=I_F^*=\frac{\pi}{2}\sqrt{\frac{t}{3s}}, p_{L}^{*}=p_{F}^{*}=t\pi+c, \ \ n_{F}^{*}=2n_L^* = \pi. \]
\end{theorem}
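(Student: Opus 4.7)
The plan is to use backward induction, stage by stage, closing the resulting system of equations by imposing symmetry only at the final step.

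In Stage 4, the claim of Lemma~\ref{lem: 3p-utility of EU/MNO/MVNO} together with Theorem~\ref{thm: 3p-no NE strategies} restricts attention to the regime $p_L - p_F < t\phi_{0,1}$, where MVNO competes with MNO$_i$ only on the arc $\phi_{0,i}$ and the two MNOs compete only on $\phi_{1,2}$. Following the hotelling construction applied to each arc, the indifferent location on the MVNO-MNO$_i$ arc is $\phi_{0,i}/2 + (p_{L_i} - p_F)/(2t)$ and (by symmetry of the MNO prices) the indifferent location on the MNO-MNO arc is $\phi_{1,2}/2$. This yields explicit expressions for $n_F$ and $n_{L_i}$ as linear functions of the prices and of $\phi_{0,1}, \phi_{1,2}$, which in turn are determined by $I_L, I_F$ through the formulas posted in the model.

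In Stage 3, I write out $\partial\pi_F/\partial p_F = 0$ and $\partial\pi_{L_1}/\partial p_{L_1}=0$ (with $p_{L_2}$ fixed), impose the symmetry $p_{L_1}=p_{L_2}=p_L$, and solve the resulting $2\times 2$ linear system in $(p_L-c, p_F-c)$. Two subcases matter: when $\phi_{1,2}>0$, the MNO's subscription derivative picks up contributions of $-\tfrac{1}{2t}$ from both the MVNO arc and the MNO-MNO arc, producing interior-formula prices $p_L-c = (3t\phi_{0,1}+2t\phi_{1,2})/5$ and $p_F-c=(4t\phi_{0,1}+t\phi_{1,2})/5$; when $\phi_{1,2}=0$ (i.e., $I_F=I_L$), only the MVNO-arc contribution remains, so the MNO's derivative becomes $-\tfrac{1}{2t}$, and the symmetric FOC system collapses to $2P_L-P_F=t\phi_{0,1}$ together with $2P_F-P_L=t\phi_{0,1}$, yielding $p_L^*=p_F^*=t\phi_{0,1}+c$ and hence $n_F^*=\phi_{0,1}$, $n_L^*=\phi_{0,1}/2$. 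At $I_F=I_L$ one has $\phi_{0,1}=\pi$, giving the theorem's pricing and subscription values.

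In Stage 2, the MVNO chooses $I_F\in[0,I_L]$ to maximize $\pi_F$ after plugging in the Stage-3 equilibrium. The function $\pi_F(I_F)$ is piecewise: on $[0,I_L)$ it equals $t\, n_F^{\mathrm{int}}(I_F)^2 - 2sI_F^2$ using the interior Stage-3 prices, while at the corner $I_F=I_L$ it jumps to $t\pi^2 - 2sI_L^2$ from the $\phi_{1,2}=0$ formulas. A direct computation (checking the concavity of the interior branch and comparing the jump at $I_F=I_L$ with the interior maximum) shows that in the parameter range governed by $\delta<(\pi/2)\sqrt{t/(3s)}$, the corner $I_F^*=I_L$ strictly dominates any interior choice, so the MVNO fully reserves the offered spectrum.

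In Stage 1, each MNO anticipates $I_F^*=I_L$ and the corner Stage-3 outcome, substitutes into \eqref{equ: 3p-L-payoff}, and maximizes in $I_L$. Writing out the first-order condition of the resulting payoff gives $I_L^*=(\pi/2)\sqrt{t/(3s)}$, and the assumption $\delta<(\pi/2)\sqrt{t/(3s)}$ confirms feasibility. Combining with the Stage-2 and Stage-3 expressions yields the full claim of the theorem; uniqueness follows from strict concavity of the Stage-3 objectives and of the Stage-1 objective along the symmetric axis, and from the strict inequality established in Stage 2. The main obstacle is Stage 2: the discontinuity in the best-response slope at $\phi_{1,2}=0$ forces a two-case comparison, and one must also verify that unilateral deviations by an MNO in Stage 1 that temporarily break symmetry (so that the MVNO could lease asymmetric amounts in Stage 2) are likewise unprofitable.
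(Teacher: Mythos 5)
Your backward-induction skeleton matches the paper's, but two of your steps do not survive scrutiny. First, in Stage 3 you include a $-\tfrac{1}{2t}$ contribution from the MNO--MNO arc in the MNO's demand derivative, which yields interior prices $p_F-c=(4t\phi_{0,1}+t\phi_{1,2})/5$, $p_L-c=(3t\phi_{0,1}+2t\phi_{1,2})/5$ and forces your two-case split with a jump at $\phi_{1,2}=0$. The paper instead differentiates the symmetrized demand $n_{MNO_1}=\pi\tfrac{3I_L-I_F}{4I_L}+\tfrac{p_F-p_L}{2t}$ (derivative $-\tfrac{1}{2t}$ only), obtaining the single continuous formula $p_F^*-c=\tfrac{t\pi}{3}\tfrac{I_F+5I_L}{2I_L}=t\phi_{0,1}+\tfrac{t}{3}\phi_{1,2}$, which already equals $t\pi+c$ at $I_F=I_L$ with no discontinuity and no case split. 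Your prices disagree with the paper's for every $\phi_{1,2}>0$, so even a correct continuation from your Stage 3 would propagate different constants into Stages 2 and 1 and would not land on the threshold $\tfrac{\pi}{2}\sqrt{t/(3s)}$ (your interior branch puts the corresponding threshold at $\tfrac{\pi}{5}\sqrt{2t/s}$, before even accounting for the jump you introduce).

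Second, and more fundamentally, your Stage 2 and Stage 1 cannot both hold as stated. The claim that the corner $I_F^*=I_L$ dominates for all feasible $I_L$ is false: the corner payoff $t\pi^2-2sI_L^2$ tends to $-\infty$ as $I_L$ grows while the interior maximum stays bounded away from zero, so for large $I_L$ the MVNO strictly prefers an interior $I_F$. And if $I_F^*=I_L$ did hold everywhere, the Stage-1 payoff would be $\tfrac{t\pi^2}{2}+(s-\gamma)I_L^2$, which is strictly increasing (since $s>\gamma$) and has no first-order-condition solution — so "the FOC gives $I_L^*=\tfrac{\pi}{2}\sqrt{t/(3s)}$" cannot be the mechanism. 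The paper's actual argument is a branch comparison: Stage 2 gives $I_F^*=I_L$ exactly when $I_L\le\tfrac{\pi}{2}\sqrt{t/(3s)}$ and $I_F^*=\tfrac{5t\pi^2 I_L}{72sI_L^2-t\pi^2}<I_L$ otherwise; Stage 1 then shows $\pi_L$ is increasing on the first branch and (by an explicit derivative computation) strictly decreasing on the second, so the maximizer sits precisely at the threshold. That threshold characterization, not an interior FOC, is what produces $I_L^*=I_F^*=\tfrac{\pi}{2}\sqrt{t/(3s)}$, and it is the piece your proposal is missing.
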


\begin{remark}
The MVNO leases the entire new spectrum from each MNO.
The degree of cooperation, $I_F^*/I_L^*$ is $1.$ The characterization of the SPNE is easy to compute.
\end{remark}

 We compare the outcome of the 3-player model with the 2-player model, to understand the impact of the competition between the MNOs.  To ensure consistency of comparison,  we  modify the 2-player model of the base case in Section \ref{sec: basic case} as follows:
(1) The transport cost is $t$ instead of $t_{L}=I_{F}/I_{L}$ and $t_{F}=1-t_{L}$.
(2) EUs are distributed uniformly along the interval $[0, 2\pi]$ instead of $[0,1]$, since in the 3-player model, the total amount of EUs is $2\pi$ (3) $v^L = v^F = v$.
By the same analysis method in Section \ref{sec: basic case}, we prove in Appendix~\ref{Appendix: Corollary}:
\begin{corollary}\label{cor: 2-p model}
In the 2-player game formulation, the unique SPNE strategies
 are:
 \[I_{L}^{*}=\delta, \ I_{F}^{*}=0, \ p_{L}^{*}=p_{F}^{*}=2t\pi+c, \ n_{F}^{*}= n_{L}^{*}=\pi .\]
\end{corollary}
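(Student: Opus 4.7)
The plan is to apply backward induction in the same spirit as the base-case analysis of Section~\ref{sec: BM-SPNE Analysis}, but adapted to the modified setting where the transport cost is the constant $t$, where EUs are spread uniformly on $[0,2\pi]$, and where $v^L=v^F=v$. Locating SP$_L$ at $0$ and SP$_F$ at $2\pi$, Stage~4 proceeds by equating the utilities $u_L(x_0)=v-p_L-tx_0$ and $u_F(x_0)=v-p_F-t(2\pi-x_0)$ to obtain the interior indifferent location
\[
x_0=\pi+\frac{p_F-p_L}{2t},
\]
from which $n_L=x_0$ and $n_F=2\pi-x_0$ whenever $0<x_0<2\pi$.

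For Stage~3, I would substitute these expressions into \eqref{equ: BM-L-payoff} and \eqref{equ: BM-F-payoff} and solve the two first-order conditions $\partial\pi_L/\partial p_L=0$ and $\partial\pi_F/\partial p_F=0$ simultaneously. A direct calculation yields the symmetric stationary point $p_L^*=p_F^*=2t\pi+c$, and since each $\pi_i$ is strictly concave in its own price, these are the unique interior best responses. Arguing as in Part~B of the proof of Theorem~\ref{thm: BM-prices}, using continuity of the payoffs as $n_L\to 0$ or $n_L\to 2\pi$, rules out profitable corner deviations. Substituting back gives $n_L^*=n_F^*=\pi$.

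Stage~2 is the step where the structural difference from the base case is decisive and deserves the most care. In the base case the transport costs $t_L,t_F$ depended on $I_F/I_L$, so $I_F$ propagated into the Stage-3 prices and subscriptions; here $t$ is constant, so neither $n_F^*$ nor $p_F^*$ depends on $I_F$, and the MVNO's payoff collapses to $\pi_F=\pi\cdot 2t\pi-sI_F^2=2t\pi^2-sI_F^2$. This is strictly decreasing in $I_F$ over $[0,I_L]$, so the unique maximizer is the boundary point $I_F^*=0$. In Stage~1, plugging $I_F^*=0$ into $\pi_L$ gives $\pi_L=2t\pi^2-\gamma I_L^2$, which is strictly decreasing in $I_L$, so SP$_L$'s unique choice is the smallest admissible amount $I_L^*=\delta$.

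The main obstacle, and the key point of departure from the base-case proof, is recognizing that once $t$ no longer depends on the ratio $I_F/I_L$, the coupling between spectrum investment and the competition for EUs that drove cooperation in the base case disappears entirely: the MVNO has no incentive to reserve any spectrum, and SP$_L$ has no incentive to acquire any beyond the mandated minimum $\delta$. Uniqueness of the SPNE then follows because each backward step admits a unique maximizer: Stage~3 is strictly concave in own price with a unique interior stationary point, while the Stage-2 and Stage-1 objectives are strictly concave with unique maximizers at their respective boundaries $I_F=0$ and $I_L=\delta$.
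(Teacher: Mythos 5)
Your backward induction mirrors the paper's own proof of Corollary~\ref{cor: 2-p model} step for step: the same indifferent location $x_0=\pi+\frac{p_F-p_L}{2t}$, the same first-order conditions yielding $p_L^*=p_F^*=2t\pi+c$ and $n_L^*=n_F^*=\pi$, and the same key observation that with a constant transport cost the investment levels decouple from the pricing stage, so that $\pi_F=2t\pi^2-sI_F^2$ forces $I_F^*=0$ and then $\pi_L=2t\pi^2-\gamma I_L^2$ forces $I_L^*=\delta$. All of these computations are correct, and your diagnosis of why cooperation collapses here is exactly the point the paper makes.

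The one genuine gap concerns the uniqueness claim. Ruling out profitable \emph{deviations from the interior candidate} (your appeal to the continuity argument of Part~B of Theorem~\ref{thm: BM-prices}) establishes that the interior profile is an SPNE and that it is the unique \emph{interior} one; it does not establish that no corner profile --- one inducing $x_0\le 0$ or $x_0\ge 2\pi$ --- is itself an SPNE. The paper closes this with a separate result, Theorem~\ref{thm: Un-variant-3sp} (the analogue of Theorem~\ref{thm: Un-conclusion-3sp} for the modified model): assuming a corner SPNE, it first pins down the prices (e.g.\ for $x_0^*\ge 2\pi$ it shows $I_F^*=0$, $p_F^*\ge p_L^*+2\pi t$ and then $p_L^*=c$), and then exhibits a profitable unilateral $\epsilon$-deviation in price, a contradiction. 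Without this step, or an equivalent argument, the assertion that the stated profile is \emph{the unique} SPNE is not fully justified; with it, your proof coincides with the paper's.
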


Comparing Theorem~\ref{cor: 3-p model} and Corollary~\ref{cor: 2-p model}, we note that due to the competition by an additional MNO, SPs acquire higher amounts of spectrum in the 3-player model, i.e., the two MNOs order additional spectrum, and the MVNO leases the entire new spectrum from each MNO.
The SPs charge the EUs less too:  $t\pi +c$, as opposed to $2t\pi+c$ in the 2-player model.
In both models, the MNO(s) and the MVNO divide the EUs equally: in the 2-player model, each SP has half of the EUs ($\pi$), while in the 3-player model, the MVNO has half of the EUs ($\pi$), and each MNO has a quarter of the EUs ($\pi/2$).

From (\ref{equ: 3p-F-payoff}) and (\ref{equ: 3p-L-payoff}), for 3 players, the payoffs are:  (1) $\frac{5t\pi^{2}}{6}$ for each MNO, and (2)  $\frac{t\pi^{2}}{12}(7-\frac{\gamma}{s})$ for the MVNO.  For 2 players,  the payoffs are  $2t\pi^{2}-\delta^{2}$ and  $2t\pi^{2}$ for the MNO and the MVNO respectively. Thus, clearly (each) MNO secures a higher payoff than the MVNO for both the $3$-player and the $2-$player cases. Also, the SPs earn more in the 2-player model, since fewer SPs compete for the same number of EUs.

Since there are $2$ MNOs and $1$ MVNO now, and the MVNO leases $I_F^*$ amount of spectrum from each MNO, the EU-resource-cost becomes $2I_F^*/p_F^*+2(I_L^*-I_F^*)/p_L^*.$ 

\subsection{Numerical results}\label{sec: 3p-numerical results}
\begin{figure}
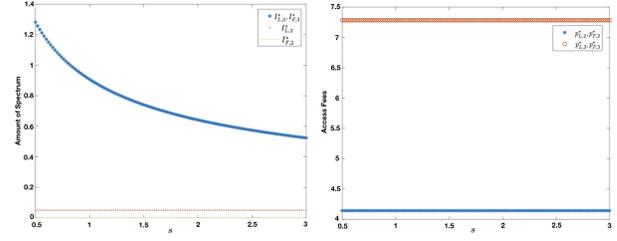

\centering
\includegraphics[width=1.55in]{3player_spectrum.png}
\includegraphics[width=1.55in]{3player_prices.png}
\caption{Spectrum (left), access fees (right) vs. $s$   }
\label{fig-3player1}
\end{figure}

\begin{figure}
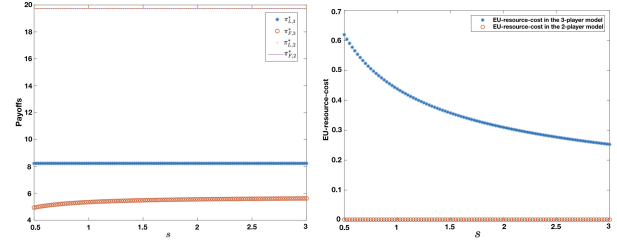

\centering
\includegraphics[width=1.55in]{3player_payoffs.png}
\includegraphics[width=1.55in]{3p_overallaverage.png}
\caption{Relative payoffs (left), the overall resource per unit price of all subscribers (right) vs. $s$.}
\label{fig-3player2}
\end{figure}

In Figure~\ref{fig-3player1} (left), $I_{L,3}^*, I_{F,3}^*$ (respectively, $I_{L,2}^*, I_{F,2}^*$) are investment levels of SPs in 3-player (respectively, 2-player) model, comparing Theorem \ref{cor: 3-p model} and Corollary \ref{cor: 2-p model}, we note that due to the competition by an additional MNO, SPs acquire higher amounts of spectrum in the 3-player model, i.e., the two MNOs order additional spectrum, and the MVNO leases the entire new spectrum from each MNO. From Figure~\ref{fig-3player1} (right), $p_{L,3}^*, p_{F,3}^*$ (respectively, $p_{L,2}^*, p_{F,2}^*$) are access fees of SPs in 3-player (respectively, 2-player) model, the SPs charge the EUs less too:  $t\pi +c$, as opposed to $2t\pi+c$ in the 2-player model. 

Figure~\ref{fig-3player2} (left) shows that SPs can gain less if an additional MNO enters the system due to the additional competition. 
Figure~\ref{fig-3player2} (right) shows that the EU-resource-cost in the $3$-player model exceeds that in the  base case for 2 SPs shown in Figure~\ref{figBM_a1}. This follows because as noted earlier EUs pay lower access fees and the SPs acquire higher spectrum overall. Thus, like in Section~\ref{sec: out-4-numerical}, the additional competition among the SPs is beneficial for the EUs.

\section{Generalization: Limited Spectrum from the Central Regulator}\label{sec: Limited Spectrum from Central Regulator}
Since we have assumed the spectrum available to the central regulator is limited. A natural assumption is that to set an upper bound to the investment level of SP$_L$, $I_L$. In this section, we assume $\delta\leq I_L\leq M$. Similar with the assumption of $\delta$, $M$ is parameter of choice. After considering the new condition of $I_L$, we characterize the SPNE of the three cases  above as follows. The proofs of Theorems~\ref{thm: Un-conclusion-sectionA, M}, are given in Appendix~\ref{Appendix: Limited Spectrum}.

\subsection{The Base Case} 
\begin{theorem}\label{thm: Un-conclusion-sectionA, M}
Let $|\Delta| < 1$. The SPNE  is: 
\begin{description}
  \item[(1)] If $M\leq\frac{2-\Delta}{9s}$, $I_L^*=I_F^*=M$
$p_L^*-c=n_L^*=\frac{1+\Delta}{3}$, $p_F^*-c=n_F^*=\frac{2-\Delta}{3}$.
\item[(2)] If $M>\frac{2-\Delta}{9s}$, the SPNE are the same as that in Theorem~\ref{thm: Un-conclusion-sectionA}.
\end{description}
\end{theorem}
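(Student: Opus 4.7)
The plan is to apply backward induction, exactly as in the proof of Theorem~\ref{thm: Un-conclusion-sectionA}, and observe that the upper bound $I_L \leq M$ affects only Stage~1. Stage~4 (characterizing the indifferent location $x_0$) and Stage~3 (characterizing $p_L^*, p_F^*$ via Theorem~\ref{thm: BM-prices}) involve no optimization over $I_L$, so their conclusions transfer verbatim. In Stage~2, the MVNO's constrained optimization $0 \leq I_F \leq I_L$ does not invoke $M$ either, so Theorem~\ref{thm: BM-I_F} (with the $\Delta$-generalization in the appendix) continues to hold: $I_F^* = I_L$ on the subregime $\delta \leq I_L \leq \sqrt{(2-\Delta)/(9s)}$ and $I_F^* = (1-\Delta)I_L/(9sI_L^2-1)$ on $I_L > \sqrt{(2-\Delta)/(9s)}$.

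The key step is to evaluate the leader's reduced payoff on the regime $I_F^* = I_L$. Substituting $I_F^*/I_L^* = 1$ into Theorem~\ref{thm: BM-prices} and part~(4) of Theorem~\ref{thm: Un-conclusion-sectionA} yields $n_L^* = (1+\Delta)/3$, $n_F^* = (2-\Delta)/3$, $p_L^* - c = n_L^*$, $p_F^* - c = n_F^*$. Plugging these into \eqref{equ: BM-L-payoff}--\eqref{equ: BM-F-payoff}, the leader's payoff simplifies to
\begin{equation*}
\pi_L(I_L) = \left(\tfrac{1+\Delta}{3}\right)^{2} + (s-\gamma)\,I_L^{2},
\end{equation*}
which is \emph{strictly increasing} in $I_L$ because $s > \gamma$.

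In case~(1), namely $M \leq \sqrt{(2-\Delta)/(9s)}$, the entire feasible interval $[\delta, M]$ lies inside the regime $I_F^* = I_L$, so by the monotonicity above the maximum is attained at $I_L^* = M$, giving $I_F^* = I_L^* = M$; the stated closed forms for $p_L^*, p_F^*, n_L^*, n_F^*$ then follow by the substitutions above. In case~(2), namely $M > \sqrt{(2-\Delta)/(9s)}$, the feasible interval decomposes into $[\delta, \sqrt{(2-\Delta)/(9s)}]$, where $\pi_L$ is strictly increasing and therefore maximized at the right endpoint, and $[\sqrt{(2-\Delta)/(9s)}, M]$, on which $\pi_L$ agrees with the function optimized in Theorem~\ref{thm: Un-conclusion-sectionA}~(1). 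By continuity of $\pi_L$ at the common boundary, the global maximizer must lie in the second subinterval (possibly at the boundary point), so the optimization collapses to the one in Theorem~\ref{thm: Un-conclusion-sectionA}, and the SPNE is identical to that characterization.

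The only delicate point is the payoff simplification that gives $\pi_L(I_L) = ((1+\Delta)/3)^2 + (s-\gamma)I_L^2$ on the $I_F^* = I_L$ regime; everything else is a two-case split driven by whether $M$ sits below or above the threshold $\sqrt{(2-\Delta)/(9s)}$ that already appears in Theorem~\ref{thm: BM-I_F}. I do not expect any substantive new difficulty beyond this routine boundary/monotonicity argument.
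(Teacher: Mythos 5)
Your proof is correct and follows essentially the same route as the paper's: Stages 2--4 are unaffected by the bound $M$, and in Stage 1 the reduced payoff on the $I_F^*=I_L$ regime equals $\left(\frac{1+\Delta}{3}\right)^2+(s-\gamma)I_L^2$, which is increasing since $s>\gamma$, so the maximizer is $I_L^*=M$ when $M$ lies below the threshold, and otherwise the problem collapses by continuity to the maximization of Theorem~\ref{thm: Un-conclusion-sectionA}. Note only that you (correctly, and consistently with the paper's own proof in Appendix~\ref{Appendix: Limited Spectrum}) read the threshold as $\sqrt{(2-\Delta)/(9s)}$, whereas the theorem statement prints it without the square root.
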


 \begin{theorem}\label{thm: Un-corner-sectionA, M}
{\bf (1) } $\Delta\geq 1$: The  SPNE  is
the same as that in Theorem~\ref{thm: Un-corner-sectionA}~(1).  \newline
{\bf (2) } $\Delta = 1:$ The following interior strategy constitute an additional  SPNE, if $M\leq\frac{1}{3\sqrt{s}}$, 
\[I_L^*=I_F^*=M, p_L^* - c = n_L^* = 2/3, p_F^* - c = n_F^*= 1/3.\]
If $M>\frac{1}{3\sqrt{s}}$, the  SPNE  is
the same as that in Theorem~\ref{thm: Un-corner-sectionA}~(2).

{\bf (3) } $\Delta\leq -1:$ The SPNE strategy is: If $\delta\leq M\leq\frac{1}{\sqrt{2s}}$, then
 \[I_L^*=I_F^*=M, p_{L}^{*}=p_{F}^{*}+\Delta-1, n_{L}^{*}=0,\, n_{F}^{*}=1.\] 
  If  $ M>\frac{1}{\sqrt{2s}}$, 
the  SPNE  is
the same as that in Theorem~\ref{thm: Un-corner-sectionA}~(3). 
\end{theorem}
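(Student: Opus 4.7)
The plan is to extend the backward-induction proof of Theorem~\ref{thm: Un-corner-sectionA} by superimposing the bound $I_L \leq M$ at stage~1, while reusing stages~3 and~4 verbatim, since those depend only on the degree of cooperation $I_F/I_L$, the access fees, and $\Delta$, none of which involve $M$ directly. The overall structure is: show that whenever $M$ exceeds the relevant threshold the original SPNE persists because the new constraint is slack, and whenever $M$ lies at or below the threshold redo only the two investment stages over the compact interval $[\delta, M]$.

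For case~(1), $\Delta \geq 1$, the SPNE of Theorem~\ref{thm: Un-corner-sectionA}~(1) already places SP$_L$ at the lower bound $I_L^* = \delta \leq M$, so the new upper bound is inactive; shrinking the feasible set to $[\delta,M]$ removes only non-improving strategies, and every deviation check in the original proof carries over. For cases~(2) and~(3), the unconstrained interior SPNE sits at $I_L^* = I_F^* = \tfrac{1}{3\sqrt{s}}$ and $I_L^* = I_F^* = \tfrac{1}{\sqrt{2s}}$, respectively. When $M$ lies at or below the stated threshold, one first reproduces the stage-2 computation from the unconstrained proof to show that on the entire interval $[\delta, M]$ SP$_F$'s best response is the corner $I_F^*(I_L) = I_L$: for $\Delta = 1$ the interior expression $\tfrac{(1-\Delta)I_L}{9sI_L^2-1}$ collapses and the boundary analysis behind the threshold $\sqrt{(2-\Delta)/(9s)}$ of Theorem~\ref{thm: BM-I_F} forces $I_F = I_L$, while for $\Delta \leq -1$ SP$_F$ wants $t_F = 0$ in order to preserve $n_F = 1$, so $I_F \leq I_L$ binds. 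Substituting $I_F^* = I_L$ into $\pi_L$ yields the reduced stage-1 objective $\bigl(\tfrac{1+\Delta}{3}\bigr)^{2} + (s-\gamma)\,I_L^{2}$, which, because $s > \gamma$, is strictly increasing on $[\delta, M]$. Hence $I_L^* = I_F^* = M$, and plugging $I_F^*/I_L^* = 1$ into the $\Delta$-generalized stage-3 formulas of Theorem~\ref{thm: BM-prices} recovers $p_L^* - c = n_L^* = (1+\Delta)/3$ and $p_F^* - c = n_F^* = (2-\Delta)/3$; the pricing indeterminacy $p_L^* \in [c+1, c - \Delta]$ in case~(3) is inherited directly from the $n_F = 1$ corner sub-game.

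The hard part will be case~(3): because that pricing sub-game lives at a corner with $n_F = 1$, there is no interior first-order condition, and one must verify by a case split that no joint deviation by either SP into the interior regime $0 < x_0 < 1$ strictly improves a payoff. This is precisely the non-trivial step that produced the value $1/\sqrt{2s}$ in Theorem~\ref{thm: Un-corner-sectionA}~(3) as SP$_F$'s indifference point between honoring $I_F = I_L$ and retreating to the unconstrained optimum $I_F = 1/(2 s I_L)$; once that case analysis is transferred from the earlier proof, the strict monotonicity of the reduced $\pi_L$ on $[\delta, M]$ immediately delivers $I_L^* = M$, and the remaining verifications reduce to bookkeeping.
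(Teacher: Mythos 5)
Your proposal takes essentially the same route as the paper's proof: the corner pricing analysis (the lemmas ruling out $x_0\leq 0$ and $x_0\geq 1$ deviations) and the stage-2 best response $I_F^*(I_L)$ are reused unchanged since they do not involve $M$, and only stage 1 is redone on $[\delta,M]$, where $I_F^*=I_L$ holds throughout once $M$ is at or below the relevant threshold and the resulting $\pi_L$ is strictly increasing in $I_L$ (as $s>\gamma$), forcing $I_L^*=I_F^*=M$. One bookkeeping slip worth fixing: in case (3) the equilibrium is the $n_L^*=0$ corner, so the reduced stage-1 objective is $sI_F^{*2}-\gamma I_L^{2}=(s-\gamma)I_L^{2}$ with no $\bigl(\tfrac{1+\Delta}{3}\bigr)^{2}$ subscription term, and the interior stage-3 formulas of Theorem~\ref{thm: BM-prices} do not apply there (they would give $n_L^*=(1+\Delta)/3<0$ for $\Delta<-1$); this does not affect your conclusion, since monotonicity in $I_L$ is preserved and you correctly defer to the corner sub-game for the prices.
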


From Theorems~\ref{thm: Un-conclusion-sectionA, M} and \ref{thm: Un-corner-sectionA, M}, we can find that if the upper bound $M$ is relative small, the MNO acquires the maximum amount of spectrum from the regulator, and the MVNO leases all spectrum from the MNO.

\subsection{EUs with Outside Options}
For simplicity, we consider only interior SPNE strategies, that is, $0 < n_L^*, n_F^* < 1$.
 We define sets $\mathbb{L}_{1,M}$, $\mathbb{L}_{2,M}$ as follows:
\begin{equation*}
\footnotesize
\begin{aligned}
\mathbb{L}_{1,M}=&\{s>2\alpha f^{2}(I_{L})+2\alpha f(I_{L})g(I_{L})/I_{L},\, g(I_{L})\geq0,\\
&\delta\leq I_{L}\leq M, I_L<4/b\},	
\end{aligned}
\end{equation*}
\begin{equation*}
\footnotesize
\begin{aligned}
\mathbb{L}_{2,M}=&\{0\leq I_{L}\leq M, I_L<4/b\}\cap\Big(\{g(I_{L})\geq0,\\
&\,2\alpha f^{2}(I_{L})\leq s\leq2\alpha f^{2}(I_{L})+2\alpha f(I_{L})g(I_{L})/I_{L}\}\\
\cup&\{2\alpha f^{2}(I_{L})+4\alpha f(I_{L})g(I_{L})/I_{L}\geq s,\,2\alpha f^{2}(I_{L})>s\}\Big).	
\end{aligned}
\end{equation*}
With $\delta<4/b$, we have the following SPNE:
\begin{theorem}\label{thm: EUs with outside option, M}
The interior SPNE strategies are:
\begin{itemize}
  \item [(1)]  $I_{L}^{*}$ is characterized in
 \begin{equation*}
\small
\begin{aligned}
I_{L}^{*}=\argmax_{I_{L}}\Big(\max_{I_{L}\in\mathbb{L}_{1,M}}\theta(\frac{-2\alpha f(I_{L})g(I_{L})}{2\alpha f^{2}(I_{L})-s}),\max_{I_{L}\in\mathbb{L}_{2,M}}\theta(I_{L})\Big)
\end{aligned}
\end{equation*}

  \item [(2)]  $I_{F}^{*}$ is characterized in
\begin{equation*}
 \begin{aligned}
  I_{F}^{*}=\left\{\begin{aligned}
  &\frac{-2\alpha f(I_{L})g(I_{L})}{2\alpha f^{2}(I_{L})-s}\,\,& \text{if}\,\, I_{L}\in\mathbb{L}_{1,M}\\
  &I_{L}\,\, & \text{if}\,\, I_{L}\in\mathbb{L}_{2,M}\end{aligned}\right.
\end{aligned}
\end{equation*}
 \item [(3)]
$p_{L}^{*}=\frac{1}{15}+\frac{2c}{3}+\frac{k}{3}+\frac{I_{L}^{*}-I_{F}^{*}}{5I_{L}^{*}}-\frac{b}{5}I_{F}^{*}+\frac{4b}{15}I_{L}^{*}$,
$p_{F}^{*}=\frac{1}{15}+\frac{2c}{3}+\frac{k}{3}+\frac{I_{F}^{*}}{5I_{L}^{*}}+\frac{b}{15}I_{L}^{*}+\frac{b}{5}I_{F}^{*}$.

\item [(4)] $\tilde{n}_{L}^{*}=\frac{I_{L}^{*}-I_{F}^{*}}{I_{L}^{*}}+p_{F}^{*}-2p_{L}^{*}+k+bI_{L}^{*}-bI_{F}^{*}$,
  $\tilde{n}_{F}^{*}=\frac{I_{F}^{*}}{I_{L}^{*}}+p_{L}^{*}-2p_{F}^{*}+k+bI_{F}^{*}$
  \end{itemize}
\end{theorem}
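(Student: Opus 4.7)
The plan is to mirror the backward induction argument used for Theorem~\ref{outsideoptiontheorem}, observing that the only stage of the sequential game affected by the introduction of the upper bound $I_L \leq M$ is Stage~1 (the MNO's spectrum acquisition). Consequently, the SPNE characterizations produced in Stages~2, 3, and 4 carry over verbatim as functions of $(I_L, I_F)$, and the bulk of the argument is redirected to Stage~1. This makes it natural to prove parts (3) and (4) of the theorem first, then (2), and finally (1).

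First I would verify that Stages~4 and 3 are unaffected. The indifferent location $x_0$ is determined from the utility functions in \eqref{equ: BM-utility EUs} via $u_L(x_0)=u_F(x_0)$; this computation does not reference $M$, so $\tilde{n}_L, \tilde{n}_F$ as functions of $(p_L, p_F, I_L, I_F)$ remain exactly as in Section~\ref{sec: out}. In Stage~3 the payoffs $\pi_L, \pi_F$ are strictly concave in $p_L, p_F$ respectively, and the first-order conditions yield the same linear expressions for $p_L^*, p_F^*$ in terms of $(I_L, I_F)$ as in Theorem~\ref{outsideoptiontheorem}~(3), and hence the same $\tilde{n}_L^*, \tilde{n}_F^*$ in part~(4). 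Next I would revisit Stage~2: the MVNO still maximizes $\pi_F$ over $0 \leq I_F \leq I_L$, and $M$ plays no role here beyond acting as a parameter. The quadratic nature of $\pi_F$ in $I_F$ (established in the proof of Theorem~\ref{outsideoptiontheorem}) produces an interior optimum $I_F^{*} = -2\alpha f(I_L) g(I_L)/\bigl(2\alpha f^2(I_L) - s\bigr)$ precisely on the $I_L$-conditions defining $\mathbb{L}_1$, and a corner optimum $I_F^{*} = I_L$ precisely on those defining $\mathbb{L}_2$. Intersecting with $I_L \leq M$ yields $\mathbb{L}_{1,M} = \mathbb{L}_1 \cap \{I_L \leq M\}$ and $\mathbb{L}_{2,M} = \mathbb{L}_2 \cap \{I_L \leq M\}$ by construction, giving part~(2).

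For part~(1), the MNO maximizes $\pi_L$ over $\delta \leq I_L \leq M$ after substituting $p_L^*, p_F^*, I_F^*$ from the preceding stages. Using exactly the reduction carried out in the proof of Theorem~\ref{outsideoptiontheorem}, $\pi_L$ evaluates to $\theta\bigl(-2\alpha f(I_L) g(I_L)/\bigl(2\alpha f^2(I_L) - s\bigr)\bigr)$ on the interior-$I_F$ regime $I_L \in \mathbb{L}_{1,M}$ and to $\theta(I_L)$ on the corner regime $I_L \in \mathbb{L}_{2,M}$. Taking the pointwise maximum of these two sub-optimizations delivers the characterization of $I_L^*$ stated in part~(1). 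Existence of a maximizer follows because $\mathbb{L}_{1,M} \cup \mathbb{L}_{2,M}$ is a closed subset of the compact interval $[\delta, M]$ and $\theta$ is continuous in $I_L$ on each regime.

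The principal obstacle I anticipate is verifying that the candidate maximizer still yields a bona fide \emph{interior} SPNE once the upper bound is imposed, i.e.\ that $(\tilde{n}_L^*, \tilde{n}_F^*) \in (0,1)^2$ at $I_L^*$. In Theorem~\ref{outsideoptiontheorem} this was guaranteed (implicitly) by the freedom to let $I_L$ grow; with a tight $M$, interiority may fail for some parameter regimes, so the proof must either inherit interiority from the assumption $\delta < 4/b$ combined with a lower bound on $M$, or isolate the parameter window in which interiority persists (mirroring the role of $\delta < \sqrt{(2-\Delta)/(9s)}$ in the base case of Theorem~\ref{thm: Un-conclusion-sectionA}). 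A secondary care-point is that although existence of the maximizer is immediate by compactness, uniqueness need not hold, so the remark analogous to Remark~\ref{remark2} applies and uniqueness should be left to numerical verification rather than asserted.
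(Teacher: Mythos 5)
Your proposal is correct and follows essentially the same route as the paper, which simply observes that the proof of Theorem~\ref{thm: EUs with outside option, M} is identical to that of Theorem~\ref{outsideoptiontheorem} with the Stage-1 feasible region for $I_L$ shrunk by the upper bound $M$ (so that $\mathbb{L}_{1}$, $\mathbb{L}_{2}$ become $\mathbb{L}_{1,M}$, $\mathbb{L}_{2,M}$), Stages~2--4 being untouched. Your closing caveats on interiority and uniqueness are sensible but not needed for the paper's statement, since the theorem is by fiat restricted to interior SPNE and uniqueness is already deferred to numerics via the analogue of Remark~\ref{remark2}.
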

The proof of Theorem~\ref{thm: EUs with outside option, M} is the same as the proof of Theorem~\ref{outsideoptiontheorem}. Comparing Theorems~\ref{thm: EUs with outside option, M} and \ref{outsideoptiontheorem}, after adding the new condition $\delta\leq I_L\leq M$ on $I_L$, the only change is that the region of $I_L$ is shrinked by the upper bound.

\subsection{The 3-player model}

 With $\delta<\frac{\pi}{2}\sqrt{\frac{t}{3s}}$, we have:
 \begin{theorem}\label{cor: 3-p model, M}
The unique symmetric SPNE strategy, with $I_L^*, p_L^*$ representing the choices of, and $n_L^*$ subscription to,  each MNO, and $I_F^*, p_F^*, n_F^*$ the corresponding quantities for the MVNO, is: 
\begin{description}
  \item[(1)] If $M\leq\frac{\pi}{2}\sqrt{\frac{t}{3s}}$, then
\[I_L^*=I_F^*=M,\ \ p_{L}^{*}=p_{F}^{*}=t\pi+c, \ \ n_{F}^{*}=2n_L^* = \pi.\]
\item[(2)] If $M>\frac{\pi}{2}\sqrt{\frac{t}{3s}}$, the SPNE is the same as that in Theorem~\ref{cor: 3-p model}.   
\end{description}
\end{theorem}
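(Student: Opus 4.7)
The plan is to leverage the backward-induction analysis of the unconstrained three-player model carried out in Theorem~\ref{cor: 3-p model}, and to observe that the new upper bound $I_L\leq M$ affects only Stage~1 of the game. Stages 2--4 take $I_L$ as given, so the MVNO's best response, the Stage~3 prices, and the Stage~4 subscriptions are exactly those derived in Appendix~\ref{Appendix 3p}. The key piece to extract from that analysis is the Stage~2 characterization that the symmetric MVNO best response satisfies $I_F^{*}=I_L$ for every $I_L\in[\delta,\tfrac{\pi}{2}\sqrt{t/(3s)}]$, and that along this branch one has $\phi_{0,1}=\phi_{0,2}=\pi$, $\phi_{1,2}=0$, so that the three-player game decouples into two independent Hotelling duopolies whose equilibrium yields $p_L=p_F=t\pi+c$, $n_L=\pi/2$, $n_F=\pi$, \emph{independently} of the common value of $I_L$.

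Given that characterization, substituting $I_F=I_L$ into \eqref{equ: 3p-L-payoff} and using the induced Stage~3--4 values gives, on the regime $I_L\leq\tfrac{\pi}{2}\sqrt{t/(3s)}$,
\begin{equation*}
\pi_L(I_L)\;=\;\tfrac{t\pi^{2}}{2}\;+\;(s-\gamma)\,I_L^{2},
\end{equation*}
which is strictly increasing in $I_L$ because $s>\gamma$. The proof of Theorem~\ref{cor: 3-p model} already shows that the global unconstrained Stage~1 maximizer is $I_L=\tfrac{\pi}{2}\sqrt{t/(3s)}$, so $\pi_L$ is non-decreasing up to this threshold and non-increasing beyond it.

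With these two facts in hand, the two cases are then a one-line comparison. For part~(2), if $M>\tfrac{\pi}{2}\sqrt{t/(3s)}$ the unconstrained maximizer lies in the interior of $[\delta,M]$, the constraint is slack, and the SPNE coincides with Theorem~\ref{cor: 3-p model}. For part~(1), if $\delta\leq M\leq\tfrac{\pi}{2}\sqrt{t/(3s)}$ the whole feasible set $[\delta,M]$ lies inside the strictly-increasing branch, so the constrained Stage~1 optimum is uniquely attained at the upper endpoint $I_L^{*}=M$; the Stage~2 relation then forces $I_F^{*}=M$, and the Stage~3--4 formulas evaluated at $I_F^{*}=I_L^{*}=M$ yield exactly $p_L^{*}=p_F^{*}=t\pi+c$ and $n_F^{*}=2n_L^{*}=\pi$.

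The main obstacle is really one of bookkeeping rather than mathematics: I need to isolate from Appendix~\ref{Appendix 3p} the piecewise description of the MVNO's best response and verify that on the ``$I_F=I_L$'' branch the induced Stage~3--4 quantities are genuinely constant in $I_L$ (as opposed to merely being constant at the single point $I_L=\tfrac{\pi}{2}\sqrt{t/(3s)}$). This parallels Theorem~\ref{thm: BM-I_F} in the base case and the ``boundary'' case of Theorem~\ref{thm: Un-conclusion-sectionA, M}~(1), so the structural argument is already templated; once this lemma is stated cleanly, no new optimization is needed and parts~(1) and~(2) collapse to comparing $M$ with the threshold $\tfrac{\pi}{2}\sqrt{t/(3s)}$.
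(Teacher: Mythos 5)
Your proposal is correct and follows essentially the same route as the paper: Stages 2--4 are untouched by the cap, and Stage 1 reduces to noting that on the branch $I_F^*=I_L$ (valid for $I_L\leq\frac{\pi}{2}\sqrt{t/(3s)}$) the payoff is $\frac{t\pi^2}{2}+(s-\gamma)I_L^2$, increasing since $s>\gamma$, so the constrained optimum sits at $M$ when $M$ is below the threshold and otherwise the unconstrained analysis of Theorem~\ref{cor: 3-p model} applies. Your verification that the Stage 3--4 formulas are constant in $I_L$ along the $I_F=I_L$ branch (yielding $p_L^*=p_F^*=t\pi+c$ and $n_F^*=2n_L^*=\pi$) is the same bookkeeping the paper relies on implicitly.
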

Similar with Theorem~\ref{thm: Un-conclusion-sectionA, M},  if the upper bound $M$ is relative small, the MNO acquires the maximum amount of spectrum from the regulator, and the MVNO leases all spectrum from the MNO.

\section{Conclusions and Future Research Directions}
\label{sec: conclusion}

This paper investigates the incentives of mobile network operators (MNOs) for acquiring
additional  spectrum to offer mobile virtual network operators (MVNOs) and thereby inviting competition
for a common pool of end users (EUs). We consider a base case and two generalizations: (\romannumeral1) one MNO and one MVNO,  (\romannumeral2) one MNO, one MVNO and an outside option, and  (\romannumeral3) two MNOs and one MVNO. We identify metrics \big($I_F^*/I_L^*$ for cooperation between SPs,  $(n_L^*, n_F^*)$ for competition between SPs, $I_F^*/p_F^*+(I_L^*-I_F^*)/p_L^*$ for resource-cost tradeoff of the EUs\big) to quantify the interplay between cooperation and competition.  Four-stage noncooperative sequential games are formulated and SPNE are obtained analytically.

Analytical and numerical results show that higher degree of cooperation can enhance the payoff of both SPs, and increase (respectively, decrease) the competition efficacy of SP$_F$ (respectively, SP$_L$). In addition,  high degree of cooperation coincides with high EU-resource-cost, and provides low access fee options to the EUs. 
Increased competition due to the presence of additional MNOs is beneficial to EUs but reduces the payoffs of the SPs.

All results extend, with some modifications,  when we consider that $I_L$ is upper bounded by $M$. Such bounds may apply when   the central regulator has limited spectrum to offer. In this case, if the upper bound $M$ is relatively small (less than some threshold), in the SPNE, $I_L^* = I_F^* = M$, but otherwise $I_L^*, I_F^*$ characterized in various Theorems apply. The thresholds will in general be different for different cases and have been quantified. The SPNE values of the other decisions variables, namely $p_L^*, p_F^*, n_L^*, n_F^*$ remain as in various Theorems.   Refer to Section~V of the technical report \cite{technicalreport} for the deductions.

Future research includes generalization  to accommodate: 1) non-uniform distribution of EUs between the two SPs in the hotelling model,  2) distinct transaction costs $c_L$ and $c_F$,  3) potentially non-convex  spectrum reservation fee functions that the SP$_F$ pays the SP$_L$ and the SP$_L$ pays the regulator,   4) arbitrary number of MNOs and MVNOs,   5) arbitrary transport cost $t_L, t_F$ functions of the spectrum acquired by the SPs, $I_L, I_F$. We next provide research directions in each.
\begin{enumerate}
 \item If the EUs are non-uniformly distributed in $[0, 1]$, one can start with a cumulative distribution function $F(x)$ which gives the fraction of EUs in $(0, x)$. Starting with the base case and $v^L=v^F$,  in \eqref{equ: BM-demand}, for $x_0 \in (0, 1)$,  $n_L$ will now be $F(x_0)$, where $x_0$ is given by \eqref{equ: BM-indifferent location}, $n_F=1-n_L$ as before. Following the analytical progression in Section~\ref{sec: BM-SPNE Analysis}, the results  must now  be derived using specific expressions for $F(\cdot)$ (eg, Lemma~\ref{lem: BM-stage-3-payoffs}, Theorems~\ref{thm: BM-prices}, \ref{thm: BM-I_F}, \ref{thm: stage 1}). This will in turn help determine how the characteristics of the distribution function $F(\cdot)$   affect the equilibrium closed forms, which currently remains open.
 \item The EUs may incur different amounts of  transaction costs for the SPs, namely $c_F, c_L$ respectively for SP$_F$, SP$_L.$ Starting with the base case, \eqref{equ: BM-indifferent location}, \eqref{equ: BM-demand} continue to hold. But, $c$ need to be replaced by $c_L, c_F$ respectively in the expressions for the payoffs $\pi_L, \pi_F$ in  Lemma~\ref{lem: BM-stage-3-payoffs}. Also, $c$ need to be replaced by  $\frac{2c_F+c_L}{3}, \frac{2c_L+c_F}{3}$ respectively in the expressions for the access fees $ p_L^*, p_F^*$ in Theorem~\ref{thm: BM-prices}. The expressions in Theorems~\ref{thm: BM-I_F}, \ref{thm: stage 1} must now be derived and modified, building on the above modifications. This derivation remains open.
\item Following Remark~\ref{re: convex function}, 
    the SPNE of investment levels ($I_{L}^{*}$, $I_{F}^{*}$) remain open for an arbitrary spectrum reservation fee function that the SP$_F$ pays the SP$_L$ and the SP$_L$ pays the regulator. The analytical methodology used in Theorems~\ref{thm: BM-I_F}, \ref{thm: stage 1} should however apply, though the expressions would depend on the specific function in question.
 \item To obtain the SPNE for arbitrary number of MNOs and MVNOs, one may distribute them on a circle as for 3 SPs (refer to Section~\ref{sec: 3p-1-model} and Figure~\ref{pic: Modification of the hotelling model for the three players case}), and follow the analytical approach presented in Sections~\ref{sec: 3p-1-model}, \ref{sec: 3p-2-outcome}. The limitation of this distribution of SPs on a circle is that a SP can compete for EUs with only $2$ other SPs,  as a SP can have only $2$ adjacent SPs and effectively only a pair of SPs compete for the EUs in the segment of the circumference between them. For $3$ SPs, this is not restrictive, as  each SP anyway has no more than $2$ SPs to compete with, but  it is restrictive for $n$ SPs when $n > 3$ as there in general each SP competes with $n-1$ other SPs. Nonetheless, our circular distribution method provides a foundation for this general problem, by allowing SPNE computation for arbitrary number of SPs when each SP competes for EUs with $2$ predetermined SPs. More innovative topology of placements of SPs involving distributions in potentially higher dimensions may be able to relax this restriction, which remains open.
\item For arbitrary transport cost $t_L, t_F$ functions, the analytical methodologies (eg, Section~\ref{sec: BM-SPNE Analysis} for the base case) would apply. But the derivation of the results remain open.
\end{enumerate}

\appendices

\section{On quadratic function maximization}

\begin{lemma}\label{lem: quadratic function}
Define a quadratic function $f(x)=ax^{2}+bx+c$ with $a\neq0$. The maximum of $f(x)$ in an interval $[d, e] (d<e)$ can be obtained by the following rules:
 \begin{itemize}
 	\item[(1)] If $a>0$, and define the midpoint of the interval $M=\frac{d+e}{2}$, then
 $f_{\max}(x)=f(d)$ if $M<-\frac{b}{2a}$;
 $f_{\max}(x)=f(e)$ if $M\geq-\frac{b}{2a}$.
   \item[(2)] If $a<0$, i.e., $f(x)$ is concave, then
 $f_{\max}(x)=f(d)$ if $d\geq-\frac{b}{2a}$;
 $f_{\max}(x)=f(e)$ if $e\leq-\frac{b}{2a}$;
 $f_{\max}(x)=f(-\frac{b}{2a})$ if $d<-\frac{b}{2a}<e$.
 \end{itemize}
\end{lemma}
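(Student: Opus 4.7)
The plan is to complete the square and argue via monotonicity, separately handling the two sign regimes for $a$. Writing $f(x) = a(x - x_\ast)^2 + f(x_\ast)$ with vertex $x_\ast = -b/(2a)$, the shape is fully determined by the sign of $a$: concave with global maximum at $x_\ast$ when $a<0$, convex with global minimum at $x_\ast$ when $a>0$. All three claimed formulas will follow from this vertex-form rewriting together with a short comparison of endpoint values.

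For part (2), with $a<0$, I would invoke strict concavity: $f$ is strictly increasing on $(-\infty, x_\ast]$ and strictly decreasing on $[x_\ast, \infty)$. I then split into the three positional sub-cases: if $x_\ast \leq d$, then $f$ is decreasing throughout $[d,e]$, so $f_{\max}=f(d)$; if $x_\ast \geq e$, then $f$ is increasing throughout $[d,e]$, so $f_{\max}=f(e)$; and if $d<x_\ast<e$, the interior critical point gives the unconstrained maximum $f(x_\ast)$, which is automatically attained inside $[d,e]$. These three cases precisely reproduce the stated rules.

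For part (1), with $a>0$, convexity forces the maximum over a closed interval to occur at one of the endpoints $d$ or $e$, so the only remaining task is to decide which. The key identity I would use is
\begin{equation*}
f(d) - f(e) = (d-e)\bigl(a(d+e) + b\bigr) = -2a(e-d)(M - x_\ast),
\end{equation*}
where $M = (d+e)/2$ and $x_\ast = -b/(2a)$ (so $b = -2ax_\ast$). Since $a>0$ and $e>d$, the sign of $f(d)-f(e)$ is opposite that of $M - x_\ast$, so $f(d) \geq f(e)$ iff $M \leq x_\ast$. This yields exactly the two subcases in (1): $f_{\max}=f(d)$ when $M < -b/(2a)$, and $f_{\max}=f(e)$ when $M \geq -b/(2a)$ (the boundary case $M=x_\ast$ gives $f(d)=f(e)$, consistently handled by the stated weak inequality).

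There is no real obstacle here; the only bookkeeping step worth double-checking is the algebraic identity in part (1), which reduces the endpoint comparison to a single inequality involving the midpoint and the vertex. Once that identity is in hand, both parts are immediate consequences of the vertex-form decomposition.
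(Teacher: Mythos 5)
Your proof is correct and follows essentially the same route as the paper's: for $a<0$ you use monotonicity on either side of the vertex with the same three positional sub-cases, and for $a>0$ you use the same endpoint-difference identity $f(d)-f(e)=(d-e)\bigl(a(d+e)+b\bigr)$, merely rewritten in vertex form. No gaps; the sign analysis at the boundary case $M=x_\ast$ is handled consistently with the weak inequality in the statement.
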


\begin{proof}
\noindent{\textbf{(1)}}. Since $a>0$, then $f(x)$ is convex, thus the maximum point can only be obtained at the boundary points, i.e., $x=d$ or $x=e$. Thus,  
\begin{equation}\label{equ: quadratic function-0}
f(d)-f(e)=(a(d+e)+b)(d-e).
\end{equation}

Let $M< -\frac{b}{2a}$. Since $a>0$,
$M< -\frac{b}{2a}\Leftrightarrow \frac{d+e}{2}< -\frac{b}{2a}\Leftrightarrow (d+e)a+b<0$.
Note $d-e<0$, from (\ref{equ: quadratic function-0}), $f(d)-f(e)=(a(d+e)+b)(d-e)>0$, which implies $f_{\max}(x)=f(d)$.
Similarly, if $M\geq-\frac{b}{2a}$, note $a>0$, then $M\geq-\frac{b}{2a}\Leftrightarrow \frac{d+e}{2}\geq-\frac{b}{2a}\Leftrightarrow (d+e)a+b\geq0$.
Since $d-e<0$, then from (\ref{equ: quadratic function-0}),
$f(d)-f(e)=(a(d+e)+b)(d-e)\leq0$, which implies $f_{\max}(x)=f(e)$.

\noindent{\textbf{(2)}.} If $a<0$, then $f(x)$ is concave. Since $f'(x)=2ax+b$, then 1) $f'(x)<0$
and $f(x)$ is decreasing if $x>-\frac{b}{2a}$, 2)  $f'(x)\geq0$ and $f(x)$ is increasing if $x\leq-\frac{b}{2a}$.
\textbf{(\romannumeral1)} If $d\geq-\frac{b}{2a}$, then $f(x)$ is decreasing if $x\in[d, e]$, hence $f_{\max}(x)=f(d)$.
\textbf{(\romannumeral2)} If $e\leq-\frac{b}{2a}$, then $f(x)$ is increasing if $x\in[d, e]$, hence $f_{\max}(x)=f(e)$.
\textbf{(\romannumeral3)} Let $d<-\frac{b}{2a}< e$. Since $f(x)$ is concave, 
thus $f(x)$ has a unique maximum point (stationary point) $x=-\frac{b}{2a}$, i.e., $f(-\frac{b}{2a})\geq f(x)$ for all $x\in\mathbb{R}$. If $[d, f]$ contains  $-\frac{b}{2a}$, i.e., $d\leq-\frac{b}{2a}\leq f$, then $f(-\frac{b}{2a})\geq f(x)$ for all $x\in[d, f]$, hence $f_{\max}(x)=f(-\frac{b}{2a})$.
\end{proof}

\section{Proofs in the Base case when $v^L = v^F$}\label{Appendix-BM}
\noindent{{\bf Proof of Theorem \ref{thm: Un-conclusion-3sp} when $v^L=v^F$}.}
\begin{proof}
Let $(p_{L}^{*},p_{F}^{*},I_{L}^{*},I_{F}^{*})$ be a corner SPNE strategy. Thus, 1) $x_{0} \geq1$, or 2)  $x_{0}\leq0$. We arrive at a contradiction for  1)  {\bf Step 1}  and 2) in {\bf Step 2}
respectively. 
 \begin{lemma}\label{lem1}
 $\pi_{F}^{*}\geq0$. If $n_F^* > 0,$  $p_{F}^{*} \geq c.$
  \end{lemma}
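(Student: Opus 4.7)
The plan is to establish both claims via a standard ``individual rationality'' deviation argument: in any SPNE, SP$_{F}$ must do at least as well as she would under any alternative strategy, and there is a simple strategy that guarantees her a non-negative payoff.

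For the first claim ($\pi_F^* \geq 0$), I would exhibit a deviation by SP$_F$ that yields non-negative payoff and then invoke the SPNE property. Specifically, consider the deviation in which SP$_F$ chooses $I_F = 0$ in Stage 2, and then in the resulting Stage-3 subgame selects her best response $p_F$ to SP$_L$'s equilibrium price. With $I_F = 0$, the payoff formula \eqref{equ: BM-L-payoff} reduces to $\pi_F = n_F(p_F - c)$; restricting attention to $p_F \geq c$ already yields a non-negative payoff (since $n_F \geq 0$ always), so SP$_F$'s optimal response in this subgame must also give a non-negative payoff. Because $(p_L^*, p_F^*, I_L^*, I_F^*)$ is an SPNE, $\pi_F^*$ must weakly exceed the value SP$_F$ could secure through this deviation, hence $\pi_F^* \geq 0$.

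For the second claim, I would argue by contradiction. Suppose $n_F^* > 0$ but $p_F^* < c$. Then directly from \eqref{equ: BM-L-payoff},
\begin{equation*}
\pi_F^* \;=\; n_F^*(p_F^* - c) - s (I_F^*)^2 \;<\; 0,
\end{equation*}
since the first summand is strictly negative (both $n_F^* > 0$ and $p_F^* - c < 0$) and the second is non-positive. This contradicts the first part of the lemma, so $p_F^* \geq c$.

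I do not expect any substantive obstacle here: the only subtlety is being careful about how the deviation at Stage 2 interacts with the induced Stage-3 subgame equilibrium, i.e., verifying that SP$_F$'s equilibrium continuation payoff under $I_F = 0$ is indeed at least the payoff she could unilaterally secure by choosing $p_F = c$ in that subgame. This follows because $p_F^*$ in that continuation is her best response by the SPNE property, so it weakly dominates the benchmark $p_F = c$.
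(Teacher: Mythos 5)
Your proof is correct and follows essentially the same route as the paper's: a unilateral deviation to $I_F=0$ with $p_F \geq c$ guarantees SP$_F$ a non-negative payoff, which forces $\pi_F^* \geq 0$, and the second claim then follows by the same contradiction from the payoff formula. The extra care you take about the Stage-2/Stage-3 interaction (that the continuation best response weakly dominates the benchmark $p_F = c$) is a slightly more explicit version of what the paper leaves implicit, but it is not a different argument.
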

\begin{proof}
Let $\pi_{F}^*<0$. Consider a unilateral deviation in which $I_F = 0, p_F \geq c.$ From \eqref{equ: 3p-F-payoff}, $\pi_F \geq 0$, leading to a contradiction. Now, let $n_F^* > 0$ and  $p_{F}^{*} < c$. Thus, $\pi_{F}^*<0$ which is a contradiction.
\end{proof}

\noindent{\bf Step 1}. Let $x_{0}^*\geq1$. Clearly,  $n_{F}^*=0$ and $n_{L}^*=1$.
From \eqref{equ: BM-F-payoff}, $\pi_{F}^{*}=-sI_{F}^{*2}.$

 From Lemma~\ref{lem1}, $I_F^* = 0.$ Thus, $\pi_F^*=0, t_F^* = 1.$
From \eqref{equ: BM-indifferent location}, $1 \leq x_{0}^*=t_{F}^*+p_{F}^*-p_{L}^* = 1+ p_{F}^*-p_{L}^*$. Thus,  $p_{F}^*\geq p_{L}^*$. 

From \eqref{equ: BM-L-payoff}, $ \pi_{L}^{*}=p_{L}^{*}-c-\gamma I_{L}^{*2}.$
If $p_{L}^{*}<c$, then $\pi_{L}^*<-\gamma \delta^{2} < 0$ since $I_{L}^{*}\geq\delta$. Consider a unilateral deviation by which  $I_{L}=\delta, p_{L}=c$, then
$\pi_{L} = -\gamma\delta^{2}$, which is beneficial for SP$_L$.   Thus, $p_{L}^{*}\geq c$.

Now, let $p_L^* > c.$  Thus, $p_{F}^{*} \geq p_{L}^{*} > c$. Recall that $x_0^* = 1+p_F^*-p_L^*.$
 Consider a unilateral deviation by which $p_F = p_L^*-\epsilon > c$. Now, by \eqref{equ: BM-indifferent location}, $x_0 < 1$, and hence $n_F > 0.$  Now, from \eqref{equ: BM-F-payoff}, $\pi_F > 0 = \pi_F^*$. Thus, $(I_F^*, p_F^*)$ is not  SP$_F$'s best response  to SP$_L$'s choices $(I_L^*, p_L^*)$,    which is a contradiction. Hence, $p_L^* = c.$

Now consider another unilateral deviation of SP$_{L}$, $p_{L}'=p_{F}^{*}+\epsilon$, where $0 < \epsilon < 1$,  with all the rest the same. Since $p_L^* \leq p_F^*$, $p_L' > p_L^* = c.$
\begin{align*}
&n_{L}'=x_{0}'=t_{F}^{*}+p_{F}^*-p_{L}'=1-\epsilon.
\end{align*}

Then
\[\pi_{L}'-\pi_{L}^{*}=n_{L}'(p_{L}'-c) - (p_{L}^{*}-c) =
(1-\epsilon)(p_L' - c)>0.\]
The last inequality follows because $p_L' > c$ and $\epsilon < 1.$ Thus, we again arrive at a contradiction.

\noindent{\bf Step 2}. Let $x_0^* \leq 0.$ Clearly, $n_{F}^*=1, n_{L}^*=0$. Since $n_F^* > 0 $, by Lemma~\ref{lem1},  $p_{F}^{*}\geq c$.
From \eqref{equ: BM-indifferent location}, $x_{0}^*=t_{F}^*+p_{F}^*-p_{L}^*\leq0$. Thus,
  $p_L^* \geq p_F^* + t_F^*.$ Now, from  \eqref{equ: BM-L-payoff},
\begin{align} \label{fallback}
 \pi_{L}^*=sI_{F}^{*2}-\gamma I_{L}^{*2}.
 \end{align}
 Consider a unilateral deviation by SP$_{L}$, by which  $p_{L}'=t_F^* + p_F^* -\epsilon$, $0 < \epsilon < 1$. Then 
\begin{align*}
&n_{L}'=x_{0}'=t_{F}^{*}+p_{F}^*-p_{L}'=\epsilon>0
\end{align*}
Therefore, by \eqref{fallback},
\begin{align*}
\pi_{L}'-\pi_{L}^{*}=n_{L}'(p_{L}'-c)=\epsilon(p_{F}^{*}-\epsilon+t_{F}^{*}-c)
\end{align*}
Since $p_{F}^*\geq c$, either $p_{F}^{*}=c$ or $p_{F}^{*}>c$.
If $p_{F}^{*}>c$, then let $\epsilon<p_{F}^{*}-c$. Then,  $\pi_{L}'-\pi_{L}^{*}>0$.
If $p_{F}^{*}=c$, then $I_{F}^{*}=0$ (otherwise $\pi_{F}^{*}<0$, which by Lemma~\ref{lem1} implies that $p_{F}^{*}$ is not a NE), then $t_{F}^{*}=1$. Thus, $\pi_{L}'-\pi_{L}^{*}>0$. We again arrive at a contradiction. 
\end{proof}

By Theorem~\ref{thm: Un-conclusion-3sp} proved above henceforth we only consider interior SPNE in which $0 < x_0^* < 1.$

\noindent{{\bf Proof of Theorem \ref{thm: BM-I_F} when $v^L = v^F$}.}

\begin{proof}
Substituting $p_{F}$ and $p_{L}$ from (\ref{equ: BM-prices}) into (\ref{equ: BM-interior-payoffs}), using $t_{L}=I_{F}/I_{L}$ and $t_{F}=1-t_{L}$, $\text{SP}_{F}$'s payoff  becomes,
\begin{equation}\label{equ: BM-stage-2-interior-F-payoff}
\pi_{F}(I_{F})=(\frac{1}{9I_{L}^{2}}-s)I_{F}^{2}+\frac{2}{9I_{L}}I_{F}+\frac{1}{9}	
 \end{equation}

Thus, the following maximization yields $I_F^*$: 
\begin{equation}\label{equ: BM-best-inteior-I_F-1}
\begin{aligned}
\max\,\,&\pi_{F}(I_{F})=(\frac{1}{9I_{L}^{2}}-s)I_{F}^{2}+\frac{2}{9I_{L}}I_{F}+\frac{1}{9}\\
s.t\,\,&0\leq I_{F}\leq I_{L}.
\end{aligned}
\end{equation}

\noindent{\textbf{(A)}.} If $I_{L}=\frac{1}{\sqrt{9s}}$, i.e., $\frac{1}{9I_{L}^{2}}-s=0$, $\pi_{F}(I_{F}; I_{L})$ is increasing in $I_{F}.$
  Thus,
$I_{F}^{*}=I_{L}$.

\noindent{\textbf{(B)}.} Let $I_{L}\neq\frac{1}{\sqrt{9s}}$. Referring to
 the terminology of Lemma \ref{lem: quadratic function},  $-b/2a = \frac{I_{L}}{9I_{L}^{2}s-1}$,
 which we denote as $F_1$. 

{\bf (B-1).} Let $I_{L}<\frac{1}{\sqrt{9s}}$, i.e., $1-9I_{L}^{2}s>0$. Then $\pi_{F}$ is a convex function. Note that $I_F \in [0,  I_{L}]$, and the midpoint of the interval is $I_{L}/2$. From Lemma~\ref{lem: quadratic function}, since $1-9I_{L}^{2}s>0$, then $F_{1}<0<I_{L}/2$, $\Rightarrow$ the maximum is obtained at $I_{F}^{*}=I_{L}$.

{\bf (B-2).} Let $I_{L}>\frac{1}{\sqrt{9s}}$, i.e., $1-9I_{L}^{2}s<0$. Then $\pi_{F}$ is a concave function. Note that  $F_{1}=\frac{I_{L}}{9I_{L}^{2}s-1} > 0$. From Lemma \ref{lem: quadratic function}, $0<F_{1}<I_{L}\Leftrightarrow\sqrt{\frac{2}{9s}}<I_{L}$ and $F_{1}\geq I_{L}\Leftrightarrow\frac{1}{\sqrt{9s}}<I_{L}\leq\sqrt{\frac{2}{9s}}$,
thus
\begin{equation*}
I_F^* = \left\{
\begin{aligned}
&F_{1}& \,\,&\text{if}\quad \sqrt{\frac{2}{9s}}<I_{L}\\
&I_{L}& \,\,&\text{if}\quad \frac{1}{\sqrt{9s}}<I_{L}\leq\sqrt{\frac{2}{9s}}
\end{aligned}
\right..
\end{equation*}
Combining \textbf{(A)} and \textbf{(B)}, we obtain (\ref{equ: BM-I_F}).
\end{proof}

\noindent{{\bf Proof of Theorem \ref{thm: stage 1}.}

\begin{proof}
 Substituting $p_{L}$ and $p_{F}$ from (\ref{equ: BM-prices}) into $\pi_{L}$ from (\ref{equ: BM-interior-payoffs}), using $t_{L}=I_{F}/I_{L}$ and $t_{F}=\frac{I_{L}-I_{F}}{I_{L}}$, $\text{SP}_{L}$'s payoff becomes:
\begin{equation}\label{equ: BM-L-payoff1}
\pi_{L}(I_{L})=(\frac{2}{3}-\frac{I_{F}^{*}}{3I_{L}})^{2}+sI_{F}^{*2}-\gamma I_{L}^{2}.
 \end{equation}

Now, the following optimization yields $I_L^*$:
\begin{equation*}
\begin{aligned}
\max_{I_{L}}\quad&\pi_{L}(I_{L})=(\frac{2}{3}-\frac{I_{F}^{*}}{3I_{L}})^{2}+s(I_{F}^{*})^{2}-\gamma I_{L}^{2}\\
s.t\quad&\delta\leq I_L.
\end{aligned}
\end{equation*}
Then, we have the following two sub-cases.

{\textbf{(A)}.} From (\ref{equ: BM-I_F}), if $\delta\leq I_{L}\leq\sqrt{\frac{2}{9s}}$, then $I_{F}^{*}=I_{L}$, thus for $I_L$ in this range, the objective function
of the optimization is $ \frac{1}{9}+(s-\gamma)I_{L}^{2}. $ 
 This is an increasing function of $I_{L}$,
since $s>\gamma.$ Thus
the optimum solution for $I_L \in [\delta, \sqrt{\frac{2}{9s}}]$ is   $\sqrt{\frac{2}{9s}}$. 

{\textbf{(B)}.} Next, if $\sqrt{\frac{2}{9s}}<I_L$, then $I_{F}^{*}=\frac{I_{L}}{9I_{L}^{2}s-1}$.
Since $I_{L}=\frac{I_{L}}{(9I_{L}^{2}s-1)}$ when $I_{L}=\sqrt{\frac{2}{9s}}$, then $I_{F}^{*}$ is continuous at $I_{L}=\sqrt{\frac{2}{9s}}$. So $\pi_{L}(I_{L}; I_{F}^{*})\rightarrow\pi_{L}|_{I_{L}=I_{F}^{*}=\sqrt{\frac{2}{9s}}}$ as $I_{L}\downarrow\sqrt{\frac{2}{9s}}$. Therefore, this case also includes the optimum solution of previous case. Thus substituting $I_{F}^{*}=\frac{I_{L}}{9I_{L}^{2}s-1}$ to (\ref{equ: BM-L-payoff1}),  (\ref{equ: BM-I_L}) is obtained.
\end{proof}	

\section{The proofs in the 3-player model}\label{Appendix 3p}

\noindent{{\bf Proof of Lemma \ref{lem: 3p-utility of EU/MNO/MVNO}}.}

\begin{proof}
First, let $p_{L}-p_{F} \geq t\phi_{0,1}$. Consider EUs in the arc of $\phi_{1,2}$.
Consider an EU  at distance $x$ from MNO$_{1}$. From the symmetry of  MNO$_{1}$ and MNO$_{2}$, 1) if $x \leq \frac{\phi_{1,2}}{2}$, $u_{MNO_{1}} \geq u_{MNO_{2}}$, and 2)  if $x>\frac{\phi_{1,2}}{2}$, $u_{MNO_{2}} \geq u_{MNO_{1}}.$
Since $p_{L}-p_{F}\geq t\phi_{0,1}$, 1) if $x < \frac{\phi_{1,2}}{2}$, then $u_{MNO_{1}}=v-tx-p_{L} < v-t(x+\phi_{0,1})-p_{F}=u_{MVNO}$, and 2) if $x>\frac{\phi_{1,2}}{2}$, then $u_{MNO_{2}}=v-tx-p_{L} <  v-t(x+\phi_{0,1})-p_{F}=u_{MVNO}$. Thus, all the EUs in  arc $\phi_{1,2}$ will choose the MVNO.

Note that $\phi_{0,1}=\phi_{0,2}$. Now consider the EUs in  arc  $\phi_{0,1}$ ($\phi_{0,2}$), at a distance of $x$ from MNO$_1$ (MNO$_2$, respectively). From (\ref{equ: 3p-UEs utility-1}) and (\ref{equ: 3p-UEs Ulitiy-2}), $u_{MNO_{i}}-u_{MVNO}=t\phi_{0,i}-p_{L}+p_{F}-2tx < 0$
since $p_{L}-p_{F}\geq t\phi_{0,1}, x > 0$. Thus all these EUs opt for the MVNO.

Let $p_{L}-p_{F}< t\phi_{0,1}$. One can similarly show that the EUs in arc  $\phi_{1,2}$ choose either MNO$_{1}$ or  MNO$_{2}$.
\end{proof}

\noindent{{\bf Proof of Theorem \ref{thm: 3p-no NE strategies}}.}
\begin{proof}
 Since $I_L^* \geq \delta > 0,$ $\phi_{0,1}^* = \phi_{0,2}^* > 0.$
From  Lemma \ref{lem: 3p-utility of EU/MNO/MVNO},  $n_F^*=2\pi$, and $n_L^*=0$.
 Thus, 
\[ \pi_F^*=2\pi(p_{F}^*-c)-2s(I_{F}^*)^{2}, \pi_L^*=sI_{F}^{*2}-\gamma I_{L}^{*2}. \]

 Let $p_{F}^{*}<c$, then $\pi_F^*<0$. Consider a unilateral deviation of the MVNO,
  by which  $p_{F}=c, I_F=0$. Thus,  $\pi_{F}=0$, and the unilateral deviation is profitable, which is a contradiction. Thus, $p_{F}^{*}=c.$

Thus, since $\phi_{0,1}^*>0$, and  from the condition of the theorem, $
p_{L}^{*} \geq p_{F}^{*}+t\phi_{0,1}^*> c.
$
Consider a unilateral deviation of MNO$_1$, by which $p_{L}' = p_{F}^{*}+t\phi_{0,1} - \epsilon > c, $ with $\epsilon>0$. Now consider the utilities of the EUs in  arc  $\phi_{0,1}$,  at a distance of $x$ from MNO$_1$. From (\ref{equ: 3p-UEs utility-1}),
 $$u_{MNO_{1}}'-u_{MVNO}=t\phi_{0,1}^*-p_{L}'+p_{F}^*-2tx = \epsilon - 2tx.$$
So for $x \in (0, \epsilon/2t)$,  $u_{MNO_{1}}>u_{MVNO}$. Thus  $n_{MNO_{1}}'>0$.

Since $I_{F}^{*}$ and $I_{L}^{*}$ are the same as before, then
$\pi_{MNO_{1}}'=n_{MNO_{1}}'(p_{L}'-c)+sI_{F}^{*2}-\gamma I_{L}^{*2}.$ Thus,

\begin{align*}
\pi_{MNO_{1}}'-\pi_{MNO_{1}}^{*}=n_{MNO_{1}}'(p_{L}'-c) > 0.
\end{align*}
The last inequality follows since $p_{L}'>c$ and $n_{MNO_{1}}'>0$.
  Thus, the unilateral deviation is profitable which leads to a contradiction. 
\end{proof}

\noindent{{\bf Proof of Theorem \ref{cor: 3-p model}}.}

\begin{proof}
\noindent Due to Theorem~\ref{thm: 3p-no NE strategies}, we consider that $p_{L}-p_{F}<t\phi_{0,1}$ henceforth. We sequentially  progress from {\bf Stage~4} to {\bf Stage~1}.

\noindent{\bf{Stage 4:}}
First, we determine the constant $\zeta$.
\begin{lemma}\label{lem: 3S-length of arcs}
$\zeta=\pi$, and $\phi_{0,1}=\phi_{0,2}=\pi\frac{I_{F}+I_{L}}{2I_{L}}$, $\phi_{1,2}=\pi\frac{I_{L}-I_{F}}{I_{L}}$.
\end{lemma}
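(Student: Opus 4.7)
The proof is essentially a single consistency check, so my plan is short. The key observation is that since the EUs are uniformly distributed along a circle of radius $1$, the three arcs $\phi_{0,1}$, $\phi_{0,2}$, $\phi_{1,2}$ must together make up the entire circumference. That circumference equals $2\pi$, giving the constraint
\begin{equation*}
\phi_{0,1} + \phi_{0,2} + \phi_{1,2} = 2\pi.
\end{equation*}

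Substituting the prescribed forms $\phi_{0,1} = \phi_{0,2} = \zeta(I_F+I_L)/(2I_L)$ and $\phi_{1,2} = \zeta(I_L-I_F)/I_L$ into this constraint, I would compute
\begin{equation*}
2\zeta\,\frac{I_F+I_L}{2I_L} + \zeta\,\frac{I_L-I_F}{I_L} = \zeta\,\frac{(I_F+I_L)+(I_L-I_F)}{I_L} = 2\zeta.
\end{equation*}
Equating this to $2\pi$ immediately yields $\zeta = \pi$. Plugging $\zeta=\pi$ back into the definitions produces the stated expressions $\phi_{0,1}=\phi_{0,2}=\pi(I_F+I_L)/(2I_L)$ and $\phi_{1,2}=\pi(I_L-I_F)/I_L$.

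There is no real obstacle here; the content of the lemma is just the normalization forced by the circle having circumference $2\pi$. The only thing worth double-checking is that the three arcs as parametrized are indeed nonnegative for all admissible $(I_L,I_F)$ with $0 \le I_F \le I_L$, which follows trivially since $I_F+I_L \ge 0$ and $I_L - I_F \ge 0$. So my plan is just the one-line accounting above, followed by substitution.
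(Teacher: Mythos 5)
Your proof is correct and follows exactly the paper's argument: the paper likewise derives $\zeta=\pi$ from the constraint $\phi_{0,1}+\phi_{0,2}+\phi_{1,2}=2\pi$ and then reads off the arc expressions from the definitions. The extra nonnegativity check you mention is harmless but not needed.
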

\begin{proof}
$\phi_{01}+\phi_{02}+\phi_{12}=2\pi$, then $\zeta=\pi$. The rest follows from the definition of $\phi_{01}$, $\phi_{02}$, and $\phi_{12}$. 
\end{proof}

By symmetry, we only consider the split of the EUs between the MNO$_{1}$ and the MVNO.

\begin{theorem}\label{thm: 3p-demand-division}
\begin{equation}\label{equ: 3p-demand of MVNO}
\begin{aligned}
n_{MVNO}=\left\{\begin{aligned}
&0\,&x_{0}\leq0\\
&\pi\frac{I_{F}+I_{L}}{2I_{L}}+\frac{p_{L}-p_{F}}{t}\,&0< x_{0}<\phi_{0,1}\\
&\pi\frac{I_{L}+I_{F}}{I_{L}}\, &x_{0}\geq\phi_{0,1}
\end{aligned}\right.
\end{aligned}
\end{equation}
\begin{equation}\label{equ: 3p-demand of MNOs}
\begin{aligned}
n_{MNO_{1}}=\left\{\begin{aligned}
&\pi\,&x_{0}\leq0\\
&\pi\frac{3I_{L}-I_{F}}{4I_{L}}+\frac{p_{F}-p_{L}}{2t}\,&0< x_{0}<\phi_{0,1}\\
&\pi\frac{I_{L}-I_{F}}{2I_{L}}\, &x_{0}\geq\phi_{0,1}\end{aligned}\right.
\end{aligned}
\end{equation}
where $x_{0}=\frac{\phi_{0,1}}{2}+\frac{p_{L}-p_{F}}{2t}$.
\end{theorem}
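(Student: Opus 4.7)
Proof proposal for Theorem~\ref{thm: 3p-demand-division}.

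The plan is to derive the indifference point $x_0$ on the arc $\phi_{0,1}$ first, then use it together with the symmetric structure of the model to split the circle into pieces assigned to the three SPs. Under the standing hypothesis $p_L - p_F < t\phi_{0,1}$, Lemma~\ref{lem: 3p-utility of EU/MNO/MVNO} guarantees that on arc $\phi_{1,2}$ no EU chooses the MVNO, so those EUs split only between the two MNOs. Hence I only need a two-way comparison on each of $\phi_{0,1}$, $\phi_{0,2}$ (MVNO vs.\ the adjacent MNO) and on $\phi_{1,2}$ (MNO$_1$ vs.\ MNO$_2$).

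First I would parametrize arc $\phi_{0,1}$ by the distance $x\in[0,\phi_{0,1}]$ from the MVNO and use \eqref{equ: 3p-UEs utility-1} to set $u_{MVNO}(x)=u_{MNO_1}(x)$. Solving $v - tx - p_F = v - t(\phi_{0,1}-x) - p_L$ yields the unique indifference point $x_0 = \phi_{0,1}/2 + (p_L-p_F)/(2t)$. Because both utilities are linear in $x$ with opposite slopes, EUs in $[0, x_0)$ strictly prefer the MVNO and EUs in $(x_0, \phi_{0,1}]$ strictly prefer MNO$_1$ whenever $0<x_0<\phi_{0,1}$. By the symmetry of the model (identical MNO prices, identical spectrum choices, and $\phi_{0,1}=\phi_{0,2}$), the same $x_0$ governs arc $\phi_{0,2}$ with MNO$_2$ replacing MNO$_1$. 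On $\phi_{1,2}$, the symmetry of $p_{L_1}=p_{L_2}=p_L$ forces the indifferent point to be the midpoint, so each MNO attracts exactly $\phi_{1,2}/2$ of that arc.

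Next I would assemble the pieces in the interior case $0 < x_0 < \phi_{0,1}$. The MVNO collects the intervals $[0, x_0]$ from both arcs $\phi_{0,1}, \phi_{0,2}$, giving $n_{MVNO} = 2x_0 = \phi_{0,1} + (p_L-p_F)/t$, and substituting Lemma~\ref{lem: 3S-length of arcs} gives the claimed $\pi(I_F+I_L)/(2I_L) + (p_L-p_F)/t$. For MNO$_1$, add its slice $\phi_{0,1}-x_0$ from arc $\phi_{0,1}$ and $\phi_{1,2}/2$ from arc $\phi_{1,2}$, which simplifies to $(\phi_{0,1}+\phi_{1,2})/2 - (p_L-p_F)/(2t) = \pi(3I_L-I_F)/(4I_L) + (p_F-p_L)/(2t)$ after applying Lemma~\ref{lem: 3S-length of arcs}. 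The corner cases then follow by continuity/monotonicity: if $x_0\le 0$ the linear comparison shows the MVNO is dominated throughout $\phi_{0,1}\cup\phi_{0,2}$, so $n_{MVNO}=0$ and $n_{MNO_1}=\phi_{0,1}+\phi_{1,2}/2=\pi$; if $x_0\ge\phi_{0,1}$ the MVNO dominates on those two arcs, so $n_{MVNO}=\phi_{0,1}+\phi_{0,2}=\pi(I_L+I_F)/I_L$ and MNO$_1$ retains only its share of $\phi_{1,2}$, i.e.\ $\pi(I_L-I_F)/(2I_L)$. Both corner expressions match the interior formulas evaluated at the appropriate boundary, confirming consistency.

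The derivation is essentially a careful bookkeeping exercise and the main obstacle is keeping the case distinctions straight, in particular ensuring that the standing assumption $p_L-p_F<t\phi_{0,1}$ is used precisely when needed to rule out cross-arc deviations by EUs on $\phi_{1,2}$, and verifying that the symmetric midpoint split on $\phi_{1,2}$ is valid independent of which of the three regimes the $\phi_{0,1}$ comparison falls in. Once those points are in place, substituting the arc lengths from Lemma~\ref{lem: 3S-length of arcs} yields \eqref{equ: 3p-demand of MVNO} and \eqref{equ: 3p-demand of MNOs} directly.
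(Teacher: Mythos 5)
Your proposal is correct and follows essentially the same route as the paper's proof: solve for the indifference point $x_{0}$ on arc $\phi_{0,1}$, invoke symmetry to reuse it on $\phi_{0,2}$ and to split $\phi_{1,2}$ at its midpoint, sum the resulting segments in each of the three regimes for $x_{0}$, and substitute the arc lengths from Lemma~\ref{lem: 3S-length of arcs}. No substantive differences.
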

\begin{proof}
 Suppose $x_{0}$ is the indifferent location of joining MVNO and MNO1, then:
\begin{equation}\label{equ: 3p-x0}
\begin{aligned}
&v-tx_{0}-p_{F}=v-t(\phi_{0,1}-x_{0})-p_{L}\\
\Rightarrow&x_{0}=\frac{\phi_{0,1}}{2}+\frac{p_{L}-p_{F}}{2t}.
\end{aligned}
\end{equation}
Let $x_{MVNO,MNO2}, x_{MNO1, MNO2}$ be the indifferent locations between 1) MVNO and MNO$_2$, and 2) MNO$_1$  and MNO$_2$ respectively. Then,
$x_{MVNO,MNO2}=\frac{\phi_{0,2}}{2}+\frac{p_{L}-p_{F}}{2t}$, and $x_{MNO1, MNO2}=\frac{\phi_{1,2}}{2}$.
The number of EUs per unit length  to be normalized to one,  $n_{MVNO}$ equals $x_{0}+x_{MVNO,MNO_2}$ if $0< x_{0}<\phi_{0,1}$, $0$ if $x_{0}\leq0$, and $\phi_{0,1}+\phi_{0,2}$ if $x_{0}\geq\phi_{0,1}$. From the symmetry of the game, $x_{MVNO,MNO_2}=x_0$. Now,  (\ref{equ: 3p-demand of MVNO}) follows from Lemma~\ref{lem: 3S-length of arcs}.

 Next, $n_{MNO_1}$ and $n_{MNO_2}$ equal $(\phi_{0,1}-x_{0})+x_{MNO_1,MNO_2}$ if $0<x_{0}<\phi_{0,1}$, $\phi_{0,1}+x_{MNO_1,MNO_2}$ if $x_{0}\leq0$, and $x_{MNO_1,MNO_2}$ if $x_{0}\geq\phi_{0,1}$. Similarly,
  (\ref{equ: 3p-demand of MNOs}) follows.
\end{proof}

\noindent{{\bf Stage 3:}}
Now we characterize the SPNE  access fees.

\begin{theorem}\label{Thm: 3p-interior-prices}
The SPNE access fees of EUs of SPs, $(p_{F}^{*}, p_{L}^{*})$ by which $0<x_{0}<\phi_{0,1}$, is:
\begin{align}\label{equ: 3p-interior-prices}
p_{F}^{*}=\frac{t\pi}{3}\frac{I_{F}+5I_{L}}{2I_{L}}+c,\, p_{L}^{*}=\frac{t\pi}{3}\frac{7I_{L}-I_{F}}{2I_{L}}+c.
\end{align}
\end{theorem}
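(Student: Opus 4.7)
The plan is to follow the same backward-induction recipe used in the proof of Theorem~\ref{thm: BM-prices}, treating $I_L$ and $I_F$ as fixed parameters in Stage~3. By symmetry of the two MNOs, it suffices to write the first-order conditions for MNO$_1$ and the MVNO. Substituting the interior expressions from Theorem~\ref{thm: 3p-demand-division}, namely $n_F = \pi\frac{I_F+I_L}{2I_L}+\frac{p_L-p_F}{t}$ and $n_{MNO_1} = \pi\frac{3I_L-I_F}{4I_L}+\frac{p_F-p_L}{2t}$, into \eqref{equ: 3p-F-payoff} and \eqref{equ: 3p-L-payoff} makes $\pi_F$ quadratic in $p_F$ with leading coefficient $-1/t<0$, and $\pi_L$ quadratic in $p_L$ with leading coefficient $-1/(2t)<0$. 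Hence each payoff is strictly concave in its own price, so the unique best response is characterized by the first-order condition.

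The first-order conditions $\partial\pi_F/\partial p_F = 0$ and $\partial\pi_L/\partial p_L = 0$ yield the linear system
\begin{align*}
2p_F - p_L &= \tfrac{t\pi(I_F+I_L)}{2I_L} + c,\\
2p_L - p_F &= \tfrac{t\pi(3I_L-I_F)}{2I_L} + c.
\end{align*}
Solving (multiply the first by $2$ and add the second to eliminate $p_L$) gives $p_F^{*}=\frac{t\pi}{3}\frac{I_F+5I_L}{2I_L}+c$, and back-substitution gives $p_L^{*}=\frac{t\pi}{3}\frac{7I_L-I_F}{2I_L}+c$, which is exactly \eqref{equ: 3p-interior-prices}.

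It remains to certify that these candidates actually constitute a Nash equilibrium of the Stage-3 subgame, and that the interior regime assumption $0<x_0<\phi_{0,1}$ is consistent. First, computing $p_L^{*}-p_F^{*}=\frac{t\pi}{3}\frac{3I_L-I_F}{I_L} \cdot \tfrac{1}{2}\cdot 2$ (a direct subtraction) shows $p_L^{*}-p_F^{*}=\frac{t\pi(I_L-I_F)}{I_L}$, which is strictly less than $t\phi_{0,1}=t\pi\frac{I_L+I_F}{2I_L}$ whenever $I_L>0$, so the regime of Lemma~\ref{lem: 3p-utility of EU/MNO/MVNO} used in the theorem holds. Substituting into \eqref{equ: 3p-x0} one checks that $0<x_0<\phi_{0,1}$.

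Second, I would rule out unilateral deviations that flip the subscription into a boundary regime ($x_0\leq 0$ or $x_0\geq\phi_{0,1}$), mirroring Part~B of Theorem~\ref{thm: BM-prices}: the payoffs $\pi_F$ and $\pi_L$ are continuous at the regime boundaries (the three branches in \eqref{equ: 3p-demand of MVNO} and \eqref{equ: 3p-demand of MNOs} agree where the branches meet), so the supremum of either SP's payoff over a non-interior deviation is bounded by the corresponding boundary value of the interior payoff, which is in turn dominated by the interior maximum at $(p_F^{*},p_L^{*})$. I expect this boundary step to be the main technical obstacle, because unlike the two-player case one must separately check that neither MNO can profitably ``capture'' all EUs of the arc $\phi_{1,2}$ by dropping its price: this is handled by noting that such a drop would force $p_L$ below $p_L^{*}$ and reduce the per-EU margin more than it increases the EU share, via the strict concavity established above. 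Together these steps establish \eqref{equ: 3p-interior-prices} as the unique SPNE in the interior regime.
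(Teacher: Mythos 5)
Your proof follows essentially the same route as the paper's: substitute the interior demands from Theorem~\ref{thm: 3p-demand-division} into \eqref{equ: 3p-F-payoff} and \eqref{equ: 3p-L-payoff}, solve the first-order conditions (your linear system and its solution are correct), invoke strict concavity in own price, verify the interior-regime conditions, and rule out boundary deviations by continuity exactly as in Part~B of Theorem~\ref{thm: BM-prices}. One arithmetic slip worth fixing: the direct subtraction gives $p_L^{*}-p_F^{*}=\frac{t\pi}{3}\cdot\frac{I_L-I_F}{I_L}$, not $\frac{t\pi(I_L-I_F)}{I_L}$; with your stated value the required inequality $p_L^{*}-p_F^{*}<t\phi_{0,1}=\frac{t\pi(I_L+I_F)}{2I_L}$ would fail for small $I_F$ (e.g.\ $I_F=0$), whereas the correct value makes it hold for all $0\leq I_F\leq I_L$ since $2(I_L-I_F)<3(I_L+I_F)$.
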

\begin{proof}
Substituting (\ref{equ: 3p-demand of MVNO}) and (\ref{equ: 3p-demand of MNOs}) into (\ref{equ: 3p-F-payoff}) and (\ref{equ: 3p-L-payoff}),
\begin{equation}\label{equ: 3p-interior-F-payoff}
\pi_{F}=(\pi\frac{I_{F}+I_{L}}{2I_{L}}+\frac{p_{L}-p_{F}}{t})(p_{F}-c)-2sI_{F}^{2}
\end{equation}
\begin{equation}\label{equ: 3p-interior-L-payoff}
\pi_{L}=(\pi\frac{3I_{L}-I_{F}}{4I_{L}}+\frac{p_{F}-p_{L}}{2t})(p_{L}-c)+sI_{F}^{2}-\gamma I_{L}^{2}	
\end{equation}
 $p_{F}^{*}$ and $p_{L}^{*}$ should be determined to satisfy the first order condition, i.e., $\frac{\pi_{F}}{dp_{F}}|_{p_{F}^{*}}=0$ and $\frac{\pi_{L}}{dp_{L}}|_{p_{L}^{*}}=0$, thus
$p_{F}^{*}=\frac{t\pi}{3}\frac{I_{F}+5I_{L}}{2I_{L}}+c,\, p_{L}^{*}=\frac{t\pi}{3}\frac{7I_{L}-I_{F}}{2I_{L}}+c$.
Therefore, $p_{F}^{*}$ and $p_{L}^{*}$ are the unique interior SPNE strategies if 1) they yield $0<x_{0}<\phi_{0,1}$ and $p_{L}-p_{F}\leq t\phi_{0,1}$, and 2) no unilateral deviation is profitable for SPs.
We establish these in  Parts A and B respectively.

\noindent{\textbf{Part A}.} Substituting $p_{L}^{*}$ and $p_{F}^{*}$ into (\ref{equ: 3p-x0}),
$x_{0}=\frac{\phi_{0,1}}{2}+\frac{p_{L}-p_{F}}{2t}=\pi(\frac{5}{12}+\frac{I_{F}}{12I_{L}})\in(0, \phi_{0,1})$,
since $0\leq I_{F}\leq I_{L}$ $I_{L}>0$.
Also,
$p_{L}-p_{F}=\frac{t\pi}{3}\frac{I_{L}-I_{F}}{I_{L}}<\frac{t\pi}{2}\frac{I_{L}+I_{F}}{I_{L}}=t\phi_{0,1}$.

\noindent{\textbf{Part B}.} Since
$\frac{d^{2}\pi_{F}}{d(p_{F}^{*})^{2}}=-\frac{2}{t}<0,\, \frac{d^{2}\pi_{L}}{d(p_{L}^{*})^{2}}=-\frac{1}{t}<0$,
then $p_{L}^{*}$ and $p_{F}^{*}$ are the unique maximal solutions of $\pi_{L}$ and $\pi_{F}$, respectively for $0 < x_0 < \phi_{0,1}$. Similar to the proof of Theorem \ref{thm: BM-prices},  any deviation by SPs such that  $x_{0}\leq0$ or $x_{0}\geq\phi_{0,1}$ (which yields $n_L=1, n_F=0$ and $n_L=0, n_F=1$, respectively) is not profitable.
 \end{proof}

\noindent{\bf Stage 2:}
We characterize the spectrum SP$_{F}$ acquires from SP$_{L}$ in the SPNE.

\begin{theorem}\label{thm: 3p-interior-I_F}
 $I_{F}^{*}$ is given by:
\begin{equation}\label{equ: 3p-I_F_star}
\begin{aligned}
I_F^* = \left\{\begin{aligned}
&\frac{5t\pi^{2}I_{L}}{72I_{L}^{2}s-t\pi^{2}}\quad&\text{if}&\quad I_{L}\geq\frac{\pi}{2}\sqrt{\frac{t}{3s}}\\
&I_{L}\quad&\text{if}&\quad\delta\leq I_{L}<\frac{\pi}{2}\sqrt{\frac{t}{3s}}
\end{aligned}\right.
\end{aligned}	
\end{equation}
\end{theorem}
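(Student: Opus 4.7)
The plan is to mimic the approach used to prove Theorem~\ref{thm: BM-I_F}: substitute the Stage~3 SPNE access fees from Theorem~\ref{Thm: 3p-interior-prices} into $\pi_F$ to reduce SP$_F$'s Stage~2 decision to a one-variable quadratic optimization on the interval $[0,I_L]$, then apply Lemma~\ref{lem: quadratic function}.

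First I would compute, from Theorem~\ref{Thm: 3p-interior-prices}, that $p_L^*-p_F^* = \frac{t\pi(I_L-I_F)}{3I_L}$, so by \eqref{equ: 3p-demand of MVNO},
\[ n_{MVNO}= \frac{\pi(I_F+I_L)}{2I_L}+\frac{p_L^*-p_F^*}{t}= \frac{\pi(I_F+5I_L)}{6I_L},\]
and $p_F^*-c = \frac{t\pi(I_F+5I_L)}{6I_L}$. Plugging into \eqref{equ: 3p-F-payoff} yields
\[ \pi_F(I_F)= \frac{t\pi^2(I_F+5I_L)^2}{36 I_L^2}-2sI_F^2 = \frac{t\pi^2-72sI_L^2}{36I_L^2}\,I_F^2 + \frac{5t\pi^2}{18I_L}\,I_F + \frac{25t\pi^2}{36}.\]
Before proceeding to the case analysis I would  verify that $(p_F^*,p_L^*)$ remains SP$_F$'s best reply when $I_F$ is varied in $[0,I_L]$ with $I_L$ fixed (the conditions $0<x_0<\phi_{0,1}$ and $p_L-p_F<t\phi_{0,1}$ from Part~A of the proof of Theorem~\ref{Thm: 3p-interior-prices} hold for every $I_F\in[0,I_L]$), so the above $\pi_F(I_F)$ is indeed the relevant objective.

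Next I would analyze the quadratic by a straightforward case split driven by the sign of the leading coefficient $a=(t\pi^2-72sI_L^2)/(36I_L^2)$ and by comparison of the vertex $-b/(2a)=\frac{5t\pi^2 I_L}{72sI_L^2-t\pi^2}$ with the endpoints. If $72sI_L^2<t\pi^2$ then $\pi_F$ is convex with a negative vertex, so on $[0,I_L]$ it is increasing and the maximum sits at $I_F=I_L$. If $72sI_L^2\ge t\pi^2$ then $\pi_F$ is concave and its unconstrained maximizer is at $\frac{5t\pi^2 I_L}{72sI_L^2-t\pi^2}>0$; a direct algebraic check shows this vertex lies in $[0,I_L]$ iff $I_L\ge \frac{\pi}{2}\sqrt{t/(3s)}$ (the inequality $\frac{5t\pi^2 I_L}{72sI_L^2-t\pi^2}\le I_L$ reduces to $6t\pi^2\le 72sI_L^2$). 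When this fails, the concave $\pi_F$ is increasing on $[0,I_L]$ and the maximum is again at $I_F=I_L$; when it holds, the interior vertex is the maximum. Combining the two subcases exactly reproduces \eqref{equ: 3p-I_F_star}.

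I do not anticipate a serious obstacle beyond bookkeeping: the main delicacy is tracking which inequality on $I_L$ governs (a) concavity of $\pi_F$ and (b) whether the vertex lies below $I_L$, and recognizing that these are two different thresholds with (b), namely $I_L=\frac{\pi}{2}\sqrt{t/(3s)}$, being the binding one since the convex regime lies strictly below it. It is also worth noting, as in the proof of Theorem~\ref{thm: stage 1}, that both expressions in \eqref{equ: 3p-I_F_star} agree at the boundary $I_L=\frac{\pi}{2}\sqrt{t/(3s)}$, confirming continuity of $I_F^*$ in $I_L$; this continuity is used in Stage~1 when maximizing $\pi_L(I_L)$ after substituting $I_F^*$.
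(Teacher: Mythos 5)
Your proposal is correct and follows essentially the same route as the paper's proof: substitute the Stage~3 equilibrium prices into $\pi_F$ to obtain the quadratic $(\tfrac{t\pi^{2}}{36I_{L}^{2}}-2s)I_F^2+\tfrac{5t\pi^{2}}{18I_{L}}I_F+\tfrac{25t\pi^{2}}{36}$ on $[0,I_L]$, then split on the sign of the leading coefficient and compare the vertex $\tfrac{5t\pi^{2}I_{L}}{72I_{L}^{2}s-t\pi^{2}}$ with the endpoints via Lemma~\ref{lem: quadratic function}, landing on the threshold $I_L=\tfrac{\pi}{2}\sqrt{t/(3s)}$. The only cosmetic difference is that the paper isolates the degenerate linear case $72sI_L^2=t\pi^2$ (i.e., $I_L=\tfrac{\pi}{6}\sqrt{t/(2s)}$) as a separate case, whereas you fold it into the ``$\geq$'' branch where the vertex formula is undefined; since the objective is then linear and increasing, the conclusion $I_F^*=I_L$ is unaffected.
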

\begin{proof}
 $I_{F}^{*}$ is obtained as the optimum solution of
\begin{equation}\label{equ: 3p-interior-pi_F}
\begin{aligned}
\max_{I_{F}}\,\pi_{F}
=&(\frac{t\pi^{2}}{36I_{L}^{2}}-2s)I_{F}^{2}+\frac{5t\pi^{2}}{18I_{L}}I_{F}+\frac{25t\pi^{2}}{36}\\
s.t\quad&0\leq I_{F}\leq I_{L}
\end{aligned}
\end{equation}

The objective function follows from substituting (\ref{equ: 3p-interior-prices}) into (\ref{equ: 3p-interior-F-payoff}). The constraints come from the model assumptions directly.

\noindent{\textbf{(A)}.} Let $I_{L}=\frac{\pi}{6}\sqrt{\frac{t}{2s}}$. Then $\pi_{F}$ is increasing in $I_{F}$, as
$\pi_{F}=\frac{5t\pi^{2}}{18I_{L}}I_{F}+\frac{25t\pi^{2}}{36}.$ 
Thus $I_{F}^{*}=I_{L}$.

\noindent{\textbf{(B)}.} Let $I_{L} \neq  \frac{\pi}{6}\sqrt{\frac{t}{2s}}$.Referring to the terminology of Lemma \ref{lem: quadratic function},   $(-b/2a) = -\frac{\frac{5t\pi^{2}}{18I_{L}}}{2(\frac{t\pi^{2}}{36I_{L}^{2}}-2s)}=\frac{5t\pi^{2}I_{L}}{72I_{L}^{2}s-t\pi^{2}}$.
We denote this quantity as $F_1$.

\noindent{\textbf{(B-1)}.} Let $I_{L}<\frac{\pi}{6}\sqrt{\frac{t}{2s}}$. Then $\pi_{F}$ is convex.   $I_F \in [0, I_{L}]$. 
Since $\frac{t\pi^{2}}{36I_{L}^{2}}-2s>0$, then $72sI_{L}^{2}-t\pi^{2}<0$, thus $F_{1}<0<\frac{I_{L}}{2}$. From Lemma \ref{lem: quadratic function},
$I_{F}^{*}=I_{L}$.

\noindent{\textbf{(B-2)}.} Let $I_{L}>\frac{\pi}{6}\sqrt{\frac{t}{2s}}$, i.e., $\frac{t\pi^{2}}{36I_{L}^{2}}-2s<0$, then $\pi_{F}$ is concave, and   $F_{1}=\frac{5t\pi^{2}I_{L}}{72I_{L}^{2}s-t\pi^{2}}>0$. From Lemma \ref{lem: quadratic function},
\begin{equation*}
I_F^* = \begin{aligned}
\left\{\begin{aligned}
&\frac{5t\pi^{2}I_{L}}{72I_{L}^{2}s-t\pi^{2}}&\quad\text{if}&\quad I_{L}\geq\frac{\pi}{2}\sqrt{\frac{t}{3s}}\\
&I_{L}&\quad\text{if}&\quad \frac{\pi}{6}\sqrt{\frac{t}{2s}}<I_{L}<\frac{\pi}{2}\sqrt{\frac{t}{3s}}
\end{aligned}\right.
\end{aligned}
\end{equation*}
The desired results come from \textbf{(A)}, \textbf{(B)} and \textbf{(C)}.
\end{proof}

\noindent{\bf{Stage 1:}} We characterize the spectrum SP$_{L}$ acquires from the regulator in the SPNE.

\begin{theorem}\label{thm: 3p-payoff-L-sectionA}
Any solution to the following maximization problem constitutes $I_L^*$,
\begin{equation}\label{equ: 3p-L-star}
\footnotesize
\begin{aligned}
\max_{I_{L}}\quad&\pi_{L}=\frac{t\pi^{2}}{18}(\frac{7I_{L}-\frac{5t\pi^{2}I_{L}}{72I_{L}^{2}s-t\pi^{2}}}{2I_{L}})^{2}+s(\frac{5t\pi^{2}I_{L}}{72I_{L}^{2}s-t\pi^{2}})^{2}-\gamma I_{L}^{2}\\
s.t\quad&\frac{\pi}{2}\sqrt{\frac{t}{3s}}\leq I_L.
\end{aligned}
\end{equation}
\end{theorem}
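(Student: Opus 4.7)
The plan is to follow the backward-induction template used for Theorem~\ref{thm: stage 1} in the base case, adapted to the three-player geometry. I would substitute the equilibrium access fees from Theorem~\ref{Thm: 3p-interior-prices} together with the interior branch of $n_L$ from~(\ref{equ: 3p-demand of MNOs}) into $\pi_L = n_L(p_L-c) + sI_F^2 - \gamma I_L^2$, and then substitute the two-branch expression for $I_F^*$ from Theorem~\ref{thm: 3p-interior-I_F}. This expresses $\pi_L$ as a function of $I_L$ alone, which I would maximize separately on the two sub-ranges $[\delta, \frac{\pi}{2}\sqrt{t/(3s)})$ and $[\frac{\pi}{2}\sqrt{t/(3s)}, \infty)$ and then stitch the pieces via continuity.

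A clean preliminary simplification I would carry out is to observe that $p_L^* - p_F^* = \frac{t\pi}{3}\cdot\frac{I_L - I_F}{I_L}$, which when plugged into~(\ref{equ: 3p-demand of MNOs}) collapses $n_L$ to the compact form $\frac{\pi(7I_L - I_F)}{12 I_L}$. Combined with $p_L^* - c = \frac{t\pi(7I_L-I_F)}{6I_L} = 2t\, n_L$, the subscription revenue becomes $n_L(p_L^*-c) = \frac{t\pi^2}{18}\bigl(\frac{7I_L - I_F}{2I_L}\bigr)^2$, exactly the squared form appearing in~(\ref{equ: 3p-L-star}). On the lower sub-range, where $I_F^* = I_L$, the fees collapse to $p_L^*=p_F^*=t\pi+c$, $n_L=\pi/2$, and hence $\pi_L(I_L) = \frac{t\pi^2}{2} + (s-\gamma)I_L^2$, which is strictly increasing since $s>\gamma$; therefore the supremum on this sub-range is approached at the right endpoint $I_L = \frac{\pi}{2}\sqrt{t/(3s)}$. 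On the upper sub-range I would substitute $I_F^* = \frac{5t\pi^2 I_L}{72I_L^2 s - t\pi^2}$ into both the squared form and the term $sI_F^{*2}$, recovering~(\ref{equ: 3p-L-star}).

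A short continuity check at the threshold $I_L = \frac{\pi}{2}\sqrt{t/(3s)}$ shows that $72 I_L^2 s - t\pi^2 = 5t\pi^2$, so that $I_F^* = I_L$ on both branches and $\pi_L$ is continuous at the boundary. Consequently the limiting value from the lower sub-range is already attained inside the closed upper sub-range, and any maximizer of the displayed one-variable program yields a valid $I_L^*$. The main obstacle is purely algebraic bookkeeping in collapsing $n_L(p_L^*-c)$ into a single squared factor and in verifying continuity at the boundary; I would mitigate this by keeping the ratio $\frac{7I_L-I_F}{2I_L}$ unexpanded throughout and by exploiting the identity $p_L^* - c = 2t\, n_L$ that is implicit in the Stage~3 first-order conditions.
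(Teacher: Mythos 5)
Your proposal is correct and follows essentially the same route as the paper: substitute the Stage-3 equilibrium fees into $\pi_L$, plug in the two-branch $I_F^*$ from Stage 2, note that on $[\delta,\tfrac{\pi}{2}\sqrt{t/(3s)}]$ the payoff reduces to $\tfrac{t\pi^2}{2}+(s-\gamma)I_L^2$ and is increasing since $s>\gamma$, and use continuity of $I_F^*$ at the threshold (where $72I_L^2s-t\pi^2=5t\pi^2$) to fold that boundary value into the upper-range program. Your explicit identity $p_L^*-c=2t\,n_L$ is a tidy way to obtain the squared factor that the paper asserts by direct substitution, but it does not change the argument.
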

\begin{proof}
 Each MNO chooses its $I_L$ as the solution of the following maximization: \begin{equation}\label{equ: 3p-pi_L-proof1}
\begin{aligned}
\max_{I_{L}}\quad&\pi_{L}(I_{L})=\frac{t\pi^{2}}{18}(\frac{7I_{L}-I_{F}^{*}}{2I_{L}})^{2}+sI_{F}^{*2}-\gamma I_{L}^{2}\\
s.t\quad&\delta\leq I_{L}.
\end{aligned}
\end{equation}
The objective function follows by substituting (\ref{equ: 3p-interior-prices}) into (\ref{equ: 3p-interior-L-payoff}). The constraint follows from the modeling assumption. 

We consider two cases separately: A) $\delta\leq I_{L}\leq\frac{\pi}{2}\sqrt{\frac{t}{3s}}$ and B)
$I_{L}>\frac{\pi}{2}\sqrt{\frac{t}{3s}}$.

{\bf (A).} From (\ref{equ: 3p-I_F_star}), if $\delta\leq I_{L}\leq\frac{\pi}{2}\sqrt{\frac{t}{3s}}$, then $I_{F}^{*}=I_{L}$, thus the objective function of  (\ref{equ: 3p-pi_L-proof1}) is $\frac{t\pi^{2}}{2}+(s-\gamma)I_{L}^{2}.$
This is an increasing function of $I_{L}$ since $s>\gamma.$ Thus the optimum solution in this range is $\frac{\pi}{2}\sqrt{\frac{t}{3s}}$.

{\bf (B).} Next, if $I_{L}>\frac{\pi}{2}\sqrt{\frac{t}{3s}}$, then $I_{F}^{*}=\frac{5t\pi^{2}I_{L}}{72I_{L}^{2}s-t\pi^{2}}$, thus $\pi_{L}(I_{L}, I_{F}^{*})=\pi_{L}(I_{L}, \frac{5t\pi^{2}I_{L}}{72I_{L}^{2}s-t\pi^{2}})$.
Note that $I_{L}=\frac{5t\pi^{2}I_{L}}{72I_{L}^{2}s-t\pi^{2}}$ when $I_{L}=\frac{\pi}{2}\sqrt{\frac{t}{3s}}$, then $I_{F}^{*}$ is continuous at $I_{L}=\frac{\pi}{2}\sqrt{\frac{t}{3s}}$. So $\pi_{L}(I_{L}; I_{F}^{*})\rightarrow\pi_{L}|_{I_{F}^{*}=\frac{\pi}{2}\sqrt{\frac{t}{3s}}}$ as $I_{L}\rightarrow\frac{\pi}{2}\sqrt{\frac{t}{3s}}$. Therefore, this case also includes the optimum solution of previous case. Substituting $I_{F}^{*}=\frac{5t\pi^{2}I_{L}}{72I_{L}^{2}s-t\pi^{2}}$ into  \eqref{equ: 3p-pi_L-proof1}, we get (\ref{equ: 3p-L-star}).
\end{proof}

\begin{theorem}\label{thm: 3p-payoff-L-sectionA2}
$I_{L}^{*}=I_{F}^{*}=\frac{\pi}{2}\sqrt{\frac{t}{3s}}$.
\end{theorem}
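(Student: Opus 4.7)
The plan is to solve the single-variable maximization of Theorem~\ref{thm: 3p-payoff-L-sectionA} over $I_L \in [\tfrac{\pi}{2}\sqrt{t/(3s)}, \infty)$ and show the maximum is attained at the left endpoint. The key tactical move is a change of variables: setting $r := I_F^*/I_L = \frac{5 t\pi^2}{72 s I_L^2 - t\pi^2}$ gives a strictly decreasing bijection from the feasible interval onto $(0, 1]$, sending the left endpoint $I_L = \tfrac{\pi}{2}\sqrt{t/(3s)}$ to $r=1$. Inverting yields the clean identity $I_L^2 = \frac{t\pi^2(5+r)}{72 s r}$, which I would substitute throughout.

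After substitution and using the algebraic identity $(7-r)^2 + r(r+5) = 2r^2 - 9r + 49$, the original maximization collapses to
\begin{equation*}
\max_{r \in (0, 1]} \frac{t\pi^2}{72}\left[2r^2 - 9r + 49 - \frac{\gamma(r+5)}{sr}\right].
\end{equation*}
Evaluating at $r=1$ gives the value $\frac{t\pi^2(7 - \gamma/s)}{12}$, which matches the direct substitution $I_L = I_F = \tfrac{\pi}{2}\sqrt{t/(3s)}$ into (\ref{equ: 3p-L-payoff}). The core task is then to establish $\pi_L(r) \leq \pi_L(1)$ for all $r \in (0, 1)$. After clearing denominators by $sr > 0$ and using the factorization $2r^2 - 9r + 7 = (2r-7)(r-1)$, this reduces to a one-variable inequality involving the ratio $\gamma/s$, which I would analyze by monotonicity on $(0, 1]$.

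I expect this last inequality to be the main obstacle, as its sign depends delicately on the relative magnitudes of $\gamma$ and $s$. Should direct comparison prove obstinate, my backup plan is to differentiate the reparametrized objective, yielding $h'(r) = 4r - 9 + \frac{5\gamma}{sr^2}$, and observe that the objective tends to $-\infty$ as $r \downarrow 0$, so the maximizer of $h$ is located either at $r=1$ or at an interior critical point given by a root of the cubic $4r^3 - 9r^2 + 5\gamma/s = 0$ on $(0, 1]$; pinning down that root (and verifying it coincides with $r=1$) then concludes the optimization. Once $I_L^* = \tfrac{\pi}{2}\sqrt{t/(3s)}$ is established, the companion equality $I_F^* = I_L^*$ is immediate, since the closed-form formula $\frac{5 t\pi^2 I_L}{72 s I_L^2 - t\pi^2}$ from the boundary branch of Theorem~\ref{thm: 3p-interior-I_F} evaluates to exactly $I_L$ at this value of $I_L$.
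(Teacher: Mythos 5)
Your reparametrization is algebraically correct and is a genuinely different (and cleaner) route than the paper's, which differentiates the stage-1 objective directly in $I_L$. But the inequality you defer to the end --- $h(r)\leq h(1)$ for all $r\in(0,1)$, where $h(r)=2r^{2}-9r+49-\gamma(r+5)/(sr)$ --- is exactly where the argument breaks, and it does not break in a repairable way: it is false under the model's standing assumption $s>\gamma$. Factoring,
\[
h(r)-h(1)=(r-1)\Bigl(2r-7+\frac{5\gamma}{sr}\Bigr),
\]
and for $r$ slightly below $1$ the second factor is close to $-5+5\gamma/s=5(\gamma/s-1)<0$, so the product is \emph{positive}: $h(r)>h(1)$ just to the left of $r=1$. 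Equivalently $h'(1)=5(\gamma/s-1)<0$, and since $r$ is strictly decreasing in $I_L$, the objective is strictly \emph{increasing} in $I_L$ at the left endpoint $I_L=\tfrac{\pi}{2}\sqrt{t/(3s)}$; a direct computation there gives $d\pi_L/dI_L=2(s-\gamma)I_L>0$. Your backup plan hits the same wall: $r^{2}h'(r)=4r^{3}-9r^{2}+5\gamma/s$ is strictly decreasing on $(0,1]$, positive as $r\downarrow 0$ and negative at $r=1$, so it has a unique root $r^{*}$ strictly inside $(0,1)$, and $h$ attains its maximum there with $h(r^{*})>h(1)$. The critical point coincides with $r=1$ only when $\gamma=s$, which the model excludes.

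So the gap is not a missing lemma but the fact that the endpoint is not the maximizer when $s>\gamma$. Your reduction is, in fact, at odds with the paper's own proof, which asserts $d\pi_L/dI_L<0$ on the whole feasible set via the formula $\frac{10t^{2}\pi^{4}sI_{L}^{2}}{(72I_{L}^{2}s-t\pi^{2})^{3}}\cdot 19\cdot(t\pi^{2}-144I_{L}^{2}s)-2\gamma I_{L}$; evaluating the true derivative at the left endpoint, or checking numerically (e.g.\ $t=s=1$, $\gamma=1/2$ gives $\pi_L(0.95)\approx 5.381>5.346\approx\pi_L(\pi/(2\sqrt{3}))$), shows that formula has the wrong sign, so the discrepancy is not an error in your algebra. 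What your method does prove is that $I_L^{*}$ corresponds to the unique root $r^{*}\in(0,1)$ of $4r^{3}-9r^{2}+5\gamma/s=0$ via $I_L^{2}=t\pi^{2}(5+r^{*})/(72sr^{*})$ (with $I_F^{*}=r^{*}I_L^{*}<I_L^{*}$), which collapses to the claimed $I_L^{*}=I_F^{*}=\tfrac{\pi}{2}\sqrt{t/(3s)}$ only in the limit $\gamma\uparrow s$.
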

\begin{proof}
From (\ref{equ: 3p-L-star}), we have $\pi_{L}(I_{L})=\frac{t\pi^{2}}{18}(\frac{7I_{L}-\frac{5t\pi^{2}I_{L}}{72I_{L}^{2}s-t\pi^{2}}}{2I_{L}})^{2}+s(\frac{5t\pi^{2}I_{L}}{72I_{L}^{2}s-t\pi^{2}})^{2}-\gamma I_{L}^{2}\triangleq f_{1}(I_{L})+f_{2}(I_{L})+f_{3}(I_{L})$, where
$f_{1}(I_{L})=\frac{t\pi^{2}}{18}(\frac{7}{2}-\frac{5t\pi^{2}}{144I_{L}^{2}s-2t\pi^{2}})^{2}$,
$f_{2}(I_{L})=s(\frac{5t\pi^{2}I_{L}}{72I_{L}^{2}s-t\pi^{2}})^{2}$, and
$f_{3}(I_{L})=-\gamma I_{L}^{2}$. Now we take the derivatives of $f_{1}$, $f_{2}$, and $f_{3}$ with respect to $I_{L}$,
$\frac{d\pi_{L}}{d I_{L}}=f_{1}'(I_{L})+f_{2}'(I_{L})+f_{3}'(I_{L})=\frac{10t^{2}\pi^{4}sI_{L}^{2}}{(72I_{L}^{2}s-t\pi^{2})^{3}}\times19\cdot(t\pi^{2}-144I_{L}^{2}s)-2\gamma I_{L}$.
Since $I_{L}\geq \frac{\pi}{2}\sqrt{\frac{t}{3s}}$, then $t\pi^{2}\leq 12I_{L}^{2}s$, thus $72I_{L}^{2}s-t\pi^{2}\geq0$ and $t\pi^{2}-144I_{L}^{2}s\leq0$, which implies $\frac{d f_{1}}{d I_{L}}+\frac{d f_{2}}{d I_{L}}\leq0$. $\frac{d f_{3}}{d I_{L}}=-2\gamma I_{L}<0$, therefore $\frac{d\pi_{L}}{d I_{L}}<0$
so $\pi_{L}$ is a decreasing functions of $I_{L}$, so $I_{L}^{*}=\frac{\pi}{2}\sqrt{\frac{t}{3s}}$. In addition, $\pi_{L}^{*}=\frac{t\pi^{2}}{2}+(s-\gamma)I_{L}^{*}>0$, and
$I_{F}^{*}=\frac{5t\pi^{2}I_{L}^{*}}{72I_{L}^{*2}s-t\pi^{2}}=\frac{\pi}{2}\sqrt{\frac{t}{3s}}=I_{L}^{*}$.
\end{proof}

Theorem~\ref{cor: 3-p model} follows from Theorems~\ref{thm: 3p-demand-division}, \ref{Thm: 3p-interior-prices}, \ref{thm: 3p-payoff-L-sectionA2}.

\end{proof}

\section*{
\bf Supplementary Proofs}

\section{SPNE Analysis of Basic Case}\label{Appendix: Unequal}

If SP$_{L}$ invests in the minimum new spectrum, i.e., $I_{L}=\delta$, and set $p_{L}=c$, then 
$$\pi_{L}=s I_{F}^{2}-\gamma\delta^{2}.$$ Thus for any Nash equilibrium (NE) strategy $(I_{L}^{*}, p_{L}^{*})$, we have
\begin{align*}
\pi_{L}^*|_{p_{L}^{*},I_{L}^{*}}\geq-\gamma\delta^{2}.
\end{align*}

 If SP$_{F}$ leases no new spectrum from SP$_{L}$, then $\pi_{F}=0$. So for any NE strategy $(I_{F}^*,p_{F}^*)$, we have 
\begin{align*}
\pi_{F}^*|_{p_{F}^{*},I_{F}^{*}}\geq0.
\end{align*}

\noindent{\bf Stage 4:}
We first characterize the equilibrium division of EUs between SPs, i.e., $n_L^*$ and $n_F^*$, using the knowledge of the strategies chosen by the  SPs in Stages 1$\sim$3.

\begin{theorem}\label{thm: Un-A-indiffenrent location}
The indifferent location between the two service providers is
\begin{align}\label{equ: Un-A-indifferent location}
x_{0}=\Delta+t_{F}+p_{F}-p_{L}.
\end{align}
\end{theorem}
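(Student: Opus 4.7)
The plan is to obtain this formula by a direct calculation from the indifference condition $u_L(x_0) = u_F(x_0)$, paralleling the derivation given for the special case $v^L = v^F$ in equation \eqref{equ: BM-indifferent location} but retaining the asymmetry in the static factors.

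First I would write out both utilities at $x_0$ using \eqref{equ: BM-utility EUs}, obtaining
\[
v^L - p_L - t_L x_0 \;=\; v^F - p_F - t_F(1 - x_0).
\]
Moving the $v$-terms to one side and the remaining terms to the other gives
\[
(v^L - v^F) + (p_F - p_L) + t_F \;=\; (t_L + t_F)\,x_0.
\]
Next I would substitute $\Delta := v^L - v^F$ and use the identity $t_L + t_F = 1$, which follows from the definitions $t_L = I_F/I_L$ and $t_F = (I_L - I_F)/I_L$ introduced in Section~\ref{sec: BM-independent framework}. Solving for $x_0$ immediately yields the desired expression \eqref{equ: Un-A-indifferent location}.

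There is no genuine obstacle here: the derivation is a one-line algebraic manipulation, and the only nontrivial inputs are the definition of the indifferent location (Definition~\ref{def: BM-indifferent location}), the utility formulas \eqref{equ: BM-utility EUs}, and the normalization $t_L + t_F = 1$. The only point worth flagging is that this is a purely formal identity: $x_0$ as defined by it need not lie in $[0,1]$, so the subsequent analysis (e.g.\ the analogue of Theorem~\ref{thm: Un-conclusion-3sp} for general $\Delta$) will still have to handle the corner cases $x_0 \le 0$ and $x_0 \ge 1$ separately when translating $x_0$ into the subscription fractions $n_L, n_F$.
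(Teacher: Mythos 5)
Your proposal is correct and follows exactly the same route as the paper's own proof: equate $u_L(x_0)$ and $u_F(x_0)$ from \eqref{equ: BM-utility EUs}, rearrange, and divide by $t_L+t_F=1$. The caveat you flag about $x_0$ possibly lying outside $[0,1]$ is also how the paper proceeds, handling the corner cases via \eqref{equ: Un-A-demand} and the subsequent corner-SPNE analysis.
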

\begin{proof}
From Definition \ref{def: BM-indifferent location},
\begin{align*}
&u_{F}(x_{0})=v^{F}-t_{F}(1-x_{0})-p_{F}\\
&=v^{L}-t_{L}x_{0}-p_{L}=u_{L}(x_{0}).
\end{align*}

Note $t_{L}+t_{F}=1$, then
\begin{align*}
x_{0}=&\frac{\Delta+t_{F}+p_{F}-p_{L}}{t_{L}+t_{F}}\\
=&\Delta+t_{F}+p_{F}-p_{L}.
\end{align*}
\end{proof}
The fraction of EUs with each SP ($n_{L}$ and $n_{F}$) is:
\begin{equation}\label{equ: Un-A-demand}
\begin{aligned}
&n_{L}=\left\{\begin{aligned}
&0,&\,\text{if}\quad&x_{0}\leq0\\
&x_{0},&\,\text{if}\quad&0<x_{0}<1\\
&1,&\,\text{if}\quad&x_{0}\geq1\\
\end{aligned}\right.\\
&n_{F}=1-n_{L},
\end{aligned}
\end{equation}
where $x_{0}$ is defined in (\ref{equ: Un-A-indifferent location}).

\subsection{The interior SPNE}\label{Appendix: the interior SPNE}
In this section, we consider the interior SPNE ($0<n_F, n_L<1$), and the corner SPNE \big($(n_L, n_F)=(1,0)$ or $(0,1)$\big) are considered in Appendix~\ref{sec: corner SPNE}.

\noindent{\bf Stage 3:}
SP$_L$ and SP$_F$ determine their prices for EUs, $p_L$ and $p_F$, respectively, to maximize their payoffs.

\begin{lemma}\label{lem: Un-A-stage-3-payoffs}
The utility functions of SPs are
\begin{equation}\label{equ: Un-A-interior-payoffs}
\begin{aligned}
\pi_{L}=&(\Delta+t_{F}+p_{F}-p_{L})(p_{L}-c)+sI_{F}^{2}-\gamma I_{L}^{2}\\
\pi_{F}=&(-\Delta+t_{L}+p_{L}-p_{F})(p_{F}-c)-sI_{F}^{2}.
\end{aligned}
\end{equation}

\end{lemma}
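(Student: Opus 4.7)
The proof is a direct substitution exercise, and the plan is to use the Stage~4 characterization of the demand in (\ref{equ: Un-A-demand}) under the interior regime, together with the general payoff formulas (\ref{equ: BM-L-payoff}) and (\ref{equ: BM-F-payoff}).

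First I would invoke the assumption that we are in the interior case, i.e., $0<x_0<1$, so that by (\ref{equ: Un-A-demand}) the subscriptions satisfy $n_L=x_0$ and $n_F=1-x_0$. Then, by Theorem~\ref{thm: Un-A-indiffenrent location}, $x_0=\Delta+t_F+p_F-p_L$, giving $n_L=\Delta+t_F+p_F-p_L$. For $n_F$, I would use $n_F=1-n_L$ together with the identity $t_L+t_F=1$ to rewrite
\[
n_F \;=\; 1 - (\Delta+t_F+p_F-p_L) \;=\; -\Delta + (1-t_F) + p_L - p_F \;=\; -\Delta + t_L + p_L - p_F.
\]

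Next, I would substitute the expression for $n_L$ into (\ref{equ: BM-L-payoff}) to obtain
\[
\pi_L \;=\; (\Delta+t_F+p_F-p_L)(p_L-c) + sI_F^2 - \gamma I_L^2,
\]
and substitute the expression for $n_F$ into (\ref{equ: BM-F-payoff}) to obtain
\[
\pi_F \;=\; (-\Delta+t_L+p_L-p_F)(p_F-c) - sI_F^2,
\]
which is exactly (\ref{equ: Un-A-interior-payoffs}). No calculation is genuinely hard here; the only thing to keep track of is the algebraic manipulation $n_F = -\Delta + t_L + p_L - p_F$ via $t_L+t_F=1$, and ensuring that the interior regime $0<x_0<1$ is what justifies setting $n_L=x_0$ in the first place. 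This lemma therefore serves purely as a bookkeeping step that packages the Stage~4 demand expressions into the Stage~3 objective functions used for computing the SPNE access fees.
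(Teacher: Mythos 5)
Your proposal is correct and follows essentially the same route as the paper's own proof: substituting $(n_L,n_F)=(\Delta+t_F+p_F-p_L,\,1-n_L)$ from the Stage~4 demand into \eqref{equ: BM-L-payoff} and \eqref{equ: BM-F-payoff}, with the identity $t_L+t_F=1$ giving the stated form of $n_F$. The extra detail you spell out (the interior regime justifying $n_L=x_0$, and the rewriting $1-t_F=t_L$) is exactly the bookkeeping the paper leaves implicit.
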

\proof
From (\ref{equ: Un-A-demand}), substituting $(n_{L}, n_{F})=(\Delta+t_{F}+p_{F}-p_{L}, 1-n_{L})$ into (\ref{equ: BM-F-payoff}) and (\ref{equ: BM-L-payoff}), we get (\ref{equ: Un-A-interior-payoffs}).
\qed

In the following theorem, we characterize the SPNE access fees of SPs. 
\begin{theorem}\label{thm: Un-A-prices}
 The interior SPNE access fees $p_{L}^{*}$, $p_{F}^{*}$ are
\begin{equation}\label{equ: Un-A-interior-prices}
\begin{aligned}
p_{L}^{*}=&c+\frac{2}{3}-\frac{I_{F}}{3I_{L}}+\frac{\Delta}{3}\\
p_{F}^{*}=&c+\frac{1}{3}+\frac{I_{F}}{3I_{L}}-\frac{\Delta}{3},
\end{aligned}
\end{equation}
and $(p_{L}^{*}, p_{F}^{*})$ are unique if and only if
\begin{align}\label{equ: Un-A-interior-condition}
\Delta-1<\frac{I_{F}}{I_{L}}<\Delta+2.
\end{align}
\end{theorem}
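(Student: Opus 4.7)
The plan is to mirror the proof structure of Theorem~\ref{thm: BM-prices}, but with the static-factor difference $\Delta$ tracked throughout. From the interior payoff expressions in Lemma~\ref{lem: Un-A-stage-3-payoffs}, I would compute the first-order conditions $\partial\pi_L/\partial p_L=0$ and $\partial\pi_F/\partial p_F=0$, which give the linear system
\begin{equation*}
2p_L = c + \Delta + t_F + p_F, \qquad 2p_F = c - \Delta + t_L + p_L.
\end{equation*}
Solving this system and substituting $t_L = I_F/I_L$, $t_F = 1 - I_F/I_L$ yields the closed-form expressions in (\ref{equ: Un-A-interior-prices}). Since $\partial^{2}\pi_L/\partial p_L^{2} = \partial^{2}\pi_F/\partial p_F^{2} = -2 < 0$, each SP's payoff is strictly concave in its own price over the region where $0<x_0<1$, so the critical point is the unique best response of each player in that region.

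Next, I would verify that the candidate prices actually produce an interior outcome. Substituting (\ref{equ: Un-A-interior-prices}) into (\ref{equ: Un-A-indifferent location}) gives
\begin{equation*}
x_0 \;=\; \Delta + t_F + p_F^{*} - p_L^{*} \;=\; \frac{\Delta + 2 - I_F/I_L}{3}.
\end{equation*}
The condition $0 < x_0 < 1$ rearranges exactly to $\Delta - 1 < I_F/I_L < \Delta + 2$, which is (\ref{equ: Un-A-interior-condition}). This gives the forward direction: when (\ref{equ: Un-A-interior-condition}) holds, $(p_L^{*},p_F^{*})$ yields an interior division of EUs; combined with strict concavity of payoffs in own price, it is the unique interior SPNE candidate. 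For the converse, if (\ref{equ: Un-A-interior-condition}) fails, then $x_0 \leq 0$ or $x_0 \geq 1$ at the candidate prices, so the first-order solution is not an interior NE; since any interior NE must satisfy the FOC, no interior SPNE exists and hence $(p_L^{*}, p_F^{*})$ as given fails to be a unique interior SPNE.

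To finish, I still need to rule out profitable unilateral deviations into the corners $n_L=0$ and $n_L=1$. This follows by the same continuity argument used in Theorem~\ref{thm: BM-prices}: the payoffs $\pi_L,\pi_F$ in (\ref{equ: Un-A-interior-payoffs}) are continuous as $n_L \downarrow 0$ and $n_L \uparrow 1$, and over the open interior region the candidate prices globally maximize each own-payoff by strict concavity. Thus any deviation that crosses into a corner cannot strictly improve the deviator's payoff over what the interior solution provides, so neither SP has a profitable deviation.

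The main obstacle is bookkeeping the $\Delta$ terms carefully: the algebra of combining the two FOCs and eventually reducing $2t_F + t_L = 1 + t_F$, $2t_L + t_F = 1 + t_L$ with $t_L + t_F = 1$ must be done correctly so that $\Delta$ appears with the right signs in $p_L^{*}$ (positive) versus $p_F^{*}$ (negative), and so that the interior condition $0 < x_0 < 1$ collapses cleanly to (\ref{equ: Un-A-interior-condition}). The corner-deviation argument, while conceptually identical to the base case, must also be checked to work uniformly in $\Delta$ since for large $|\Delta|$ one SP has a structural static-factor advantage; but since the non-deviating SP's strategy and spectrum investments are held fixed during the deviation analysis, the continuity argument still applies without modification.
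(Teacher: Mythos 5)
Your proposal is correct and follows essentially the same route as the paper's proof: first-order conditions solved with $t_L+t_F=1$, strict concavity from the second derivatives $-2$, the substitution giving $x_0=\frac{\Delta+2-I_F/I_L}{3}$ so that $0<x_0<1$ is equivalent to the stated condition, and the continuity argument ruling out corner deviations. No gaps.
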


\begin{proof}
We complete the proof in two steps: we first obtain equilibrium access fees $(p_{L}^{*}, p_{F}^{*})$ (\textbf{Step~1}); then we get the condition (\ref{equ: Un-A-interior-condition}) and prove that $p_{L}^{*}$ and $p_{F}^{*})$ are the unique Nash equilibrium access fees of SP$_{L}$ and SP$_{F}$, respectively (\textbf{Step~2}).

\noindent{\textbf{Step 1}.} Consider a SPNE, every Nash equilibrium $(p_{L}^{*},p_{F}^{*})$ should satisfy the first order condition. Get $\pi_{F}$ and $\pi_{L}$ from (\ref{equ: Un-A-interior-payoffs}), then $p_{L}^{*}$ and $p_{F}^{*}$ should be solved by
$$\frac{d\pi_{L}}{dp_{L}}|_{p_{L}^{*}}=0,\, \frac{d\pi_{F}}{dp_{F}}|_{p_{F}^{*}}=0.$$ Note that $t_{L}+t_{F}=1$, then
\begin{align*}
p_{L}^{*}=&c+\frac{2}{3}-\frac{I_{F}}{3I_{L}}+\frac{\Delta}{3}\\
p_{F}^{*}=&c+\frac{1}{3}+\frac{I_{F}}{3I_{L}}-\frac{\Delta}{3}.
\end{align*}

\noindent{\textbf{Step 2}.} In this step, we prove that the $p_{F}^{*}$ and $p_{L}^{*}$ are the unique maximum solutions (in (\textbf{A})). Then, we prove that the condition  (\ref{equ: Un-A-interior-condition}) is sufficient and necessary (in (\textbf{B})). Finally, we show that $p_{F}^{*}$ and $p_{L}^{*}$ are Nash equilibrium by proving that no unilateral is profitable for SPs (in (\textbf{C})).

\noindent{\textbf{(A)}.} Taking the second derivative of $\pi_{L}$ $(\pi_{F})$ with respect to $p_{L}^{*}$ $(p_{F}^{*})$,
$$\frac{d^{2}\pi_{L}}{d(p_{L}^{*})^{2}}=\frac{d^{2}\pi_{F}}{d(p_{F}^{*})^{2}}=-2<0,$$
then $p_{L}^{*}$ and $p_{F}^{*}$ are the unique maximal solutions of $\pi_{L}$ and $\pi_{F}$, respectively.

\noindent{\textbf{(B)}.} Substituting (\ref{equ: Un-A-interior-prices}) into (\ref{equ: Un-A-demand}), we have 
\begin{align*}
x_{0}=&\frac{\Delta}{3}+\frac{2I_{L}-I_{F}}{3I_{L}}=\frac{\Delta}{3}+\frac{2}{3}-\frac{I_{F}}{3I_{L}},
\end{align*}
thus 
\begin{equation}\label{equ: Un-condition}
\begin{aligned}
&0<x_{0}=\frac{\Delta}{3}+\frac{2}{3}-\frac{I_{F}}{3I_{L}}<1\\
\Leftrightarrow&\Delta-1<\frac{I_{F}}{I_{L}}<\Delta+2.
\end{aligned}
\end{equation}
From (\ref{equ: Un-condition}), $0<x_{0}<1$ if and only if (\ref{equ: Un-A-interior-condition}) holds. Therefore if (\ref{equ: Un-A-interior-condition}) does not hold,  then $x_{0}\leq0$ or $x_{0}\geq1$, which implies $n_{L}=0, n_{F}=1$ or $n_{L}=1, n_{F}=0$.

\noindent{\textbf{(C)}.} Since $\frac{d^2 \pi_F}{dp^2_F}<0, \frac{d^2 \pi_L}{dp^2_L}<0$,  a local maxima is also a global maximum, and any solution to the first order conditions  maximize the payoffs   when $0<x_0<1$,  and no unilateral deviation by which $0<x_0<1$ would be profitable for the SPs.
Now, we show that unilateral deviations of the SPs leading to  $n_L=0, n_F=1$ and $n_L=1, n_F=0$  is not profitable. Note that the payoffs of the SPs, \eqref{equ: BM-L-payoff} and \eqref{equ: BM-F-payoff}, are continuous  as $n_L\downarrow 0$, and $n_L\uparrow 1$ (which subsequently yields  $n_F\uparrow 1$ and $n_F\downarrow 0$, respectively). Thus, the payoffs of both SPs when selecting $p_L$ and $p_F$ as the  solutions of the first order conditions are greater than or equal to the payoffs when $n_L=0$ and $n_L=1$. Thus, the unilateral deviations under consideration are not profitable for the SPs.
\end{proof}

\begin{corollary}\label{remark: Un-region}
No corner SPNE access fees exist if $(I_{F}, I_{L})\in R$, where
\begin{equation}\label{equ: R1}
\begin{aligned}
R=&\{\delta\leq I_{L}, 0\leq I_{F}\leq I_{L}\}\\
\cap&\{\Delta-1<I_{F}/I_{L}<\Delta+2\}.
\end{aligned}
\end{equation}
\end{corollary}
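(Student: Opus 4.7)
The plan is to prove the corollary by contradiction: assume $(I_F, I_L)\in R$ and that there exist corner SPNE access fees $(\hat p_L,\hat p_F)$ in Stage~3, so either $n_L=1,n_F=0$ (i.e.\ $x_0\geq 1$) or $n_L=0,n_F=1$ (i.e.\ $x_0\leq 0$). I will show that each possibility forces $I_F/I_L$ outside $R$, contradicting the hypothesis. The argument will rely entirely on considering small one-sided price perturbations by each SP and reading off the sign of the payoff change, where the payoffs are those in Lemma~\ref{lem: Un-A-stage-3-payoffs} with $n_L,n_F$ given by \eqref{equ: Un-A-demand}.

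Consider first the case $x_0\geq 1$, so $\hat\pi_F=-sI_F^{2}$. I would first argue that $x_0$ must equal $1$ tightly: if $x_0>1$, SP$_L$ can raise $\hat p_L$ infinitesimally while keeping $n_L=1$, strictly increasing its revenue $(\hat p_L-c)+sI_F^2-\gamma I_L^2$, contradicting best response. Given $x_0=1$, the indifference equation \eqref{equ: Un-A-indifferent location} together with $t_F=1-I_F/I_L$ gives $\hat p_F=\hat p_L-\Delta+I_F/I_L$. I then examine two marginal deviations. First, SP$_L$ deviates to $p_L'=\hat p_L+\delta$, so $n_L=1-\delta$ and the profit increment is $\delta(1+c-\hat p_L-\delta)$, which is positive for small $\delta>0$ iff $\hat p_L<1+c$; hence NE forces $\hat p_L\geq 1+c$. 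Second, SP$_F$ deviates to $p_F'=\hat p_F-\delta$, so $n_F=\delta$ and the profit increment is $\delta(\hat p_F-\delta-c)$, positive for small $\delta>0$ iff $\hat p_F>c$, equivalently $\hat p_L>c+\Delta-I_F/I_L$; hence NE forces $\hat p_L\leq c+\Delta-I_F/I_L$. Combining these two one-sided bounds yields $1\leq\Delta-I_F/I_L$, i.e.\ $I_F/I_L\leq\Delta-1$, contradicting the left inequality in the definition of $R$.

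The case $x_0\leq 0$ proceeds symmetrically by swapping the roles of SP$_L$ and SP$_F$: I would first argue $x_0=0$ tightly (else SP$_F$ can raise $\hat p_F$ while retaining all EUs), then use the marginal deviation $p_F'=\hat p_F+\delta$ (which makes $n_F=1-\delta$) to force $\hat p_F\geq 1+c$, and the marginal deviation $p_L'=\hat p_L-\delta$ (which makes $n_L=\delta$) to force $\hat p_L\leq c$. Translating $\hat p_L$ via the corner relation $\hat p_L=\hat p_F+\Delta+1-I_F/I_L$ rewrites the second bound as $\hat p_F\leq c+I_F/I_L-\Delta-1$, so combining yields $I_F/I_L\geq\Delta+2$, contradicting the right inequality in the definition of $R$. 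The main source of care, rather than genuine difficulty, is the piecewise form of $n_L,n_F$ in \eqref{equ: Un-A-demand}: the one-sided derivatives of the payoffs at a corner differ from those in the interior, so each of the four marginal deviations above must be computed using the interior formula that becomes active just after the perturbation, and checked to remain in the valid range $0<x_0<1$ for sufficiently small $\delta$.
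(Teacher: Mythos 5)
Your proof is correct, but it takes a genuinely different route from the paper's. The paper disposes of this corollary in one line by citing Theorem~\ref{thm: Un-A-prices}: the first-order-condition prices are the unique equilibrium of the pricing subgame, and \eqref{equ: Un-A-interior-condition} is precisely the condition under which they yield $0<x_0<1$, so no corner equilibrium is claimed to survive on $R$. You instead argue directly at each candidate corner: you first pin down $x_0\in\{0,1\}$ exactly, then extract the two one-sided necessary conditions ($\hat p_L\geq c+1$ and $\hat p_F\leq c$ in the case $x_0=1$, and symmetrically for $x_0=0$) from marginal price perturbations, and push them through the indifference relation to get $I_F/I_L\leq\Delta-1$ (resp.\ $I_F/I_L\geq\Delta+2$), contradicting membership in $R$. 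This is essentially the technique the paper deploys later, in Lemmas~\ref{lem: no negative-corner SPNE} and \ref{lem: no positive-corner SPNE}, for the full-game corner SPNE; you apply it at Stage~3 with $(I_F,I_L)$ held fixed, which is why you correctly keep $t_F=1-I_F/I_L$ general rather than first deriving $I_F=0$ as those lemmas do. What your approach buys is self-containedness: the paper's citation implicitly treats uniqueness of the interior best-response solution as ruling out corner equilibria, while Part~(C) of the proof of Theorem~\ref{thm: Un-A-prices} only checks deviations \emph{from} the interior point \emph{to} a corner, not best-response optimality \emph{at} a corner; your local-deviation argument closes that gap directly. The one delicate point, which you correctly flag, is that each perturbation must land in the regime $0<x_0<1$ so that the interior demand formula in \eqref{equ: Un-A-demand} governs the payoff change; your choices $x_0'=1-\delta$ and $x_0'=\delta$ do so for all sufficiently small $\delta$.
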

\begin{proof}
From Theorem~\ref{thm: Un-A-prices}, if (\ref{equ: Un-A-interior-condition}) holds, then no corner SPNE access fees $(p_{L}^{*},p_{F}^{*})$ exist. Note that $\delta\leq I_{L}\leq M$ and $0\leq I_{F}\leq I_{L}$, combining with (\ref{equ: Un-A-interior-condition}), we obtain the desired results.
\end{proof}

Based on the results in Theorem \ref{thm: Un-A-prices}, we can obtain the payoffs of SPs as follows,
\begin{lemma}\label{lem: Un-A-stage-2-interior-F-payoff}
The payoff of $\text{SP}_{F}$ is
\begin{equation}\label{equ: Fpayoff10}
\begin{aligned}
\pi_{F}(I_{F})=&(\frac{1}{9I_{L}^{2}}-s)I_{F}^{2}+\frac{2(1-\Delta)}{9I_{L}}I_{F}+\frac{(1-\Delta)^{2}}{9}.
\end{aligned}
\end{equation}
\end{lemma}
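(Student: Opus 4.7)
The plan is a direct algebraic substitution. I would start from the expression for $\pi_{F}$ obtained in Lemma~\ref{lem: Un-A-stage-3-payoffs} in the interior regime:
\[
\pi_{F} = (-\Delta + t_{L} + p_{L} - p_{F})(p_{F} - c) - s I_{F}^{2},
\]
and then substitute the SPNE access fees $(p_{L}^{*}, p_{F}^{*})$ characterized in Theorem~\ref{thm: Un-A-prices}, using $t_{L} = I_{F}/I_{L}$.

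First I would simplify the multiplier of $(p_{F}^{*}-c)$. Plugging in
\[
p_{L}^{*} - p_{F}^{*} = \tfrac{1}{3} - \tfrac{2I_{F}}{3I_{L}} + \tfrac{2\Delta}{3},
\]
I would observe the cancellation
\[
-\Delta + \tfrac{I_{F}}{I_{L}} + (p_{L}^{*} - p_{F}^{*}) = \tfrac{1-\Delta}{3} + \tfrac{I_{F}}{3I_{L}} = p_{F}^{*} - c.
\]
This is not coincidental: it is the same identity as $n_{F}^{*} = p_{F}^{*} - c$ (consistent with the observation after Theorem~\ref{thm: Un-conclusion-sectionA} in the main text), and it reduces the first summand of $\pi_{F}$ to $(p_{F}^{*} - c)^{2}$.

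Next I would expand the square,
\[
(p_{F}^{*} - c)^{2} = \Big(\tfrac{1-\Delta}{3} + \tfrac{I_{F}}{3 I_{L}}\Big)^{2} = \tfrac{(1-\Delta)^{2}}{9} + \tfrac{2(1-\Delta)}{9 I_{L}} I_{F} + \tfrac{1}{9 I_{L}^{2}} I_{F}^{2},
\]
subtract $s I_{F}^{2}$, and group by powers of $I_{F}$ to obtain exactly the claimed quadratic form in \eqref{equ: Fpayoff10}. There is no real obstacle here; the only point requiring care is the identity $-\Delta + t_{L} + p_{L}^{*} - p_{F}^{*} = p_{F}^{*} - c$, which collapses the two linear factors into a single square and is what makes the final expression a clean quadratic in $I_{F}$ with parameter $I_{L}$ (and this is precisely the form that will be exploited in the next stage when maximizing over $I_{F}$).
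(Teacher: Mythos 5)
Your proposal is correct and follows essentially the same route as the paper's proof: both substitute the Stage-3 access fees from Theorem~\ref{thm: Un-A-prices} into the interior payoff of Lemma~\ref{lem: Un-A-stage-3-payoffs}, observe that the subscription factor $-\Delta+t_{L}+p_{L}^{*}-p_{F}^{*}$ and the margin $p_{F}^{*}-c$ both equal $\frac{1-\Delta}{3}+\frac{I_{F}}{3I_{L}}$, and expand the resulting square before subtracting $sI_{F}^{2}$. The only cosmetic difference is that you make the identity $n_{F}^{*}=p_{F}^{*}-c$ explicit as the reason for the collapse, whereas the paper simply computes the two factors separately and notes they coincide.
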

\begin{proof}
First, we consider interior equilibrium strategies, 
from (\ref{equ: Un-A-interior-payoffs}) in Lemma~\ref{lem: Un-A-stage-3-payoffs} , we have 
\begin{align*}
\pi_{F}=(t_{L}+p_{L}-p_{F}-\Delta)(p_{F}-c)-sI_{F}^{2}.
\end{align*}
Note that $t_{L}=I_{F}/I_{L}$ and $t_{F}=1-t_{L}$.

\noindent{\textbf{(\romannumeral1)}.} Calculate $t_{L}+p_{L}-p_{F}-\Delta$. Substituting $p_{F}$ and $p_{L}$ in (\ref{equ: Un-A-interior-prices}) into $t_{L}+p_{L}-p_{F}-\Delta$, we have
\begin{align*}
&t_{L}+p_{L}-p_{F}-\Delta\\
=&-\Delta+t_{L}+\frac{I_{L}-I_{F}}{3I_{L}}+\frac{\Delta}{3}-\frac{I_{F}}{3I_{L}}-\frac{-\Delta}{3}\\
=&\frac{1-\Delta}{3}+\frac{I_{F}}{3I_{L}}.
\end{align*}

\noindent{\textbf{(\romannumeral2)}.} Calculate $p_{F}-c$. Substituting $p_{F}$ in (\ref{equ: Un-A-interior-prices}) into $p_{F}-c$, we have
\begin{align*}
p_{F}-c=&c+\frac{1}{3}+\frac{I_{F}}{3I_{L}}+\frac{-\Delta}{3}-c=\frac{1-\Delta}{3}+\frac{I_{F}}{3I_{L}}.
\end{align*}
From \textbf{(\romannumeral1)} and \textbf{(\romannumeral2)}, we can obtain (\ref{equ: Fpayoff10}).
\end{proof}

\begin{lemma}\label{lem: Un-A-stage-1-interior-L-payoff}
The payoff of $\text{SP}_{L}$ is
  \begin{equation}\label{equ: Un-L-payoff1}
 \begin{aligned}
\pi_{L}(I_{L})=(\frac{\Delta+2}{3}-\frac{I_{F}}{3I_{L}})^{2}+s(I_{F})^{2}-\gamma I_{L}^{2}.
\end{aligned}
\end{equation}
\end{lemma}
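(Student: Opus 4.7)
The plan is to obtain this expression by direct substitution, mirroring the strategy already used to prove Lemma~\ref{lem: Un-A-stage-2-interior-F-payoff}. Starting from the interior payoff formula in Lemma~\ref{lem: Un-A-stage-3-payoffs}, namely
\[
\pi_{L} = \bigl(\Delta + t_{F} + p_{F} - p_{L}\bigr)(p_{L}-c) + s I_{F}^{2} - \gamma I_{L}^{2},
\]
I would substitute the SPNE access fees $p_L^*, p_F^*$ from Theorem~\ref{thm: Un-A-prices} together with $t_F = (I_L - I_F)/I_L$, and verify that both factors of the product simplify to the same quantity $\tfrac{2+\Delta}{3} - \tfrac{I_F}{3I_L}$. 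This is the heart of the computation, and it is essentially the same observation that drove Lemma~\ref{lem: Un-A-stage-2-interior-F-payoff}: at the stage-3 equilibrium the markup $p_L^* - c$ coincides with the equilibrium subscription fraction $n_L^* = \Delta + t_F + p_F^* - p_L^*$.

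Concretely, I would first compute
\[
p_L^* - c = \frac{2}{3} - \frac{I_F}{3 I_L} + \frac{\Delta}{3} = \frac{2+\Delta}{3} - \frac{I_F}{3 I_L}.
\]
Next, I would insert $t_F = 1 - I_F/I_L$ and the expressions for $p_L^*, p_F^*$ into $\Delta + t_F + p_F^* - p_L^*$; the constants and $\Delta$-terms combine into $\Delta - \Delta/3 - \Delta/3 = \Delta/3$, the constants give $1 + 1/3 - 2/3 = 2/3$, and the $I_F/I_L$-terms give $-1 + 1/3 + 1/3 = -1/3$ times $I_F/I_L$. The sum is exactly $\tfrac{2+\Delta}{3} - \tfrac{I_F}{3 I_L}$, matching $p_L^* - c$ as expected.

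Multiplying these two identical factors yields $\bigl(\tfrac{\Delta+2}{3} - \tfrac{I_F}{3I_L}\bigr)^{2}$, and adding the unchanged terms $sI_F^2 - \gamma I_L^2$ produces exactly~\eqref{equ: Un-L-payoff1}. I do not expect a real obstacle: the derivation is purely algebraic bookkeeping, and the only subtlety is the cancellation of the $\Delta$ and $I_F/I_L$ contributions across the two factors, which is guaranteed by the first-order conditions used to derive $(p_L^*, p_F^*)$. It is also worth noting that this substitution is legitimate exactly on the interior regime characterized by \eqref{equ: Un-A-interior-condition}, the range in which the subsequent Stage~2 and Stage~1 analyses take place.
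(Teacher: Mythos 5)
Your proposal is correct and follows essentially the same route as the paper's proof: substitute the Stage-3 equilibrium fees from Theorem~\ref{thm: Un-A-prices} and $t_F=(I_L-I_F)/I_L$ into the payoff of Lemma~\ref{lem: Un-A-stage-3-payoffs}, check that both $\Delta+t_F+p_F^*-p_L^*$ and $p_L^*-c$ reduce to $\tfrac{\Delta+2}{3}-\tfrac{I_F}{3I_L}$, and square. The arithmetic in your bookkeeping matches the paper's computation exactly.
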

\begin{proof}
From (\ref{equ: Un-A-interior-payoffs}), we have
\begin{align*}
\pi_{L}(I_{L})=&(\Delta+t_{F}+p_{F}-p_{L})(p_{L}-c)+sI_{F}^{2}-\gamma I_{L}^{2}.
\end{align*}

\noindent{\textbf{(\romannumeral1)}.} Calculate $\Delta+t_{F}+p_{F}-p_{L}$. Note that $t_{L}=I_{F}/I_{L}$ and $t_{F}=\frac{I_{L}-I_{F}}{I_{L}}$. From (\ref{equ: Un-A-interior-prices}), then
\begin{align*}
&\Delta+t_{F}+p_{F}-p_{L}\\
=&\Delta+t_{F}
+(c+\frac{1}{3}+\frac{I_{F}}{3I_{L}}+\frac{-\Delta}{3})\\
-&(c+\frac{1}{3}+\frac{I_{L}-I_{F}}{3I_{L}}+\frac{\Delta}{3})\\
=&\frac{\Delta}{3}+t_{F}+\frac{2I_{F}-I_{L}}{3I_{L}}=\frac{\Delta+2}{3}-\frac{I_{F}}{3I_{L}}.
\end{align*}

\noindent{\textbf{(\romannumeral2)}.} Calculate $p_{L}-c$. From (\ref{equ: Un-A-interior-prices}), 
\begin{align*}
&p_{L}-c=c+\frac{1}{3}+\frac{I_{L}-I_{F}}{3I_{L}}+\frac{\Delta}{3}-c\\
=&\frac{1}{3}+\frac{I_{L}-I_{F}}{3I_{L}}+\frac{\Delta}{3}=\frac{\Delta+2}{3}-\frac{I_{F}}{3I_{L}}.
\end{align*}
From \textbf{(\romannumeral1)} and \textbf{(\romannumeral2)}, we get (\ref{equ: Un-L-payoff1}).
\end{proof}

Based on the proof of Theorem~\ref{thm: Un-A-prices}, the existence of equilibria are showed in the following statement:

In Stage 2 and Stage 1, we characterize the optimum investment levels $I_{L}^{*}$ and $I_{F}^{*}$ of SPs. 
 To analyze easily, we consider 4 cases: $-1<\Delta<1$ ({\bf Case~A}), $1\leq \Delta<2$ ({\bf Case~B}), $-2<\Delta\leq-1$ ({\bf Case~C}), and $|\Delta|\geq2$ ({\bf Case~D}).

\noindent{\bf Case~A: $-1<\Delta<1$}

In this section, we consider $-1<\Delta<1$. First, we show that if a SPNE exists when $-1<\Delta<1$, then it must be an interior SPNE (in Proposition~\ref{pro: pro-sectionA}). Then, we characterize the unique optimum  $I_{F}^{*}$ (in Theorem~\ref{thm: Un-A-I_F-sectionA}) and an optimum $I_{L}^{*}$ (in Theorem \ref{thm: Un-payoff-L-sectionA}), respectively. Finally, we collect the optimum strategies in \textbf{Stages 1$\sim$4}, and prove that this strategiy $(p_{L}^{*}, p_{F}^{*}, I_{L}^{*}, I_{F}^{*})$ is an interior Nash equilibrium strategy.

\begin{proposition}\label{pro: pro-sectionA}
If a SPNE exists when $-1<\Delta<1$, then it is an interior SPNE.
\end{proposition}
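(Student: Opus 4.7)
The plan is to adapt, essentially step for step, the argument used to establish Theorem~\ref{thm: Un-conclusion-3sp} in the special case $\Delta=0$ (carried out earlier in Appendix~B) to arbitrary $|\Delta|<1$. I would argue by contradiction: assume $(I_L^*,I_F^*,p_L^*,p_F^*)$ is a SPNE with $x_0^*=\Delta+t_F^*+p_F^*-p_L^*\notin(0,1)$, and build a profitable unilateral deviation in each of the two corner regimes, $x_0^*\ge 1$ and $x_0^*\le 0$. Two fallback bounds would be used throughout: $\pi_F^*\ge 0$ (SP$_F$ can always take $I_F=0$, $p_F=c$) and $\pi_L^*\ge -\gamma\delta^2$ (SP$_L$ can always take $I_L=\delta$, $p_L=c$).

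In the case $x_0^*\ge 1$, so $n_L^*=1$ and $n_F^*=0$, the identity $\pi_F^*=-sI_F^{*2}\ge 0$ forces $I_F^*=0$ and hence $t_F^*=1$; the corner condition then reads $p_F^*\ge p_L^*-\Delta$. I would successively establish: (i) $p_L^*\ge c$, by a case split on whether the deviation $(I_L,p_L)=(\delta,c)$ preserves $x_0\ge 1$; (ii) $x_0^*=1$ exactly, since otherwise raising $p_L$ by a small $\epsilon$ keeps $n_L=1$ and strictly increases $\pi_L$; (iii) $p_L^*-c\ge 1$, because the deviation $p_L^{\,\prime}=p_L^*+\epsilon$ yields $x_0=1-\epsilon$ and thus payoff increment $\epsilon\bigl[1-\epsilon-(p_L^*-c)\bigr]$, strictly positive for small $\epsilon$ when $p_L^*-c<1$. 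Combining (ii) and (iii) with $p_F^*=p_L^*-\Delta$ gives $p_F^*\ge c+1-\Delta>c$, using $\Delta<1$ crucially. Then SP$_F$ profitably deviates to $p_F^{\,\prime}=p_F^*-\eta$ for small $\eta>0$: this produces $n_F=\eta>0$ and $\pi_F^{\,\prime}=\eta(p_F^*-\eta-c)>0=\pi_F^*$, a contradiction.

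The regime $x_0^*\le 0$, with $n_F^*=1$ and $n_L^*=0$, is handled symmetrically, now exploiting $\Delta>-1$. From $\pi_F^*=p_F^*-c-sI_F^{*2}\ge 0$ I get $p_F^*\ge c$; then $x_0^*=0$ exactly, via the analogous upward deviation in $p_F$; then $p_F^*-c\ge 1$, by the same $\epsilon\bigl[1-\epsilon-(p_F^*-c)\bigr]$ calculation applied to SP$_F$'s payoff. At this point the corner condition $p_L^*=\Delta+t_F^*+p_F^*$ combined with $t_F^*\ge 0$, $p_F^*\ge c+1$, and $\Delta>-1$ yields $p_L^*-c\ge 1+\Delta+t_F^*>0$. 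Hence SP$_L$ can lower $p_L$ to $p_L^*-\eta$, obtaining $n_L=\eta>0$ and an incremental gain $\eta(p_L^*-c)+O(\eta^2)>0$ on top of the spectrum terms, contradicting optimality.

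The one genuinely delicate point will be verifying that the small deviations in steps (ii) and (iii) actually land in the region $0<x_0<1$ where the Stage 3 expressions for $\pi_L,\pi_F$ apply, and that the payoffs are continuous across $x_0=1$ (resp.\ $x_0=0$); both follow from the same continuity remark used in the proof of Theorem~\ref{thm: BM-prices}. Note that the hypothesis $-1<\Delta<1$ enters in exactly one place in each case, namely to guarantee $p_F^*>c$ (resp.\ $p_L^*>c$) after the lower bound on $p_L^*-c$ (resp.\ $p_F^*-c$) has been extracted; this is also why the proposition is sharp, since at $|\Delta|=1$ that strict inequality collapses and the corner SPNEs of Theorem~\ref{thm: Un-corner-sectionA} become admissible.
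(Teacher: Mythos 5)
Your argument is correct, but it is not the route the paper takes for Proposition~\ref{pro: pro-sectionA}. The paper's own proof is a two-line deduction: by Corollary~\ref{remark: Un-region} (itself read off from Theorem~\ref{thm: Un-A-prices}), corner pricing equilibria are excluded whenever $\Delta-1<I_F/I_L<\Delta+2$, and for $|\Delta|<1$ this holds automatically for every feasible pair $0\le I_F\le I_L$, so the region $R$ is all of the feasible set and the conclusion follows with no deviation analysis at all. What you do instead is re-run the explicit corner-elimination argument -- assume $x_0^*\ge 1$ or $x_0^*\le 0$, pin down $x_0^*\in\{0,1\}$ and the price identities, extract the bound $p_L^*-c\ge 1$ (resp.\ $p_F^*-c\ge 1$), and exhibit a profitable undercutting deviation by the excluded SP using $|\Delta|<1$. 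This is essentially the content of the paper's Lemmas~\ref{lem: no positive-corner SPNE} and~\ref{lem: no negative-corner SPNE} (and of the proof of Theorem~\ref{thm: Un-conclusion-3sp} at $\Delta=0$), with the two halves of each case reorganized in mirror image (you derive $p_L^*-c\ge 1$ and contradict via SP$_F$'s deviation, where the paper derives $p_L^*-c<1$ via SP$_F$'s deviation and contradicts via SP$_L$'s); the reorganization is logically equivalent and all the $\epsilon$-computations check out. What your longer route buys is self-containedness: Corollary~\ref{remark: Un-region} asserts that existence of the interior pricing NE rules out corner pricing NE, and that assertion is only genuinely justified by the kind of deviation argument you supply (a pricing subgame could in principle admit both), so your proof is the rigorous backbone behind the paper's shortcut. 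One small caution: the deviation $(I_L,p_L)=(\delta,c)$ in your step (i) mixes Stage~1 and Stage~3; within the Stage-3 subgame it is cleaner to deviate only to $p_L=c$, which already gives $\pi_L\ge -\gamma I_L^{*2}>\pi_L^*$ when $p_L^*<c$ and $I_F^*=0$. Your closing observation that the hypothesis is sharp at $|\Delta|=1$ matches Theorem~\ref{thm: Un-corner-sectionA}.
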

\proof
From Corollary~\ref{remark: Un-region}, no corner SPNE access fees exist if $(I_{L},I_{F})\in R$.
Note that $-1<\Delta<1$,  then 
$$\Delta-1<0\leq I_{F}/I_{L}\leq1<\Delta+2.$$ Thus from (\ref{equ: R1}), 
$$R=\left\{\delta\leq I_{L}\leq M, 0\leq I_{F}\leq I_{L}\right\}.$$ So (\ref{equ: Un-A-interior-condition}) holds for any $\delta\leq I_{L}\leq M$ and $0\leq I_{F}\leq I_{L}$ when $-1<\Delta<1.$
\qed

\noindent{\bf Stage 2:}
SP$_F$ decides on  the amount of spectrum to be leased from SP$_L$ ($I_F$), with the condition that $0\leq I_F\leq I_L$,  to maximize $\pi_F$. From the model assumptions, $\delta$ is small, then let $\delta<\min(\sqrt{\frac{2-\Delta}{9s}},\frac{1}{\sqrt{9s}})$. 

\begin{theorem}\label{thm: Un-A-I_F-sectionA}
If  $-1<\Delta<1$, then the optimum investment level of $\text{SP}_{F}$, $I_{F}^{*}$, is
\begin{align}\label{equ: Un-I_F1}
  I_{F}^{*}=\left\{\begin{aligned}&\frac{(1-\Delta)I_{L}}{9I_{L}^{2}s-1}\,& I_{L}>\sqrt{\frac{2-\Delta}{9s}}\\&I_{L}\,& \delta\leq I_{L}\leq\sqrt{\frac{2-\Delta}{9s}}\end{aligned}\right..
\end{align}
\end{theorem}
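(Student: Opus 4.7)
My plan is to treat $I_F^*$ as the solution of maximizing the quadratic $\pi_F(I_F)$ from Lemma~\ref{lem: Un-A-stage-2-interior-F-payoff} over the interval $[0, I_L]$, and apply Lemma~\ref{lem: quadratic function} case by case on the sign of the leading coefficient $a = \frac{1}{9I_L^2} - s$. This mirrors the structure of the proof of Theorem~\ref{thm: BM-I_F} (the $\Delta = 0$ case), with $\Delta$ now entering only through the linear term $\frac{2(1-\Delta)}{9I_L}I_F$ and the constant, so the critical point generalizes to
\[
F_1 \;\triangleq\; -\frac{b}{2a} \;=\; \frac{(1-\Delta)I_L}{9I_L^2 s - 1},
\]
which is well-defined and meaningful because $1 - \Delta > 0$ on $-1 < \Delta < 1$.

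First I would handle the degenerate case $I_L = \frac{1}{3\sqrt{s}}$, where $a = 0$ and $\pi_F$ is linear and strictly increasing in $I_F$ (the slope $\frac{2(1-\Delta)}{9I_L}$ is positive), so $I_F^* = I_L$. Next, for $\delta \leq I_L < \frac{1}{3\sqrt{s}}$, $\pi_F$ is convex, and since $9I_L^2 s - 1 < 0$ while $1 - \Delta > 0$, we have $F_1 < 0 < I_L/2$ (the midpoint of $[0, I_L]$); by Lemma~\ref{lem: quadratic function}(1), the maximum is attained at the right endpoint, giving $I_F^* = I_L$. Finally, for $I_L > \frac{1}{3\sqrt{s}}$, $\pi_F$ is concave and $F_1 > 0$; by Lemma~\ref{lem: quadratic function}(2), the maximizer is $\min(F_1, I_L)$ (clamped to the feasible interval).

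The main (and essentially only) subtle step is to identify the threshold at which $F_1$ enters the feasible interval $[0, I_L]$. Solving
\[
F_1 \;=\; \frac{(1-\Delta)I_L}{9I_L^2 s - 1} \;\leq\; I_L
\;\Longleftrightarrow\; 1 - \Delta \leq 9I_L^2 s - 1 \;\Longleftrightarrow\; I_L \geq \sqrt{\tfrac{2-\Delta}{9s}},
\]
(using that $9I_L^2 s - 1 > 0$ in this regime). Hence $F_1 \geq I_L$ for $\frac{1}{3\sqrt{s}} < I_L < \sqrt{\frac{2-\Delta}{9s}}$, in which case Lemma~\ref{lem: quadratic function}(2) pushes the maximizer to the right endpoint $I_F^* = I_L$, while $F_1 < I_L$ for $I_L \geq \sqrt{\frac{2-\Delta}{9s}}$, where $I_F^* = F_1 = \frac{(1-\Delta)I_L}{9I_L^2 s - 1}$.

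Assembling all sub-cases, the boundary-based region $\{I_L \leq \frac{1}{3\sqrt{s}}\} \cup \{\frac{1}{3\sqrt{s}} < I_L \leq \sqrt{\frac{2-\Delta}{9s}}\}$ collapses to the single interval $\delta \leq I_L \leq \sqrt{\frac{2-\Delta}{9s}}$ on which $I_F^* = I_L$ (the choice of $\delta < \min\bigl(\sqrt{\tfrac{2-\Delta}{9s}}, \tfrac{1}{3\sqrt{s}}\bigr)$ ensures this interval is nonempty), while on $I_L > \sqrt{\frac{2-\Delta}{9s}}$ we get the closed-form $I_F^* = \frac{(1-\Delta)I_L}{9I_L^2 s - 1}$, which is exactly \eqref{equ: Un-I_F1}. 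Uniqueness is automatic because in the relevant regime the objective is either strictly concave with a unique interior stationary point or strictly monotone on $[0, I_L]$.
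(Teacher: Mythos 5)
Your proposal is correct and follows essentially the same route as the paper's own proof: maximize the quadratic $\pi_{F}(I_{F})$ from Lemma~\ref{lem: Un-A-stage-2-interior-F-payoff} over $[0,I_{L}]$, split on the sign of $\frac{1}{9I_{L}^{2}}-s$, apply Lemma~\ref{lem: quadratic function} with the critical point $F_{1}=\frac{(1-\Delta)I_{L}}{9I_{L}^{2}s-1}$, and locate the threshold $\sqrt{\frac{2-\Delta}{9s}}$ by comparing $F_{1}$ with $I_{L}$. The only nitpick is that at $I_{L}=\sqrt{\frac{2-\Delta}{9s}}$ one has $F_{1}=I_{L}$ exactly (not $F_{1}<I_{L}$), but the two branches coincide there, so nothing is affected.
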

\proof

From (\ref{equ: Fpayoff10}) and Proposition~\ref{pro: pro-sectionA}, the optimal investment level of $\text{SP}_{F}$, $I_{F}^{*}$, is a solution of the following optimization problem,
\begin{equation}\label{equ: 1S-best-inteior-I_F-1}
\small
\begin{aligned}
\max\quad&\pi_{F}(I_{F})=(\frac{1}{9I_{L}^{2}}-s)I_{F}^{2}+\frac{2(1-\Delta)}{9I_{L}}I_{F}+\frac{(1-\Delta)^{2}}{9}\\
s.t\quad&0\leq I_{F}\leq I_{L}
\end{aligned}
\end{equation}

\noindent{\textbf{(A)}.} If $I_{L}=\frac{1}{\sqrt{9s}}$,  then $\pi_{F}(I_{F}; I_{L})$ is a linear function of $I_{F}$, i.e.,
$$\pi_{F}(I_{F})=\frac{2(1-\Delta)}{9I_{L}}I_{F}+\frac{(1-\Delta)^{2}}{9}.$$ Since $-1<\Delta<1$, then $\frac{2(1-\Delta)}{9I_{L}}>0,$
$\pi_{F}(I_{F}; I_{L})$ is an increasing function of $I_{F}$, so
$I_{F}^{*}=I_{L}$.

\noindent{\textbf{(B)}.} If $I_{L}\neq\frac{1}{\sqrt{9s}}$ and $\pi_{F}$ is a quadratic function. We discuss the optimal solutions in two cases: (\romannumeral1) $\delta\leq I_{L}<\frac{1}{\sqrt{9s}}$, and (\romannumeral2) $I_{L}>\frac{1}{\sqrt{9s}}$. We denote $F_{1}$ as
\begin{align}\label{equ: Un-symmetric1}
\frac{d\pi_{F}}{dI_{F}}|_{I_{F}=F_{1}}=0\Rightarrow F_{1}=\frac{(1-\Delta)I_{L}}{9I_{L}^{2}s-1}.
\end{align}

\noindent{\textbf{(B-1)}.} If $\delta\leq I_{L}<\frac{1}{\sqrt{9s}}$, then $\pi_{F}$ is a convex function. Since $I_{F}\in[0,I_{L}]$, then the midpoint is $I_{L}/2$. Note that $-1<\Delta<1$ and $1-9I_{L}^{2}s>0$, thus $$F_{1}=\frac{(1-\Delta)I_{L}}{9I_{L}^{2}s-1}<0<I_{L}/2.$$ From Lemma~\ref{lem: quadratic function}, the maximum is obtained at $I_{F}^{*}=I_{L}$.

\noindent{\textbf{(B-2)}.} If $I_{L}>\frac{1}{\sqrt{9s}}$, then $\pi_{F}$ is a concave function. Note that $-1<\Delta<1$ and $1-9I_{L}^{2}s<0$, then $$F_{1}=\frac{(1-\Delta)I_{L}}{9I_{L}^{2}s-1}>0.$$ From Lemma~\ref{lem: quadratic function},
\begin{align*}
\left\{
\begin{aligned}
&I_{F}^{*}=F_{1}& \,\,&\text{if}\quad 0<F_{1}<I_{L}\\
&I_{F}^{*}=I_{L}& \,\,&\text{if}\quad F_{1}\geq I_{L}
\end{aligned}
\right..
\end{align*}
By simple calculation, 
\begin{align*}
&0<F_{1}<I_{L}\Leftrightarrow\sqrt{\frac{2-\Delta}{9s}}<I_{L}\\
&F_{1}\geq I_{L}\Leftrightarrow\frac{1}{\sqrt{9s}}<I_{L}\leq\sqrt{\frac{2-\Delta}{9s}},
\end{align*}
thus
\begin{align*}
\left\{
\begin{aligned}
&I_{F}^{*}=F_{1}& \,\,&\text{if}\quad \sqrt{\frac{2-\Delta}{9s}}<I_{L}\\
&I_{F}^{*}=I_{L}& \,\,&\text{if}\quad \frac{1}{\sqrt{9s}}<I_{L}\leq\sqrt{\frac{2-\Delta}{9s}}
\end{aligned}
\right..
\end{align*}
From \textbf{(A)} and \textbf{(B)}, we obtain (\ref{equ: Un-I_F1}). Given $v^{L}$, $v^{F}$, $s$ and $I_{L}$, $I_{F}^{*}$ is the unique maximum of $\pi_{F}$, so no unilateral deviation is beneficial for SP$_{F}$. 
\qed

\noindent{\bf Stage~1:}
 SP$_L$ decides on the amount of spectrum $I_L$ acquired from the regulator to maximize $\pi_L$.

\begin{theorem}\label{thm: Un-payoff-L-sectionA}
If $-1<\Delta<1$, then the optimal investment of $\text{SP}_{L}$, $I_{L}^{*}$ is 
a solution of the following optimization problem:
\begin{equation}\label{equ: Un-A-interior-payoff-L-2}
\begin{aligned}
\max_{I_{L}}\quad&\pi_{L}(I_{L})=(\frac{2+\Delta}{3}-\frac{1-\Delta}{27sI_{L}^{2}-3})^{2}\\
&+s(\frac{(1-\Delta)I_{L}}{9sI_{L}^{2}-1})^{2}-\gamma I_{L}^{2}\\
s.t\quad&\sqrt{\frac{2-\Delta}{9s}}\leq I_L.
\end{aligned}
\end{equation}
\end{theorem}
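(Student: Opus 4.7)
The plan is to follow the two-case backward-induction argument used for Theorem~\ref{thm: stage 1} (the $\Delta=0$ base case), adapted to general $\Delta \in (-1,1)$. The starting point is Lemma~\ref{lem: Un-A-stage-1-interior-L-payoff}, which gives
\[
\pi_L(I_L) = \Big(\tfrac{\Delta+2}{3} - \tfrac{I_F}{3 I_L}\Big)^2 + s I_F^2 - \gamma I_L^2,
\]
evaluated at the Stage-2 best response $I_F^*$ characterized in Theorem~\ref{thm: Un-A-I_F-sectionA}. Since that best response is piecewise, I would split the feasible range $I_L \geq \delta$ into the two regimes the theorem dictates.

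In the first regime, $\delta \leq I_L \leq \sqrt{(2-\Delta)/(9s)}$, I plug $I_F^* = I_L$ into $\pi_L$. The bracketed term collapses to $(\Delta+1)/3$, leaving $\pi_L = \big(\tfrac{\Delta+1}{3}\big)^2 + (s-\gamma) I_L^2$. Because the assumption $s > \gamma$ from Section~\ref{sec: BM-independent framework} forces the coefficient of $I_L^2$ to be strictly positive, $\pi_L$ is strictly increasing on this interval, so the regime-wise maximizer is attained at the right endpoint $I_L = \sqrt{(2-\Delta)/(9s)}$. In the second regime, $I_L > \sqrt{(2-\Delta)/(9s)}$, I plug $I_F^* = (1-\Delta) I_L /(9 s I_L^2 - 1)$ into $\pi_L$ and obtain exactly the objective in~\eqref{equ: Un-A-interior-payoff-L-2}.

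The key structural observation that ties the two regimes together is continuity of $I_F^*(I_L)$ at the boundary $I_L = \sqrt{(2-\Delta)/(9s)}$: at this point $(1-\Delta) I_L/(9 s I_L^2 - 1) = I_L$, so the expression for $\pi_L$ built from the second-regime formula extends continuously to the boundary and equals the regime-one value there. Consequently, the supremum over the first regime is included as a feasible value of the second-regime maximization, and one may maximize over the single interval $I_L \geq \sqrt{(2-\Delta)/(9s)}$ using the second-regime objective without losing the first regime. This yields the constraint set and objective in~\eqref{equ: Un-A-interior-payoff-L-2}.

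The only subtle step is bookkeeping around the threshold $\sqrt{(2-\Delta)/(9s)}$, namely verifying both the continuity matching and that the assumption $\delta < \sqrt{(2-\Delta)/(9s)}$ (stated before the theorem) guarantees the first regime is nonempty so nothing on $[\delta,\sqrt{(2-\Delta)/(9s)}]$ is discarded inadvertently; everything else is a substitution. I would not expect to need any additional second-order analysis here, since Theorem~\ref{thm: Un-A-I_F-sectionA} has already fixed $I_F^*$ as the best response and Proposition~\ref{pro: pro-sectionA} ensures we remain in the interior regime throughout $|\Delta|<1$, so Stage~1 reduces cleanly to this one-variable maximization in $I_L$.
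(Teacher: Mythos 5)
Your proposal is correct and follows essentially the same route as the paper's proof: substitute the Stage-2 best response $I_F^*$ into the Stage-1 payoff from Lemma~\ref{lem: Un-A-stage-1-interior-L-payoff}, observe that on $[\delta,\sqrt{(2-\Delta)/(9s)}]$ the objective reduces to $\bigl(\tfrac{1+\Delta}{3}\bigr)^2+(s-\gamma)I_L^2$ and is increasing since $s>\gamma$, and then use continuity of $I_F^*$ at the threshold to absorb that regime into the single maximization over $I_L\geq\sqrt{(2-\Delta)/(9s)}$. The paper's argument (cases (A) and (B) in the proof of Theorem~\ref{thm: Un-payoff-L-sectionA}) performs exactly these steps.
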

\begin{proof}
Substituting $I_{F}^{*}$ in \eqref{equ: Un-I_F1} into (\ref{equ: Un-L-payoff1}), the optimal investment level of $\text{SP}_{L}$, $I_{L}^{*}$, is a solution of the following optimization problem,
\begin{equation}\label{equ: Un-interior-pi_L-proof1}
\small
\begin{aligned}
\max_{I_{L}}\quad&\pi_{L}(I_{L})=(\frac{2+\Delta}{3}-\frac{I_{F}^{*}}{3I_{L}})^{2}+s(I_{F}^{*})^{2}-\gamma I_{L}^{2}\\
s.t\quad&\delta\leq I_{L}.
\end{aligned}
\end{equation}

\noindent{\bf Case 2.} If $M> \sqrt{\frac{2-\Delta}{9s}}$,
 then we have to consider the following sub-cases. 
 
\noindent{\textbf{(A)}.} From (\ref{equ: Un-I_F1}), if $\delta\leq I_{L}\leq\sqrt{\frac{2-\Delta}{9s}}$, then $I_{F}^{*}=I_{L}$, thus (\ref{equ: Un-interior-pi_L-proof1}) is equivalent to
\begin{align*}
\max_{I_{L}}\quad&\pi_{L}(I_{L})=\frac{(1+\Delta)^{2}}{9}+(s-\gamma)I_{L}^{2}\\
&\delta\leq I_{L}\leq\sqrt{\frac{2-\Delta}{9s}}
\end{align*}
Since $s>\gamma$, then $\pi_{L}(I_{L})$ is an increasing function of $I_{L}$, thus $I_{L}^{*}=\sqrt{\frac{2-\Delta}{9s}}$. This case can be considered as part of the next part.

\noindent{\textbf{(B)}.} If $\sqrt{\frac{2-\Delta}{9s}}<I_{L}\leq M$, then $I_{F}^{*}=\frac{(1-\Delta)I_{L}}{9I_{L}^{2}s-1}$. Note that $I_{F}^{*}=I_{L}$ when $I_{L}=\sqrt{\frac{2-\Delta}{9s}}$, then $I_{F}^{*}$ is continuous at $I_{L}=\sqrt{\frac{2-\Delta}{9s}}$.
Thus  $$\pi_{L}(I_{L})|_{I_F^*}\rightarrow\pi_{L}|_{I_{L}=I_{F}^{*}=\sqrt{\frac{2-\Delta}{9s}}}$$ as $$I_{L}\downarrow\sqrt{\frac{2-\Delta}{9s}}.$$ Therefore, this case also includes the optimum solution of previous case. Thus in this case (\ref{equ: Un-interior-pi_L-proof1}) is equivalent to
\begin{align*}
\max_{I_{L}}\quad&\pi_{L}(I_{L})=(\frac{2+\Delta}{3}-\frac{1-\Delta}{27sI_{L}^{2}-3})^{2}\\
&+s(\frac{(1-\Delta)I_{L}}{9sI_{L}^{2}-1})^{2}-\gamma I_{L}^{2}\\
s.t\quad&\sqrt{\frac{2-\Delta}{9s}}\leq I_L.
\end{align*}
Given $v^{L}$, $v^{F}$ and $s$, $I_{L}^{*}$ is a maximum of $\pi_{L}$, then no unilateral deviation is beneficial for SP$_{L}$.

\end{proof}

Collect all interior equilibria of $p_{F}^{*}, p_{L}^{*}$, and $I_{F}^{*}, I_{L}^{*}$, we have 
\begin{corollary}\label{cor-NE} 
If $-1<\Delta<1$, then the unique  SPNE strategy is: 

\noindent{\it Stage 1:} $I_{L}^{*}$  is characterized by
 \begin{equation*}
\begin{aligned}
\max_{I_{L}}\,\,&\pi_{L}(I_{L})=(\frac{2+\Delta}{3}-\frac{1-\Delta}{27sI_{L}^{2}-3})^{2}\\
&+s(\frac{(1-\Delta)I_{L}}{9sI_{L}^{2}-1})^{2}-\gamma I_{L}^{2}\\
s.t.\,\,&\sqrt{\frac{2-\Delta}{9s}}\leq I_L.
\end{aligned}
\end{equation*}

\noindent{\it Stage 2:}  $I_{F}^{*}$ is characterized in
 \begin{align*}
  I_{F}^{*}=\left\{\begin{aligned}
  &\frac{(1-\Delta)I_{L}}{9I_{L}^{2}s-1}\,\,\text{if}\,\, I_{L}>\sqrt{\frac{2-\Delta}{9s}}\\
  &I_{L}\,\,\text{if}\,\, I_{L}=\sqrt{\frac{2-\Delta}{9s}}\end{aligned}\right.
\end{align*}

\noindent{\it Stage 3:}
$p_{L}^{*}=c+\frac{2}{3}-\frac{I_{F}^{*}}{3I_{L}^{*}}+\frac{\Delta}{3},\quad
p_{F}^{*}=c+\frac{1}{3}+\frac{I_{F}^{*}}{3I_{L}^{*}}-\frac{\Delta}{3}$.

\noindent{\it Stage 4:}
 $n_{L}^{*}=\frac{\Delta}{3}+\frac{2}{3}-\frac{I_{F}^{*}}{3I_{L}^{*}},\, n_{F}^{*}=\frac{I_{F}^{*}}{3I_{L}^{*}}+\frac{1}{3}-\frac{\Delta}{3}$.
\end{corollary}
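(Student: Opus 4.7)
The plan is to derive Corollary~\ref{cor-NE} by assembling the results of Proposition~\ref{pro: pro-sectionA} and Theorems~\ref{thm: Un-A-prices}, \ref{thm: Un-A-I_F-sectionA}, \ref{thm: Un-payoff-L-sectionA} via backward induction through Stages~4 down to 1. The outline in two to three paragraphs is as follows.

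First, I would invoke Proposition~\ref{pro: pro-sectionA} to restrict attention to interior SPNE strategies throughout, so that the hotelling split formulas \eqref{equ: Un-A-demand} give $n_L = x_0$ and $n_F = 1 - x_0$ with $x_0$ from \eqref{equ: Un-A-indifferent location}. This justifies using the payoff expressions of Lemma~\ref{lem: Un-A-stage-3-payoffs}. In Stage~3, for arbitrary fixed $(I_L,I_F)$ satisfying $\delta \le I_L$, $0 \le I_F \le I_L$ (hence automatically $\Delta-1 < I_F/I_L < \Delta+2$ since $|\Delta|<1$), Theorem~\ref{thm: Un-A-prices} gives the unique SPNE access fees $p_L^*, p_F^*$ in the stated closed form, and verifies that no unilateral deviation (either within the interior region or towards the corners $n_L=0$ or $n_L=1$) is profitable. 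This yields Stage~3 of the corollary.

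Next, in Stage~2, I would substitute these fees into $\pi_F$ (via Lemma~\ref{lem: Un-A-stage-2-interior-F-payoff}) to obtain a quadratic in $I_F$ on the interval $[0, I_L]$. Theorem~\ref{thm: Un-A-I_F-sectionA} then delivers $I_F^*$ as the stated piecewise function (with the branching at $I_L = \sqrt{(2-\Delta)/(9s)}$). In Stage~1, I would substitute $I_F^*(I_L)$ into the expression for $\pi_L$ from Lemma~\ref{lem: Un-A-stage-1-interior-L-payoff}; on the branch $\delta \le I_L \le \sqrt{(2-\Delta)/(9s)}$ where $I_F^* = I_L$, the resulting objective reduces to $(1+\Delta)^2/9 + (s-\gamma)I_L^2$ which, since $s > \gamma$, is strictly increasing, so the optimal $I_L$ in this range is the right endpoint $\sqrt{(2-\Delta)/(9s)}$. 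By continuity of $I_F^*(I_L)$ at this boundary, the full optimization reduces to the one-variable problem over $I_L \ge \sqrt{(2-\Delta)/(9s)}$ stated in Theorem~\ref{thm: Un-payoff-L-sectionA}, giving Stage~1 of the corollary.

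Finally, Stage~4 follows immediately: substituting $p_L^*, p_F^*$ into \eqref{equ: Un-A-indifferent location} yields $x_0^* = \Delta/3 + 2/3 - I_F^*/(3 I_L^*)$, and then $n_L^* = x_0^*$, $n_F^* = 1 - n_L^*$ give the stated formulas. The main subtlety, which is already dispatched by the quoted theorems but worth flagging, is the uniqueness claim: uniqueness of the best responses in Stages~3 and~2 is immediate from strict concavity and Lemma~\ref{lem: quadratic function}, while uniqueness of $I_L^*$ in Stage~1 is not guaranteed in closed form but is asserted as stemming from the uniqueness of the maximizer of the one-dimensional objective in \eqref{equ: Un-A-interior-payoff-L-2} (as in Remark~\ref{remark2}). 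The main obstacle, if any, is therefore verifying that no deviation across the stages (in particular deviations in $(I_L, I_F)$ that would push the Stage~3 subgame into a corner) is profitable; this is handled by the continuity argument already used in the proof of Theorem~\ref{thm: Un-A-prices}, together with the fact that $|\Delta|<1$ keeps every feasible $(I_L, I_F)$ inside the interior regime.
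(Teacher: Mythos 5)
Your proposal is correct and follows essentially the same route as the paper: the paper's own justification of this corollary is precisely the assembly, via backward induction, of Proposition~\ref{pro: pro-sectionA} (interiority), Theorem~\ref{thm: Un-A-prices} (Stage~3 fees), Theorem~\ref{thm: Un-A-I_F-sectionA} (Stage~2 $I_F^*$), and Theorem~\ref{thm: Un-payoff-L-sectionA} (Stage~1 $I_L^*$), with Stage~4 read off from \eqref{equ: Un-A-indifferent location} and \eqref{equ: Un-A-demand}. Your caveat about uniqueness of $I_L^*$ resting on the one-dimensional maximization having a unique solution matches the paper's own qualification in Remark~\ref{remark2}.
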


\noindent{\bf Section B: $1\leq \Delta<2$}

In this section, we consider $1\leq \Delta<2$. First, give the conditions under which the interior SPNE may exist (Proposition~\ref{pro: sectionB}). Then, We obtain an optimum  $I_{F}^{*}$ (in Theorem~\ref{thm: Un-A-I_F-sectionB}) and an optimum $I_{L}^{*}$ (in Theorem~\ref{thm: Un-interior-payoff-L-sectionB}), respectively. Finally, we find the interior SPNE $I_{F}^{*}$ and $I_{L}^{*}$. Note that $\delta$ is small, let 
\begin{equation*}
\footnotesize
\begin{aligned}
\delta<\min\big(\sqrt{\frac{2}{9s\Delta}-\frac{1}{9s}}, \frac{1}{\sqrt{2s(\Delta-1)}}\big).
\end{aligned}
\end{equation*}

\begin{proposition}\label{pro: sectionB}
If $1\leq \Delta<2$, then no corner SPNE strategies exist when 
\begin{align*}
(I_{F},I_{L})\in\left\{\delta\leq I_L, (\Delta-1)I_{L}< I_{F}\leq I_{L}\right\}.
\end{align*} 
\end{proposition}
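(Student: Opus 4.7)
The plan is to derive this directly from Corollary~\ref{remark: Un-region}, which already gives the general region $R$ on which corner SPNE access fees are ruled out, namely
\[
R=\{\delta\leq I_{L},\, 0\leq I_{F}\leq I_{L}\}\cap\{\Delta-1<I_{F}/I_{L}<\Delta+2\}.
\]
The task reduces to simplifying this intersection under the hypothesis $1\leq\Delta<2$. I would not re-prove anything from scratch; I would simply specialize.

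First I would look at the upper bound $I_{F}/I_{L}<\Delta+2$. Since $1\leq\Delta<2$, we have $\Delta+2\geq 3$, while the feasibility constraint already forces $I_{F}/I_{L}\leq 1$. Hence the upper bound is automatically satisfied on the feasible set and is non-binding. Next I would examine the lower bound $\Delta-1<I_{F}/I_{L}$. Because $\Delta-1\geq 0$ in this regime, this is genuinely restrictive and, after multiplying by $I_{L}>0$, is equivalent to $(\Delta-1)I_{L}<I_{F}$. Combining the two observations with the base constraints $\delta\leq I_{L}$ and $I_{F}\leq I_{L}$ yields
\[
R=\{\delta\leq I_{L},\,(\Delta-1)I_{L}<I_{F}\leq I_{L}\},
\]
which is precisely the claimed region. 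Invoking Corollary~\ref{remark: Un-region} then delivers the proposition.

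There is no real obstacle here; the only thing to be careful about is verifying that the degenerate boundary cases (in particular $\Delta=1$, which collapses the lower bound to $I_{F}>0$, and the consistency of $(\Delta-1)I_{L}<I_{L}$, which holds since $\Delta<2$) do not empty the region. Both checks are immediate, so the proof is essentially a one-line specialization of Corollary~\ref{remark: Un-region}.
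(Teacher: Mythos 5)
Your proof is correct and follows essentially the same route as the paper: both specialize Corollary~\ref{remark: Un-region} by observing that for $1\leq\Delta<2$ the upper constraint $I_{F}/I_{L}<\Delta+2$ is vacuous (since $\Delta+2\geq 3>1\geq I_{F}/I_{L}$) while the lower constraint $\Delta-1<I_{F}/I_{L}$ becomes $(\Delta-1)I_{L}<I_{F}$. Your version is if anything slightly cleaner, since the paper's displayed inequality ``$3\leq 2-\Delta<4$'' is a typo for $3\leq \Delta+2<4$, which you state correctly.
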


\proof
From Corollary~\ref{remark: Un-region}, no corner equilibrium strategies exist if $(I_{L},I_{F})\in R$. Since 
\begin{align}
&0\leq \Delta-1<1\label{equ: SectionB-proposition-1}\\
&3\leq 2-\Delta<4\label{equ: SectionB-proposition-2},
\end{align}
then from \eqref{equ: R1}, \eqref{equ: SectionB-proposition-1} and \eqref{equ: SectionB-proposition-2},
$$R=\left\{\delta\leq I_L, (\Delta-1)I_{L}< I_{F}\leq I_{L}\right\}.$$
\qed

\noindent{\bf Stage 2:}
SP$_F$ decides on  the amount of spectrum to be leased from SP$_L$ ($I_F$), with the condition that $0\leq I_F\leq I_L$,  to maximize $\pi_F$.

\begin{theorem}\label{thm: Un-A-I_F-sectionB}
If $1\leq \Delta<2$, then the optimum investment level of $\text{SP}_{F}$, $I_{F}^{*}$, is  obtained by the following rules:
  \begin{itemize}
    \item[(1)] if $\Delta=1$, then $I_{F}^{*}\in[0,\frac{1}{\sqrt{9s}}]$ when $I_{L}=\frac{1}{\sqrt{9s}}$; $I_{F}^{*}=I_{L}$ when $0\leq I_{L}<\frac{1}{\sqrt
    {9s}}$; and no optimum $I_F^*$ when $I_L>\frac{1}{\sqrt
    {9s}}$;
    \item[(2)] if $1<\Delta<2$, then
$I_{F}^{*}=I_{L}$ when $\delta\leq I_{L}\leq\sqrt{\frac{2}{9s\Delta}-\frac{1}{9s}}$; no interior equilibria $I_{F}^{*}$ exist when 
$I_{L}>\sqrt{\frac{2}{9s\Delta}-\frac{1}{9s}}$.
\end{itemize}
\end{theorem}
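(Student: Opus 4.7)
The plan is to maximize $\pi_F(I_F)$ from Lemma~\ref{lem: Un-A-stage-2-interior-F-payoff} over the feasible interval $\bigl((\Delta-1)I_L,\, I_L\bigr]$ identified in Proposition~\ref{pro: sectionB}, which is precisely the range of $I_F$ that supports an interior Stage-3 equilibrium in access fees; any maximizer lying at or below the left endpoint $(\Delta-1)I_L$ induces corner Stage-3 behavior and is therefore not an interior SPNE. The key observation distinguishing this case from Theorem~\ref{thm: Un-A-I_F-sectionA} is that $\Delta \geq 1$ forces the linear coefficient $\frac{2(1-\Delta)}{9I_L}$ of $\pi_F$ to be nonpositive, so the pull toward larger $I_F$ that drove the base case is absent or reversed.

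I would case-split on the sign of the quadratic coefficient $\frac{1}{9I_L^2}-s$ using Lemma~\ref{lem: quadratic function}. For $\Delta=1$ the linear term vanishes, reducing $\pi_F(I_F)$ to $\bigl(\tfrac{1}{9I_L^2}-s\bigr) I_F^2$: the convex regime $I_L<\tfrac{1}{3\sqrt{s}}$ is strictly increasing on $[0,I_L]$, giving $I_F^*=I_L$; at the break point $I_L=\tfrac{1}{3\sqrt{s}}$ the payoff is identically zero, yielding the full indifference set $[0,\tfrac{1}{3\sqrt{s}}]$; and the concave regime $I_L>\tfrac{1}{3\sqrt{s}}$ has its vertex at $I_F=0$ and is strictly decreasing on the feasible set, so no interior maximizer exists.

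For $1<\Delta<2$, in the convex regime $I_L<\tfrac{1}{3\sqrt{s}}$ the maximum of a convex quadratic on a closed interval is attained at an endpoint; direct substitution simplifies the two endpoint values to
\begin{equation*}
\pi_F\bigl((\Delta-1)I_L\bigr) = -s(\Delta-1)^2 I_L^2, \qquad \pi_F(I_L) = \tfrac{(2-\Delta)^2}{9} - s I_L^2.
\end{equation*}
Comparing these and using the identity $1-(\Delta-1)^2=\Delta(2-\Delta)$ yields $\pi_F(I_L) \geq \pi_F\bigl((\Delta-1)I_L\bigr)$ exactly when $I_L \leq \sqrt{\tfrac{2-\Delta}{9s\Delta}} = \sqrt{\tfrac{2}{9s\Delta}-\tfrac{1}{9s}}$, and in that range $I_L$ lies in the interior feasibility set, so $I_F^*=I_L$. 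In the linear/concave regime $I_L \geq \tfrac{1}{3\sqrt{s}}$ the vertex $F_1 = \tfrac{(1-\Delta)I_L}{9sI_L^2-1}$ is nonpositive, so $\pi_F$ is strictly decreasing on the feasible interval; the supremum is approached only at the excluded left endpoint, so no interior SPNE exists. A quick check that $\sqrt{\tfrac{2-\Delta}{9s\Delta}} < \tfrac{1}{3\sqrt{s}}$ whenever $\Delta>1$ confirms that the two regimes dovetail exactly at the stated threshold.

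The main obstacle is the convex-case algebra: verifying that $\pi_F\bigl((\Delta-1)I_L\bigr)$ collapses cleanly to $-s(\Delta-1)^2 I_L^2$ (the pure-quadratic, linear, and constant pieces must cancel in just the right way) and then extracting the threshold $\sqrt{\tfrac{2}{9s\Delta}-\tfrac{1}{9s}}$ via the factorization $1-(\Delta-1)^2=\Delta(2-\Delta)$. A secondary subtlety is being careful with the distinction between "no interior SPNE" (the supremum is attained only at the excluded corner $(\Delta-1)I_L$) and a genuine corner SPNE, which is handled separately in Appendix~\ref{sec: corner SPNE}.
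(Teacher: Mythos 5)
Your proposal is correct and follows essentially the same route as the paper: backward-induction maximization of the quadratic $\pi_F$ from Lemma~\ref{lem: Un-A-stage-2-interior-F-payoff} over the interior-feasible interval $\bigl((\Delta-1)I_L,\,I_L\bigr]$ of Proposition~\ref{pro: sectionB}, with a case split on the sign of $\tfrac{1}{9I_L^2}-s$ and the observation that a supremum attained only at the excluded left endpoint yields no interior SPNE. The only cosmetic difference is that in the convex regime you compare the endpoint values $\pi_F\bigl((\Delta-1)I_L\bigr)=-s(\Delta-1)^2I_L^2$ and $\pi_F(I_L)=\tfrac{(2-\Delta)^2}{9}-sI_L^2$ directly, whereas the paper invokes the midpoint-versus-vertex criterion of Lemma~\ref{lem: quadratic function}; the two are equivalent and produce the same threshold $\sqrt{\tfrac{2}{9s\Delta}-\tfrac{1}{9s}}$.
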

\begin{proof}

From (\ref{equ: Fpayoff10}) and Proposition~\ref{pro: sectionB}, $I_{F}^{*}$ is obtained by the following optimization problems,
\begin{equation}\label{equ: 1S-best-inteior-I_F-2}
\footnotesize
\begin{aligned}
\max_{I_{F}}\quad&\pi_{F}(I_{F})=(\frac{1}{9I_{L}^{2}}-s)I_{F}^{2}+\frac{2(1-\Delta)}{9I_{L}}I_{F}+\frac{(1-\Delta)^{2}}{9}\\
s.t\quad&(\Delta-1)I_{L}<I_{F}\leq I_{L}
\end{aligned}
\end{equation}

\noindent{\textbf{(A)}.} First, we consider $I_{L}=\frac{1}{\sqrt{9s}}$, then
\begin{equation*}
\footnotesize
\begin{aligned}
\pi_{F}(I_{F})=\frac{2(1-\Delta)}{9I_{L}}I_{F}+\frac{(1-\Delta)^{2}}{9}		
\end{aligned}
\end{equation*}
 is a linear function of $I_{F}$. Since $1\leq \Delta<2$, then $\frac{2(1-\Delta)}{9I_{L}}\leq0$. 
 
\noindent{{\bf (A-1)}.} If $1<\Delta<2$, then $\pi_{F}(I_{F})$ is a strictly decreasing function of $I_{F}$, then 
$$I_{F}^{*}\downarrow(\Delta-1)I_{L},$$
which means
$$\pi_{F}(I_{F}^{*})\rightarrow\pi_{F}\big((\Delta-1)I_{L}\big),$$
which means $SP_{F}$ always wants to make a deviation to get a higher payoff by decreasing the investment level ($I_{F}\downarrow(\Delta-1)I_{L}$). There exists no optimum $I_{F}^{*}$ in this case.
  
\noindent{{\bf (A-2)}.} If $\Delta=1$, then $\pi_{F}(I_{F})=0$, so $I_{F}^{*}$ can be any number in the interval $(0, \frac{1}{\sqrt{9s}}]$ since $I_{L}=\frac{1}{\sqrt{9s}}$.

\noindent{\textbf{(B)}.} Then, we consider $I_{L}\neq\frac{1}{\sqrt{9s}}$. $\pi_{F}$ is a quadratic function. Note that
$$F_{1}=\frac{(1-\Delta)I_{L}}{9I_{L}^{2}s-1}.$$

\noindent{\textbf{(B-1)}.} If $\delta\leq I_{L}<\frac{1}{\sqrt{9s}}$, then $\pi_{F}$ is a convex function. Since $I_{L}\in\big((\Delta-1)I_{L},I_{L}]$, the midpoint of the interval is $\Delta I_{L}/2$. Note that $1\leq \Delta<2$ and $1-9sI_{L}^{2}>0$, then $$F_{1}=\frac{(1-\Delta)I_{L}}{9I_{L}^{2}s-1}\geq0,$$
From Lemma~\ref{lem: quadratic function},
\begin{align*}
\left\{\begin{aligned}
&I_{F}^{*}\rightarrow(\Delta-1)I_{L}\,& \frac{\Delta}{2}I_{L}<F_{1}\\
&I_{F}^{*}=I_{L}\,&\frac{\Delta}{2}I_{L}\geq F_{1}\end{aligned}\right..
\end{align*}
By simple calculation 
\begin{equation*}
\footnotesize
\begin{aligned}
&\frac{\Delta}{2}I_{L}<F_{1}\Leftrightarrow\frac{1}{\sqrt{9s}}>I_{L}>\sqrt{\frac{2}{9s\Delta}-\frac{1}{9s}}\\
&\frac{\Delta}{2}I_{L}\geq F_{1}\Leftrightarrow \delta\leq I_{L}\leq\sqrt{\frac{2}{9s\Delta}-\frac{1}{9s}}.
\end{aligned}	
\end{equation*}
thus
\begin{itemize}
  \item [(\romannumeral1)]$I_{F}^{*}\downarrow(\Delta-1)I_{L}$ when $\sqrt{\frac{2}{9s\Delta}-\frac{1}{9s}}<I_{L}<\frac{1}{\sqrt{9s}}$;
  \item [(\romannumeral2)]$I_{F}^{*}=I_{L}$ when $ \delta\leq I_{L}\leq\sqrt{\frac{2}{9s\Delta}-\frac{1}{9s}}.$
\end{itemize}
If $\sqrt{\frac{2}{9s\Delta}-\frac{1}{9s}}<I_{L}<\frac{1}{\sqrt{9s}}$,
then $I_{F}^{*}\downarrow(\Delta-1)I_{L}$,
which means
$$\pi_{F}(I_{F}^{*})\rightarrow\pi_{F}((\Delta-1)I_{L}),$$
which means $SP_{F}$ always wants to make a deviation to get a higher payoff by decreasing the investment level ($I_{F}\downarrow(\Delta-1)I_{L}$). There exists no optimum $I_{F}^{*}$ when $\sqrt{\frac{2}{9s\Delta}-\frac{1}{9s}}<I_{L}<\frac{1}{\sqrt{9s}}$. 
So the optimum investment level, $I_{F}^{*}$, is
$$I_{F}^{*}=I_{L}\,\,\text{when}\,\, \delta\leq I_{L}\leq\sqrt{\frac{2}{9s\Delta}-\frac{1}{9s}}.$$

\noindent{\textbf{(B-2)}.} If $I_{L}>\frac{1}{\sqrt{9s}}$, then $\pi_{F}$ is a concave function.  Since $1\leq \Delta<2$ and $1-9I_{L}^{2}s<0$, then $$F_{1}=\frac{(1-\Delta)I_{L}}{9I_{L}^{2}s-1}\leq0\leq(\Delta-1)I_{L}.$$ 
From Lemma~\ref{lem: quadratic function}, we have 
$$I_{F}^{*}\downarrow(\Delta-1)I_{L},$$
which means
$$\pi_{F}(I_{F})\rightarrow\pi_{F}((\Delta-1)I_{L}),$$
which means $SP_{F}$ always wants to make a deviation to get a higher payoff by decreasing the investment level ($I_{F}\downarrow(\Delta-1)I_{L}$). There exists no optimum $I_{F}^{*}$ in this case.

From \textbf{(A)} and \textbf{(B)}, we obtain the desired results. Given $v^{L}$, $v^{F}$, $s$ and $I_{L}$, if $I_{F}^{*}$ exists, then $I_F^*$ is the unique maximum of $\pi_{F}$, so no unilateral deviation is beneficial for SP$_{F}$.
\end{proof}
\noindent{\bf Stage 1:}
In this stage, MNO decides on the level of investment $I_{L}$ with the condition that $\delta\leq I_L\leq M$, to maximize his payoff $\pi_{L}$. 

\begin{theorem}\label{thm: Un-interior-payoff-L-sectionB}
If $1\leq \Delta<2$, the unique optimum investment level of $\text{SP}_{L}$, $I_{L}^{*}$, is
$I_{L}^{*}=I_{F}^{*}=\frac{1}{\sqrt{9s}}$ when $\Delta=1$, otherwise no interior SPNE $I_{L}^{*}$ exists.
\end{theorem}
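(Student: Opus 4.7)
The plan is to substitute SP$_F$'s Stage~2 best response $I_F^*$ from Theorem~\ref{thm: Un-A-I_F-sectionB} into SP$_L$'s interior payoff $\pi_L$ in (\ref{equ: Un-L-payoff1}), then maximize over the admissible range of $I_L$. The analysis splits naturally along $\Delta=1$ and $1<\Delta<2$, mirroring the two cases of Theorem~\ref{thm: Un-A-I_F-sectionB}.

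For $\Delta=1$, Theorem~\ref{thm: Un-A-I_F-sectionB}(1) gives $I_F^*=I_L$ on $[\delta, 1/\sqrt{9s}]$, with no interior Stage~2 best response beyond this range. Substituting $I_F^*=I_L$ with $\Delta=1$ into (\ref{equ: Un-L-payoff1}) collapses $\pi_L$ to $4/9+(s-\gamma)I_L^2$, which is strictly increasing in $I_L$ since $s>\gamma$. Hence the admissible supremum is uniquely attained at $I_L^*=1/\sqrt{9s}$. At this endpoint SP$_F$'s best-response set is the whole interval $[0, 1/\sqrt{9s}]$, and selecting $I_F^*=1/\sqrt{9s}$ delivers the interior SPNE announced in the theorem.

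For $1<\Delta<2$, Theorem~\ref{thm: Un-A-I_F-sectionB}(2) restricts the admissible range to $[\delta, \sqrt{\frac{2}{9s\Delta}-\frac{1}{9s}}]$, still with $I_F^*=I_L$. Substitution yields $\pi_L(I_L)=(\Delta+1)^2/9+(s-\gamma)I_L^2$, strictly increasing, so the unique candidate maximum is the right endpoint $I_L=\sqrt{\frac{2}{9s\Delta}-\frac{1}{9s}}$. The hard part is that this endpoint cannot be certified as an SPNE by a monotonicity argument alone: I must exhibit a profitable unilateral deviation. I would do this by evaluating SP$_F$'s payoff at the candidate profile. Using $n_F^*=(2-\Delta)/3$ from Theorem~\ref{thm: Un-A-prices} and $sI_F^{*2}=(2-\Delta)/(9\Delta)$, the interior identity $\pi_F=n_F^{*2}-sI_F^{*2}$ simplifies to $-(2-\Delta)(\Delta-1)^2/(9\Delta)$, which is strictly negative for $1<\Delta<2$. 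Since SP$_F$ can always secure $\pi_F\geq 0$ by choosing $I_F=0$ (stepping outside the interior Stage~2 region into the corner), the candidate profile is not SPNE-consistent, and no interior SPNE $I_L^*$ can exist. The main obstacle is thus the sign computation of $\pi_F^*$ and its interpretation as a violation of SP$_F$'s equilibrium condition; the rest of the argument reduces to direct substitution and monotonicity.
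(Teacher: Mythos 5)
Your proof is correct and follows essentially the same route as the paper: substitute SP$_F$'s Stage-2 best response from Theorem~\ref{thm: Un-A-I_F-sectionB} into \eqref{equ: Un-L-payoff1}, use monotonicity of $\pi_L$ (since $s>\gamma$) to pin the candidate at the right endpoint of the admissible range (handling the multivalued best response at $I_L=1/\sqrt{9s}$ when $\Delta=1$), and then rule out $1<\Delta<2$ by showing the follower's payoff at the candidate is strictly negative, contradicting $\pi_F^*\geq 0$. The only difference is local: you get the sign of $\pi_F$ from the direct factorization $-\frac{(2-\Delta)(\Delta-1)^2}{9\Delta}$, whereas the paper writes $\pi_F=\tfrac{1}{9}\bigl(\Delta^2-4\Delta-\tfrac{2}{\Delta}+5\bigr)$ and argues via its derivative that this is maximized at $\Delta=1$ with value $0$; the two expressions agree identically, and your factorization is the cleaner way to see the same fact.
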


\begin{proof}
Substituting $I_{F}^{*}$ in Theorem~\ref{thm: Un-A-I_F-sectionB} into (\ref{equ: Un-L-payoff1}), then the optimal investment level of $\text{SP}_{L}$, $I_{L}^{*}$, is a solution of the following optimization problem,
\begin{equation}\label{equ: Un-L-payoff1-sectionB}
 \begin{aligned}
\max_{I_{L}}\quad&\pi_{L}(I_{L})=(\frac{\Delta+2}{3}-\frac{I_{F}^{*}}{3I_{L}})^{2}+s(I_{F}^{*})^{2}-\gamma I_{L}^{2}\\
s.t.\quad& \delta\leq I_L.
\end{aligned}
\end{equation}

\noindent{\textbf{(A)}.} Consider $1<\Delta<2$. 
From Theorem~\ref{thm: Un-A-I_F-sectionB}~(2), $I_{F}^{*}=I_{L}$ when 
$\delta\leq I_{L}\leq\sqrt{\frac{2}{9s\Delta}-\frac{1}{9s}},$
thus the optimization (\ref{equ: Un-L-payoff1-sectionB}) is equivalent to
\begin{equation*}
\begin{aligned}
\max_{I_{L}}\quad&\pi_{L}(I_{L})=\frac{(1+\Delta)^{2}}{9}+(s-\gamma) I_{L}^{2}\\
s.t\quad&\delta\leq I_{L}\leq\sqrt{\frac{2}{9s\Delta}-\frac{1}{9s}}
\end{aligned}
\end{equation*}
Since $s>\gamma$, then $\pi_{L}(I_{L})>0$ for all $\delta\leq I_{L}\leq\sqrt{\frac{2}{9s\Delta}-\frac{1}{9s}}$, and   $\pi_{L}$ is an increasing function of $I_{L}$, thus $I_{L}^{*}=\sqrt{\frac{2}{9s\Delta}-\frac{1}{9s}}$.

\noindent{\bf{(B)}.} Consider $\Delta=1$, then we have the following sub-cases. 

\noindent{\bf Sub-case~1:} If $I_{L}=\frac{1}{\sqrt{9s}}$, from Theorem~\ref{thm: Un-A-I_F-sectionB}~(1),  (\ref{equ: Un-L-payoff1-sectionB}) is equivalent to
\begin{align*}
\pi_{L}(I_{L})=&\frac{1}{9}(1-\sqrt{9s}I_{F}^{*})^{2}+s(I_{F}^{*})^{2}- \frac{\gamma}{9s}\\
=&2sI_{F}^{*2}-2\sqrt{\frac{s}{9}}I_{F}^{*}+\frac{1}{9}(1-\frac{\gamma}{s}).
\end{align*}
Since $I_{F}^{*}\leq \frac{1}{\sqrt{9s}}$, then $\sqrt{9s}I_{F}^{*}\leq1$, thus $\pi_{L}$ is an increasing function of $I_{F}^*$, and $\pi_{L}(I_{L}; I_{F}^{*})\leq \frac{s-\gamma}{9s}$. Note that if $I_{F}^{*}<\frac{1}{\sqrt{9s}}$
\begin{align*}
\pi_L(I_L)=&2sI_{F}^{*2}-2\sqrt{\frac{s}{9}}I_{F}^{*}+\frac{1}{9}(1-\frac{\gamma}{s})\\
<&\lim_{I_{L}\rightarrow\frac{1}{\sqrt{9s}}}(s-\gamma) I_{L}^{2}=\frac{s-\gamma}{9s};
\end{align*}
and if $I_{F}^{*}=\frac{1}{\sqrt{9s}}$, 
$\pi_{L}(I_{L})=\frac{s-\gamma}{9s}$.

\noindent{\bf Sub-case~2:}
From Theorem~\ref{thm: Un-A-I_F-sectionB}~(1), if $0\leq I_{L}<\frac{1}{\sqrt{9s}}$, (\ref{equ: Un-L-payoff1-sectionB}) is equivalent to 
$\pi_{L}(I_{L}; I_{F}^{*})=(s-\gamma) I_{L}^{2} < \frac{s-\gamma}{9s}.$

From two sub-cases above, $I_L^*=I_F^*=\frac{1}{\sqrt{9s}}$ when $\Delta=1$. Note that
$$\sqrt{\frac{2}{9s\Delta}-\frac{1}{9s}}=\frac{1}{\sqrt{9s}},$$ 
when $\Delta=1$.
Thus this case can be considered as part of the above part. Therefore $I_{L}^{*}=\sqrt{\frac{2}{9s\Delta}-\frac{1}{9s}}$ for any $1\leq \Delta<2$. 

\noindent{\bf (C)}.
Now we compute $\pi_{F}$, from Theorem~\ref{thm: Un-A-I_F-sectionB}, 
\begin{align*}
\pi_{F}=&n_{F}(p_{F}-c)-sI_{F}^{*}=(\frac{2-\Delta}{3})^{2}-\frac{2}{9\Delta}+\frac{1}{9}\\
=&\frac{1}{9}(\Delta^{2}-4\Delta-\frac{2}{\Delta}+5)\triangleq f(\Delta)
\end{align*}
Taking the derivative with respect to $\Delta$, 
\begin{align*}
f'(\Delta)=&\frac{1}{9}(2\Delta-4+\frac{2}{\Delta^{2}})=\frac{2}{9\Delta^{2}}(\Delta^{3}-2\Delta^{2}+1)\\
=&\frac{2}{9\Delta^{2}}(\Delta-1)(\Delta^{2}-\Delta-1)
\end{align*}
Therefore, $f'(\Delta)>0$ when $\Delta\in[\frac{1+\sqrt{5}}{2},2)$, and $f'(\Delta)\leq0$ when $\Delta\in[1,\frac{1+\sqrt{5}}{2})$. Thus, $f_{\max}(\Delta)=f(1)=0$, which implies the possible interior equilibria exist when $\Delta=1$.
Then,  $I_{F}^{*}=I_{L}^{*}=\sqrt{\frac{1}{9s}}$, and
\begin{align*}
&p_{L}^{*}=c+\frac{2}{3},\quad p_{F}^{*}=c+\frac{1}{3}\\
&n_{L}^{*}=\frac{2}{3},\quad n_{F}^{*}=\frac{1}{3}.
\end{align*}

It is easy to check that if $\Delta=1$, then $(I_{L}^{*}, I_{F}^{*}, p_{L}^{*}, p_{F}^{*}, n_{L}^{*}, n_{F}^{*})$ satisfies Corollary~\ref{cor-NE}.
\end{proof}

\begin{corollary}\label{cor-NE2} 
If $\Delta=1$, then the unique  SPNE strategy is: $I_L^*=I_F^*=\sqrt{\frac{1}{9s}}$ and
$n_L^*=p_{L}^{*}-c=\frac{2}{3}$ and $n_F^*=p_{F}^{*}-c=\frac{1}{3}$.
\end{corollary}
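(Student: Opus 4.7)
The plan is to derive Corollary~\ref{cor-NE2} directly from the Stage-1 analysis in Theorem~\ref{thm: Un-interior-payoff-L-sectionB} specialized to $\Delta = 1$, followed by a short substitution chain through Theorems~\ref{thm: Un-A-I_F-sectionB} and \ref{thm: Un-A-prices} to recover the prices and subscriptions.

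First, I would instantiate $\Delta = 1$ in Theorem~\ref{thm: Un-interior-payoff-L-sectionB}. The interior threshold $\sqrt{\tfrac{2}{9s\Delta}-\tfrac{1}{9s}}$ collapses to $\sqrt{\tfrac{1}{9s}}=\tfrac{1}{3\sqrt{s}}$, so the theorem yields $I_L^{\ast} = I_F^{\ast} = \sqrt{1/(9s)}$ as the unique interior Stage-1/Stage-2 pair. I would then check the admissibility window of Theorem~\ref{thm: Un-A-prices}: the condition $\Delta-1 < I_F^{\ast}/I_L^{\ast} < \Delta+2$ reduces to $0 < 1 < 3$, so the closed-form Stage-3 prices \eqref{equ: Un-A-interior-prices} apply at this strategy profile.

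Second, I would substitute $\Delta = 1$ and $I_F^{\ast}/I_L^{\ast} = 1$ into \eqref{equ: Un-A-interior-prices}, obtaining $p_L^{\ast} = c + \tfrac{2}{3} - \tfrac{1}{3} + \tfrac{1}{3} = c + \tfrac{2}{3}$ and $p_F^{\ast} = c + \tfrac{1}{3} + \tfrac{1}{3} - \tfrac{1}{3} = c + \tfrac{1}{3}$. Then I would use the Stage-4 formulas \eqref{equ: Un-A-demand} and \eqref{equ: Un-A-indifferent location} with $t_F = (I_L^{\ast} - I_F^{\ast})/I_L^{\ast} = 0$, giving $x_0 = 1 + 0 + \tfrac{1}{3} - \tfrac{2}{3} = \tfrac{2}{3}$, whence $n_L^{\ast} = \tfrac{2}{3}$ and $n_F^{\ast} = \tfrac{1}{3}$. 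The asserted identities $n_L^{\ast} = p_L^{\ast} - c$ and $n_F^{\ast} = p_F^{\ast} - c$ drop out immediately.

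The main obstacle is the uniqueness claim at the boundary $\Delta = 1$. At $I_L = 1/\sqrt{9s}$, Theorem~\ref{thm: Un-A-I_F-sectionB}~(1) shows that SP$_F$ is indifferent across the entire interval $I_F \in [0, 1/\sqrt{9s}]$ (where $\pi_F \equiv 0$), so SP$_F$'s Stage-2 best-response correspondence is multivalued; I would handle this by invoking the refinement already embedded in Sub-case~1 of the proof of Theorem~\ref{thm: Un-interior-payoff-L-sectionB}, which singles out $I_F = I_L$ as the selection that supports SP$_L$'s Stage-1 optimum. Any other Stage-2 selection would either be strictly worse for SP$_L$ at $I_L = 1/\sqrt{9s}$ or would push SP$_L$ to a deviation outside the interior range $I_L \leq 1/\sqrt{9s}$, where Theorem~\ref{thm: Un-A-I_F-sectionB}~(1) forbids any interior $I_F^{\ast}$. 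This subgame-perfect selection is precisely what justifies the qualifier ``unique'' for the interior SPNE, and is consistent with (rather than in conflict with) the separate corner equilibria listed for $\Delta \geq 1$ in Theorem~\ref{thm: Un-corner-sectionA}~(1).
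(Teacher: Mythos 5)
Your proposal is correct and follows essentially the same route as the paper: Corollary~\ref{cor-NE2} is presented there without a separate proof precisely because it collects the stage-wise results of Theorems~\ref{thm: Un-A-prices}, \ref{thm: Un-A-I_F-sectionB} and \ref{thm: Un-interior-payoff-L-sectionB} at $\Delta=1$ and substitutes $I_F^*/I_L^*=1$ into the price and subscription formulas, exactly as you do. Your explicit handling of SP$_F$'s multivalued best response at $I_L=1/\sqrt{9s}$ is, if anything, more careful than the paper's own Sub-case~1 (which asserts $\pi_L$ is increasing in $I_F^*$ there, whereas it is a convex parabola with equal values at $I_F^*=0$ and $I_F^*=I_L$), and your reading of ``unique'' as unique \emph{interior} SPNE is consistent with the additional corner equilibrium recorded in Theorem~\ref{thm: Un-corner-sectionA}~(2).
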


\subsection*{ Section C: $-2< \Delta\leq-1$}
In this section, we consider $-2<\Delta\leq-1$. First, give the conditions under which the interior SPNE may exist (Proposition~\ref{pro: sectionC}). Then, we prove that no interior SPNE exists (Theorem~\ref{thm: Un-interior-payoff-L-sectionC}). Note that $\delta$ is small, let $\delta<\frac{1}{\sqrt{9s}}$.

\begin{proposition}\label{pro: sectionC}
If $-2<\Delta\leq-1$, then no corner SPNE exist when $\delta\leq I_{L}$ and $0\leq I_{F}< (\Delta+2)I_{L}$.
\end{proposition}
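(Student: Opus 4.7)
The plan is to mirror the approach used for Proposition~\ref{pro: sectionB}, invoking Corollary~\ref{remark: Un-region}, which already provides the master characterization of a region in which no corner SPNE access fees exist, namely
\[
R=\{\delta\leq I_{L},\, 0\leq I_{F}\leq I_{L}\}\cap\{\Delta-1<I_{F}/I_{L}<\Delta+2\}.
\]
The proof reduces to intersecting $R$ with the specific regime $-2<\Delta\leq -1$ and simplifying.

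First I would bound the endpoints of the degree-of-cooperation interval $(\Delta-1,\Delta+2)$ under the standing hypothesis. Since $-2<\Delta\leq -1$, we have $\Delta-1\in(-3,-2]$, so in particular $\Delta-1<0$, and $\Delta+2\in(0,1]$, so $\Delta+2\leq 1$. Because $I_{F}/I_{L}\geq 0>\Delta-1$ whenever $I_{L}>0$ and $I_{F}\geq 0$, the left inequality $\Delta-1<I_{F}/I_{L}$ is automatically satisfied and can be dropped.

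Next I would translate the remaining right inequality $I_{F}/I_{L}<\Delta+2$ into $I_{F}<(\Delta+2)I_{L}$. Combined with $\delta\leq I_{L}$ and $0\leq I_{F}$, and noting that $(\Delta+2)I_{L}\leq I_{L}$ so the upper bound $I_{F}\leq I_{L}$ from the definition of $R$ is automatically implied by $I_{F}<(\Delta+2)I_{L}$, the intersection simplifies exactly to
\[
R=\{\delta\leq I_{L},\, 0\leq I_{F}<(\Delta+2)I_{L}\},
\]
which is the region claimed in the proposition. Applying Corollary~\ref{remark: Un-region} then yields the conclusion.

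I do not anticipate any real obstacle: this is purely an arithmetic specialization of Corollary~\ref{remark: Un-region}, directly parallel to the derivation in Proposition~\ref{pro: sectionB}, with the roles of the two endpoints reversed because the sign of $\Delta$ has flipped. The only point to state carefully is that the constraint $I_{F}\leq I_{L}$ is absorbed by the strict bound $I_{F}<(\Delta+2)I_{L}\leq I_{L}$, so that no separate upper bound on $I_{F}$ needs to appear in the final description of the region.
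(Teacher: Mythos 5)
Your proposal is correct and follows essentially the same route as the paper: both invoke Corollary~\ref{remark: Un-region}, bound $\Delta-1\in(-3,-2]$ and $\Delta+2\in(0,1]$ under the hypothesis $-2<\Delta\leq-1$, and simplify $R$ to $\{\delta\leq I_{L},\,0\leq I_{F}<(\Delta+2)I_{L}\}$. Your explicit remark that the constraint $I_{F}\leq I_{L}$ is absorbed by $I_{F}<(\Delta+2)I_{L}\leq I_{L}$ is a detail the paper leaves implicit, but the argument is the same.
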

\begin{proof}
From Corollary~\ref{remark: Un-region}, no corner Nash equilibria exist if $(I_{L},I_{F})\in R$.
 If $-2<\Delta\leq-1$, then 
 \begin{align*}
 0<&\Delta+2\leq1\\
 -3<&\Delta-1\leq-2.
 \end{align*}
Thus from \eqref{equ: R1},
$$R=\left\{\delta\leq I_{L}, 0\leq I_{F}< (\Delta+2)I_{L}\right\}.$$
\end{proof}

\noindent{\bf Stage 2:}
SP$_F$ decides on  the amount of spectrum to be leased from SP$_L$ ($I_F$), with the condition that $0\leq I_F\leq I_L$,  to maximize $\pi_F$.

\begin{theorem}\label{thm: Un-A-I_F-sectionC}
If $-2<\Delta\leq-1$ and $\pi_{F}(I_{F}^{*}; I_{L})\geq0$, the optimum investment level of $\text{SP}_{F}$, $I_{F}^{*}$, is 
$I_{F}^{*}=\frac{(1-\Delta)I_{L}}{9I_{L}^{2}s-1}$ when $I_{L}>\frac{1}{\sqrt{3s(\Delta+2)}}$;
  and no interior SPNE $I_{F}^{*}$ exist when $\delta\leq I_{L}\leq\frac{1}{\sqrt{3s(\Delta+2)}}$.

\end{theorem}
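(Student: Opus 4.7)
The plan is to mirror the approach of Theorems~\ref{thm: Un-A-I_F-sectionA} and~\ref{thm: Un-A-I_F-sectionB}: maximize the quadratic payoff $\pi_F(I_F)$ from Lemma~\ref{lem: Un-A-stage-2-interior-F-payoff} over the admissible interior region, invoking Lemma~\ref{lem: quadratic function} in each sub-case. By Proposition~\ref{pro: sectionC}, an interior SPNE requires $I_F\in[0,(\Delta+2)I_L)$, and crucially $\Delta+2\in(0,1]$ so the right endpoint lies strictly below $I_L$ and is excluded. I will then split on the sign of the leading coefficient $1/(9I_L^2)-s$ of $\pi_F$, which controls whether $\pi_F$ is convex, concave, or degenerate.

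When $I_L=1/\sqrt{9s}$, $\pi_F$ is affine with positive slope $2(1-\Delta)/(9I_L)>0$, so $\pi_F$ is strictly increasing and its supremum on the half-open region is not attained; no interior SPNE exists. For $\delta\leq I_L<1/\sqrt{9s}$, $\pi_F$ is strictly convex, and its vertex $F_1=(1-\Delta)I_L/(9I_L^2 s-1)$ is negative (since $1-\Delta>0$ and $9I_L^2 s-1<0$), hence lies to the left of the midpoint $(\Delta+2)I_L/2>0$. By Lemma~\ref{lem: quadratic function}(1) the maximum on the closed interval would be at the right endpoint, but since this endpoint is open, no interior SPNE exists.

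For $I_L>1/\sqrt{9s}$, $\pi_F$ is strictly concave with vertex $F_1>0$. A direct algebraic check shows $F_1<(\Delta+2)I_L$ iff $I_L>1/\sqrt{3s(\Delta+2)}$; moreover $1/\sqrt{3s(\Delta+2)}>1/\sqrt{9s}$ since $\Delta+2\leq 1$. Applying Lemma~\ref{lem: quadratic function}(2): when $I_L>1/\sqrt{3s(\Delta+2)}$ the stationary point $F_1$ lies in the open admissible interval, yielding $I_F^*=F_1=(1-\Delta)I_L/(9I_L^2 s-1)$ as claimed; when $1/\sqrt{9s}<I_L\leq 1/\sqrt{3s(\Delta+2)}$, $F_1\geq(\Delta+2)I_L$, so $\pi_F$ is strictly increasing on the admissible region and the supremum is approached only at the open boundary, giving no interior SPNE. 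The hypothesis $\pi_F(I_F^*;I_L)\geq 0$ then ensures that the candidate $I_F^*=F_1$ beats the unilateral deviation $I_F=0$ (which yields zero payoff), ruling out any profitable corner deviation by $\text{SP}_F$.

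The main subtlety, and what drives the \emph{no interior SPNE} conclusions, is that the admissible region is half-open on the right: a best response landing on $(\Delta+2)I_L$ corresponds to a corner SPNE with $n_F=0$ rather than an interior one, and so must be excluded here. Once this is recognized, the rest is a mechanical sign analysis of the leading quadratic coefficient followed by the case dissection of Lemma~\ref{lem: quadratic function}, exactly as in the proofs of Theorems~\ref{thm: Un-A-I_F-sectionA} and~\ref{thm: Un-A-I_F-sectionB}.
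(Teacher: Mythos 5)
Your proposal is correct and follows essentially the same route as the paper: restrict to the interior region $0\leq I_F<(\Delta+2)I_L$ via Proposition~\ref{pro: sectionC}, split on the sign of the leading coefficient of the quadratic $\pi_F$ from Lemma~\ref{lem: Un-A-stage-2-interior-F-payoff}, and apply Lemma~\ref{lem: quadratic function} in each sub-case, with the open right endpoint producing the ``no interior SPNE'' conclusions. One small slip in your closing remark: at $I_F=(\Delta+2)I_L$ the indifferent location is $x_0=0$, so the excluded corner has $n_L=0$ and $n_F=1$ (all EUs go to SP$_F$), not $n_F=0$; this does not affect the argument.
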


\begin{proof}

From (\ref{equ: Fpayoff10}),  the optimal investment level of $\text{SP}_{F}$, $I_{F}^{*}$, is the solution of the following optimization problem, 
\begin{equation*}
\begin{aligned}
\max\quad&\pi_{F}(I_{F})=(\frac{1}{9I_{L}^{2}}-s)I_{F}^{2}+\frac{2(1-\Delta)}{9I_{L}}I_{F}+\frac{(1-\Delta)^{2}}{9}\\
s.t\quad&0\leq I_{F}< (\Delta+2)I_{L}.
\end{aligned}
\end{equation*}

\noindent{\textbf{(A)}.} If $I_{L}=\frac{1}{\sqrt{9s}}$, then
\begin{equation*}
\begin{aligned}
\pi_{F}(I_{F})=\frac{2(1-\Delta)}{9I_{L}}I_{F}+\frac{(1-\Delta)^{2}}{9}
\end{aligned}
\end{equation*}
is a linear function of $I_{F}$. Since $-2<\Delta\leq-1$, then $\frac{2(1-\Delta)}{9I_{L}}>0$, thus $\pi_{F}(I_{F}; I_{L})$ is a strictly increasing function of $I_{F}$. Therefore the optimum investment $I_{F}^{*}$,
$I_{F}^{*}\uparrow(\Delta+2)I_{L},$
which implies
$$\pi_{F}(I_{F})\rightarrow\pi_{F}((\Delta+2)I_{L}),$$
which means $SP_{F}$ always wants to make a deviation to get a higher payoff by increasing the investment level ($I_{F}\uparrow(\Delta+2)I_{L}$). No interior equilibria $I_{F}^{*}$ exists in this case.

\noindent{\textbf{(B)}.} If $I_{L}\neq\frac{1}{\sqrt{9s}}$, then $\pi_{F}$ is a quadratic function, and $F_{1}=\frac{(1-\Delta)I_{L}}{9I_{L}^{2}s-1}$.

{\bf (B-1).} If $\delta\leq I_{L}<\frac{1}{\sqrt{9s}}$,  then $\pi_{F}$ is a convex function. Since $I_{F}\in[0,(\Delta+2)I_{L})$, the midpoint of the interval is 
$(\Delta+2)I_{L}/2.$ Since $-2<\Delta\leq-1$ and $1-9I_{L}^{2}s>0$, then 
$$F_{1}=\frac{(1-\Delta)I_{L}}{9I_{L}^{2}s-1}<0<(\Delta+2)I_{L}/2,$$ from Lemma~\ref{lem: quadratic function}~(1), $I_{F}^{*}\uparrow(\Delta+2)I_{L},$ which means
$$\pi_{F}(I_{F}; I_{L})\rightarrow\pi_{F}((\Delta+2)I_{L};I_{L}),$$ 
which means $SP_{F}$ always wants to make a deviation to get a higher payoff by increasing the investment level. No interior equilibria $I_{F}^{*}$ exists in this case.

{\bf (B-2) .} If $I_{L}>\frac{1}{\sqrt{9s}}$, then $\pi_{F}$ is a  concave function.  Since $-2< \Delta\leq-1$ and $1-9I_{L}^{2}s<0$, then 
$$F_{1}=\frac{(1-\Delta)I_{L}}{9I_{L}^{2}s-1}>0.$$ From Lemma~\ref{lem: quadratic function}~(2),
\begin{align*}
I_F^*=\left\{\begin{aligned}
&I_{F}^{*}=F_{1}&\text{when}\quad&0<F_{1}<(\Delta+2)I_{L}\\
&I_{F}^{*}\rightarrow(\Delta+2)I_{L}&\text{when}\quad&F_{1}\geq(\Delta+2)I_{L}
\end{aligned}\right..
\end{align*} 
By simple calculation,
\begin{equation*}
\begin{aligned}
&0<F_{1}<(\Delta+2)I_{L}\Leftrightarrow I_{L}>\frac{1}{\sqrt{3s(\Delta+2)}}\\
&F_{1}\geq(\Delta+2)I_{L}\Leftrightarrow\frac{1}{\sqrt{9s}}<I_{L}\leq\frac{1}{\sqrt{3s(\Delta+2)}},
\end{aligned}
\end{equation*}
then
\begin{align*}
I_F^*\left\{\begin{aligned}
&=\frac{(1-\Delta)I_{L}}{9I_{L}^{2}s-1}&\text{when}\quad&I_{L}>\frac{1}{\sqrt{3s(\Delta+2)}}\\
&\rightarrow(\Delta+2)I_{L}&\text{when}\quad&\frac{1}{\sqrt{9s}}<I_{L}\leq\frac{1}{\sqrt{3s(\Delta+2)}}
\end{aligned}\right..
\end{align*} 
Note that $I_{F}^{*}\uparrow(\Delta+2)I_{L},$
which means 
$$\pi_{F}(I_{F}; I_{L})\rightarrow\pi_{F}((\Delta+2)I_{L};I_{L}),$$ 
which means $SP_{F}$ always wants to make a deviation to get a higher payoff by increasing the investment level ($I_{F}\uparrow(\Delta+2)I_{L}$). Thus
\begin{equation*}
\begin{aligned}
I_{F}^{*}=\frac{(1-\Delta)I_{L}}{9I_{L}^{2}s-1}\,\,\text{when}\,\, I_{L}>\frac{1}{\sqrt{3s(\Delta+2)}}
\end{aligned}
\end{equation*}
Since $I_{F}^{*}$ is the unique maximum of $\pi_{F}$, so no unilateral deviation is beneficial for SP$_{F}$.
From \textbf{(A)} and \textbf{(B)}, we obtain the desired results.

\end{proof}

\noindent{\bf Stage 1:}
In this stage, MNO decides on the level of investment $I_{L}$ with the condition that $\delta\leq I_L\leq M$, to maximize his payoff $\pi_{L}$.

\begin{theorem}\label{thm: Un-interior-payoff-L-sectionC}
If $-2<\Delta\leq-1$, then no interior SPNE $I_{L}^{*}$ exists.
\end{theorem}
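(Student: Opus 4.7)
The plan is to combine Theorem~\ref{thm: Un-A-I_F-sectionC} (which pins down SP$_F$'s stage-2 best response) with a monotonicity argument on SP$_L$'s induced stage-1 payoff.  Suppose, for contradiction, that an interior SPNE with spectrum level $I_L^*$ exists.  Then by Theorem~\ref{thm: Un-A-I_F-sectionC}, the only candidates satisfy $I_L^* > 1/\sqrt{3s(\Delta+2)}$, with SP$_F$'s best response $I_F^* = (1-\Delta)I_L^* / (9(I_L^*)^2 s - 1)$.  Substituting this $I_F^*$ into the payoff expression $\pi_L(I_L) = \bigl(\tfrac{\Delta+2}{3}-\tfrac{I_F}{3I_L}\bigr)^2 + s I_F^2 - \gamma I_L^2$ from Lemma~\ref{lem: Un-A-stage-1-interior-L-payoff} yields a closed form for $\pi_L$ as a function of $I_L$ alone on the open interval $\bigl(1/\sqrt{3s(\Delta+2)},\infty\bigr)$.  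The plan is to show $\pi_L$ is strictly decreasing there, so that SP$_L$ has no best response in this open interval (the supremum is attained only at the left boundary, which corresponds to $x_0=0$, i.e., a corner equilibrium, not an interior one).

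To carry out the monotonicity step, I would introduce the change of variables $u = I_L^2 s$, $a = \Delta+2$, $b = 1-\Delta$, so that $a+b=3$ and the candidate range is $u > 1/(3a)$.  A short computation rewrites
\[
\pi_L(u) \;=\; \frac{(3au-1)^2 + b^2 u}{(9u-1)^2} \;-\; \frac{\gamma}{s}\, u.
\]
Applying the quotient rule and clearing the positive factor $(9u-1)$, the sign of $\pi_L'(u)$ is governed by a linear polynomial of the form $P(u) = [-9(2a^2-6a+b^2)]\,u + (6a - b^2 - 18)$ minus the positive quantity $\gamma(9u-1)^3/s$.  Using $b = 3-a$, the bracket $2a^2-6a+b^2$ factors as $3(a-1)(a-3)$, which is nonnegative on $a\in(0,1]$ (i.e.\ $\Delta\in(-2,-1]$), with equality only at $a=1$.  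Hence the coefficient of $u$ in $P(u)$ is $\leq 0$, and the constant term $6a-b^2-18$ is easily checked to be strictly negative throughout $a\in(0,1]$.  Therefore $P(u)<0$, and adding the further negative contribution $-\gamma(9u-1)^3/s$ keeps $\pi_L'(u)<0$ on the whole candidate range.

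Once $\pi_L(I_L)$ is shown to be strictly decreasing on the open interval $(1/\sqrt{3s(\Delta+2)},\infty)$, the conclusion follows: SP$_L$ would always prefer a smaller $I_L$, pushing toward the boundary $I_L = 1/\sqrt{3s(\Delta+2)}$; but Theorem~\ref{thm: Un-A-I_F-sectionC} shows that at (and below) this boundary no interior stage-2 equilibrium exists, so the induced subscription split is no longer interior.  Hence no interior SPNE $I_L^*$ can exist, completing the proof.  The main obstacle is purely computational: verifying that $P(u)<0$ on the relevant range, which relies on the clean factorization $2a^2-6a+b^2 = 3(a-1)(a-3)$ obtained via the substitution $b=3-a$.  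Without this factorization the sign analysis is delicate because the coefficient of $u$ vanishes exactly at the endpoint $\Delta=-1$, so one must verify that the negative constant term alone is enough to maintain strict monotonicity there.
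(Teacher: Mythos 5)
Your proposal is correct and follows essentially the same route as the paper: substitute SP$_F$'s stage-2 best response $I_F^*=(1-\Delta)I_L/(9sI_L^2-1)$ into $\pi_L$, show the resulting function is strictly decreasing on $\bigl(1/\sqrt{3s(\Delta+2)},\infty\bigr)$, and conclude that the supremum is approached only at the unattained left endpoint, so no interior maximizer exists. Your change of variables $u=sI_L^2$, $a=\Delta+2$, $b=1-\Delta$ with the factorization $2a^2-6a+b^2=3(a-1)(a-3)$ is a cleaner way to organize the same sign analysis the paper carries out by directly differentiating $f_1+f_2$, and I verified the reduction to $P(u)=-9(2a^2-6a+b^2)u+(6a-b^2-18)$ is algebraically correct.
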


\begin{proof}
Substituting $I_{F}^{*}$ in Theorem~\ref{thm: Un-A-I_F-sectionC} into (\ref{equ: Un-L-payoff1}), the optimum investment level of SP$_{L}$, $I_{L}^{*}$, is a solution of the following optimization problem,
\begin{equation*}
\small
\begin{aligned}
\max_{I_{L}}\,\,&\pi_{L}=(\frac{2+\Delta}{3}-\frac{1-\Delta}{27I_{L}^{2}s-3})^{2}+s(\frac{3(1-\Delta)I_{L}}{9I_{L}^{2}s-1})^{2}-\gamma I_{L}^{2}\\
s.t\,\,&\frac{1}{\sqrt{3s(\Delta+2)}}<I_L
\end{aligned}
\end{equation*}
Denote
\begin{equation*}
\footnotesize
\begin{aligned}
f(I_{L})=&(\frac{2+\Delta}{3}-\frac{1-\Delta}{27I_{L}^{2}s-3})^{2}+s(\frac{3(1-\Delta)I_{L}}{9I_{L}^{2}s-1})^{2},
\end{aligned}	
\end{equation*}
we prove that $f(I_{L})$ is a strictly decreasing function of $I_{L}$. Denote 
 $$f_{1}(I_{L})=(\frac{1-\Delta}{27I_{L}^{2}s-3}-\frac{2+\Delta}{3})^{2}$$
and 
$$f_{2}(I_{L})=s\Big(\frac{3(1-\Delta)I_{L}}{9I_{L}^{2}s-1}\Big)^{2},$$
then $f(I_{L})=f_{1}(I_{L})+f_{2}(I_{L})$. In fact,
\begin{equation*}
\small
\begin{aligned}
f_{1}'(I_{L})=&2(\frac{1-\Delta}{27I_{L}^{2}s-3}-\frac{2+\Delta}{3})\cdot\frac{(\Delta-1)}{(27I_{L}^{2}s-3)^{2}}54I_{L}s\\
=&\frac{4(\Delta-1)I_{L}s}{(9I_{L}^{2}s-1)^{2}}\cdot(\frac{1-\Delta}{9I_{L}^{2}s-1}-(2+\Delta)),
\end{aligned}
\end{equation*}
and
\begin{equation*}
\small
\begin{aligned}
f_{2}'(I_{L})=&\frac{6s(1-\Delta)I_{L}}{(9I_{L}^{2}s-1)^{3}}[3(1-\Delta)(9I_{L}^{2}s-1)-54(1-\Delta)I_{L}^2s]\\
=&\frac{-18I_{L}s(1-\Delta)^{2}(9I_{L}^{2}s+1)}{(9I_{L}^{2}s-1)^{3}}
\end{aligned}
\end{equation*}
Therefore
\begin{equation*}
\small
\begin{aligned}
f'(I_{L})=&f_{1}'(I_{L})+f_{2}'(I_{L})=\frac{2(1-\Delta)I_{L}s}{(9I_{L}s-1)^{2}}[\frac{-2(1-\Delta)}{9I_{L}^{2}s-1}+2(2+\Delta)\\
-&\frac{9(1-\Delta)(9I_{L}^{2}s+1)}{9I_{L}^{2}s-1}]\\
=&\frac{2(1-\Delta)I_{L}s}{(9I_{L}s-1)^{2}}[\frac{-20(1-\Delta)}{9I_{L}^{2}s-1}+2(2+\Delta)-9(1-\Delta)]\\
=&\frac{2(1-\Delta)I_{L}s}{(9I_{L}s-1)^{2}}[\frac{-20(1-\Delta)}{9I_{L}^{2}s-1}+11\Delta-5]
\end{aligned}
\end{equation*}
Note that $-2<\Delta\leq-1$, then $2\leq1-\Delta<3$ and $0<\Delta+2\leq1$. Since 
$$I_{L}>\frac{1}{\sqrt{3s(\Delta+2)}},$$ then 
$$9I_{L}^{2}s-1>\frac{1-\Delta}{\Delta+2}>0.$$ Thus 
\begin{align*}
\frac{-20(1-\Delta)}{9I_{L}^{2}s-1}<0,\quad 11\Delta-5<0,
\end{align*}
so $f'(I_{L})<0$, and $f(I_{L})$ is a strictly decreasing function. Therefore $\pi_{L}(I_{L})=f(I_{L})-\gamma I_{L}^{2}$ is a strictly decreasing function of $I_{L}$ when $I_{L}>\frac{1}{\sqrt{3s(\Delta+2)}}$.
Hence 
$$I_{L}^{*}\downarrow\frac{1}{\sqrt{3s(\Delta+2)}},$$
which implies
$$\pi_{L}(I_{L})\uparrow\pi_{L}(\frac{1}{\sqrt{3s(\Delta+2)}}),$$ 
which implies $\text{SP}_{L}$ always wants to make a deviation to get a higher payoff by decreasing the investment level ($I_{L}\downarrow\frac{1}{\sqrt{3s(\Delta+2)}}$). No interior equilibria $I_{L}^{*}$ in this case.
\end{proof}

\subsection*{ Section D: $|\Delta|\geq2$}

\begin{theorem}\label{Pro: SectionD}
If $|\Delta|\geq2$, then no interior Nash equilibrium strategies exist.
\end{theorem}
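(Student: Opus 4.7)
The plan is to leverage Theorem~\ref{thm: Un-A-prices}, which establishes that any interior SPNE access-fee pair $(p_L^*, p_F^*)$ must satisfy condition (\ref{equ: Un-A-interior-condition}), namely $\Delta - 1 < I_F/I_L < \Delta + 2$. Since any interior SPNE quadruple $(I_L^*, I_F^*, p_L^*, p_F^*)$ must, by backward induction, have $(p_L^*, p_F^*)$ as an interior-stage-3 equilibrium given the investments from Stages~1--2, the inequality on $I_F/I_L$ is a necessary condition for the existence of an interior SPNE.

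First I would simply verify that this necessary condition is incompatible with the feasibility constraint $0 \leq I_F \leq I_L$ (equivalently $0 \leq I_F/I_L \leq 1$) whenever $|\Delta| \geq 2$. If $\Delta \geq 2$, then $\Delta - 1 \geq 1 \geq I_F/I_L$, so the left inequality $I_F/I_L > \Delta - 1$ cannot hold. Symmetrically, if $\Delta \leq -2$, then $\Delta + 2 \leq 0 \leq I_F/I_L$, so the right inequality $I_F/I_L < \Delta + 2$ fails. In either case condition (\ref{equ: Un-A-interior-condition}) is violated for every admissible pair $(I_F, I_L)$, and Corollary~\ref{remark: Un-region} tells us that the region $R$ of candidate interior investments is empty.

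Since Theorem~\ref{thm: Un-A-prices} shows that outside condition (\ref{equ: Un-A-interior-condition}) the Stage-3 first-order-condition prices yield $x_0 \leq 0$ or $x_0 \geq 1$, the Stage-4 demand forces $(n_L, n_F) \in \{(0,1),(1,0)\}$ for any consistent continuation. Hence no quadruple can satisfy $0 < x_0 < 1$, i.e., no interior SPNE exists. The argument is essentially a feasibility check on an interval condition, so the only conceivable obstacle would be subtlety at the boundary values $\Delta = \pm 2$; these are handled by the strict inequality in (\ref{equ: Un-A-interior-condition}) (at $\Delta = 2$, one would need $I_F/I_L > 1$, still infeasible, and analogously at $\Delta = -2$). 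No further stage-wise analysis is needed, which is why this case is shorter than Sections~B and~C.
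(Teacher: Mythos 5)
Your proposal is correct and follows essentially the same route as the paper: both reduce the claim to the necessary condition $\Delta-1<I_F/I_L<\Delta+2$ from Theorem~\ref{thm: Un-A-prices} (equivalently, the region $R$ of Corollary~\ref{remark: Un-region}) and observe that it is incompatible with $0\leq I_F/I_L\leq 1$ once $|\Delta|\geq 2$, so $R=\varnothing$ and no interior SPNE can exist. The boundary cases $\Delta=\pm 2$ are handled identically via the strictness of the inequalities.
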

\proof
We calculate $R$ in Corollary~\ref{remark: Un-region}:  
If $\Delta\geq2$, then 
$$\frac{I_{F}}{I_{L}}>\Delta-1\geq1\Rightarrow I_{F}>I_{L},$$ which is contradicted by $0\leq I_{F}\leq I_{L}$, thus $R=\varnothing$. 
Similarly, if $\Delta\leq-2$, then 
$$\frac{I_{F}}{I_{L}}<v^{F}-v^{L}+2\leq0\Rightarrow I_{F}<0,$$ which is contradicted by $0\leq I_{F}\leq I_{L}$, thus $R=\varnothing$. Therefore,  (\ref{equ: Un-A-interior-condition}) does not hold for any $\delta\leq I_{L}\leq M$ and $0\leq I_{F}\leq I_{L}$ when $|\Delta|\geq2$.

Thus no interior SPNE access fees exist, hence no interior Nash equiliberium strategies exist.
\qed

\subsection{Corner SPNE}\label{sec: corner SPNE}
Note that $\delta$ is small, let $\delta<\frac{1}{\sqrt{2s}}$ in this section.

\begin{lemma}\label{lem: no negative-corner SPNE}
Consider $x_{0}\leq0$, no corner SPNE strategies  exist when $\Delta>-1$. 
\end{lemma}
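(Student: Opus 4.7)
The plan is to suppose for contradiction that a corner SPNE with $x_0^* \leq 0$ exists, so that $n_L^* = 0$, $n_F^* = 1$, and $\pi_L^* = sI_F^{*2} - \gamma I_L^{*2}$; the goal is to produce a strictly profitable unilateral stage-3 price deviation for SP$_L$. The setup mirrors Step~2 of the proof of Theorem~\ref{thm: Un-conclusion-3sp}: from \eqref{equ: Un-A-indifferent location} the assumption $x_0^* \leq 0$ yields $p_L^* \geq \Delta + t_F^* + p_F^*$, and Lemma~\ref{lem1} gives $p_F^* \geq c$. The natural deviation $p_L' = \Delta + t_F^* + p_F^* - \epsilon$ produces $x_0' = \epsilon > 0$, $n_L' = \epsilon$, and a payoff gain of $\epsilon(\Delta + t_F^* + p_F^* - c - \epsilon)$, so the real work is to establish $\Delta + t_F^* + p_F^* > c$.

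The key preliminary step is to exploit SP$_F$'s stage-3 optimality to pin down both access fees. With $p_L^*$, $I_L^*$, $I_F^*$ held fixed, I would compare SP$_F$'s payoff at the corner choice $p_F = p_L^* - \Delta - t_F^*$ (at which $x_0 = 0$ and $n_F = 1$, giving $\pi_F = p_L^* - \Delta - t_F^* - c - sI_F^{*2}$) with the interior stationary point $p_F = (1 - \Delta - t_F^* + p_L^* + c)/2$ obtained from $d\pi_F/dp_F = 0$ on the regime $0 < x_0 < 1$. Setting $z = p_L^* - \Delta - t_F^* - c$, routine algebra gives
\[
\pi_F^{\mathrm{int}} - \pi_F^{\mathrm{cor}} = \left(\tfrac{1+z}{2}\right)^{2} - z = \tfrac{(z-1)^{2}}{4} \geq 0,
\]
and the interior candidate lies inside $0 < x_0 < 1$ exactly when $-1 < z < 1$. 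For any $z \neq 1$ SP$_F$ strictly prefers the interior option, which produces $n_F = (1+z)/2 < 1$ and contradicts $n_F^* = 1$; hence $z = 1$, i.e., $p_L^* = \Delta + t_F^* + c + 1$ and consequently $p_F^* = c + 1$.

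With these values in hand I would return to SP$_L$'s deviation $p_L' = p_L^* - \epsilon$ for small $\epsilon > 0$, which by \eqref{equ: Un-A-indifferent location} forces $x_0' = \epsilon$ and by \eqref{equ: Un-A-demand} yields $n_L' = \epsilon$ for $\epsilon < 1$. Substituting the pinned values,
\[
\pi_L' - \pi_L^* = \epsilon(p_L' - c) = \epsilon(\Delta + t_F^* + 1 - \epsilon),
\]
and since $\Delta > -1$ together with $t_F^* \geq 0$ makes $\Delta + t_F^* + 1$ strictly positive, any $\epsilon \in (0, \min(1, \Delta + t_F^* + 1))$ yields $\pi_L' > \pi_L^*$, contradicting the SPNE assumption. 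I expect the algebraic identity $\pi_F^{\mathrm{int}} - \pi_F^{\mathrm{cor}} = (z-1)^{2}/4$ and the bookkeeping of the three regimes of $x_0$ in SP$_F$'s best-response analysis to be the main technical obstacle; once $p_L^*$ is pinned down, the concluding deviation is essentially cosmetic, and the hypothesis $\Delta > -1$ is used in exactly one place, namely to ensure the final deviation gain is strictly positive.
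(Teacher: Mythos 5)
Your overall strategy --- pin down the stage-3 prices via SP$_F$'s optimality and then exhibit a profitable undercutting deviation for SP$_L$ --- is the same as the paper's, and most of the algebra checks out: the identity $\pi_F^{\mathrm{int}}-\pi_F^{\mathrm{cor}}=(z-1)^2/4$ is correct, as is the feasibility window $-1<z<1$ for the interior stationary point. The flaw is the step ``for any $z\neq 1$ SP$_F$ strictly prefers the interior option, hence $z=1$'': this fails when $z>1$. In that case the stationary point $p_F^{\mathrm{int}}=p_F^{\mathrm{cor}}+(1-z)/2$ lies to the \emph{left} of the corner price, i.e.\ inside the region $x_0<0$ where $n_F=1$ and the quadratic formula for $\pi_F$ no longer gives the payoff; on the actual interior interval the quadratic is decreasing, so SP$_F$ has no profitable price deviation there and your comparison does not force $z=1$. (Your own parenthetical ``$n_F=(1+z)/2<1$'' silently assumes $z<1$.) As written, the possibility $p_F^*>c+1$ is simply not excluded, so the deduction ``$p_F^*=c+1$'' has a hole.

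The hole is repairable in one line. Your interior-comparison argument, together with Lemma~\ref{lem1} (which gives $z\ge 0$) and the observation that any $p_F^*$ strictly below the corner price is dominated, rules out $0\le z<1$; so you may only conclude $z\ge 1$. But your concluding SP$_L$ deviation is robust to this: with $p_L^*-c=\Delta+t_F^*+z$ and $z\ge 1$, the gain is $\epsilon(\Delta+t_F^*+z-\epsilon)\ge\epsilon(\Delta+t_F^*+1-\epsilon)>0$ for small $\epsilon$ precisely because $\Delta>-1$. With that correction the argument closes and is essentially the paper's proof in different packaging: the paper splits into $p_F^*\ge c+1$ (killed by SP$_L$'s undercutting, which is where $\Delta>-1$ enters) and $p_F^*<c+1$ (killed by SP$_F$ raising its price slightly into the interior region, which is the $\epsilon$-version of your stationary-point comparison).
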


\begin{proof}

Let $x_{0}^*\leq0$. Clearly,  $n_{F}^*=1$ and $n_{L}^*=0$.
 From~\eqref{equ: Un-A-indifferent location},  
 \begin{align}\label{equ: positive-corner-p_L-p_F-1}
 p_{F}^*- p_{L}^*+\Delta+t_{F}^{*}\leq0. 	
 \end{align}

 \noindent{\bf Step 1.} We prove that $p_{F}^{*}-p_{L}^{*}+\Delta+t_{F}^{*}=0.$ 
 
Assume not, suppose $p_{F}^*- p_{L}^*+\Delta+t_{F}^{*}<0$.  Consider a unilateral deviation by which $p_F' = p_F^*+\epsilon$, such that $p_{F}'- p_{L}^*+\Delta+t_{F}^{*}<0$. From \eqref{equ: Un-A-indifferent location}, $x_0' =1$.  Now, from \eqref{equ: BM-F-payoff}, $\pi_F'-\pi_F^*=\epsilon > 0 $. Thus, $(I_F^*, p_F^*)$ is not  SP$_F$'s best response  to SP$_L$'s choices $(I_L^*, p_L^*)$,    which is a contradiction. Hence, $p_{F}^{*}-p_{L}^{*}+\Delta+t_{F}^{*}=0.$

\noindent{\bf Step 2.} We prove that $p_{F}^{*}\geq c.$

From \eqref{equ: BM-F-payoff}, $ \pi_{F}^{*}=p_{F}^{*}-c-s I_{F}^{*2}.$
If $p_{F}^{*}<c$, then $\pi_{F}^*<-s I_{F}^{*2} < 0$. Consider a unilateral deviation by which  $I_{F}=0, p_{F}=c$, then
$\pi_{F}=0$, which is beneficial for SP$_F$.   Thus, $p_{F}^{*}\geq c$.

\noindent{\bf Step 3.} If $\Delta>-1$, then $p_{F}^{*}<c+1$.

If $\Delta>-1$, then let $p_{F}^{*}\geq c+1$.  
Consider a unilateral deviation by which $p_{L}=p_L^*-\epsilon$, then $x_{0}=p_{F}^{*}-p_{L}+\Delta+t_{F}^{*}=\epsilon.$
In addition,
$p_{L}=p_L^*-\epsilon=p_{F}^{*}+\Delta+t_{F}^{*}\geq c+1+\Delta$, 
thus
\begin{align*}
\pi_{L}-\pi_{L}^{*}\geq \epsilon(1+\Delta-\epsilon).
\end{align*}
We can choose some $0<\epsilon<1$ such that $\pi_{L}-\pi_{L}^{*}>0$. Hence, $p_{F}< c+1$.

Now consider another unilateral deviation of SP$_{F}$, $p_{F}'=p_{F}^{*}+\epsilon$, where $0 < \epsilon < 1$,  with all the rest the same, then
\begin{align*}
&n_{L}'=x_{0}'=t_{F}^{*}+\Delta+p_{F}^*-p_{L}'=\epsilon\\
&n_{F}'=1-n_{L}'=1-\epsilon.
\end{align*}
Thus,
\begin{align*}
&\pi_{F}'-\pi_{F}^{*}=n_{F}'(p_{F}'-c) - (p_{F}^{*}-c)\\
=&
-\epsilon(p_F^* - c)+(1-\epsilon)\epsilon\\
=&\epsilon(-p_F^*+c+1-\epsilon)>0.	
\end{align*}
The last inequality follows because we can choose $0<\epsilon<1$ such that $p_F'=p_F^*+\epsilon < c+1.$ Thus, we  arrive at a contradiction.
\end{proof}

\begin{lemma}\label{lem: no positive-corner SPNE}
Consider $x_{0}\geq1$, no corner SPNE strategies exist when $\Delta<1$. 
\end{lemma}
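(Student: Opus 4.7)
The plan is to mirror the proof of Lemma~\ref{lem: no negative-corner SPNE}, interchanging the roles of $\text{SP}_L$ and $\text{SP}_F$ and replacing the boundary $x_0 = 0$ by $x_0 = 1$. Suppose for contradiction that $(p_L^*, p_F^*, I_L^*, I_F^*)$ is a corner SPNE with $x_0^* \geq 1$, so that $n_L^* = 1$ and $n_F^* = 0$.

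First I would reduce to the tight case $x_0^* = 1$. If $x_0^* > 1$ strictly, then $\text{SP}_L$ can raise $p_L$ by a small $\epsilon > 0$ while keeping $x_0' \geq 1$ and hence $n_L' = 1$; this strictly increases $\pi_L = p_L - c + sI_F^{*2} - \gamma I_L^{*2}$ by $\epsilon$, contradicting NE. Having established $x_0^* = 1$, I would then note that $\pi_F^* = -s I_F^{*2} \leq 0$, while the unilateral deviation $(I_F, p_F) = (0, p_F^*)$ yields $\pi_F = 0$; this forces $I_F^* = 0$ and hence $t_F^* = 1$, and the equation $x_0^* = 1$ collapses to $p_L^* = p_F^* + \Delta$.

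Next I would establish $p_L^* \geq c$ by considering the deviation $(I_L, p_L) = (\delta, c)$: regardless of the resulting split of EUs, the deviated payoff is at least $-\gamma \delta^2 \geq -\gamma I_L^{*2}$ (since $\delta \leq I_L^*$), which strictly dominates $\pi_L^* = (p_L^* - c) - \gamma I_L^{*2}$ whenever $p_L^* < c$. The crucial use of the hypothesis $\Delta < 1$ then comes via a dichotomy on the size of $p_L^*$.

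If $p_L^* < c+1$, the deviation $p_L' = p_L^* + \epsilon$ with $\epsilon < c+1 - p_L^*$ yields $n_L' = 1 - \epsilon$, and a direct calculation gives $\pi_L' - \pi_L^* = \epsilon(c+1 - p_L^* - \epsilon) > 0$. If instead $p_L^* \geq c+1$, then $p_F^* = p_L^* - \Delta \geq c + 1 - \Delta > c$ by the assumption $\Delta < 1$; the deviation $p_F' = p_F^* - \epsilon$ generates $n_F' = \epsilon$ and $\pi_F' = \epsilon(p_F^* - \epsilon - c) > 0 = \pi_F^*$ for small enough $\epsilon$. Either branch contradicts NE, completing the proof. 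The main obstacle is verifying that the case split on $p_L^*$ is exhaustive and that $\Delta < 1$ is precisely what makes the $\text{SP}_F$-side deviation in the second branch strictly profitable; the preceding lemma's proof follows an analogous dichotomy, so I expect no essentially new difficulties.
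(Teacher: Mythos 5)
Your proposal is correct and follows essentially the same route as the paper's proof: reduce to the tight boundary $x_0^*=1$, deduce $I_F^*=0$, $t_F^*=1$ and $p_L^*=p_F^*+\Delta$ from $\pi_F^*\geq 0$, and then rule out both $p_L^*<c+1$ (via SP$_L$ raising its price) and $p_L^*\geq c+1$ (via SP$_F$ undercutting, which is where $\Delta<1$ is needed to ensure $p_F^*>c$). The only difference is cosmetic — the paper first proves $p_L^*<c+1$ and then derives the final contradiction, whereas you present the two cases as an exhaustive dichotomy.
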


\begin{proof}
Let $x_{0}^*\geq1$. Clearly,  $n_{F}^*=0$ and $n_{L}^*=1$.
 From \eqref{equ: Un-A-indifferent location}, $1 \leq x_{0}^*=\Delta+t_{F}^*+p_{F}^*-p_{L}^*$. Thus, 
 \begin{align}\label{equ: positive-corner-p_L-p_F-1}
 p_{F}^*- p_{L}^*+\Delta+t_{F}^{*}-1\geq0. 	
 \end{align}

 \noindent{\bf Step 1.} We prove that $p_{F}^{*}-p_{L}^{*}+\Delta=0.$ 

Assume not, suppose $p_{F}^*- p_{L}^*+\Delta>0$.  Consider a unilateral deviation by which $p_L' = p_L^*+\epsilon$, such that $p_{F}^*- p_{L}'+\Delta>0$. From \eqref{equ: Un-A-indifferent location}, $x_0' =1$.  Now, from \eqref{equ: BM-L-payoff}, $\pi_L'-\pi_L^*=\epsilon > 0 $. Thus, $(I_L^*, p_L^*)$ is not  SP$_L$'s best response  to SP$_F$'s choices $(I_F^*, p_F^*)$,    which is a contradiction. Hence, $p_{F}^{*}-p_{L}^{*}+\Delta=0.$

\noindent{\bf Step 2.}  We prove that $p_{L}^{*} \geq c.$

From \eqref{equ: BM-L-payoff}, $ \pi_{L}^{*}=p_{L}^{*}-c+s I_{F}^{*2}-\gamma I_{L}^{*2}.$
If $p_{L}^{*}<c$, then $\pi_{L}^*<s I_{F}^{*2} -\gamma I_{L}^{*2}\leq s I_{F}^{*2} -\gamma \delta^{2} $. Consider a unilateral deviation by which  $I_{L}=\delta, p_{L}=c$, then
$\pi_{L} =sI_{F}^{*2}-\gamma\delta^{2}$, which is beneficial for SP$_L$.   Thus, $p_{L}^{*}\geq c$.

\noindent{\bf Step 3.} We prove that $I_{F}^{*}=0$ and $\pi_{F}^{*}=0.$

For any SPNE $(I_{F}^{*}, p_{F}^{*})$, we have $\pi_{F}^{*}\geq0.$ Otherwise, assume $\pi_{F}<0$, we consider a unilateral deviation $I_{F}=0$ and $p_{F}=c$, then
$\pi_{F}=0$, which is beneficial for SP$_{F}$. If $n_{F}^{*}=0$, then $\pi_{F}^{*}=-sI_{F}^{*2}\geq0\Rightarrow I_{F}^{*}=0, \pi_{F}^{*}=0.$

Based on {\bf Step~3}, since $I_{F}^{*}=0,$ then $$t_{F}^{*}=\frac{I_{L}^{*}-I_{F}^{*}}{I_{L}^{*}}=1.$$

\noindent{\bf Step 4.} If $\Delta<1$,then $p_{L}<c+1$.

If $\Delta<1$, let $p_L^* \geq c+1.$  Thus, 
\begin{align}\label{equ: corner-proof1}
p_{F}^{*} = p_{L}^{*}-\Delta \geq c-\Delta. 
\end{align}
Recall that $x_0^* = 1+\Delta+p_F^*-p_L^*$ , then
 consider a unilateral deviation by which $p_F = p_L^*-\Delta-\epsilon > c+1-\Delta$. Now, by \eqref{equ: Un-A-indifferent location}, $x_0 < 1$, and hence $n_F > 0.$  Now, from \eqref{equ: BM-F-payoff}, $\pi_F > 0 = \pi_F^*$. Thus, $(I_F^*, p_F^*)$ is not  SP$_F$'s best response  to SP$_L$'s choices $(I_L^*, p_L^*)$,    which is a contradiction. Hence, $p_L^* < c+1.$

Now consider another unilateral deviation of SP$_{L}$, $p_{L}'=p_{L}^{*}+\epsilon$, where $0 < \epsilon < 1$,  with all the rest the same, then
\begin{align*}
&n_{L}'=x_{0}'=t_{F}^{*}+\Delta+p_{F}^*-p_{L}'=1-\epsilon.
\end{align*}
Then
\begin{align*}
&\pi_{L}'-\pi_{L}^{*}=n_{L}'(p_{L}'-c) - (p_{L}^{*}-c)\\
=&
-\epsilon(p_L^* - c)+(1-\epsilon)\epsilon\\
=&\epsilon(-p_L^*+c+1-\epsilon).	
\end{align*}
The last inequality follows because we can choose $0<\epsilon<1$ such that $p_L'=p_{L}^{*}+\epsilon < c+1.$ Thus, we  arrive at a contradiction.
\end{proof}

\begin{theorem}\label{thm: negative-corner} 
If $\Delta\leq-1$, then the unique corner SPNE strategy is:  $I_L^*=I_F^*=\frac{1}{\sqrt{2s}}$,  $p_{L}^{*}=p_{F}^{*}+\Delta-1,$ $c+1\leq p_{F}^{*}\leq c-\Delta-1$ and
 $n_{L}^{*}=0,\, n_{F}^{*}=1.$

\end{theorem}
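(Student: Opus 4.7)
The plan is to build on the framework already established for the positive-corner case (Lemma~15) and its companion Lemma~14, which together imply that for $\Delta\leq -1$ no SPNE with $x_{0}^*\geq 1$ exists, so only the negative corner $x_{0}^*\leq 0$ (giving $n_{L}^*=0$, $n_{F}^*=1$) needs to be analyzed. The backbone of the proof will be four deviation checks (two on prices, two on investments), with the value $I_{L}^{*}=I_{F}^{*}=\tfrac{1}{\sqrt{2s}}$ being forced by the interplay between SP$_{L}$'s desire to extract leasing revenue and SP$_{F}$'s incentive to stay pinned at the corner rather than slide into the interior.

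First I will argue $x_{0}^*=0$ exactly. If $x_{0}^*<0$, SP$_{F}$ could raise $p_{F}$ by a sufficiently small $\epsilon>0$, keeping $x_{0}\leq 0$ and hence $n_{F}=1$, which would strictly increase $\pi_{F}^*$, a contradiction. Hence $p_{L}^{*}=p_{F}^{*}+\Delta+t_{F}^{*}$. With this identity in hand, I will probe perturbations that push the system into the interior regime. A deviation $p_{F}'=p_{F}^{*}+\epsilon$ moves $x_{0}$ to $\epsilon$ and changes the payoff by $\epsilon\bigl(c+1-p_{F}^{*}-\epsilon\bigr)$; non-positivity as $\epsilon\downarrow 0$ forces $p_{F}^{*}\geq c+1$. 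Symmetrically, a deviation $p_{L}'=p_{L}^{*}-\epsilon$ changes SP$_{L}$'s payoff by $\epsilon\bigl(p_{L}^{*}-c-\epsilon\bigr)$, forcing $p_{L}^{*}\leq c$, which via the identity yields the upper bound on $p_{F}^{*}$.

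Next I will pin down the investments. Since in the corner $\pi_{L}^{*}=sI_{F}^{*2}-\gamma I_{L}^{*2}$ with $s>\gamma$, SP$_{L}$ wants $I_{L}$ as small as possible subject to $I_{L}\geq I_{F}$, so the binding choice is $I_{L}^{*}=I_{F}^{*}$, giving $t_{F}^{*}=0$. For SP$_{F}$'s investment, I will examine a deviation to $I_{F}'<I_{L}^{*}$: this triggers the interior regime because $x_{0}(I_{F}')=(I_{L}^{*}-I_{F}')/I_{L}^{*}>0$, yielding the concave payoff $(I_{F}'/I_{L}^{*})(p_{F}^{*}-c)-sI_{F}'^{2}$ with unconstrained maximizer $(p_{F}^{*}-c)/(2sI_{L}^{*})$. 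For SP$_{F}$ to prefer the constrained corner point $I_{F}=I_{L}^{*}$, this unconstrained maximizer must lie at or above $I_{L}^{*}$, i.e.\ $I_{L}^{*}\leq\sqrt{(p_{F}^{*}-c)/(2s)}$. Inserting the tightest binding $p_{F}^{*}=c+1$ from Step~2 yields $I_{L}^{*}=\tfrac{1}{\sqrt{2s}}$, which SP$_{L}$ saturates because $(s-\gamma)I_{L}^{*2}$ is strictly increasing.

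Finally I will collect the pieces and check consistency: the derived pair $(I_{L}^{*},I_{F}^{*})=(\tfrac{1}{\sqrt{2s}},\tfrac{1}{\sqrt{2s}})$, the identity $p_{L}^{*}=p_{F}^{*}+\Delta$ at $t_{F}^{*}=0$, and the price window from Step~2 jointly yield the claimed SPNE, with uniqueness of the investments following from the strict monotonicity arguments and the inevitable price indeterminacy of corner equilibria reflected in the allowed interval for $p_{F}^{*}$. The main obstacle I anticipate is the investment step: ensuring that no larger $I_{L}$ (coupled with an induced larger $p_{F}^{*}$) can yield a sustainable corner with higher $\pi_{L}$, which requires carefully combining the constraint $I_{L}^{*}\leq\sqrt{(p_{F}^{*}-c)/(2s)}$ with the simultaneous bounds $p_{F}^{*}\geq c+1$ and $p_{L}^{*}=p_{F}^{*}+\Delta\leq c$, showing that the sustainable region collapses to the single point claimed once one verifies that SP$_{L}$'s deviation to any larger $I_{L}$ breaks the corner structure and ultimately cannot improve the payoff.
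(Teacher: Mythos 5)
Your price analysis is sound and matches the paper's: the paper's Lemma~\ref{lem: no negative-corner SPNE} and Steps~1--2 of the proof of Theorem~\ref{thm: negative-corner} establish exactly your three claims ($x_0^*=0$ via an upward $p_F$-deviation, $p_F^*\geq c+1$ via the deviation $p_F'=p_F^*+\epsilon$ with payoff change $\epsilon(c+1-p_F^*-\epsilon)$, and $p_L^*\leq c$ via $p_L'=p_L^*-\epsilon$). The genuine gap is in the investment step. First, your opening argument there is backwards and internally inconsistent: you claim SP$_L$ ``wants $I_L$ as small as possible subject to $I_L\geq I_F$,'' treating $I_F$ as exogenous, when $I_F$ is SP$_F$'s Stage-2 best response to $I_L$; in the regime where that best response is $I_F^*(I_L)=I_L$, the payoff is $(s-\gamma)I_L^2$, which is \emph{increasing} (as you yourself later invoke when you say SP$_L$ ``saturates'' the bound). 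Second, your deviation payoff $(I_F'/I_L^*)(p_F^*-c)-sI_F'^2$ freezes the Stage-3 prices at their equilibrium values, so your corner-sustainability threshold $\sqrt{(p_F^*-c)/(2s)}$ depends on $p_F^*$, and you must then ``insert'' $p_F^*=c+1$ without justifying why the equilibrium could not sit at a larger admissible $p_F^*$, which would permit a larger $I_L^*$. The paper avoids this by substituting the corner identity $p_F=p_L^*-\Delta-t_F$ (so the follower's price tracks $t_F$ as $I_F$ varies), which yields the objective $-sI_F^2+I_F/I_L+\text{const}$ with the price-independent maximizer $\tfrac{1}{2sI_L}$, hence $I_F^*(I_L)=I_L$ for $I_L\leq\tfrac{1}{\sqrt{2s}}$ and $I_F^*(I_L)=\tfrac{1}{2sI_L}$ otherwise.

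Third, you explicitly leave open the step that actually pins down the answer: ruling out $I_L>\tfrac{1}{\sqrt{2s}}$. The paper closes it (Step~6, part~(B)) by noting that for such $I_L$ one has $\pi_L=sI_F^{*2}-\gamma I_L^2=\tfrac{1}{4sI_L^2}-\gamma I_L^2$, which is strictly decreasing with supremum $\tfrac{1}{2}\bigl(1-\tfrac{\gamma}{s}\bigr)$ equal to the value attained at $I_L=\tfrac{1}{\sqrt{2s}}$ from the increasing branch, so the unique maximizer is $I_L^*=I_F^*=\tfrac{1}{\sqrt{2s}}$. Without (a) the correct one-variable reduction $\pi_L(I_L)$ built on the best-response map $I_F^*(I_L)$ and (b) the monotonicity comparison across the two branches, your argument does not establish either the value $\tfrac{1}{\sqrt{2s}}$ or its uniqueness.
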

\begin{proof}
\noindent{\bf Step 1.} We prove that $p_{F}^{*}\leq c-\Delta-t_{F}^{*}.$

Suppose $p_{F}^{*}>c-\Delta-t_{F}^{*}$, then from {\bf Step~1} in Lemma~\ref{lem: no negative-corner SPNE}, $p_{L}^{*}=p_{F}^{*}+\Delta+t_{F}^*-t_{F}^{*}>c.$ Now consider a unilateral deviation of SP$_{L}$, $p_{L}=p_{L}^{*}-\epsilon$, where $0 < \epsilon < 1$,  with the rest keeping original, then
\begin{align*}
n_{L}=x_{0}=t_{F}^{*}+\Delta+p_{F}^*-p_{L}=\epsilon.
\end{align*}
Thus,
\begin{align*}
\pi_{L}-\pi_{L}^{*}=n_{L}(p_{L}-c)=\epsilon(p_{L}-c)>0.	
\end{align*}
The last inequality holds because we can choose $0<\epsilon<1$ such that $p_L=p_{L}^{*} -\epsilon> c.$ So $p_{F}^{*}>c-\Delta$ can not be a SPNE.

\noindent{\bf Step 2.} We prove that $p_{F}^{*}\geq c+1.$

Suppose $p_{F}^{*}<c+1$, consider a unilateral deviation of SP$_{F}$, $p_{F}=p_{F}^{*}+\epsilon$, where $0 < \epsilon < 1$,  with the rest keeping original, then
\begin{align*}
&n_{L}=x_{0}=t_{F}^{*}+\Delta+p_{F}^*-p_{L}=\epsilon\\
&n_F=1-n_L=1-\epsilon.
\end{align*}
Thus,
\begin{align*}
&\pi_{F}-\pi_{F}^{*}=n_{F}(p_{F}-c)-p_F^*+c\\
=&\epsilon(1-\epsilon-p_F^*+c)>0.	
\end{align*}
The last inequality holds because we can choose $0<\epsilon<1$ such that $p_F^*+\epsilon<1+ c$. So $p_{F}^{*}<c+1$ can not be a SPNE.

Therefore from {\bf Steps~1, 2}, note that $t_F^*=1-I_F^*/I_L^*$, so $c+1>c-\Delta-t_{F}^{*}$ when $\frac{I_{F}^{*}}{I_{L}^{*}}< 2+\Delta$, thus no corner SPNE exists in this range. Then we consider $\frac{I_{F}^{*}}{I_{L}^{*}}\geq 2+\Delta.$

\noindent{\bf Step 3.} We prove that no unilateral deviation is beneficial for both SPs when $c+1\leq p_{F}^{*}\leq c-\Delta-t_{F}^{*}.$

Consider a unilateral deviation of SP$_{L}$, $p_{L}'=p_{L}^{*}-\epsilon$, where $0 < \epsilon < 1$,  with the rest keeping original, then
\begin{align*}
n_{L}'=x_{0}'=t_{F}^{*}+\Delta+p_{F}^*-p_{L}'=\epsilon.
\end{align*}
Since $p_{L}^{*}=p_{F}^{*}+\Delta+t_{F}^*$, then $p_{L}^{*}\in[c+1+\Delta+t_{F}^{*},c]$, then
\begin{align*}
\pi_{L}'-\pi_{L}^{*}=n_{L}'(p_{L}'-c)<0,
\end{align*}
which implies no unilateral deviation is beneficial for SP$_{L}$.

Consider another unilateral deviation of SP$_{F}$, $p_{F}'=p_{F}^{*}+\epsilon$, where $0 < \epsilon < 1$,  with the rest keeping original, then
\begin{align*}
&n_{L}'=x_{0}'=t_{F}^{*}+\Delta+p_{F}'-p_{L}^*=\epsilon\\
&n_{F}'=1-n_{L}'=1-\epsilon.
\end{align*}
Note that $c+1\leq p_{F}^{*}\leq c-\Delta-t_{F}^{*}$,
\begin{align*}
&\pi_{F}'-\pi_{F}^{*}=n_{F}'(p_{F}'-c)-p_{F}^{*}+c\\
=&\epsilon(-p_{F}^{*}+c+1-\epsilon)\leq -\epsilon^{2}<0.
\end{align*}
which implies no unilateral deviation is beneficial for SP$_{L}$.

\noindent{\bf Step 5.} Find $I_{F}^{*}$.

Note that $p_{L}^*$ is independent of $I_{F}^*$. Substituting $p_{F}^{*}=p_{L}^*-\Delta-t_{F}^{*}$ into \eqref{equ: BM-F-payoff},
$I_{F}^{*}$ is the solution of the following optimization problem, 
\begin{equation*}
\begin{aligned}
\max\quad&\pi_{F}(I_{F})=-sI_{F}^{2}+\frac{I_{F}}{I_{L}}-\Delta+p_{L}^{*}-c-1\\
s.t\quad&0\leq I_{F}\leq I_{L}
\end{aligned}
\end{equation*}
$\pi_{F}(I_{F})$ is a concave function, and the symmetric axis is $F_{2}=\frac{1}{2sI_{L}}>0$. From Lemma~\ref{lem: quadratic function}~(2),
\begin{align*}
I_F^*=\left\{\begin{aligned}
&F_{2}&\quad\text{when}\quad&F_{2}<I_{L}\\
&I_L&\quad\text{when}\quad&F_{2}\geq I_{L}
\end{aligned}\right.
\end{align*}
which is equivalent to
\begin{align*}
I_F^*=\left\{\begin{aligned}
&\frac{1}{2sI_{L}}&\quad\text{when}\quad&\frac{1}{\sqrt{2s}}<I_{L}\\
&I_L&\quad\text{when}\quad&I_L\leq\frac{1}{\sqrt{2s}}
\end{aligned}\right.
\end{align*}
Since $I_{F}^{*}$ is the unique maximum of $\pi_F$, thus no unilateral deviation is beneficial to SP$_{F}$.

\noindent{\bf Step 6.} Find $I_{L}^*$.

Substituting $I_{F}^{*}$ from {\bf Step~5} into (\ref{equ: BM-L-payoff}), the optimum investment level of SP$_{L}$, $I_{L}^{*}$, is a solution of the following optimization problem,
\begin{equation}\label{equ: Un-L-payoff2-sectionC}
\begin{aligned}
\max_{I_{L}}\quad&\pi_{L}(I_{L}; I_{F}^{*})=sI_{F}^{*2}-\gamma I_{L}^{2}\\
s.t\quad&\delta\leq I_{L}.
\end{aligned}
\end{equation} 

{\textbf{(A)}.} If $\delta\leq I_L\leq\frac{1}{\sqrt{2s}}$, then $I_{F}^{*}=I_{L}$, thus the optimization (\ref{equ: Un-L-payoff2-sectionC}) is equivalent to
\begin{align*}
\max_{I_{L}}\quad&\pi_{L,1}=(s-\gamma)I_{L}^{2}\\
s.t\quad&\delta\leq I_{L}\leq\frac{1}{\sqrt{2s}}.
\end{align*}
Note that $s>\gamma$, then $\pi_{L,1}$ is an increasing function of $I_{L}$, thus $I_{L}^{*}=I_{F}^{*}=\frac{1}{\sqrt{2s}}$. Denote 
$$\pi_{L,1}^{*}=\pi_{L,1}(\frac{1}{\sqrt{2s}})=\frac{s-\gamma}{2s}.$$ 

{\bf (B).} If  $\frac{1}{\sqrt{2s}}<I_{L}\leq M$, then $I_{F}^{*}=\frac{1}{2sI_{L}}$, thus the optimization (\ref{equ: Un-L-payoff2-sectionC}) is equivalent to
\begin{align*}
\max_{I_{L}}\quad&\pi_{L,2}=\frac{1}{4sI_{L}^{2}}-\gamma I_{L}^{2}\\
s.t\quad&\frac{1}{\sqrt{2s}}<I_{L}.
\end{align*}
$\pi_{L,2}$ is a decreasing function of $I_{L}$, note that $\gamma<s$, denote 
$$\pi_{L,2}^{*}=\pi_{L,2}(\frac{1}{\sqrt{2s}})=\frac{1}{2}(1-\frac{\gamma}{s})>0,$$ so $\pi_{L,2}\uparrow\pi_{L,2}^{*}$ as $I_{L}\downarrow\frac{1}{\sqrt{2s}}$,  which means $\text{SP}_{L}$ always wants to make a deviation to get a higher payoff by decreasing the investment level ($I_{L}\downarrow\frac{1}{\sqrt{2s}}$). No negative-corner equilibria $I_{L}^{*}$ in this case. From {\bf (A)} and {\bf (B)}, $\pi_{L,1}^*=\pi_{L,2}^*>\pi_{L,2}$, thus $I_L^*=I_F^*=\frac{1}{\sqrt{2s}}$.

From {\bf Sub-cases~1, 2}, we can obtain the desired results.

\end{proof}

\begin{theorem}\label{thm: positive-corner} 
If $\Delta\geq1$, then the unique negative-corner SPNE strategy is: $I_{L}^{*}=\delta$,  $I_{F}^{*}=0$, $p_{F}^{*}=p_{L}^{*}-\Delta$,  $c+1\leq p_{L}^{*}\leq c+\Delta$, $n_{L}^{*}=1$, $n_{F}^{*}=0$.
\end{theorem}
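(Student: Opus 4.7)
The plan is to follow an argument parallel to that of Theorem~\ref{thm: negative-corner}, but adapted to the ``positive-corner'' regime $x_0 \geq 1$ that arises when $\Delta \geq 1$. First I would note that since $\Delta \geq 1 > -1$, Lemma~\ref{lem: no negative-corner SPNE} rules out any corner SPNE with $x_0^* \leq 0$, so any corner SPNE must satisfy $x_0^* \geq 1$, hence $n_L^* = 1$ and $n_F^* = 0$.

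The next task is to pin down $I_F^*$ and the relation between $p_L^*$ and $p_F^*$. Because $n_F^* = 0$, we have $\pi_F^* = -s(I_F^*)^2 \le 0$. Comparing with the feasible deviation $(I_F, p_F) = (0, c)$ which yields $\pi_F = 0$, I would conclude $I_F^* = 0$ and therefore $t_F^* = 1$. Substituting $t_F^* = 1$ into $x_0 = \Delta + t_F + p_F - p_L \geq 1$ gives $p_F^* \geq p_L^* - \Delta$. Strict inequality cannot hold in equilibrium: SP$_L$ could then raise $p_L$ by a small $\epsilon > 0$, still satisfy $x_0 \geq 1$ and retain $n_L = 1$, gaining $\pi_L' - \pi_L^* = \epsilon$. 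Hence $p_F^* = p_L^* - \Delta$.

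I would then derive the bounds $c + 1 \leq p_L^* \leq c + \Delta$ via two one-sided deviation arguments, in the spirit of Steps~1 and 2 in the proof of Theorem~\ref{thm: negative-corner}. For the upper bound, if $p_L^* > c + \Delta$, then $p_F^* = p_L^* - \Delta > c$, and SP$_F$ can deviate to $p_F' = p_L^* - \Delta - \epsilon$ (for small $\epsilon > 0$) to land at $x_0' = 1 - \epsilon$, capturing a fraction $n_F' = \epsilon$ of EUs and earning $\pi_F' = \epsilon(p_L^* - \Delta - \epsilon - c) > 0 = \pi_F^*$, a contradiction. For the lower bound, if $p_L^* < c + 1$, then SP$_L$'s deviation $p_L' = p_L^* + \epsilon$ shifts $x_0$ to $1 - \epsilon$, giving $\pi_L' - \pi_L^* = \epsilon(c + 1 - p_L^* - \epsilon) > 0$ for $\epsilon$ small. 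Both bounds are simultaneously feasible precisely because $\Delta \geq 1$, which is the distinguishing feature of this regime.

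Finally, with $I_F^* = 0$ and $n_L^* = 1$ fixed, SP$_L$'s payoff reduces to $\pi_L = (p_L^* - c) - \gamma I_L^2$, which is strictly decreasing in $I_L$, so the constraint $I_L \geq \delta$ binds and $I_L^* = \delta$. A closing verification that no SP can profit by pushing $x_0$ strictly inside $(0,1)$ follows from the continuity of the payoffs at $x_0 = 1$ together with the bounds derived above, analogous to Step~3 of the proof of Theorem~\ref{thm: negative-corner}. The main subtlety I anticipate is orchestrating the two opposing deviation arguments on $p_L^*$: the upper bound is forced by SP$_F$'s undercutting incentive while the lower bound is forced by SP$_L$'s own price-raising incentive, and the two collapse the equilibrium price to the interval $[c+1, c+\Delta]$ exactly when $\Delta \geq 1$.
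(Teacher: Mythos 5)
Your proposal is correct and follows essentially the same route as the paper: establish $n_L^*=1,\,n_F^*=0$, force $I_F^*=0$ from $\pi_F^*\geq 0$, pin down $p_F^*=p_L^*-\Delta$ and the bounds $c+1\leq p_L^*\leq c+\Delta$ by the same pair of one-sided price deviations, and get $I_L^*=\delta$ from monotonicity of $\pi_L$ in $I_L$. The only (cosmetic) difference is that you derive $I_F^*=0$ and $t_F^*=1$ before the price relation rather than importing them from the paper's Lemma on the $x_0\geq 1$ corner, which if anything cleans up the ordering.
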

\begin{proof}
\noindent{\bf Step 1.} We prove that $c+1\leq p_{L}^{*}\leq c+\Delta.$ 

From {\bf Steps~1, 3} in Lemma~\ref{lem: no positive-corner SPNE}, $I_{F}^{*}=0$, $t_{F}^{*}=1$ and $p_{F}^{*}=p_{L}^{*}-\Delta.$

Suppose $p_{L}^{*}>c+\Delta$, $p_{F}^{*}=p_{L}^{*}-\Delta>c.$ Now consider a unilateral deviation of SP$_{F}$, $p_{F}=p_{F}^{*}-\epsilon$, where $0 < \epsilon < 1$,  with the rest keeping original, then
\begin{align*}
&n_{L}=x_{0}=t_{F}^{*}+\Delta+p_{F}-p_{L}^*=1-\epsilon\\
&n_{F}=1-n_{L}=\epsilon.
\end{align*}
Thus,
\begin{align*}
\pi_{F}-\pi_{F}^{*}=\epsilon(p_{F}-c)>0,	
\end{align*}
the last inequality holds because we can choose $0<\epsilon<1$ such that $p_F -\epsilon> c.$ Thus, $p_{L}^{*}>c+\Delta$ can not be a SPNE.

Suppose $p_{L}^{*}<c+1$, consider a unilateral deviation of SP$_{L}$, $p_{L}=p_{L}^{*}+\epsilon$, where $0 < \epsilon < 1$,  with the rest keeping original, then
\begin{align*}
n_{L}=x_{0}=t_{F}^{*}+\Delta+p_{F}-p_{L}^*=1-\epsilon.
\end{align*}
Thus,
\begin{align*}
\pi_{L}-\pi_{L}^{*}=\epsilon(-p_{L}^*+c+1-\epsilon)>0,	
\end{align*}
the last inequality follows because we can choose $0<\epsilon<1$ such that $p_{L}=p_L^* +\epsilon< c+1.$ Thus, $p_{L}^{*}<c+1$ can not be a SPNE.

In addition, we prove that no unilateral deviation is beneficial for both SPs when $c+1\leq p_{L}^{*}\leq c+\Delta.$ Consider another unilateral deviation of SP$_{F}$, $p_{F}'=p_{F}^{*}-\epsilon$, where $0 < \epsilon < 1$,  with the rest keeping original, then
\begin{align*}
&n_{L}'=x_{0}'=t_{F}^{*}+\Delta+p_{F}'-p_{L}^*=1-\epsilon\\
&n_{F}'=1-n_{L}'=\epsilon.
\end{align*}
Since $p_{F}^{*}=p_{L}^{*}-\Delta$, then $p_{F}^{*}\in[c-\Delta+1,c]$, then
\begin{align*}
\pi_{F}'-\pi_{F}^{*}=n_{F}'(p_{F}'-c)<0,
\end{align*}
which implies no unilateral deviation is beneficial for SP$_{F}$.

Consider another unilateral deviation of SP$_{L}$, $p_{L}'=p_{L}^{*}+\epsilon$, where $0 < \epsilon < 1$,  with the rest keeping original, then
\begin{align*}
n_{L}'=x_{0}'=t_{F}^{*}+\Delta+p_{F}^*-p_{L}'=1-\epsilon.
\end{align*}
Thus, note that $c+1\leq p_{L}^{*}\leq c+\Delta,$ 
\begin{align*}
&\pi_{L}'-\pi_{L}^{*}=n_{L}'(p_{L}'-c)-p_{L}^{*}+c\\
=&\epsilon(-p_{L}^{*}+c+1-\epsilon)\leq -\epsilon^{2}<0.
\end{align*}
which implies no unilateral deviation is beneficial for SP$_{L}$.

\noindent{\bf Step 2.} Find $I_{F}^{*}=0.$
From Lemma~\ref{lem: no positive-corner SPNE}, $\pi_F^*\geq0$, so $I_{F}^{*}=0$.

\noindent{\bf Step 3.} Find $I_{L}^{*}=\delta.$

Since $p_L^*$ is independent of $I_{L}^{*}$, then from \eqref{equ: BM-L-payoff}, $\pi_{L}=p_{L}^*-c-\gamma I_{L}^{*}$ is a decreasing function of $I_{L}$. Note that $I_{L}\geq\delta$, therefore $I_{L}^{*}=\delta.$ Since $I_{L}^{*}=\delta$ is the unique maximum of $\pi_L$, so no unilateral deviation is beneficial for SP$_{L}$.

\end{proof}

\section{EUs with Outside Option: SPNE Analysis}\label{Appendix: outside option}
Note that $\delta$ is small, so let $\delta<\frac{4}{b}$.

\noindent{\bf Stage 3:}
We consider interior NE strategies, i.e., $0<n_F, n_{L}<1$. Using Definition \ref{definition: new_demand}, \eqref{equ: BM-F-payoff}, \eqref{equ: BM-L-payoff} and \eqref{equ: BM-demand}, note that $v^{L}=v^{F}$, the payoffs of SPs are:
\begin{equation}\label{equ: out-stage 3}
\small
\begin{aligned}
\pi_{F}=&\alpha(t_{L}+k+p_{L}-2p_{F}+bI_{F})(p_{F}-c)-sI_{F}^{2}\\
\pi_{L}=&\alpha(t_{F}+k+p_{F}-2p_{L}+bI_{L}-bI_{F})(p_{L}-c)\\
+&sI_{F}^{2}-\gamma I_{L}^{2}
\end{aligned}
\end{equation}
We characterize the NE of access fees as follows,
\begin{theorem}\label{thm: out-pl pf}
For given $I_{F}$ and $I_{L}$, the  NE strategies of access fees are unique, and are:
\begin{equation}\label{equ: price}
\begin{aligned}
p_{L}^{*}=&\frac{1}{15}+\frac{2c}{3}+\frac{k}{3}+\frac{t_{F}}{5}-\frac{b}{5}I_{F}+\frac{4b}{15}I_{L},\\
p_{F}^{*}=&\frac{1}{15}+\frac{2c}{3}+\frac{k}{3}+\frac{t_{L}}{5}+\frac{b}{15}I_{L}+\frac{b}{5}I_{F}.
\end{aligned}
\end{equation}
if and only if  $I_{L}$ satisfies:
\begin{equation}\label{equ: condition 1}
I_{L}<\frac{4}{b}.
\end{equation}
\end{theorem}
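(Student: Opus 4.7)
The plan is to mirror the structure of the proof of Theorem~\ref{thm: BM-prices}. I would first write down the first-order conditions $\partial\pi_L/\partial p_L=0$ and $\partial\pi_F/\partial p_F=0$ from \eqref{equ: out-stage 3}, obtaining a $2\times 2$ linear system in $(p_L,p_F)$. Each payoff is a quadratic in its own price with $\partial^2\pi_L/\partial p_L^2=\partial^2\pi_F/\partial p_F^2=-4\alpha<0$, so each SP's payoff is strictly concave and any FOC solution is the unique best response when the interior subscription formula applies. Solving the linear system by substitution and using the identity $t_L+t_F=1$ yields the closed-form expressions in \eqref{equ: price}. This handles the uniqueness and the form of the candidate.

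Next I would verify that this candidate indeed corresponds to an interior Stage 4 split, i.e., $0<x_0<1$ where $x_0=t_F+p_F^{*}-p_L^{*}$. Plugging \eqref{equ: price} into $x_0$ and using $t_L=I_F/I_L$, $t_F=(I_L-I_F)/I_L$, after simplification one obtains
\begin{equation*}
x_0=\tfrac{1}{5}\bigl(3t_F+1-b(I_L-2I_F)\bigr).
\end{equation*}
The inequality $x_0>0$ is equivalent (after clearing $I_L>0$) to $I_L(4-bI_L)+I_F(2bI_L-3)>0$, and the inequality $x_0<1$ is symmetric and strictly weaker. Splitting on the sign of $3-2bI_L$, the minimum of the left side over $I_F\in[0,I_L]$ is attained at $I_F=0$ when $I_L\ge 3/(2b)$ and at $I_F=I_L$ otherwise, and in both regions the condition reduces to $I_L<4/b$. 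This delivers the \emph{if and only if} statement: $I_L<4/b$ characterizes exactly when the FOC solution lies in the interior regime in which \eqref{equ: out-stage 3} is the correct payoff expression.

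Finally, as in Part B of the proof of Theorem~\ref{thm: BM-prices}, I would rule out profitable unilateral deviations to corner subgames ($x_0\le 0$ or $x_0\ge 1$) by a continuity argument: since $\pi_L$ and $\pi_F$ are continuous across the demand-formula boundaries $n_L\downarrow 0$ and $n_L\uparrow 1$, the payoffs obtained at the FOC solution (which are global maxima over the interior region by concavity) are at least as large as the limits of any deviation sequence approaching a corner, and hence no corner deviation strictly improves upon $(p_L^{*},p_F^{*})$.

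The main obstacle will be the algebraic bookkeeping in the middle step: carefully establishing that $I_L<4/b$ is simultaneously necessary and sufficient for $x_0\in(0,1)$ over the relevant range of $I_F$ requires a clean case split on $3-2bI_L$ and on whether the extremum of a linear-in-$I_F$ expression is attained at $I_F=0$ or $I_F=I_L$. The FOC solution and SOC verification, by contrast, are straightforward linear-algebra and one-variable concavity computations.
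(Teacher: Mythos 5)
Your proposal follows essentially the same route as the paper: derive the candidate from the first-order conditions, use $\partial^2\pi_L/\partial p_L^2=\partial^2\pi_F/\partial p_F^2=-4\alpha<0$ for uniqueness, and then characterize $I_L<4/b$ by requiring $0<x_0<1$ for all $I_F\in[0,I_L]$, exploiting linearity of $x_0$ in $I_F$ to reduce to endpoint checks (the paper writes $x_0=\Psi(I_F)$ and evaluates $\Psi(0)=\frac{4}{5}-\frac{b}{5}I_L$ and $\Psi(I_L)=\frac{1}{5}+\frac{b}{5}I_L$). One small correction: your claim that $x_0<1$ is ``strictly weaker'' than $x_0>0$ is not accurate --- the two constraints bind at opposite endpoints, with $x_0>0$ forcing $I_L<4/b$ via $\Psi(0)>0$ and $x_0<1$ forcing the \emph{same} bound via $\Psi(I_L)<1$, so the upper-bound inequality must actually be checked rather than dismissed; fortunately it yields the identical condition, so your conclusion stands. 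Your final continuity argument ruling out corner deviations mirrors Part B of Theorem~\ref{thm: BM-prices} and is, if anything, more explicit than what the paper records for this theorem.
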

\begin{proof}
In this case, every  NE by which $0\leq x_0\leq 1$, should satisfy the first order condition. Thus $p_{L}^{*}$ and $p_{F}^{*}$ should be such that 
$$\frac{d\pi_{L}}{dp_{L}}|_{p_{L}^{*}}=0,\, \frac{d\pi_{F}}{dp_{F}}|_{p_{F}^{*}}=0,$$ note that $t_{L}+t_{F}=1$,
then
\begin{align*}
p_{L}^{*}=&\frac{1}{15}+\frac{2c}{3}+\frac{k}{3}+\frac{t_{F}}{5}-\frac{b}{5}I_{F}+\frac{4b}{15}I_{L},\\
p_{F}^{*}=&\frac{1}{15}+\frac{2c}{3}+\frac{k}{3}+\frac{t_{L}}{5}+\frac{b}{15}I_{L}+\frac{b}{5}I_{F}.
\end{align*}

Take the second derivative of $\pi_{L}$ with respect to $p_{L}$,
$$\frac{d^{2}\pi_{L}}{d(p_{L}^{*})^{2}}=\frac{d^{2}\pi_{F}}{d(p_{F}^{*})^{2}}=-4\alpha<0,$$
then $p_{L}^{*}$ and $p_{F}^{*}$ are the unique maximal solutions of $\pi_{L}$ and $\pi_{F}$, respectively.

Thus, $p^*_F$ and $p^*_L$ are the unique interior NE strategies if and only if $0<x_0<1$. Substituting (\ref{equ: price}), $t_{L}=I_{F}/I_{L}$, and $t_{F}=(I_{L}-I_{F})/I_{L}$ into (\ref{equ: BM-indifferent location}) yields:
$$x_{0}=\frac{4}{5}-\frac{b}{5}I_{L}+(\frac{2b}{5}-\frac{3}{5I_{L}})I_{F}\triangleq \Psi(I_{F}).$$
Once $I_L$ is fixed, $\Psi(I_{F})$ would be  a linear function of $I_{F}$. Thus, $0<\Psi(I_F)<1$  for any values of $I_F$ such that $0\leq I_F\leq I_L$, if and only if 
\begin{align*}
&0<\Psi(0)<1\\
&0<\Psi(I_L)<1.
\end{align*}
Thus,
\begin{align*}
\Psi(I_{L})=&\frac{1}{5}+\frac{b}{5}I_{L}\in(0, 1)\\
\Psi(0)=&\frac{4}{5}-\frac{b}{5}I_{L}\in(0, 1)
\end{align*}  
if and only if $0<I_{L}<\frac{4}{b}$. 
\end{proof}

\noindent{\bf Stage 2:} 
 Based on the NE strategies of access fees, we obtain the optimum investment level of the MVNO.
 \begin{definition}
$g(I_{L})=\frac{b}{15}I_{L}+\frac{1}{15}-\frac{c}{3}+\frac{k}{3}$, $f(I_{L})=\frac{1}{5I_{L}}+\frac{b}{5}>0$
\end{definition}

\begin{theorem}\label{thm: out-I_F}
If $\pi_F(I_F;I_L)\geq 0$, and denote 
$$I_{F}^{0}=\frac{-2\alpha f(I_{L})g(I_{L})}{2\alpha f^{2}(I_{L})-s}.$$
Then, the unique optimal investment level of $\text{SP}_{F}$, $I_{F}^{*}$, is:
\begin{equation}\label{equ: I_F^*}
\footnotesize
I_{F}^{*}=\left\{\begin{aligned}
&I_{F}^{0}&\text{if}\quad& I_{L}\in\{s>2\alpha f^{2}(I_{L})+2\alpha f(I_{L})g(I_{L})/I_{L},\\
&\quad&\quad&g(I_{L})\geq0\}\\
&I_{L}&\text{if}\quad& I_{L}\in\{2\alpha f^{2}(I_{L})\leq s\leq2\alpha f^{2}(I_{L})\\
&\quad&\quad&+2\alpha f(I_{L})g(I_{L})/I_{L},g(I_{L})\geq0\}\\
&\quad&\quad&\cup\{2\alpha f^{2}(I_{L})+4\alpha f(I_{L})g(I_{L})/I_{L}\geq s,\\
&\quad&\quad&2\alpha f^{2}(I_{L})>s\}
\end{aligned}\right.
\end{equation}
\end{theorem}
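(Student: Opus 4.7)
The plan is to follow the backward induction recipe already used for the base case (Theorem~\ref{thm: BM-I_F}), but now with access fees from Theorem~\ref{thm: out-pl pf} substituted into the modified payoff. First I would plug the equilibrium prices \eqref{equ: price} into the MVNO payoff $\pi_F$ in \eqref{equ: out-stage 3}, using $t_L = I_F/I_L$, $t_F = 1 - I_F/I_L$ and the abbreviations $f(I_L) = 1/(5I_L) + b/5$, $g(I_L) = bI_L/15 + 1/15 - c/3 + k/3$. A direct computation should yield
\begin{equation*}
p_F^* - c \;=\; g(I_L) + f(I_L)\,I_F,
\end{equation*}
and, after collecting the analogous terms in $p_L^*$, the identity
\begin{equation*}
t_L + k + p_L^* - 2p_F^* + b I_F \;=\; 2\bigl(p_F^* - c\bigr).
\end{equation*}
Substituting these two identities reduces $\pi_F$ to the clean quadratic
\begin{equation*}
\pi_F(I_F; I_L) \;=\; 2\alpha\bigl(g(I_L) + f(I_L)\,I_F\bigr)^2 - s\,I_F^{\,2}
\end{equation*}
in $I_F$, with leading coefficient $2\alpha f^2(I_L) - s$ and unconstrained stationary point exactly $I_F^{0}$.

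With this reduction in hand, Stage 2 becomes the constrained problem $\max_{0 \le I_F \le I_L} \pi_F(I_F; I_L)$ to which Lemma~\ref{lem: quadratic function} applies directly. The first branch of \eqref{equ: I_F^*} corresponds to $s > 2\alpha f^2(I_L)$, making $\pi_F$ concave in $I_F$; since $f > 0$ and $2\alpha f^2 - s < 0$, the stationary point $I_F^{0}$ is nonnegative precisely when $g(I_L) \ge 0$, and the comparison $I_F^{0} \le I_L$ simplifies (by clearing the negative denominator and rearranging) to $s \ge 2\alpha f^2(I_L) + 2\alpha f(I_L) g(I_L)/I_L$, giving the interior maximizer $I_F^{\ast} = I_F^{0}$. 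If instead $2\alpha f^2(I_L) \le s \le 2\alpha f^2(I_L) + 2\alpha f(I_L) g(I_L)/I_L$ while $g \ge 0$, the same concave quadratic has its vertex past $I_L$, so the maximum on $[0, I_L]$ is at $I_F^{\ast} = I_L$. For the remaining sub-case in \eqref{equ: I_F^*}, $2\alpha f^2(I_L) > s$, the quadratic is convex, so by Lemma~\ref{lem: quadratic function}(1) the maximum is at whichever endpoint is farther from $I_F^{0}$; comparing the midpoint $I_L/2$ to $I_F^{0}$ yields $I_F^{\ast} = I_L$ exactly when $2\alpha f^2(I_L) + 4\alpha f(I_L) g(I_L)/I_L \ge s$. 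Uniqueness in each branch is automatic because a non-degenerate quadratic attains its extremum at a unique point of the interval.

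The step I expect to be the main obstacle is the algebraic collapse of the two-term product $\alpha(t_L + k + p_L - 2p_F + bI_F)(p_F - c)$ to the clean form $2\alpha(g + f I_F)^2$: the prices \eqref{equ: price} contain many terms ($c$, $k$, $b I_L$, $b I_F$, $t_L$, $t_F$) and it is only after careful cancellation using $t_L + t_F = 1$ and $t_L = I_F/I_L$ that the demand factor becomes exactly twice $p_F - c$. All the remaining case work is routine bookkeeping on the inequalities that demarcate the branches in \eqref{equ: I_F^*}; in particular, the hypothesis $\pi_F(I_F^{\ast}; I_L) \ge 0$ is what lets us discard corner possibilities with $I_F^{0} < 0$ (which would correspond to $g(I_L) < 0$ in the concave regime) and focus on the branches listed.
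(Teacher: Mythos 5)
Your proposal follows essentially the same route as the paper: substitute the Stage-3 prices into $\pi_F$, reduce it to the quadratic $(2\alpha f^2(I_L)-s)I_F^2+4\alpha f(I_L)g(I_L)I_F+2\alpha g^2(I_L)$ (your identity $\pi_F=2\alpha(g+fI_F)^2-sI_F^2$ is a cleaner packaging of the same computation, exploiting the first-order condition $t_L+k+p_L^*-2p_F^*+bI_F=2(p_F^*-c)$), and then apply Lemma~\ref{lem: quadratic function} with a case split on the sign of $2\alpha f^2(I_L)-s$; all of your branch conditions match the paper's after clearing denominators. The one point where your reasoning as written does not hold is the exclusion of the $I_F^*=0$ corner (the $g(I_L)<0$ cases): the hypothesis $\pi_F\geq 0$ cannot discard it, since $\pi_F(0;I_L)=2\alpha g^2(I_L)\geq 0$ regardless of the sign of $g$. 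The paper instead argues that $I_F^*=0$ forces $p_F^*-c=g(I_L)<0$, which cannot be an equilibrium access fee, and that is the argument you need to close this branch.
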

\begin{proof}
First, we give the following the lemma
\begin{lemma}
The optimum investment level $I_{F}^{*}$ is obtained by
\begin{equation}\label{equ: stage 2}
\footnotesize
\begin{aligned}
\max_{I_{F}}\quad&\pi_{F}=(2\alpha f^{2}(I_{L})-s)I_{F}^{2}+4\alpha f(I_{L})g(I_{L})I_{F}+2\alpha g^{2}(I_{L})\\
s.t\quad &0\leq I_{F}\leq I_{L}.
\end{aligned}
\end{equation}
\end{lemma}

\begin{proof}
Substituting (\ref{equ: price}) into $\pi_{F}$ in (\ref{equ: out-stage 3}), we get the objective function. The constraints come from the model assumptions directly.
\end{proof}
We consider different cases. First, we consider the case that $2\alpha f^{2}(I_{L})-s=0$ (Step (i)). Then, we consider the case that $2\alpha f^{2}(I_{L})-s\neq0$ and $\pi_F$ is a quadratic function of $I_F$ (Step (ii)). In Step (iii), we prove that $I^*_F\neq 0$. Combining the steps yields the result of the theorem.

\noindent{\textbf{Step (\romannumeral1)}:} If $2\alpha f^{2}(I_{L})-s=0$, $\pi_{F}$ is linear function of $I_F$, i.e.,
$\pi_{F}=4\alpha f(I_{L})g(I_{L})I_{F}+2\alpha g^{2}(I_{L})$
Thus,
\begin{align*}
\left\{
\begin{aligned}
&I_{F}^{*}=0& \,\,&\text{if}\quad g(I_{L})<0\\
&I_{F}^{*}=I_{L}& \,\,&\text{if}\quad g(I_{L})\geq0
\end{aligned}
\right..
\end{align*}

\noindent{\textbf{Step (\romannumeral2)}:} Now, consider the case that $2\alpha f^{2}(I_{L})-s\neq0$ and $\pi_F$ is a quadratic function of $I_F$. We characterize the optimum answer in two cases: (a) if $2\alpha f^{2}(I_{L})-s>0$, and (b) if $2\alpha f^{2}(I_{L})-s<0$, $\pi_{F}(I_{F}; I_{L})$.

For the case that $\pi_{F}$ is a quadratic function, we use   the solution to the first order condition ($I_{F}^{0}$),
$$\frac{d\pi_{F}}{dI_{F}}|_{I_{F}^{0}}=0\Rightarrow I_{F}^{0}=\frac{-2\alpha f(I_{L})g(I_{L})}{2\alpha f^{2}(I_{L})-s}.$$

\noindent{\textbf{Case (ii-a)}}: If $2\alpha f^{2}(I_{L})-s>0$, then $\pi_{F}$ is convex function. From Lemma~\ref{lem: quadratic function}~(1),
\begin{align*}
\footnotesize
\left\{
\begin{aligned}
&I_{F}^{0}-\frac{I_{L}}{2}\leq0 &\text{if}\quad 2\alpha I_{L}f^{2}(I_{L})+4\alpha f(I_{L})g(I_{L})-I_{L}s\geq0\\
&I_{F}^{0}-\frac{I_{L}}{2}>0 &\text{if}\quad 2\alpha I_{L}f^{2}(I_{L})+4\alpha f(I_{L})g(I_{L})-I_{L}s<0
\end{aligned}
\right.,
\end{align*}
thus
\begin{align*}
\left\{
\begin{aligned}
&I_{F}^{*}=I_{L}& \,\,&\text{if}\, 2\alpha I_{L}f^{2}(I_{L})+4\alpha f(I_{L})g(I_{L})-I_{L}s\geq0\\
&I_{F}^{*}=0& \,\,&\text{if}\, 2\alpha I_{L}f^{2}(I_{L})+4\alpha f(I_{L})g(I_{L})-I_{L}s<0
\end{aligned}
\right..
\end{align*}

\noindent{\textbf{Case (ii-b}):} If $2\alpha f^{2}(I_{L})-s<0$, then $\pi_{F}$ is a concave function. Thus, from Lemma~\ref{lem: quadratic function}~(2),
\begin{align*}
\footnotesize
\left\{
\begin{aligned}
&I_{F}^{0}-0<0& \,\,&\text{if}\quad g(I_{L})<0\\
&0\leq I_{F}^{0}< I_{L}& \,\,&\text{if}\quad 2\alpha I_{L}f^{2}(I_{L})+2\alpha f(I_{L})g(I_{L})-I_{L}s<0,\\
&\quad&\,\,&g(I_{L})\geq0\\
&I_{F}^{0}\geq I_{L}& \,\,&\text{if}\quad
2\alpha I_{L}f^{2}(I_{L})+2\alpha f(I_{L})g(I_{L})-I_{L}s\geq0,\\
&\quad&\,\,&g(I_{L})\geq0
\end{aligned}
\right.,
\end{align*}
Thus,
\begin{align*}
\footnotesize
\left\{
\begin{aligned}
&I_{F}^{*}=0& \,\,&\text{if}\quad g(I_{L})<0\\
&I_{F}^{*}=I_{F}^{0}& \,\,&\text{if}\quad2\alpha I_{L}f^{2}(I_{L})+2\alpha f(I_{L})g(I_{L})-I_{L}s<0,\\
&\quad&\,\,& g(I_{L})\geq0\\
&I_{F}^{*}=I_{L}& \,\,&\text{if}\quad
2\alpha I_{L}f^{2}(I_{L})+2\alpha f(I_{L})g(I_{L})-I_{L}s\geq0,\\
&\quad&\,\,& g(I_{L})\geq0
\end{aligned}
\right..
\end{align*}

\noindent{\textbf{Step (\romannumeral3)}:} We now prove $I_{F}^{*}\neq0$. From Case (ii-a), if $I_{F}^{*}=0$, then 
$$2\alpha I_{L}f^{2}(I_{L})+4\alpha f(I_{L})g(I_{L})-I_{L}s<0,$$ i.e., 
$$s>2\alpha f^{2}(I_{L})+4\alpha f(I_{L})g(I_{L})/I_{L},$$ which implies $g(I_{L})<0$ since $2\alpha f^{2}(I_{L})-s>0$. Thus from Step (i), and Cases (ii-a) and (ii-b), if $I_{F}^{*}=0$,  then $g(I_{L})<0$.

Since $t_{L}^{*}=0$ and $t_{F}^{*}=1$, when $I_{F}^{*}=0$, then 
$$p_{F}^{*}-c=\frac{1}{15}-\frac{c}{3}+\frac{k}{3}+\frac{b}{15}I_{L}=g(I_{L})<0.$$ For an equilibrium solution $p_{F}^{*}$, $p_{F}^{*}\geq c$, otherwise 
$$\pi_{L}^{*}=\tilde{n}^{*}_{F}(p_{F}^{*}-c)-s(I_{F}^{*})^{2}<0.$$ Hence $I_{F}^{*}=0$ can not be an equilibrium solution for $\text{SP}_{F}$.

Combining Steps (\romannumeral1), (\romannumeral2),  and (\romannumeral3), we obtain the desired results.
\end{proof}

\noindent{\bf Stage 1:} 
Finally, we characterize the optimum investment level of SP$_L$, $I_L^*$.
\begin{theorem}\label{thm: out-generalization_stage1} The unique optimum investment level of $\text{SP}_{L}$, $I_{L}^{*}$, a solution of the following optimization problem:
\begin{equation}\label{equ: stage 1}
\small
\begin{aligned}
\max_{I_{L}}\quad&\pi_{L}(I_{L})=2\alpha(\frac{b}{5}I_{L}+\frac{1}{5}+g(I_{L})-f(I_{L})I_{F}^{*})^{2}\\
&+s(I_{F}^{*})^{2}-\gamma I_{L}^{2}\\
s.t\quad&\delta\leq I_{L}\\
&I_{L}<4/b.
\end{aligned}
\end{equation}
\end{theorem}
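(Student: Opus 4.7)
The plan is to complete the backward induction by substituting the Stage~3 equilibrium prices from Theorem~\ref{thm: out-pl pf} and the Stage~2 best response $I_F^{*}(I_L)$ from Theorem~\ref{thm: out-I_F} into SP$_L$'s payoff in \eqref{equ: out-stage 3}, yielding an optimization in $I_L$ alone. The heart of the argument is a clean algebraic simplification that turns the awkward product $\tilde{n}_L^{*}(p_L^{*}-c)$ into a perfect square in terms of the auxiliary functions $f(I_L), g(I_L)$.

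First I would exploit the Stage~3 first-order condition used in the proof of Theorem~\ref{thm: out-pl pf}. From $d\pi_L/dp_L|_{p_L^{*}} = 0$, one obtains $\alpha(t_F + k + p_F^{*} - 2p_L^{*} + bI_L - bI_F) = 2\alpha(p_L^{*}-c)$, i.e., $\tilde{n}_L^{*} = 2\alpha(p_L^{*}-c)$. Substituting this identity into \eqref{equ: out-stage 3} collapses the revenue term to a square, giving $\pi_L = 2\alpha(p_L^{*}-c)^{2} + sI_F^{2} - \gamma I_L^{2}$. Next I would compute $p_L^{*}-c$ explicitly from \eqref{equ: price} using $t_F = (I_L - I_F)/I_L$, group the $I_L$-only terms into $g(I_L) = \tfrac{b}{15}I_L + \tfrac{1}{15} - \tfrac{c}{3} + \tfrac{k}{3}$, and group the $I_F$-coefficient into $f(I_L) = \tfrac{1}{5I_L} + \tfrac{b}{5}$. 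A short computation then gives
\[
p_L^{*} - c \;=\; g(I_L) + \tfrac{1}{5} + \tfrac{b}{5}I_L - f(I_L)\,I_F,
\]
so that $\pi_L = 2\alpha\bigl(\tfrac{b}{5}I_L + \tfrac{1}{5} + g(I_L) - f(I_L) I_F\bigr)^{2} + sI_F^{2} - \gamma I_L^{2}$, which is exactly the objective in \eqref{equ: stage 1}.

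Finally I would handle the constraints. The lower bound $\delta \leq I_L$ is simply the modeling assumption. The upper bound $I_L < 4/b$ is inherited from Theorem~\ref{thm: out-pl pf}, which shows that $(p_L^{*}, p_F^{*})$ constitutes an interior Stage~3 NE (with $0 < x_0 < 1$ for all feasible $I_F$) if and only if $I_L < 4/b$; since we are restricting attention to interior SPNE, this upper bound is forced. With $I_F = I_F^{*}(I_L)$ given by Theorem~\ref{thm: out-I_F}, the problem reduces to a one-dimensional maximization over $I_L \in [\delta, 4/b)$, establishing \eqref{equ: stage 1}. No unilateral deviation by SP$_L$ to a different $I_L$ is profitable because $I_L^{*}$ by definition maximizes $\pi_L$ over the feasible set, and for each such $I_L$ the downstream choices $(I_F^{*}, p_L^{*}, p_F^{*})$ are already best responses by Theorems~\ref{thm: out-pl pf} and \ref{thm: out-I_F}.

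The only mildly delicate point is the simplification of $p_L^{*} - c$ into the form $g(I_L) + \tfrac{1}{5} + \tfrac{b}{5}I_L - f(I_L) I_F$; everything else is routine substitution. The uniqueness claim in the statement should be read as uniqueness of the induced Stage~1 objective and of the best response at each interior configuration, not as uniqueness of the maximizer itself, which parallels Remark~\ref{remark2}; so I would not need to verify strict concavity of the resulting one-dimensional program.
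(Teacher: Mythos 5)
Your proposal is correct and follows essentially the same route as the paper's (very terse) proof: substitute the Stage~3 equilibrium prices from Theorem~\ref{thm: out-pl pf} and the Stage~2 best response into $\pi_L$ from \eqref{equ: out-stage 3}, with the constraints $\delta\leq I_L$ and $I_L<4/b$ coming from the model assumption and the interior-price condition respectively. Your use of the first-order identity $\tilde{n}_L^{*}=2\alpha(p_L^{*}-c)$ to produce the $2\alpha(\cdot)^2$ form, and the explicit check that $p_L^{*}-c=g(I_L)+\tfrac{1}{5}+\tfrac{b}{5}I_L-f(I_L)I_F$, are just the worked-out details of the substitution the paper leaves implicit.
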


\begin{proof}
Substituting (\ref{equ: price}) into $\pi_{L}$ in (\ref{equ: out-stage 3}), we get the objective function. The constraints come from the model assumptions directly. 
\end{proof}

 We define functions $f(I_L)$, $g(I_L)$, $\pi_{L}(I_{F})$ and sets $\mathbb{L}_{1}$, $\mathbb{L}_{2}$ as follows:
\begin{equation*}
\footnotesize
\begin{aligned}
&g(I_{L})=\frac{b}{15}I_{L}+\frac{1}{15}-\frac{c}{3}+\frac{k}{3}, \,\, f(I_{L})=\frac{1}{5I_{L}}+\frac{b}{5},\\
&\theta(y)=2\alpha \big(\frac{b}{5}I_{L}+\frac{1}{5}+g(I_{L})-f(I_{L})y\big)^{2}+sy^{2}-\gamma I_{L}^{2},	
\end{aligned}
\end{equation*}
\begin{equation*}
\footnotesize
\begin{aligned}
\mathbb{L}_{1}=&\{s>2\alpha f^{2}(I_{L})+2\alpha f(I_{L})g(I_{L})/I_{L},\, g(I_{L})\geq0,\\
&\delta\leq I_{L}, I_L<4/b\},	
\end{aligned}
\end{equation*}
\begin{equation*}
\footnotesize
\begin{aligned}
\mathbb{L}_{2}=&\{0\leq I_{L}, I_L<4/b\}\cap\Big(\{g(I_{L})\geq0,\\
&\,2\alpha f^{2}(I_{L})\leq s\leq2\alpha f^{2}(I_{L})+2\alpha f(I_{L})g(I_{L})/I_{L}\}\\
\cup&\{2\alpha f^{2}(I_{L})+4\alpha f(I_{L})g(I_{L})/I_{L}\geq s,\,2\alpha f^{2}(I_{L})>s\}\Big).	
\end{aligned}
\end{equation*}
Collecting results in Stages 1$\sim$4, we have 
\begin{corollary}
The interior SPNE strategies are:
\begin{itemize}
  \item [(1)]  $I_{L}^{*}$ is characterized in
 \begin{equation*}
\footnotesize
\begin{aligned}
I_{L}^{*}=\argmax_{I_{L}}\Big(\max_{I_{L}\in\mathbb{L}_{1}}\theta(\frac{-2\alpha f(I_{L})g(I_{L})}{2\alpha f^{2}(I_{L})-s}),\max_{I_{L}\in\mathbb{L}_{2}}\theta(I_{L})\Big)
\end{aligned}
\end{equation*}

  \item [(2)]  $I_{F}^{*}$ is characterized in
\begin{equation*}
 \footnotesize
 \begin{aligned}
  I_{F}^{*}=\left\{\begin{aligned}
  &\frac{-2\alpha f(I_{L})g(I_{L})}{2\alpha f^{2}(I_{L})-s}\,\,& \text{if}\,\, I_{L}\in\mathbb{L}_{1}\\
  &I_{L}\,\, & \text{if}\,\, I_{L}\in\mathbb{L}_{2}\end{aligned}\right.
\end{aligned}
\end{equation*}
 \item [(3)]
$p_{L}^{*}=\frac{1}{15}+\frac{2c}{3}+\frac{k}{3}+\frac{I_{L}^{*}-I_{F}^{*}}{5I_{L}^{*}}-\frac{b}{5}I_{F}^{*}+\frac{4b}{15}I_{L}^{*}$,
$p_{F}^{*}=\frac{1}{15}+\frac{2c}{3}+\frac{k}{3}+\frac{I_{F}^{*}}{5I_{L}^{*}}+\frac{b}{15}I_{L}^{*}+\frac{b}{5}I_{F}^{*}$.

\item [(4)] $\tilde{n}_{L}^{*}=\frac{I_{L}^{*}-I_{F}^{*}}{I_{L}^{*}}+p_{F}^{*}-2p_{L}^{*}+k+bI_{L}^{*}-bI_{F}^{*}$,
  $\tilde{n}_{F}^{*}=\frac{I_{F}^{*}}{I_{L}^{*}}+p_{L}^{*}-2p_{F}^{*}+k+bI_{F}^{*}$
  \end{itemize}
\end{corollary}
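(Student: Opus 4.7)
The plan is to obtain the corollary by straightforward assembly of the four preceding stage-wise theorems via backward induction, with the main work being the careful bookkeeping of the case split in Stage 2. First I would justify focusing on interior SPNE ($0 < n_L^*, n_F^* < 1$), as the statement explicitly restricts attention to these; this lets me work with the linear demand expressions $n_L = x_0$, $n_F = 1-x_0$ in \eqref{equ: BM-demand} together with the additional outside/exclusive demand terms $\varphi_L, \varphi_F$ from Definition~\ref{definition: new_demand}.

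Next, I would invoke Theorem~\ref{thm: out-pl pf} to pin down $(p_L^*, p_F^*)$ as a function of $(I_L, I_F)$ via the first-order conditions applied to \eqref{equ: out-stage 3}; this directly gives part (3) of the corollary after substituting $t_L = I_F/I_L$ and $t_F = (I_L - I_F)/I_L$. Once the prices are fixed, I would substitute them into $\pi_F$ to reduce Stage 2 to the univariate quadratic maximization \eqref{equ: stage 2}, and then apply Theorem~\ref{thm: out-I_F} to characterize $I_F^*$. This is the step that produces the bifurcation into an interior maximizer $I_F^0 = -2\alpha f(I_L) g(I_L)/(2\alpha f^2(I_L) - s)$ versus the boundary solution $I_F^* = I_L$; the two cases in part (2) of the corollary will correspond exactly to the sets $\mathbb{L}_1$ and $\mathbb{L}_2$ defined in the excerpt.

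For Stage 1, I would feed the piecewise $I_F^*(I_L)$ back into $\pi_L$, obtaining the two families of objective functions $\theta\bigl(I_F^0(I_L)\bigr)$ (on $\mathbb{L}_1$) and $\theta(I_L)$ (on $\mathbb{L}_2$), then take $I_L^*$ to be the overall maximizer across the two families via Theorem~\ref{thm: out-generalization_stage1}. This yields part (1). Finally, part (4) on the equilibrium subscription shares follows by substituting the closed-form $p_L^*, p_F^*$ into $\tilde n_L = \alpha(x_0 + \varphi_L(p_L, I_L))$ and $\tilde n_F = \alpha(1 - x_0 + \varphi_F(p_F, I_F))$ with $x_0$ from \eqref{equ: BM-indifferent location}, and dividing out the overall $\alpha$ scale.

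The main obstacle I anticipate is ensuring that the regions $\mathbb{L}_1$ and $\mathbb{L}_2$ are assembled consistently from the three sub-cases in the proof of Theorem~\ref{thm: out-I_F} (the linear case $2\alpha f^2(I_L) = s$, the convex case, and the concave case), while simultaneously respecting the Stage 3 feasibility constraint $I_L < 4/b$ from Theorem~\ref{thm: out-pl pf} and the technical lower bound $\delta \le I_L$. A secondary subtlety is verifying that the $I_F^* = 0$ branch can be safely excluded in the interior-SPNE regime, which is handled in Step (iii) of the proof of Theorem~\ref{thm: out-I_F} via the sign of $g(I_L)$ and the nonnegativity of $\pi_F^*$; I would cite that step rather than redoing it. With these pieces in place, the corollary is simply the union of the four stage-wise characterizations, and the proof reduces to a short referencing argument rather than fresh computation.
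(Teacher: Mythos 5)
Your proposal matches the paper's own argument: the paper proves the corollary exactly by backward induction through Theorems~\ref{thm: out-pl pf}, \ref{thm: out-I_F}, and \ref{thm: out-generalization_stage1} and then simply ``collects results in Stages 1$\sim$4,'' including the same handling of the $I_F^*=0$ exclusion and the $I_L<4/b$ feasibility constraint that you flag. No substantive difference in route or content.
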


\section{Proof of Corollary~1}\label{Appendix: Corollary}
\noindent{\bf Stage 4:}
Similar with Definition~\ref{def: BM-indifferent location},
$u_{F}(x_{0})=v-t(2\pi-x_{0})-p_{F}=v-tx_{0}-p_{L}=u_{L}(x_{0})$,
thus,
\begin{align}\label{equ: variant x_0}
x_0=\pi+\frac{p_F-p_L}{2t}.
\end{align}
Since EUs are distributed uniformly along $[0,2\pi]$,
the fraction of EUs with each SP  is:
\begin{equation}\label{equ: BM-variant-demand}
\begin{aligned}
&n_{L}=\left\{\begin{aligned}
&0,&\,\text{if}\quad&x_{0}\leq0\\
&x_{0},&\,\text{if}\quad&0<x_{0}<2\pi\\
&2\pi,&\,\text{if}\quad&x_{0}\geq2\pi\\
\end{aligned}\right.,\, n_{F}=2\pi-n_{L},
\end{aligned}
\end{equation}
where $x_{0}$ is defined in \eqref{equ: variant x_0} and $n_F = 2\pi-n_L$.

Only ``interior'' strategies may be SPNE, as:

\begin{theorem}\label{thm: Un-variant-3sp}
In the SPNE it must be that $0<x_{0}<2\pi.$
\end{theorem}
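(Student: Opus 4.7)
The plan is to adapt the two-step contradiction proof of Theorem~\ref{thm: Un-conclusion-3sp} to the modified setting, using the new indifferent-location expression $x_0 = \pi + (p_F - p_L)/(2t)$ from~\eqref{equ: variant x_0} and the $[0, 2\pi]$ support in~\eqref{equ: BM-variant-demand}. First I will record the obvious ``fallback'' inequalities: any SPNE must satisfy $\pi_F^* \geq 0$ (SP$_F$ can always secure zero by choosing $(I_F, p_F) = (0, c)$); if $n_F^* > 0$ then $p_F^* \geq c$; and SP$_L$ can always secure $-\gamma \delta^2$ by setting $I_L = \delta$ and $p_L = c$.

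Step 1 handles the case $x_0^* \geq 2\pi$, so $n_L^* = 2\pi$ and $n_F^* = 0$. Then $\pi_F^* = -s I_F^{*2} \geq 0$ forces $I_F^* = 0$ and $\pi_F^* = 0$. The boundary condition together with~\eqref{equ: variant x_0} yields $p_F^* \geq p_L^* + 2t\pi$. Using the fallback for SP$_L$ I deduce $p_L^* \geq c$; if $p_L^* > c$ then $p_F^* > c$, and SP$_F$'s deviation to $p_F = p_L^* + 2t\pi - 2t\epsilon$ (for small $\epsilon > 0$) brings $x_0$ strictly below $2\pi$, so $n_F > 0$ and $\pi_F > 0$, contradicting $(I_F^*, p_F^*)$ being a best response. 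Therefore $p_L^* = c$. But then SP$_L$'s unilateral deviation to $p_L' = p_F^* - 2t\pi + 2t\epsilon > c$ for $0 < \epsilon < \pi$ yields $n_L' = 2\pi - \epsilon$ and $\pi_L' - \pi_L^* = (2\pi-\epsilon)(p_L' - c) > 0$, the desired contradiction.

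Step 2 handles $x_0^* \leq 0$, so $n_L^* = 0$, $n_F^* = 2\pi$, and by the fallback $p_F^* \geq c$. Now~\eqref{equ: variant x_0} gives $p_L^* \geq p_F^* + 2t\pi$. The unilateral deviation $p_L' = p_F^* + 2t\pi - 2t\epsilon$ for small $0 < \epsilon < \pi$ produces $x_0' = \epsilon > 0$, hence $n_L' = \epsilon$, and since $\pi_L^* = s I_F^{*2} - \gamma I_L^{*2}$ does not depend on $p_L$, we compute $\pi_L' - \pi_L^* = \epsilon(p_F^* + 2t\pi - 2t\epsilon - c)$. If $p_F^* > c$ this is positive for small enough $\epsilon$; if $p_F^* = c$, the nonnegativity of $\pi_F^*$ forces $I_F^* = 0$, and the difference becomes $\epsilon(2t\pi - 2t\epsilon) > 0$. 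Either case yields a profitable deviation, contradicting the SPNE.

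The main obstacle is bookkeeping the constants $2\pi$ and $2t\pi$ so that each proposed deviation (i) keeps $x_0$ strictly inside the interior $(0, 2\pi)$ where $n_L$ takes the linear form, and (ii) strictly improves the deviator's payoff after absorbing the change in the other price term. Aside from those constant adjustments, the entire argument is a direct translation of the base-case proof, and the continuity arguments that justified ignoring deviations across boundaries in Theorem~\ref{thm: Un-conclusion-3sp} apply verbatim here.
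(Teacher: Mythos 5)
Your proposal is correct and follows essentially the same route as the paper's proof: the same fallback lemma ($\pi_F^*\geq 0$, $p_F^*\geq c$ when $n_F^*>0$, and SP$_L$'s guarantee of $-\gamma\delta^2$), the same two-step contradiction for $x_0^*\geq 2\pi$ and $x_0^*\leq 0$, and the same sequence of unilateral price deviations, differing only in the harmless reparametrization of the deviation size (your $2t\epsilon$ versus the paper's $\epsilon<\min(1,t)$) and a redundant but valid case split on $p_F^*=c$ in Step 2.
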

\begin{proof}
Let $(p_{L}^{*},p_{F}^{*},I_{L}^{*},I_{F}^{*})$ be a corner SPNE strategy. Thus, 1) $x_{0} \geq2\pi$, or 2)  $x_{0}\leq0$. We arrive at a contradiction for  1)  {\bf Step 1}  and 2) in {\bf Step 2}
respectively.
 \begin{lemma}\label{lem_variant1}
 $\pi_{F}^{*}\geq0$. If $n_F^* > 0,$  $p_{F}^{*} \geq c.$
  \end{lemma}
\begin{proof}
Let $\pi_{F}^*<0$. Consider a unilateral deviation in which $I_F = 0, p_F \geq c.$ From \eqref{equ: BM-F-payoff}, $\pi_F \geq 0$, leading to a contradiction. Now, let $n_F^* > 0$ and  $p_{F}^{*} < c$. Thus, $\pi_{F}^*<0$ which is a contradiction.
\end{proof}

\noindent{\bf Step 1}. Let $x_{0}^*\geq2\pi$. Clearly,  $n_{F}^*=0$ and $n_{L}^*=2\pi$.
From \eqref{equ: BM-F-payoff}, $\pi_{F}^{*}=-sI_{F}^{*2}.$
 From Lemma~\ref{lem_variant1}, $I_F^* = 0.$ Thus, $\pi_F^*=0.$
From \eqref{equ: variant x_0}, $2\pi \leq x_{0}^*=\pi+\frac{p_F^*-p_L^*}{2t}$. Thus,  $p_{F}^*\geq p_{L}^*+2\pi t$.

From \eqref{equ: BM-L-payoff}, $ \pi_{L}^{*}=2\pi(p_{L}^{*}-c)-\gamma I_{L}^{*2}.$
If $p_{L}^{*}<c$, then $\pi_{L}^*<-\gamma \delta^{2} < 0$ since $I_{L}^{*}\geq\delta$. Consider a unilateral deviation by which  $I_{L}=\delta, p_{L}=c$, then
$\pi_{L} = -\gamma\delta^{2}$, which is beneficial for SP$_L$.   Thus, $p_{L}^{*}\geq c$.

Now, let $p_L^* > c.$  Thus, $p_{F}^{*} \geq p_{L}^{*}+2\pi t > c+2\pi t>c$. Recall that $x_0^* = \pi+\frac{p_F^*-p_L^*}{2t}.$
 Consider a unilateral deviation by which $p_F = p_L^*+2\pi t-\epsilon$. Now, by \eqref{equ: variant x_0}, $x_0 < 2\pi$, and hence $n_F > 0.$  Now, from \eqref{equ: BM-F-payoff}, $\pi_F > 0 = \pi_F^*$. Thus, $(I_F^*, p_F^*)$ is not  SP$_F$'s best response  to SP$_L$'s choices $(I_L^*, p_L^*)$,    which is a contradiction. Hence, $p_L^* = c.$

Now consider another unilateral deviation of SP$_{L}$, $p_{L}'=p_{F}^{*}-2\pi t+\epsilon$, where $0 < \epsilon < \min(1,t)$,  with all the rest the same. Since $p_L^* \leq p_F^*-t$, $p_L' > p_L^* = c.$

\begin{align*}
&n_{L}'=x_{0}'=\pi+\frac{p_F^*-p_L'}{2t}=2\pi-\frac{\epsilon}{2t}.
\end{align*}

Then
\[\pi_{L}'-\pi_{L}^{*}=n_{L}'(p_{L}'-c) - (p_{L}^{*}-c) =
(2\pi-\frac{\epsilon}{2t})(p_L' - c)>0.\]
The last inequality follows because $p_L' > c$ and $\epsilon < \min(1,t).$ Thus, we again arrive at a contradiction.

\noindent{\bf Step 2}. Let $x_0^* \leq 0.$ Clearly, $n_{F}^*=2\pi, n_{L}^*=0$. Since $n_F^* > 0 $, by Lemma~\ref{lem_variant1},  $p_{F}^{*}\geq c$.
From \eqref{equ: BM-indifferent location}, $x_{0}^*=\pi+\frac{p_F^*-p_L^*}{2t}\leq0$. Thus,
  $p_L^* \geq p_F^* + 2\pi t.$ Now, from  \eqref{equ: BM-L-payoff},
\begin{align} \label{fallback}
 &\pi_{L}^*=sI_{F}^{*2}-\gamma I_{L}^{*2}.
 \end{align}
 Consider a unilateral deviation by SP$_{L}$, by which  $p_{L}'=2\pi t + p_F^* -\epsilon$, $0 < \epsilon < \min(1,t)$. Then 
\begin{align*}
&n_{L}'=x_{0}'=\pi+\frac{p_F^*-p_L'}{2t}=\frac{\epsilon}{2t}>0
\end{align*}
Therefore, by \eqref{fallback},
\begin{align*}
\pi_{L}'-\pi_{L}^{*}=n_{L}'(p_{L}'-c)=\frac{\epsilon}{2t}(p_{F}^{*}-\epsilon+2\pi t-c)
\end{align*}
Since $p_{F}^*\geq c$, and $\epsilon<\min(1,t)$. Then,  $\pi_{L}'-\pi_{L}^{*}>0$. We again arrive at a contradiction. \qed

By Theorem~\ref{thm: Un-variant-3sp} proved above henceforth we only consider interior SPNE in which $0 < x_0^* < 2\pi.$
\end{proof}

\noindent{{\bf Stage 3:}}
SP$_L$ and SP$_F$ determine their access fees for EUs, $p_L$ and $p_F$, respectively, to maximize their payoffs.

\begin{lemma}\label{lem: BM-variant-3-payoffs}
The payoffs of SPs are:

\begin{equation}\label{equ: BM-variant-payoffs}
\begin{aligned}
\pi_{L}=&\frac{1}{2t}(2\pi t+p_{F}-p_{L})(p_{L}-c)+sI_{F}^{2}-\gamma I_{L}^{2}\\
\pi_{F}=&\frac{1}{2t}(2\pi t+p_{L}-p_{F})(p_{F}-c)-sI_{F}^{2}
\end{aligned}
\end{equation}
\end{lemma}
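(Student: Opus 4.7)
The plan is to mimic the proof of Lemma~\ref{lem: BM-stage-3-payoffs} from the base case, which is a direct algebraic substitution. First, I would invoke Theorem~\ref{thm: Un-variant-3sp} to restrict attention to the interior regime $0 < x_0 < 2\pi$; this is legitimate because that theorem has already ruled out corner SPNE in the 2-player variant game, so for the purpose of SPNE analysis in Stages~1--3 we lose nothing by working only in the middle branch of \eqref{equ: BM-variant-demand}. On that branch, $n_L = x_0$ and $n_F = 2\pi - x_0$.

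Next, I would substitute the closed form for $x_0$ from \eqref{equ: variant x_0}, rewriting
\[
n_L \;=\; \pi + \frac{p_F - p_L}{2t} \;=\; \frac{2\pi t + p_F - p_L}{2t},\qquad
n_F \;=\; 2\pi - n_L \;=\; \frac{2\pi t + p_L - p_F}{2t}.
\]
Plugging these into the payoff definitions \eqref{equ: BM-L-payoff} and \eqref{equ: BM-F-payoff}, namely $\pi_L = n_L(p_L-c) + sI_F^2 - \gamma I_L^2$ and $\pi_F = n_F(p_F-c) - sI_F^2$, immediately yields \eqref{equ: BM-variant-payoffs}.

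There is no substantive obstacle in this proof; the only point requiring care is flagging that the interior branch is the relevant one, which is already underwritten by Theorem~\ref{thm: Un-variant-3sp}. In effect, this lemma plays the same bookkeeping role in the variant game that Lemma~\ref{lem: BM-stage-3-payoffs} plays in the base case, packaging the Stage~4 outcome into a closed-form expression of $(p_L, p_F, I_L, I_F)$ that can then be differentiated in Stage~3 to characterize the SPNE access fees.
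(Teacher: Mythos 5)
Your proposal is correct and follows essentially the same route as the paper: the paper's proof is precisely the substitution of $(n_{L}, n_{F})=\bigl(\pi+\tfrac{p_{F}-p_{L}}{2t},\, 2\pi-n_{L}\bigr)$ from \eqref{equ: variant x_0} and \eqref{equ: BM-variant-demand} into \eqref{equ: BM-L-payoff} and \eqref{equ: BM-F-payoff}, with the restriction to the interior branch justified by Theorem~\ref{thm: Un-variant-3sp}. Your explicit flagging of that interior restriction is a minor presentational improvement but not a different argument.
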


\begin{proof}
From \eqref{equ: variant x_0} and \eqref{equ: BM-variant-demand}, substitute $(n_{L}, n_{F})=(\pi+\frac{p_{F}-p_{L}}{2t}, 2\pi-n_{L})$ into (\ref{equ: BM-L-payoff}) and (\ref{equ: BM-F-payoff}), and get (\ref{equ: BM-variant-payoffs}).
\end{proof}

We next obtain the SPNE $p_{F}^{*}$ and $p_{L}^{*}$ which maximize the payoffs $\pi_{L}$ and $\pi_{F}$ of the SPs respectively. 

\begin{theorem}\label{thm: BM-variant-prices}
The SPNE pricing strategies  are:
\begin{equation}\label{equ: BM-variant-prices}
p_{L}^{*}=c+2\pi t,\quad p_{F}^{*}=c+2\pi t
\end{equation}
\end{theorem}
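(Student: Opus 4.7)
The plan is to mirror the proof of Theorem~\ref{thm: BM-prices}, but using the payoff expressions in Lemma~\ref{lem: BM-variant-3-payoffs} appropriate to the modified model (transport cost $t$ rather than $I_F/I_L$, EUs on $[0,2\pi]$, and $v^L = v^F$). By Theorem~\ref{thm: Un-variant-3sp}, only interior strategies (with $0 < x_0^* < 2\pi$) need be considered, so I can apply first-order conditions directly to the continuous expressions in~\eqref{equ: BM-variant-payoffs}.

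First I would compute the first-order conditions. Setting $\partial \pi_L/\partial p_L = 0$ and $\partial \pi_F/\partial p_F = 0$ in~\eqref{equ: BM-variant-payoffs} yields
\begin{align*}
2\pi t + p_F - 2p_L + c &= 0, \\
2\pi t + p_L - 2p_F + c &= 0.
\end{align*}
Subtracting gives $p_L^* = p_F^*$, and substituting back gives $p_L^* = p_F^* = c + 2\pi t$, establishing~\eqref{equ: BM-variant-prices}.

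Next, I would verify that this is indeed the unique NE in the interior regime. The second derivatives are $\partial^2 \pi_L/\partial p_L^2 = \partial^2 \pi_F/\partial p_F^2 = -1/t < 0$, so each payoff is strictly concave in its own access fee and the first-order stationary point is the unique global maximum on the interior region. Substituting the candidate prices into~\eqref{equ: variant x_0} gives $x_0^* = \pi \in (0, 2\pi)$, confirming interiority.

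The only remaining concern is unilateral deviations that push $x_0$ out of $(0, 2\pi)$, i.e., toward the corner regimes $n_L = 0$ or $n_L = 2\pi$. I would handle this via the same continuity argument used in Part B of the proof of Theorem~\ref{thm: BM-prices}: the payoffs in~\eqref{equ: BM-L-payoff} and~\eqref{equ: BM-F-payoff} are continuous as $n_L \downarrow 0$ or $n_L \uparrow 2\pi$, so their value at any corner deviation is no larger than the supremum of the interior values, which is already achieved at the first-order solution. I expect the only subtle step is precisely this corner-deviation check; however, because the model is completely symmetric in the two SPs after the simplification $v^L = v^F$ and transport cost $t$, the argument should carry over verbatim from Theorem~\ref{thm: BM-prices} with $1$ replaced by $2\pi$ and $t_L, t_F$ replaced by the constant $t$.
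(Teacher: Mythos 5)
Your proposal is correct and follows essentially the same route as the paper's proof: first-order conditions give $p_L^*=p_F^*=c+2\pi t$, Part A checks $x_0^*=\pi\in(0,2\pi)$, and Part B combines strict concavity in each SP's own price with the continuity of the payoffs as $n_L\downarrow 0$ or $n_L\uparrow 2\pi$ to rule out profitable corner deviations. Your explicit computation of the first-order conditions and the second derivative $-1/t$ matches the expressions implied by Lemma~\ref{lem: BM-variant-3-payoffs}, so no gap remains.
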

\begin{proof}
$p^*_F$ and $p^*_L$ must satisfy the first order condition, i.e., $\frac{d \pi_F}{dp_F}=0$ and   $\frac{d \pi_L}{dp_L}=0$.  Thus,
$p^*_F=p_L^*=c+2\pi t$.
 $p^*_F$ and $p^*_L$ are the unique SPNE strategies if they yield $0< x_0< 2\pi$ and no unilateral deviation is profitable for SPs. We establish these respectively in  Parts A and B.

\noindent{{\bf Part A}.}
From (\ref{equ: BM-variant-prices}),
$x_0=\pi+\frac{p_F^*-p_L^*}{2t}=\pi\in(0,2\pi)$ since $p_L^*=p_F^*=2\pi t+c$.

\noindent{{\bf Part B}.}
Since $\frac{d^2 \pi_F}{dp^2_F}<0, \frac{d^2 \pi_L}{dp^2_L}<0$,  a local maxima is also a global maximum, and any solution to the first order conditions  maximize the payoffs   when $0<x_0<2\pi$,  and no unilateral deviation by which $0<x_0<1$ would be profitable for the SPs.
Now, we show that unilateral deviations of the SPs leading to  $n_L=0, n_F=2\pi$ and $n_L=2\pi, n_F=0$  is not profitable. Note that the payoffs of the SPs, \eqref{equ: BM-L-payoff} and \eqref{equ: BM-F-payoff}, are continuous  as $n_L\downarrow 0$, and $n_L\uparrow 2\pi$ (which subsequently yields  $n_F\uparrow 2\pi$ and $n_F\downarrow 0$, respectively). Thus, the payoffs of both SPs when selecting $p_L$ and $p_F$ as the  solutions of the first order conditions are greater than or equal to the payoffs when $n_L=0$ and $n_L=2\pi$. Thus, the unilateral deviations under consideration are not profitable for the SPs.
\end{proof}

\noindent{{\bf Stage 2:}}
SP$_F$ decides on  the amount of spectrum to be leased from SP$_L$, $I_F$, with the condition that $0\leq I_F\leq I_L$,  to maximize $\pi_F$.

\begin{theorem}\label{thm: BM-varaint-I_F}
The SPNE spectrum acquired by $\text{SP}_{F}$ is: $I_{F}^{*}=0.$
\end{theorem}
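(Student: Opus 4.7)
The plan is to exploit the fact that in this 3-player-comparison variant, the transport cost $t$ is a fixed parameter rather than the ratio $I_F/I_L$ used in the base case. Consequently, the SPNE access fees derived in Theorem~\ref{thm: BM-variant-prices}, namely $p_L^* = p_F^* = c + 2\pi t$, do not depend on $I_F$ at all. This decouples Stage~2 from the pricing stage in a way that the base case did not allow, and makes the optimization in $I_F$ essentially trivial.

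First I would substitute $p_L^* = p_F^* = c + 2\pi t$ into SP$_F$'s payoff from Lemma~\ref{lem: BM-variant-3-payoffs}. By \eqref{equ: variant x_0}, this yields $x_0 = \pi \in (0, 2\pi)$ independent of $I_F$, so $n_F = 2\pi - x_0 = \pi$ lies in the interior region for any admissible $I_F$, and Lemma~\ref{lem: BM-variant-3-payoffs} applies. The payoff then reduces to
\begin{equation*}
\pi_F(I_F) \;=\; \frac{1}{2t}(2\pi t)(2\pi t) \;-\; sI_F^2 \;=\; 2\pi^2 t \;-\; sI_F^2,
\end{equation*}
which is a strictly decreasing function of $I_F$ on $[0, I_L]$ since $s > 0$.

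Next I would conclude that the unique maximizer of $\pi_F(\cdot)$ over the feasible set $[0, I_L]$ is $I_F^* = 0$. Because this is the unique best response to the Stage~3 equilibrium prices and no unilateral deviation in $I_F$ can increase $\pi_F$, it constitutes the SPNE choice for SP$_F$. There is no real obstacle here: the only thing to verify carefully is that the interior-pricing characterization of Theorem~\ref{thm: BM-variant-prices} is valid for the entire range $I_F \in [0, I_L]$, which follows immediately because the prices and hence $x_0$ do not depend on $I_F$ in this model. The contrast with the base case (where $I_F^*$ had a nontrivial closed form) is entirely due to the transport cost no longer being a function of the spectrum split.
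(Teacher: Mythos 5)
Your proposal is correct and follows essentially the same route as the paper: substitute the Stage-3 prices $p_L^*=p_F^*=c+2\pi t$ into SP$_F$'s payoff to obtain $\pi_F(I_F)=2\pi^2 t - sI_F^2$, and conclude $I_F^*=0$ by monotonicity on $[0,I_L]$. The extra remark that $x_0=\pi$ stays interior independently of $I_F$ is a harmless (and slightly more careful) addition to what the paper writes.
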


\begin{proof}

Substituting $p_{F}$ and $p_{L}$ from \eqref{equ: BM-variant-prices} into \eqref{equ: BM-variant-payoffs},  $\text{SP}_{F}$'s payoff  becomes,
\begin{equation}\label{equ: BM-stage-2-varaint-F-payoff}
\pi_{F}(I_{F}; I_{L})=2\pi^2 t-sI_F^2.	
 \end{equation}
Since $\pi_{F}(I_{F}; I_{L})$ is a decreasing function of $I_F$ and $0\leq I_{F}\leq I_{L}$, so $I_F^*=0$.
\end{proof}

\noindent{{\bf Stage 1:}}
 SP$_L$ chooses the amount of spectrum $I_L$ to lease from the regulator, to maximize $\pi_L$.
\begin{theorem}\label{thm: stage 1-unequal}
The SPNE spectrum acquired by $\text{SP}_{L}$ is: $I_{L}^{*}=\delta.$
\end{theorem}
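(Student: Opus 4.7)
The plan is to substitute the equilibrium choices derived in the later stages (Theorems~\ref{thm: BM-variant-prices} and \ref{thm: BM-varaint-I_F}) into SP$_L$'s payoff, reduce $\pi_L$ to a function of $I_L$ alone, show it is strictly decreasing on $[\delta,\infty)$, and conclude that the minimum feasible value $I_L^{*}=\delta$ is optimal.

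First, I would note a feature specific to this modified $2$-player model that is crucial here: because the transport cost is the fixed parameter $t$ rather than the functions $I_F/I_L$ and $(I_L-I_F)/I_L$ used in the base case, the SPNE prices $p_L^{*}=p_F^{*}=c+2\pi t$ from Theorem~\ref{thm: BM-variant-prices} do not depend on $I_L$ or $I_F$, and the SPNE spectrum $I_F^{*}=0$ from Theorem~\ref{thm: BM-varaint-I_F} does not depend on $I_L$ either. Consequently, when SP$_L$ solves its Stage~$1$ problem by backward induction, it anticipates that downstream decisions are insensitive to its choice of $I_L$ except through the direct cost term $-\gamma I_L^{2}$.

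Next, I would substitute $I_F^{*}=0$ and $p_L^{*}=p_F^{*}=c+2\pi t$ into SP$_L$'s payoff in \eqref{equ: BM-variant-payoffs}. The subscription becomes $n_L=\pi+(p_F^{*}-p_L^{*})/(2t)=\pi$, so
\begin{equation*}
\pi_L(I_L)=\pi\cdot(c+2\pi t-c)+s\cdot 0-\gamma I_L^{2}=2\pi^{2}t-\gamma I_L^{2}.
\end{equation*}
Since $\gamma>0$, $\pi_L(I_L)$ is strictly decreasing in $I_L$ on $[\delta,\infty)$, so the maximum over the feasible set $\{I_L\geq\delta\}$ is attained uniquely at $I_L^{*}=\delta$.

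Finally, to confirm this is an SPNE, I would check that no unilateral deviation in $I_L$ is profitable. Any deviation to $I_L'>\delta$ leaves $I_F^{*}$, $p_L^{*}$, $p_F^{*}$, and $n_L^{*}$ unchanged (by the arguments above) and strictly increases the cost $\gamma I_L^{2}$, hence strictly decreases $\pi_L$. There is essentially no obstacle in this argument; the only subtlety worth double-checking is the independence of all downstream equilibrium quantities from $I_L$, which follows directly from the closed-form expressions in Theorems~\ref{thm: BM-variant-prices} and \ref{thm: BM-varaint-I_F}.
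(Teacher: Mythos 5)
Your proof is correct and follows essentially the same route as the paper: substitute $p_L^{*}=p_F^{*}=c+2\pi t$ and $I_F^{*}=0$ into \eqref{equ: BM-variant-payoffs} to obtain $\pi_L(I_L)=2\pi^{2}t-\gamma I_L^{2}$, observe that this is decreasing in $I_L$, and conclude $I_L^{*}=\delta$. The additional remarks on the independence of the downstream quantities from $I_L$ and the explicit deviation check are consistent with, and slightly more detailed than, the paper's argument.
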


\begin{proof} Substituting $p_{L}$ and $p_{F}$ from \eqref{equ: BM-variant-prices} into \eqref{equ: BM-variant-payoffs}, $\text{SP}_{L}$'s payoff becomes:
\begin{equation}\label{equ: BM-L-payoff1}
\pi_{L}(I_{L}; I_{F}^{*})=2\pi^2t-\gamma I_{L}^{2}.
 \end{equation}
since from Theorem~\ref{thm: BM-varaint-I_F}, $I_F^*=0.$
Note that $\pi_L$ is a decreasing function of $I_L$, and $\delta\leq I_L\leq M$, so $I_L^*=\delta.$
\end{proof}

Collecting all SPNE from {\bf Stages~1$\sim$4}, the unique SPNE strategies
 are:
 \[I_{L}^{*}=\delta, \ I_{F}^{*}=0, \ p_{L}^{*}=p_{F}^{*}=2t\pi+c, \ n_{F}^{*}= n_{L}^{*}=\pi .\]

\section{Limited Spectrum: SPNE Analysis}\label{Appendix: Limited Spectrum}

\noindent{\bf Proof of Theorem~\ref{thm: Un-conclusion-sectionA, M}.}
\begin{proof}
The proofs of in {\bf Stage 2} (finding $I_F^*$), {\bf Stage 3} (finding $p_L^*, p_F^*$) and {\bf Stage 4} (finding $n_L^*, n_F^*$) are the same as proofs of Theorems~\ref{thm: Un-A-indiffenrent location}, \ref{thm: Un-A-prices}, \ref{thm: Un-A-I_F-sectionA}, respectively. Now we only consider {\bf Stage 1} (finding $I_L^*$). Similar with the proof of Theorem~\ref{thm: Un-payoff-L-sectionA},
substituting $I_{F}^{*}$ in \eqref{equ: Un-I_F1} into (\ref{equ: Un-L-payoff1}), the optimal investment level of $\text{SP}_{L}$, $I_{L}^{*}$, is a solution of the following optimization problem,
\begin{equation}\label{equ: Un-interior-pi_L-proof1, M}
\small
\begin{aligned}
\max_{I_{L}}\quad&\pi_{L}(I_{L})=(\frac{2+\Delta}{3}-\frac{I_{F}^{*}}{3I_{L}})^{2}+s(I_{F}^{*})^{2}-\gamma I_{L}^{2}\\
s.t\quad&\delta\leq I_{L}\leq M.
\end{aligned}
\end{equation}

 If $M\leq \sqrt{\frac{2-\Delta}{9s}}$, from (\ref{equ: Un-I_F1}) in Theorem~\ref{thm: Un-A-I_F-sectionA}, $I_{F}^{*}=I_{L}$, thus (\ref{equ: Un-interior-pi_L-proof1, M}) is equivalent to
\begin{align*}
\max_{I_{L}}\quad&\pi_{L}(I_{L})=\frac{(1+\Delta)^{2}}{9}+(s-\gamma)I_{L}^{2}\\
&\delta\leq I_{L}\leq M
\end{align*}
Since $s>\gamma$, then $\pi_{L}(I_{L})$ is an increasing function of $I_{L}$, thus $I_{L}^{*}=M$. In this case, $I_F^*=M$, and from \eqref{equ: Un-A-indifferent location}, \eqref{equ: Un-A-demand} and \eqref{equ: Un-A-interior-prices}, $n_L^*=p_L^*-c=\frac{\Delta+1}{3}$ and $n_F^*=p_F^*-c=\frac{2-\Delta}{3}$.

 If $M> \sqrt{\frac{2-\Delta}{9s}}$, the proof are the same with that of Theorem~\ref{thm: Un-payoff-L-sectionA}.

\end{proof}

\noindent{\bf Proof of Theorem~\ref{thm: Un-corner-sectionA, M}.}
\begin{proof}
The proofs of in {\bf Stage 3} (finding $p_L^*, p_F^*$) are the same as proofs of Lemmas~\ref{lem: no negative-corner SPNE} \ref{lem: no positive-corner SPNE} and Theorems~\ref{thm: negative-corner} and \ref{thm: positive-corner}. Now we only consider {\bf Stages 1, 2} (finding $I_L^*, I_F^*$). 

{\bf (A)} We consider $\Delta\leq-1$. Similar with the proof of Theorem~\ref{thm: negative-corner},	
 substituting $I_{F}^{*}$ from {\bf Step~5} in Theorem~\ref{thm: negative-corner} into (\ref{equ: BM-L-payoff}), the optimum investment level of SP$_{L}$, $I_{L}^{*}$, is a solution of the following optimization problem,
\begin{equation}\label{equ: Un-L-payoff2-sectionC, M}
\begin{aligned}
\max_{I_{L}}\quad&\pi_{L}(I_{L}; I_{F}^{*})=sI_{F}^{*2}-\gamma I_{L}^{2}\\
s.t\quad&\delta\leq I_{L}\leq M
\end{aligned}
\end{equation} 
Then, we consider two sub-cases: $\delta\leq M\leq\frac{1}{\sqrt{2s}}$ and $M>\frac{1}{\sqrt{2s}}$. 

If $\delta\leq M\leq\frac{1}{\sqrt{2s}}$,  Since $I_L\leq M\leq\frac{1}{\sqrt{2s}}$, then $I_{F}^{*}=I_{L}$, thus the optimization (\ref{equ: Un-L-payoff2-sectionC, M}) is equivalent to
\begin{align*}
\max_{I_{L}}\quad&\pi_{L}=(s-\gamma)I_{L}^{2}\\
s.t\quad&\delta\leq I_{L}\leq M.
\end{align*}
Note that $s>\gamma$, then $\pi_{L}$ is an increasing function of $I_{L}$, thus $I_{L}^{*}=I_{F}^{*}=M$. If $\frac{1}{\sqrt{2s}}<M$, the proof is the same as the proof in Theorem~\ref{thm: negative-corner}.

{\bf (B)} We consider $\Delta\geq1$. The proof is the same as the proof in Theorem~\ref{thm: positive-corner}.

\end{proof}

\noindent{\bf Proof of Theorem~\ref{cor: 3-p model, M}.}
\begin{proof}
The proofs of in {\bf Stage 2} (finding $I_F^*$), {\bf Stage 3} (finding $p_L^*, p_F^*$) and {\bf Stage 4} (finding $n_L^*, n_F^*$) are the same as proofs of  Theorems~\ref{Thm: 3p-interior-prices},  \ref{thm: 3p-demand-division} and \ref{thm: 3p-interior-I_F}. Now we only consider {\bf Stage 1} (finding $I_L^*$). Similar with the proof of Theorem~\ref{thm: 3p-payoff-L-sectionA},	each MNO chooses its $I_L$ as the solution of the following maximization: \begin{equation}\label{equ: 3p-pi_L-proof1, M}
\begin{aligned}
\max_{I_{L}}\quad&\pi_{L}(I_{L})=\frac{t\pi^{2}}{18}(\frac{7I_{L}-I_{F}^{*}}{2I_{L}})^{2}+sI_{F}^{*2}-\gamma I_{L}^{2}\\
s.t\quad&\delta\leq I_{L}\leq M.
\end{aligned}
\end{equation}
The objective function follows by substituting (\ref{equ: 3p-interior-prices}) into (\ref{equ: 3p-interior-L-payoff}). The constraint follows from the modeling assumption. 

 If $M\leq\frac{\pi}{2}\sqrt{\frac{t}{3s}}$, from (\ref{equ: 3p-I_F_star}), $I_{F}^{*}=I_{L}$, thus the objective function of  (\ref{equ: 3p-pi_L-proof1, M}) is $\frac{t\pi^{2}}{2}+(s-\gamma)I_{L}^{2}.$
This is an increasing function of $I_{L}$ since $s>\gamma.$ Thus the optimum solution in this range is $M$.

 If $M>\frac{\pi}{2}\sqrt{\frac{t}{3s}}$, the proof is the same as that of Theorem~{thm: 3p-payoff-L-sectionA}.
\end{proof}

\end{document}